%% file: main.tex
\documentclass[acmsmall,nonacm]{acmart}
\usepackage{amsthm, amsmath, mathtools}
\usepackage{upgreek}
\usepackage{amsfonts}
\usepackage{enumitem}
\usepackage[normalem]{ulem}
\usepackage[linesnumbered,vlined,ruled,commentsnumbered]{algorithm2e}
\usepackage[para,online,flushleft]{threeparttable}
\usepackage{multirow}
\usepackage{multicol}
\usepackage{hyperref}
\usepackage{url}
\usepackage{bm}
\usepackage[labelfont=bf,skip=0pt]{caption,subfig}
\usepackage{flushend}

\usepackage{tabularx}

\usepackage[english]{babel}
\usepackage{graphicx}
\usepackage{thmtools}

\usepackage{listings}
\usepackage{tablefootnote}
\usepackage{enumitem}
\usepackage{bm}
\usepackage{comment}
\usepackage{tikz}
\usepackage{wrapfig}

\DeclareMathOperator*{\argmax}{arg\,max}
\DeclareMathOperator*{\argmin}{arg\,min}

\newcommand{\opt}{\textsf{opt}}

\newcommand{\Q}{\boldsymbol{q}}
\newcommand{\E}{\mathcal{E}}

\newcommand{\R}{\mathcal{R}}
\newcommand{\T}{\mathcal{T}}

\newcommand{\eps}{\varepsilon}
\renewcommand{\epsilon}{\varepsilon}
\renewcommand{\Re}{\mathbb{R}}

\newcommand{\allattr}{{\mathbf A}}

\newcommand{\allrel}{{\mathbf R}}

\newcommand{\dom}{{\texttt{dom}}}

\newcommand{\fhw}{\mathsf{fhw}}

\newcommand{\D}{\mathbf{D}}
\newcommand{\I}{\D}
\newcommand{\net}{\mathcal{C}}

\newcommand{\dist}{\phi}

\newcommand{\kcenter}{\textbf{u}}
\newcommand{\kmedian}{\textbf{v}}
\newcommand{\kmeans}{\mu}
\renewcommand{\opt}{\mathsf{OPT}}

\newcommand{\coreset}{\mathcal{C}}
\newcommand{\ret}{\mathcal{S}}

\newcommand{\tree}{\mathcal{T}}
\newcommand{\node}{u}

\newcommand{\timeCenter}{T^{\mathsf{cen}}}
\newcommand{\timeMedian}{T^{\mathsf{med}}}
\newcommand{\timeMeans}{T^{\mathsf{mean}}}

\renewcommand{\O}{\tilde{O}}

\newcommand{\DkcenterAlg}{\mathsf{kCenterAlg}}

\newcommand{\DkmedianAlg}{\mathsf{kMedianAlg}}

\newcommand{\DkmeansAlg}{\mathsf{kMeansAlg}}

\newcommand{\numqueries}{\sigma}

\renewcommand{\root}{\mathtt{root}}
\newcommand{\ball}{\mathcal{B}}
\newcommand{\canonical}{\mathcal{U}}
\newcommand{\counts}{c}
\newcommand{\cons}{\zeta}
\newcommand{\assigfunc}{\iota}

\newcommand{\revone}[1]{#1}
\newenvironment{envrevone}{\color{black}}{}
\newcommand{\revtwo}[1]{#1}
\newcommand{\revthree}[1]{#1}
\newenvironment{envrevthree}{\color{black}}{}

\newcommand{\badpoints}{Q_i^{\mathsf{bad}}}

\allowdisplaybreaks
\begin{document}

\title{Faster Relational Algorithms Using Geometric Data Structures} 
\titlenote{This work was partially supported by NSF grant IIS-2348919.}

\author{Aryan Esmailpour}
\orcid{0009-0000-3798-9578}
\affiliation{%
  \institution{Department of Computer Science, University of Illinois Chicago}
  \city{Chicago}
  \country{USA}}
\email{aesmai2@uic.edu}

\author{Stavros Sintos}
\orcid{0000-0002-2114-8886}
\affiliation{%
  \institution{Department of Computer Science, University of Illinois Chicago}
  \city{Chicago}
  \country{USA}}
\email{stavros@uic.edu}

\begin{comment}
\setcopyright{cc}
\setcctype{by --or -- by-nc-nd} 
% See YOUR completed rightsreview form confirmation email for the exact code to be used for setcopyright and setcctype code strings, these are unique for every paper.
\acmJournal{PACMMOD}
\acmYear{2026} \acmVolume{4} 
\acmNumber{2} \acmArticle{102} 
% XXX is your article id# -- be sure to use exact code from ACM rightsreview, do not use the v4 prefix) 
\acmMonth{5} \acmPrice{}
\acmDOI{10.1145/XXXXXXX}
% See YOUR completed rightsreview form confirmation email for the assigned DOI to your article -- this is where the article will live when published in the ACM DL.
\end{comment}

\input{abstract}

%\settopmatter{printacmref=false}

%\pagestyle{plain}

%\input{abstract}

\begin{CCSXML}
<ccs2012>
   <concept>
       <concept_id>10003752.10010070.10010111.10011710</concept_id>
       <concept_desc>Theory of computation~Data structures and algorithms for data management</concept_desc>
       <concept_significance>500</concept_significance>
       </concept>
 </ccs2012>
\end{CCSXML}

\ccsdesc[500]{Theory of computation~Data structures and algorithms for data management}

\keywords{relational data, clustering, k-center, k-median, k-means, BBD tree}

%\received{May 2024}
% \received[revised]{JuneX 2024}
%\received[accepted]{August 2024}
%\input{letter}

\maketitle
\input{intro}
\input{treeConstruction}
\input{kcenter}

\input{kMedianMeans}
\input{extensions}
\input{conclusion}

%%
%% Bibliography
%%

%% Please use bibtex, 

\bibliography{ref}
%\newpage
\input{appendix}

\end{document}

%% file: abstract.tex
\begin{abstract}
Optimization tasks over relational data, such as clustering, often suffer from the prohibitive cost of join operations, which are necessary to access the full dataset. While geometric data structures like BBD trees yield fast approximation algorithms in the standard computational setting, their application to relational data remains unclear due to the size of the join output. In this paper, we introduce a framework that leverages geometric insights to design faster
algorithms when the data is stored as the results of a join query in a relational database.
Our core contribution is the development of the RBBD tree, a randomized variant of the BBD tree tailored for relational settings. Instead of completely constructing the RBBD tree, by leveraging efficient sampling and counting techniques over relational joins, we enable on-the-fly efficient expansion of the RBBD tree, maintaining only the necessary parts.
This allows us to simulate geometric query procedures without materializing the join result. As an application, we present algorithms that improve the state-of-the-art for relational $k$-center/means/median clustering by a factor of $k$ in running time while maintaining the same approximation guarantees. Our method is general and can be applied to various optimization problems
in the relational setting.
%This work opens a new direction for applying geometric methods to accelerate relational algorithms.
\end{abstract}

%% file: intro.tex
%\vspace{-1.2em}
\section{Introduction}
\label{sec:intro}
\vspace{-0.4em}

Optimization problems, such as clustering, are traditionally studied in computer science under the assumption of full access to the input data. However, in real-world applications, data is typically stored in \emph{relational databases}, where information is distributed across multiple interrelated tables. Each table contains a set of tuples, and tuples from different tables may join if they have the same value on the shared attributes.
This relational structure introduces a fundamental challenge: full access to the data is only possible after performing a join operation across the relevant tables. Such joins are computationally expensive, as the output size of a join can be polynomially larger than the size of the input tables~\cite{ngo2018worst, ngo2014skew}. Nonetheless, relational data is ubiquitous in practice. According to~\cite{link1, kaggle, link:relational}, the majority of modern database systems and data science tasks involve relational data.
%are relational . Furthermore, surveys by Kaggle~\cite{kaggle} show that 65\% of data science tasks involve relational data, and 70\% of production databases use a relational model. The market for relational data processing is projected to exceed \$122 billion by 2027~\cite{link:relational}. 

\begin{envrevone}
Machine learning applications over relational data typically involve two stages: (1) data preparation, where multiple tables are joined to construct the feature dataset, and (2) modeling, where machine learning algorithms (for example, clustering algorithms in unsupervised learning) are applied to the resulting data. However, this two-step pipeline can be prohibitively expensive in practice, largely due to the high cost of evaluating large joins.

$\bullet$ Consider a machine-learning workflow in which feature engineering over relational data is followed by unsupervised learning via clustering, based on the Yelp review dataset~\cite{yelp2017}. This dataset has been used for $k$-means clustering in prior work~\cite{curtin2020rk, chen2022coresets}. 
The database consists of five relations: 
(1) \textsf{Review(user\_id, business\_id, stars, date)} describing individual reviews, 
(2) \textsf{User(user\_id, review\_count, fans, yelping\_since)} providing reviewer attributes, 
(3) \textsf{Business(business\_id, city, state, avg\_rating)} describing business metadata, 
(4) \textsf{Category(business\_id, category)} listing the categories assigned to each business (e.g., ``Restaurant'', ``Coffee Shop''), and 
(5) \textsf{Attributes(business\_id, attribute)} recording additional business characteristics 
(e.g., ``OpenLate'', ``HasWiFi''). 
A typical feature-engineering query joins these relations as follows:
$Q = \textsf{Review} \Join \textsf{User} \Join \textsf{Business} \Join \textsf{Category} \Join \textsf{Attributes}$. 
Each tuple in $Q$ corresponds to a single review enriched with both numerical attributes (e.g., \textsf{review\_count}, \textsf{avg\_rating}) and categorical attributes (e.g., \textsf{category}, \textsf{attribute}), which are one-hot encoded before analysis. 
Clustering is then performed directly over these joined tuples, meaning that individual review events are grouped according to their joint user attributes, business characteristics, and semantic labels. 
In this representation, two reviews are considered similar if they are written by users with similar activity profiles for businesses with similar ratings, categories, and attributes. 
As a result, each cluster captures a coherent pattern of reviewer behavior and business semantics, such as highly active users reviewing late-night restaurants with WiFi, or infrequent users reviewing family-oriented dining venues. 
Crucially, because businesses often have multiple category and attribute labels, the join result $Q$ is much larger than any base relation. 
As reported in~\cite{curtin2020rk}, the base Yelp tables contain roughly $8$ million tuples in total, whereas the fully materialized join contains approximately $22$ million tuples. %Moreover, another join query over the Yelp dataset used in~\cite{olteanu2022givens} contains roughly $2$ million base tuples, yet the corresponding join result contains roughly $150$ million tuples. 
This significant blow-up highlights the computational cost of materializing relational joins before learning 
and motivates the need for algorithms that can operate directly on the underlying tables without explicitly constructing $Q$.

$\bullet$ Consider a table $\mathsf{Purchase}$ with attributes $\mathsf{price}$, $\mathsf{T}$, $\mathsf{item}$, and $\mathsf{category}$, where $\mathsf{price}$ is a real-valued attribute, $\mathsf{T}$ identifies a transaction, $\mathsf{item}$ is a product identifier, and $\mathsf{category}$ is a categorical attribute. The goal is to analyze pricing patterns of items that are commonly purchased together. For instance, a $k$-means clustering over the result of the self-join $\mathsf{Purchase}(\mathsf{T},\mathsf{price}_0,\mathsf{item}_0,\mathsf{category}_0) \Join \mathsf{Purchase}(\mathsf{T},\mathsf{price}_1,\mathsf{item}_1,\mathsf{category}_1)$ groups co-purchased item pairs according to their joint price characteristics. Each joined tuple represents a pair of items bought in the same transaction, embedded in a two-dimensional price space $(\mathsf{price}_0,\mathsf{price}_1)$, possibly enriched with categorical information (for example the categories of the corresponding items). Because a single purchase may contain several different items, each transaction generates many distinct item pairs in the self-join, which can lead to a substantial blow-up in the size of the join result (quadratic with respect to the size of the $\mathsf{Purchase}$ table). Even if one subsequently projects the join result onto the price attributes alone, it is crucial to retain duplicate tuples: repeated occurrences of the same price pair correspond to item pairs that are purchased together frequently and therefore carry greater statistical weight in the clustering objective.  Clustering in this space can reveal typical price-based co-purchasing patterns, such as low--low, low--high, or high--high price item pairs. Such analyses of commonly bought products are standard in market-basket analysis and retail data mining~\cite{agrawal1993mining, han2000mining, leskovec2014mining}.

\end{envrevone}

Recent work has aimed to circumvent materializing the full join output by designing \emph{relational algorithms} for optimization tasks such as clustering~\cite{chen2022coresets, curtin2020rk, moseley2021relational}. However, these approaches often suffer from large constant approximation ratios or expensive runtimes. Moreover, each known method typically relies on a different specialized technique tailored to a specific problem.

In this paper, we introduce a general framework that enables faster algorithms for a broad class of optimization problems on relational data. Given a join query and a database instance, our approach can be used to efficiently solve problems such as clustering, diversity maximization,  and fairness-aware selection.
Many of these problems have well-understood and near-linear time approximation algorithms in the standard computational setting, where the input is a single flat table. However, naively applying these techniques in the relational setting is inefficient due to the size of the join output. Our key idea is to simulate the behavior of geometric data structures---particularly those based on hierarchical tree structures such as the \emph{BBD tree}~\cite{arya1998optimal}, \emph{without} materializing the entire join result. We present our approach by implementing a relational version of the BBD tree, since it best serves the clustering problems
%(along with other optimization problems)
we aim to solve. However, the same techniques can be adopted to derive a relational version of other commonly used geometric data structures such as quad trees~\cite{finkel1974quad}, kd-trees~\cite{deBerg2008}, or partition trees~\cite{chan2010optimal}.
We introduce a new randomized data structure, the \emph{RBBD tree}, which maintains the desirable properties of the BBD tree but is tailored for relational data. Crucially, we show that given only one node of an RBBD tree \revtwo{(defined over the join results of an acyclic join query)}, its children can be computed in roughly $O(N)$ time, where $N$ is the size of the input database. \revthree{This enables efficient simulation of geometric queries over the join results—including range counting, range reporting, and range sampling queries—as these typically require accessing only a logarithmic number of tree nodes}.

Our method is general: any geometric algorithm that interacts with the input data through $\numqueries$ queries to a BBD tree can be transformed into a randomized relational algorithm \revtwo{(over an acyclic join query)} with total runtime $O(\numqueries \cdot N)$ (ignoring logarithmic factors). As many geometric algorithms fall into this category, our framework yields efficient relational counterparts for a variety of tasks. 
As an application, we show that the running time of the state-of-the-art algorithms for all relational $k$-center, $k$-median, and $k$-means clustering can be improved by a factor of $k$.

\revtwo{While in many applications $k$ is considered small, $k$-clustering for large $k$ has become increasingly relevant in modern applications, such as product quantization for nearest neighbor search in vector databases~\cite{jegou2010product}. This has motivated extensive algorithmic studies of large-$k$-clustering across various computational models~\cite{ene2011fast, bateni2021extreme, coy2023parallel, czumaj2024fully, filtser2025faster}.
Furthermore, in the relational setting $k$ can be even larger than $N$, because the number of join results might be orders of magnitude larger than $N$.
%Given this large output size, a data analyst might wish to extract a representative subset of size $k$. However, choosing a small $k$ (e.g., a constant) can lead to misleading conclusions. A more practical approach is to construct a representative set of size $N$ and store it as a new table in the database that can be processed separately for visualization or further analysis.
In the literature on relational clustering~\cite{curtin2020rk, moseley2021relational, esmailpour2024improved, agarwal2024computing}, existing algorithms typically assume that $k$ may be large, and successive work has focused on improving the dependency on $k$. We present the first relational clustering algorithms with linear dependency on both $k$ and $N$ for acyclic queries.
}
%For example, we provide improved relational algorithms for $k$-center, $k$-median, and $k$-means clustering, achieving state-of-the-art performance in both runtime and approximation quality.

\vspace{-0.8em}
\subsection{Notation and problem definition}
\vspace{-0.2em}
\noindent{\bf Join Queries.}
We are given a database schema $\allrel$ over a set of $d$ attributes $\allattr=\{A_1,\ldots, A_d\}$. The database schema $\allrel$ contains $m$ relations $R_1, \ldots, R_m$. Let $\allattr_j\subseteq \allattr$ be the set of attributes associated with relation $R_j\in \allrel$. For an attribute $A\in \allattr$, let $\dom(A)$ be the domain of attribute $A$.
Let $\dom(X)=\prod_{A\in X}\dom(A)$ be be the domain of a set of attributes $X\subseteq \allattr$.
%We assume that $\dom(A)=\Re$ for every $A\in \allattr$.
Let $\I$ be a database instance over the database schema $\allrel$. For simplicity, we assume that each relation $R_j \in \allrel$ contains $N$ tuples in $\I$.
Throughout the paper, we consider data complexity i.e., $m$ and $d$ are constants, while $N$ is a large integer.
We use $R_j$ to denote both the relation and the set of tuples from $\I$ stored in the relation.
For a subset of attributes $B\subseteq \allattr$ and a set $Y\subset\Re^d$, let $\pi_{B}(Y)$ be the set containing the projection of the tuples in $Y$ onto the attributes $B$. Notice that two different tuples in $Y$ might have the same projection on $B$, however, $\pi_{B}(Y)$ is defined as a set, so the projected tuple is stored once. %We also define the multi-set $\widebar{\pi}_{B}(Y)$ so that if for two tuples $t_1, t_2\in Y$ it holds that $\pi_{B}(t_1)=\pi_B(t_2)$, then the tuple $\pi_B(t_1)$ exists more than once in $\widebar{\pi}_B(Y)$.

Following the related work on relational clustering~\cite{chen2022coresets, curtin2020rk, moseley2021relational}, we are given a join query $\Q:=R_1\Join \ldots \Join R_m$. The set of results of a join query $\Q$ over the database instance $\I$ is defined as $\Q(\I)=\{t\in \dom(\allattr)\mid \forall j\in[1,m]:\pi_{\allattr_j}(t)\in R_j\}$. A join query $\Q$ is acyclic if there exists a tree, called a join tree, such that the nodes of the tree are the relations in $\allrel$ and for every attribute $A\in \allattr$, the set of nodes/relations that contain $A$ form a connected component.
All our algorithms are presented assuming that $\Q$ is an acyclic join query, however, in the end we extend the results to any general join query.
Let $\rho^\star(\Q)$ be the fractional edge cover of query $\Q$, which is a parameter that bounds the number of join results $\Q(\I)$ over any database instance: for every database instance $\I'$ with $O(N)$ tuples, it holds that 
$|\Q(\I')|=O(N^{\rho^\star(\Q)})=O(N^m)$ as shown in~\cite{atserias2013size}.

We use the notation $\fhw(\Q)$ to denote the \emph{fractional hypertree width}~\cite{gottlob2014treewidth} of the query $\Q$.
The fractional hypertree width roughly measures how close $\Q$ is to being acyclic.
For every acyclic join query $\Q$, we have $\fhw(\Q)=1$.
Given a cyclic join query $\Q$, we convert it to an equivalent acyclic query such that each relation is the result of a (possibly cyclic) join query with fractional edge cover at most $\fhw(\Q)$. Hence, a cyclic join query over a database instance with $O(N)$ tuples per relation can be converted, in $O(N^{\fhw(\Q)})$ time, to an equivalent acyclic join query over a database instance with $O(N^{\fhw(\Q)})$ tuples per relation~\cite{atserias2013size}.
A more formal definition of $\fhw$ is given in Appendix~\ref{appndx:generalQueries}. If $\Q$ is clear from the context, we write $\fhw$ instead of $\fhw(\Q)$.

\revthree{
For tuples $p,q \in \dom(\allattr)$, let $\dist(p,q) = ||p - q||_2 = \left( \sum_{j=1}^{d} (\pi_{A_j}(p) - \pi_{A_j}(q))^2 \right)^{1/2}$ denote their Euclidean distance.
If $A_j \in \allattr$ is a categorical attribute, we apply the \emph{one-hot encoding} (see Appendix~\ref{sec:cat}) and define
$(\pi_{A_j}(p) - \pi_{A_j}(q))^2 = 1$ if $\pi_{A_j}(p) \neq \pi_{A_j}(q)$, and $(\pi_{A_j}(p) - \pi_{A_j}(q))^2 = 0$ otherwise.
We focus on Euclidean distance for clarity and consistency with prior work, but our algorithms extend to any $\ell_p$ metric.
}
%Throughout the paper, we use the Euclidean distance to measure the cost of the clustering.
%Following existing work, we define all our problems in the Euclidean space, however our results can be extended to other $\ell_p$ metrics or doubling spaces.
%, following the other related papers on relational clustering~\cite{curtin2020rk, chen2022coresets, moseley2021relational}.

Throughout the paper, we use the term \emph{standard computational setting} when we consider the input data to be stored in a single flat table. In contrast, we use the term \emph{relational setting} when the data is the result of a join query as described above.

\paragraph{Clustering}
%In this paper, we focus on $k$-center, $k$-median, and $k$-means clustering. 
\revthree{In this paper, we focus on the three classical clustering objectives: 
$k$-center, $k$-median, and $k$-means clustering. 
These formulations capture different ways of measuring the quality of clustering and are among the most widely used models in both theory and practice~\cite{lloyd1982least, gonzalez1985clustering, arya2004local, har2004coresets, ostrovsky2012effectiveness}. 
%Intuitively, the choice between them depends on how we wish to measure the “fit” between tuples and their assigned centers: 
$k$-center focuses on minimizing the worst-case distance, 
$k$-median minimizes the total (average) distance, 
and $k$-means minimizes the total squared distance. 
In applications where we want to ensure that no point is too far from its nearest center (e.g., facility location or coverage problems), $k$-center is preferred. 
When we aim to minimize average travel cost or total assignment distance (e.g., clustering customers by proximity), $k$-median is more suitable. 
Finally, $k$-means is typically used when distances represent errors or variances, and we care about minimizing overall dispersion (e.g., in statistical data analysis or machine learning applications).
%Other clustering models exist (e.g., density-based or hierarchical clustering), but these three serve as canonical formulations for studying algorithmic properties and approximation guarantees.
}

We start with some useful general definitions.
Let $P$ be a set of tuples in $\dom(\allattr)$ and let $C\subset \dom(\allattr)$
be a set of $k$ tuples.
%\footnote{We slightly abuse the notation and we use $d$ as both the number of attributes in the relational setting and the number of dimensions of a point set in the standard computational setting.}. 
%Let $w:\Re^d\rightarrow \Re_{>0}$ be a weight function such that $w(p)$ is the weight of point $p\in P$.
For a tuple $p\in\dom(\allattr)$, let $\dist(p,C)=\min_{c\in C}\dist(p,c)$. We define
\vspace{-0.0em}
$$\kcenter_C(P)=\max_{p\in P}\dist(p,C),\quad \kmedian_{C}(P)=\sum_{p\in P}\dist(p,C),\quad \text{ and } \quad \kmeans_{C}(P)=\sum_{p\in P}\dist^2(p,C).$$
\vspace{-0.6em}
%Without loss of generality, in this paper it is sufficient to assume that $\sum_{p\in P}w(p)=O(\poly(n))$, where $\poly(n)$ is a polynomial function of $n$.
%If $P$ is an unweighted set, then $w(p)=1$ for every $p\in P$.
%By slightly abusing the notation, in the unweighted case where $w(p)=1$ for every $p\in P$, we define $\kmedian_C(P)=\kmedian_{C,w}(P)$ and $\kmeans_C(P)=\kmeans_{C,w}(P)$.

\hspace{-1em}\textbf{$k$-center clustering}: Given a set of tuples $P$ in $\dom(\allattr)$ and a parameter $k$, the goal is to find a set $C\subseteq P$ with $|C|=k$ such that $\kcenter_{C}(P)$ is minimized. 
Let $\opt(P)=\argmin_{S\subseteq P, |S|= k}\kcenter_S(P)$ be a set of $k$ centers with the minimum $\kcenter_{\opt(P)}(P)$.
Let 
$\DkcenterAlg_\gamma$ be a (known) $\gamma$-approximation algorithm for the $k$-center clustering problem in the standard computational setting that runs in $\timeCenter_\gamma(|P|)$ time, where $\gamma$ is a constant.

\hspace{-1em}\textbf{$k$-median clustering}: Given a set of tuples $P$ in $\dom(\allattr)$ and a parameter $k$, the goal is to find a set $C\subseteq P$ with $|C|=k$ such that $\kmedian_{C}(P)$ is minimized.
%Let $C^*$ be the set of optimum centers for the $k$-median problem and let $C^*_d$ be the set of optimum centers for the discrete $k$-median problem.
Let $\opt(P)=\argmin_{S\subseteq P, |S|= k}\kmedian_S(P)$ be a set of $k$ centers with the minimum $\kmedian_{\opt(P)}(P)$.
Let $\DkmedianAlg_\gamma$  be a (known) $\gamma$-approximation algorithm for the  $k$-median problem in the standard computational setting that runs in $\timeMedian_\gamma(|P|)$ time, where $\gamma$ is a constant.

\hspace{-1em}\textbf{$k$-means clustering}: Given a set of tuples $P$ in $\dom(\allattr)$ and a parameter $k$, the goal is to find a set $C\subseteq P$ with $|C|=k$ such that $\kmeans_{C}(P)$ is minimized.
Let $\opt(P)=\argmin_{S\subseteq P, |S|= k}\kmeans_S(P)$ be a set of $k$ centers with the minimum $\kmeans_{\opt(P)}(P)$. 
%There are many known algorithms for the (weighted) $k$-means clustering problem that return a constant approximation. For example, the algorithm in~\cite{har2004coresets} returns an $O(1)$-approximation in $O(n+k^5\log^9 n)$ time.
Let $\DkmeansAlg_\gamma$ be a (known) $\gamma$-approximation algorithm for the  $k$-means problem in the standard computational setting that runs in $\timeMeans_\gamma(|P|)$ time, where $\gamma$ is a constant.
%We note that we use the same notation $\timeMeans_\gamma(\cdot)$ for the running time of a known algorithm in both $k$-median and $k$-means clustering. It is always clear from the context whether we consider the $k$-median or the $k$-means problem.

%We note that we do not specify whether $\kmedianAlg_\gamma$ (resp. $\kmeansAlg_\gamma$) is an algorithm for the $k$-median (resp. $k$-means) clustering or the discrete $k$-median (resp. relational $k$-means) clustering. In the next section we always specify which of the two versions we consider depending on which version of the problem we study.

We use the same notation $\opt(\cdot)$ for the optimum solution for $k$-center/median/means clustering. It is always clear from the context whether we are referring to $k$-center, $k$-median, or $k$-means clustering.
Sometimes we call  the value of $\kcenter_C(P)$ (or $\kmedian_C(P), \kmeans_C(P)$) as the clustering cost of $C$ in $P$.

%By relational $k$-median or $k$-means clustering, we are referring to the geometric versions, unless explicitly stated otherwise for the discrete variants. We propose results for both versions. We note that the clustering problem on relational data is defined over the unweighted set $\Q(\I)$. In order to propose efficient algorithms we will need to construct weighted subset of tuples (\emph{coresets}) that are used to derive small approximation factors for the relational clustering problems.

\paragraph{Relative approximation}
We call a set $C\subset \dom(\allattr)$ an $\alpha$-approximation for the $k$-center (resp. $k$-median, $k$-means) clustering over a tuple set $P\subset \dom(\allattr)$, if $\kcenter_C(P)\leq \alpha \cdot\kcenter_{\opt(P)}(P)$ (resp.  $\kmedian_C(P)\leq \alpha\cdot \kmedian_{\opt(P)}(P)$, $\kmeans_C(P)\leq \alpha\cdot  \kmeans_{\opt(P)}(P)$). An $\alpha$-approximation algorithm on $P$ always returns a set of centers which is $\alpha$-approximation over $P$.

\paragraph{Coreset}
A set $\coreset\subseteq \dom(\allattr)$, with $|\coreset|\ll |P|$, is a coreset for the $k$-center/median/means clustering over $P$, if every $\alpha$-approximation algorithm executed on $C$ returns a set of centers which is a $(1+\eps)\alpha$-approximation for $P$. Throughout the paper, we assume that $\eps\in(0,1)$.

%Throughout the paper we assume that $\eps\in(0,1)$ specified by the user.

In this paper, we study $k$-center, $k$-median and $k$-means clustering on the results of a join query.

\vspace{-0.5em}
\begin{definition}
\label{def:kmedian}
    [Relational $k$-center/median/means clustering] Given a database instance $\I$, a join query $\Q$, and a positive integer parameter $k$, the goal is to compute a set $\ret\subseteq \Q(\I)$ of size $|\ret|\leq k$ such that $\kcenter_\ret(\Q(\I))$ (resp. $\kmedian_\ret(\Q(\I))$, $\kmeans_\ret(\Q(\I))$ is minimized.
    %In the discrete relational $k$-median clustering the set $\ret$ should be a subset of $\Q(\I)$.
\end{definition}
%In the relational setting, we note that $1\leq k\leq |\Q(\I)|=O(N^m)$.
\vspace{-0.5em}
\paragraph{\revone{Remark}}
\revone{
While we focus on $k$-clustering over full join results, all our algorithms extend straightforwardly to $k$-clustering over the results of project--join queries under \emph{bag semantics}, where duplicates are preserved (see also Appendix~\ref{sec:cat}). This setting is particularly useful in applications where the goal is to cluster all tuples in the join result only with respect to a subset of attributes 
(for example, using only the price attribute in the earlier $\mathsf{Purchase}$ table).
As discussed in Appendix~\ref{sec:cat}, the ability to perform clustering on project--join queries under bag semantics is also a key ingredient for handling clustering over joins that involve both numerical and categorical attributes.
}

\begin{comment}
\begin{definition}
\label{def:kmedian}
    [Relational $k$-median clustering] Given a database instance $\I$, a join query $\Q$, and a positive integer parameter $k$, the goal is to compute a set $\ret\subseteq \Q(\I)$ of size $|\ret|\leq k$ such that $\kmedian_\ret(\Q(\I))$ is minimized. %In the discrete relational $k$-median clustering the set $\ret$ should be a subset of $\Q(\I)$.
\end{definition}

\begin{definition}
\label{def:kmeans}
    [Relational $k$-means clustering] Given a database instance $\I$, a join query $\Q$, and a positive integer parameter $k$, the goal is to compute a set $\ret\subseteq \Q(\I)$ of size $|\ret|\leq k$ such that the $\kmeans_\ret(\Q(\I))$ is minimized. %In the discrete relational $k$-means clustering the set $\ret$ should be a subset of $\Q(\I)$.
\end{definition}
\end{comment}

\vspace{-0.5em}
\subsection{Related work}

For the relational $k$-center problem \revone{assuming only numerical attributes}, Agarwal et al.~\cite{agarwal2024computing} propose a $(2+\eps)$-approximation in $\O(\min\{k^2N^{\fhw}, kN^\fhw+k^{d/2}\})$ time.

%$O(1)$-approximation algorithm that runs in $\O(k^2 N^\fhw)$ time\footnote{The notation $\O$ is used to hide a $\log^{O(1)} N$ factor.} and an $O(1)$-approximation algorithm that returns $O(k)$ centers that runs in $\O(kN^\fhw+k^{d/2})$ time. Given an $O(1)$-approximate solution, they design a coreset and they run a known $\gamma$-approximation algorithm from the standard computational setting on the coreset to get a $(1+\eps)\gamma$-approximation algorithm in $\O(\min\{k^2N^\fhw, kN^\fhw+k^{d/2}\}+\timeCenter_\gamma(k^2))$ time. Applying the algorithm in~\cite{feder1988optimal} on the coreset, they get a $(2+\eps)$-approximation in $\O(\min\{k^2N^{\fhw}, kN^\fhw+k^{d/2}\})$ time.

For the relational $k$-means problem,
Khamis et al.~\cite{khamis2020functional}, gave an efficient implementation of the Lloyd’s $k$-means heuristic. %However it is known that the algorithm terminates in a local minimum without any guarantee on the approximation factor.
Relative approximation algorithms are also known.
Curtin et al.~\cite{curtin2020rk} construct a weighted set of tuples (\emph{grid-coreset}) such that an approximation algorithm for the weighted $k$-means clustering in the grid-coreset returns an approximation solution to the relational $k$-means clustering. \revone{If all attributes are numerical then} their algorithm runs in $\O(k^m N^\fhw +\timeMeans_\gamma(k^m))$ time and has a $(\gamma^2+4\gamma\sqrt{\gamma}+4\gamma)$-approximation factor. \revone{In the general case, having both numerical and categorical attributes, the running time is $\O(k^d N^\fhw +\timeMeans_\gamma(k^d))$.}
%For some join queries, the approximation factor can be improved to $(4\gamma+2\sqrt{\gamma}+1)$.
\revone{All subsequent works focus exclusively on the case where all attributes are numerical.}
Moseley et al.~\cite{moseley2021relational} designed a relational implementation of the $k$-means++ algorithm~\cite{aggarwal2009adaptive, arthur2007k}. For a constant $\eps\in(0,1)$, their algorithm runs in
$\O(k^4N^\fhw +k^2N^\fhw +\timeMeans_\gamma(k))$ 
%$O(k^4N^\fhw\log N +k^2N^\fhw\log^9 N+\timeMeans_\gamma(k\log N))$ 
expected
%$O(k^4N^\fhw\log^4 N +\frac{1}{\eps^2}k^2N^\fhw\log^8 N+\timeMeans_\gamma(k\log N))$
time and has a $(320+644(1+\eps)\gamma)$-approximation factor.
Finally, Esmailpour and Sintos~\cite{esmailpour2024improved} used the hierarchical aggregation method~\cite{chen2022coresets} to derive an $O(1)$-approximation randomized algorithm in $\O(k^2N^\fhw)$ time. Then, they construct a geometric coreset in $\O(k^2N^\fhw+k^4)$ time, leading to a $(4+\eps)\gamma$-approximation algorithm with high probability in $\O(k^2N^\fhw+k^4+\timeMeans_\gamma(k^2))$ time. %Running a known $\gamma$-approximation algorithm from the standard computational setting on the coreset, they get $(4+\eps)\gamma$-approximation algorithm with high probability in $\O(k^2N+k^4+\timeMeans_\gamma(k^2))$ time.
%Their algorithm can be made deterministic but the running time increases.
The algorithm in~\cite{esmailpour2024improved} can be extended for the relational $k$-median clustering getting a $(2+\eps)\gamma$-approximation algorithm in $\O(k^2N^\fhw+k^4+\timeMedian_\gamma(k^2))$ time.
%This is the only known relative approximation algorithm for the relational $k$-median problem.
Additive approximation algorithms are also known for relational clustering problems~\cite{chen2022coresets}. Recently, Surianarayanan et al.~\cite{10.1145/3767712} studied the relational $k$-center clustering problem with outliers.
Finally, there is a lot of recent work on relational algorithms for learning problems such as, linear regression and factorization~\cite{rendle2013scaling, khamis2018ac, kumar2015learning, schleich2016learning}, SVMs~\cite{abo2021relational, abo2018database, yang2020towards}, Independent Gaussian Mixture models~\cite{cheng2019nonlinear, cheng2021efficient}. %In~\cite{schleich2019learning} the authors give a nice survey about learning over relational data.
Database research has explored various combinatorial problems defined over relational data, such as ranked enumeration~\cite{deep2021ranked, deep2022ranked, tziavelis2020optimal},  quantiles~\cite{tziavelis2023efficient}, direct access~\cite{carmeli2023tractable}, diversity~\cite{merkl2023diversity, arenas2024towards, agarwal2024computing}, and top-$k$~\cite{tziavelis2020optimal}. A survey is given by Schleich et al.~\cite{schleich2019learning}.
\vspace{-0.5em}
\subsection{Our results}
\begin{table*}[t]
%\resizebox{\linewidth}{!}{
\begin{tabular}{|c|c|c|c|}
\hline
\textbf{Problem}&\textbf{Method}&\textbf{Approximation}&\textbf{Running time}\\\hline
\multirow{3}{3.9em}{$k$-center}
&\cite{agarwal2024computing}&$(1+\eps)\gamma$&$\O(k^2N^\fhw+\timeCenter_\gamma(k^2))$\\\cline{2-4}
&\cite{agarwal2024computing}&$(1+\eps)\gamma$&$\O(kN^\fhw+k^{\left\lceil d/2\right\rceil+1}+\timeCenter_\gamma(k))$\\\cline{2-4}
&Theorem~\ref{thm:kCenter}&$(1+\eps)\gamma$&$\O(kN^\fhw+\timeCenter_\gamma(k))$\\\hline\hline
\multirow{2}{3.9em}{$k$-median}
&\cite{esmailpour2024improved}&$(2+\eps)\gamma$&$\O(k^2N^\fhw+k^4+\timeMedian_\gamma(k^2))$\\\cline{2-4}
&Theorem~\ref{thm:kmeansopt}&$(2+\eps)\gamma$&$\O(kN^\fhw+\timeMedian_\gamma(k))$\\\hline\hline
\multirow{4}{3.9em}{$k$-means}
&\cite{curtin2020rk}&$\gamma^2+4\gamma\sqrt{\gamma}+4\gamma$&$\O(k^m N^\fhw +\timeMeans_\gamma(k^m))$\\\cline{2-4}
&\cite{moseley2021relational}&$320+644(1+\eps)\gamma$&$\O(k^4N^\fhw+\timeMeans_\gamma(k))$\\\cline{2-4}
&\cite{esmailpour2024improved}&$(4+\eps)\gamma$&$\O(k^2N^
\fhw+k^4+\timeMeans_\gamma(k^2))$\\\cline{2-4}
&Theorem~\ref{thm:kmeansopt}&$(4+\eps)\gamma$&$\O(kN^
\fhw+\timeMeans_\gamma(k))$\\\hline
\end{tabular}
%}
\caption{Comparison of our clustering algorithms with state-of-the-art relative approximation methods for datasets \revone{where all attributes are numerical (real or integer numbers)}.
%For our new algorithms we show the approximation for the discrete relational $k$-center, $k$-median, and $k$-means clustering problems. For the geometric version of the studied problems the approximation factor of our algorithms is always $(1+\eps)\gamma$.
The running time is shown in data complexity. We assume that $\eps\in(0,1)$ is an arbitrary constant. The notation $\O$ is used to hide $\log^{O(1)}N$ factors from the running time. $\timeCenter_\gamma(y)$ (resp. $\timeMedian_\gamma(y)$, $\timeMeans_\gamma(y)$) is the running time of a known $\gamma$-approximation algorithm over $y$ points for the $k$-center (resp. $k$-median, $k$-means) clustering in the standard computational setting. $\fhw$ is the fractional hypertree width of $\Q$. Recall that $d$ is the number of attributes in query $\Q$.
%The number of attributes in the query $\Q$ is denoted by $d$.
}
\label{table:results}
\vspace{-1em}
\end{table*}
%\revone{We focus on the case where all attributes are numerical; that is, the domain of each attribute $A \in \allattr$ consists of either real or integer values, as in~\cite{moseley2021relational, chen2022coresets, agarwal2024computing, esmailpour2024improved}. In the main Sections~\ref{sec:RBBD}, \ref{sec:kCenter}, and~\ref{sec:kmeans}, we present faster algorithms than the state of the art for relational clustering when all attributes are numerical. In Section~\ref{sec:cat}, we combine our results with the general framework of~\cite{curtin2020rk} to derive efficient algorithms for mixed-attribute data, where attributes may be either categorical or numerical. Furthermore, throughout this paper, we describe our results under the assumption that $\Q$ is an acyclic join query. We later extend our analysis to general cyclic queries using standard tools from database theory.}
\revtwo{Unless otherwise stated, we assume that $\dom(A) = \Re$ (or $\dom(A)=\mathbb{Z}$) for each $A \in \allattr$ and the query $\Q$ is acyclic. In Appendix~\ref{sec:cat}, we show how our algorithms extend when the attribute set $\allattr$ contains both numerical and categorical domains, while in Section~\ref{sec:ext} we extend our results to cyclic queries.}

First, in Section~\ref{sec:overview} we give the high-level ideas of our new method. We highlight the differences of our approach with the previously known algorithms, and we propose a new geometric insight on how to improve the running time.

%Our results are summarized in the next bullets.

$\bullet$ In Section~\ref{sec:RBBD} we present a variation of the BBD tree called Randomized BBD tree (or simply RBBD tree) that has similar properties to the BBD tree in the standard computational setting. However, we show that given a node of the RBBD tree, its children can be constructed in almost linear time with respect to the size of the input database in the relational setting. This is the key idea to design faster algorithms for various optimization problems in the relational setting.

$\bullet$ In Section~\ref{sec:kCenter} we show efficient algorithms for the relational $k$-center clustering. First, we propose an $O(1)$-approximation algorithm that runs in $\O(kN)$ time, with high probability. The result of our new algorithm is given as input to the coreset construction from~\cite{agarwal2024computing} to get a $(1+\eps)\gamma$-approximation algorithm in $\O(kN+\timeCenter_\gamma(k))$ time. %where $\timeCenter_\gamma(k)$ is the running time of a known $\gamma$-approximation for the $k$-center clustering problem in the standard computational setting. %For example, running the Feder and Greene~\cite{} algorithm on the coreset, we get a $(2+\eps)$-approximation algorithm in $\O(kN)$ time.

$\bullet$ In Section~\ref{sec:kmeans} we design efficient algorithms for the relational $k$-median/means clustering. We get our intuition from~\cite{har2004coresets} and we propose a geometric approach using our RBBD tree in the relational setting to design an $O(1)$-approximation algorithm for the relational $k$-median/means clustering that returns $O(k\log^3 N)$ centers.
The returned set is given as input to the coreset construction method (in the relational setting) in~\cite{esmailpour2024improved} to derive the final result. Interestingly, our new observations in this work can be used to accelerate the coreset construction of~\cite{esmailpour2024improved}. In the end, we get a $(2+\eps)\gamma$-approximation (resp. $(4+\eps)\gamma$-approximation) for the relational $k$-median (resp. $k$-means) clustering in $\O(kN+\timeMedian_\gamma(k))$ (resp. $\O(kN+\timeMeans_\gamma(k))$) time, with high probability. %where $\timeMedian_\gamma(k)$ (resp. $\timeMeans_\gamma(k)$) is the running time of a $\gamma$-approximation algorithm for the $k$-median (resp. $k$-means) problem in the standard computational setting.

%$\bullet$ In Section~\ref{}, we use our new algorithm for the relational $k$-center clustering to implement an approximation version of the Gonzalez algorithm in the relational setting in $\O(kN)$ time. The efficient Gonzalez implementation allows us to solve various optimization problems in the relational setting in $\O(kN)$ time, such us diversity, and fair clustering/diversity problems.

%In fact, in any iteration, having already selected a set $S\subseteq \Q(\I)$, we compute in $\O(kN)$ time a point $t\in \Q(\I)$ such that $\dist(t,S)\geq (1-\eps)\max_{p\in \Q(\I)}\dist(p,S)$.

\revone{
$\bullet$ In Appendix~\ref{sec:cat}, we combine our results with the general framework of~\cite{curtin2020rk} to extend our algorithms for mixed-attribute data, where attributes may be either categorical or numerical.
}

$\bullet$ 
Due to the generality of our approach, in Section~\ref{sec:ext} and Appendix~\ref{appndx:gonzalez},
we show how our new method can be used to get faster algorithms for other optimization problems, such as diversity and fairness problems, in the relational setting.

%In Section~\ref{} we briefly explain how all our algorithms can be extended to general cyclic join queries increasing the running time from $\O(kN)$ or $\O(k^2N)$ to $\O(kN^{\fhw})$ or $\O(k^2N^{\fhw})$, respectively, where $\fhw$ is the fractional hypertree width of $\Q$.

The running times and approximation factors of our new relational clustering algorithms, along with a comparison to the state-of-the-art, are shown in Table~\ref{table:results}.

%\paragraph{Remark 1}
%Similarly to~\cite{esmailpour2024improved}, if the goal is to find a set of $k$ centers $\ret\subset \Re^d$ (instead of $\ret\subseteq \Q(\I)$), all our new algorithms for the relational clustering problems have a $(1+\eps)\gamma$ approximation factor.

%For Simplicity, we show all our algorithms for relational clustering using the definitions from Section~\ref{}, where $\ret$ should be a subset of $\Q(\I)$.

%In Section~\ref{} we combine our method with~\cite{} to design an algorithm considering both categorical and numerical attributes, while Section~\ref{} we use standard methods to extend our algorithm to general join queries.}

\vspace{-0.5em}
\section{Preliminaries}
\label{sec:prelim}
We revisit various known tools, algorithms, and data structures that we use in the next sections. For a point $p\in \Re^d$ and $r\in \Re$, let $\ball(p,r)=\{x\in\Re^d\mid \dist(p,x)\leq r\}$ be the ball of center $p$ and radius $r$. 

\paragraph{Box}
A box $\rho$ is an axis-parallel $d$-dimensional hyper-rectangle that is defined as the product of $d$ intervals over the attributes, i.e., $\rho=I_1\times \ldots\times I_d$, where $I_i=[a_i,b_i]$ for $a_i,b_i\in \Re$.
\begin{comment}
\subsection{Coreset}
We give the definition of coresets for all the relational $k$-center, $k$-median and $k$-means clustering problems. Our algorithms construct small enough coresets to approximate the cost of the relational clustering.

A set $\coreset\subseteq \Q(\I)$ is an $\eps$-coreset for the relational $k$-median clustering problem on $\Q(\I)$ 
if for any set of $k$ centers $Y\subseteq \Q(\I)$,
$$(1-\eps)\kcenter_Y(\Q(\I))\leq \kcenter_Y(C)\leq (1+\eps)\kcenter_Y(\Q(\I)).$$

A weighted set $\coreset\subseteq \Re^d$ is an $\eps$-coreset for the relational $k$-median clustering problem on $\Q(\I)$ if for any set of $k$ centers $Y\subset \Re^d$,
\begin{equation}
\label{def:coreset-kmedian}
(1-\eps)\kmedian_Y(\Q(\I))\leq \kmedian_{Y}(\coreset)\leq (1+\eps)\kmedian_Y(\Q(\I)).    
\end{equation}

Similarly, a weighted set $\coreset$ is an $\eps$-coreset for the relational $k$-means clustering problem, if for any set of $k$ centers $Y\subset \Re^d$,
\begin{equation}
\label{def:coreset-kmeans}
(1-\eps)\kmeans_Y(\Q(\I))\leq \kmeans_{Y}(\coreset)\leq (1+\eps)\kmeans_Y(\Q(\I)).    
\end{equation}
\end{comment}

\renewcommand{\net}{\mathcal{N}}
\newcommand{\prob}{\psi}
\paragraph{$\eps$-sample}
Let $(P,\mathcal{R})$ be a set system where $P$ is a set of $n$ points in $\Re^d$ and $\mathcal{R}$ is the set of all (infinitely many) possible boxes in $\Re^d$. For a parameter $\eps\in(0,1)$, a set $\net\subseteq P$ is an $\eps$-sample if for every $\rho\in \mathcal{R}$,
$|\frac{|P\cap \rho|}{|P|}-\frac{|\net\cap \rho|}{|\net|}|\leq \eps.$
It is known~\cite{har2011geometric, bernard2001discrepancy} that a set of $O(\eps^{-2}\log\prob^{-1})$ uniform random samples from $P$ is an $\eps$-sample of $(P,\mathcal{R})$ with probability at least $1-\prob$.

\paragraph{Oracles for aggregation queries}
We show known oracles that can be used to access some tuples from the join results or compute some statistics on the join results without explicitly evaluating the join query.
In the next sections, we aim to count or sample from the join results in $\Q(\I)$ that lie inside a box. More formally, let $\rho$ be a box in $\Re^d$.
The goal is i) count the number of tuples $|\Q(\I)\cap \rho|$, ii) sample uniformly at random from $\Q(\I)\cap \rho$, iii) report all tuples in $\Q(\I)\cap \rho$, and iv) pick any arbitrary tuple from $\Q(\I)\cap \rho$.
%We note that if we can execute ii) efficiently, then iv) can also be executed efficiently.
The next lemma follows from~\cite{esmailpour2024improved}. 
%\vspace{-0.5em}
\begin{comment}
The axis-parallel hyper-rectangle $\mathcal{R}$ is defined as the product of $d$ intervals over the attributes, i.e., $\mathcal{R}=\{I_1\times \ldots\times I_d)$, where $I_i=[a_i,b_i]$ for $a_i,b_i\in \Re$.
Hence, $\mathcal{R}$ defines a set of $d$ linear inequalities over the attributes, i.e., a tuple $t$ lies in $\mathcal{R}$ if and only if $a_j\leq \pi_{A_j}(t)\leq b_j$ for every $A_j\in\allattr$.
Let $p$ be a tuple in a relation $R_i$.
If $a_j\leq \pi_{A_j}(p)\leq b_j$ for every $A_j\in \allattr_i$, then we keep $p$ in $R_i$. Otherwise, we remove it. The set of surviving tuples is exactly the set of tuples that might lead to join results in $\mathcal{R}$. Let $\I'\subseteq \I$ be the new database instance such that $\Q(\I')=\Q(\I)\cap\mathcal{R}$. The set $\I'$ is found in $O(N)$ time.
Using Yannakakis algorithm~\cite{yannakakis1981algorithms} we can count $|\Q(\I')|$ and we can compute a representative tuple from $\Q(\I')$ in $O(N\log N)$ time and using~\cite{zhao2018random} we can sample $z$ tuples from $\Q(\I')$ in $O((N+z)\log N)$ time. Using hashing, we can improve the running time of the algorithms to $O(N)$ and $O(N+z\log N)$, respectively.
\end{comment}

\begin{lemma}[\cite{esmailpour2024improved}]
\label{lem:Rects}
    \revtwo{Let $\I$ be a database instance of $N$ tuples (with numerical attributes) and let $\Q$ be an acyclic query over $\I$. Let $\rho$ be a box in $\Re^d$. There exists an algorithm}
    \begin{enumerate}
        \item  \revtwo{$\mathsf{CountRect}(\Q, \I,\rho)$ to count $|\Q(\I)\cap \rho|$ in $O(N)$ time,}
        \item \revtwo{$\mathsf{SampleRect}(\Q, \I,\rho, z)$ to sample $z$ points from $\Q(\I)\cap \rho$ in $O(N+z\log N)$ time,}
        \item \revtwo{$\mathsf{ReportRect}(\Q,\I,\rho)$ to report all points in $\Q(\I)\cap \rho$ in $O(N+|\rho\cap \Q(\I)|)$ time,}
        \item \revtwo{$\mathsf{ReprRect}(\Q,\I,\rho)$ to find a representative tuple from $\Q(\I)\cap \rho$ in $O(N)$ time.}
    \end{enumerate}
       \revtwo{The running times of the oracles hold assuming perfect hashing.}\footnote{\revtwo{We assume perfect hashing for simplicity. A standard randomized hash table yields the same asymptotic bounds with high probability; perfect hashing assumption is used only to avoid carrying additional (routine) randomness through the analysis.}}
    %with probability at least $1-\frac{1}{N^{m+4}}$.
\end{lemma}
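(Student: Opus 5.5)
The plan is to reduce the box $\rho$ to a filtering step on the base relations, and then invoke classical acyclic-join techniques. For each relation $R_j$ I keep only the tuples $t\in R_j$ with $a_i\le \pi_{A_i}(t)\le b_i$ for every $A_i\in\allattr_j$, where $\rho=I_1\times\cdots\times I_d$ and $I_i=[a_i,b_i]$. Call the filtered instance $\I_\rho$. Every attribute of $\allattr$ belongs to at least one relation, and a join result $t$ satisfies $\pi_{\allattr_j}(t)\in R_j$ for all $j$. Hence $t\in\rho$ if and only if every projection $\pi_{\allattr_j}(t)$ survives the filter, which gives $\Q(\I_\rho)=\Q(\I)\cap\rho$. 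Since $d$ is a constant, the filter takes $O(N)$ time. It therefore suffices to perform counting, sampling, reporting and finding a representative on $\Q(\I_\rho)$, which is again an acyclic query over $O(N)$ tuples.

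For counting, I fix a join tree of $\Q$ rooted at some relation and run a bottom-up Yannakakis-style dynamic program. For each tuple $t$ in a node $R_j$, let $w(t)$ be the number of join results of the subtree rooted at $R_j$ that extend $t$. Then $w(t)=\prod_{\text{children } R_c}\sum_{s\in R_c,\ s\sim t} w(s)$, where $s\sim t$ means $s$ and $t$ agree on the shared attributes. I evaluate each inner sum by grouping $R_c$ on $\allattr_c\cap\allattr_j$ and storing the group sums in a hash table, which is where perfect hashing gives $O(1)$ lookups. The total cost is $O(N)$, and $|\Q(\I_\rho)|=\sum_{t\in R_{\root}} w(t)$. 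The representative tuple comes from the same pass. First I run the semijoin reduction, which removes dangling tuples. Then I pick any root tuple with $w>0$ and walk down the tree, choosing at each child any matching tuple with positive weight. This gives $\mathsf{ReprRect}$ in $O(N)$ time.

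For sampling, I reuse the weights. I draw a root tuple with probability proportional to $w$. Then, recursively, for each child I draw a matching tuple with probability proportional to its $w$ within the group determined by the shared attributes. The product structure of $w$ makes the output exactly uniform on $\Q(\I_\rho)$. To achieve $O(\log N)$ time per sample, I precompute prefix sums of $w$ inside each group after the $O(N)$ pass and draw by binary search. Over the $m=O(1)$ nodes this yields $O(N+z\log N)$ in total.

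For reporting, I apply the full semijoin reduction of Yannakakis to $\I_\rho$ and then enumerate by depth-first traversal of the join tree. Because no dangling tuples remain, each step of the enumeration produces output. The running time is therefore $O(N+|\Q(\I)\cap\rho|)$.

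I expect the main obstacle to be the time bounds rather than correctness. Specifically, the grouping and weight lookups must truly be linear, which needs hashing on the shared-attribute keys, consistent with the footnote about perfect hashing. The $\log N$ per sample also requires that each group's prefix-sum array is built once, so that it is not rebuilt for every draw.
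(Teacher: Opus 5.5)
Your proposal is correct and takes essentially the same route as the argument the paper relies on from the cited work. You first filter each relation by the box constraints on its own attributes, so that $\Q(\I_\rho)=\Q(\I)\cap\rho$ in $O(N)$ time. You then use Yannakakis-style bottom-up counting, semijoin reduction with enumeration for reporting and for finding a representative, and weight-proportional top-down join sampling with hashing and per-group prefix sums, which together give the stated $O(N)$, $O(N+z\log N)$ and $O(N+|\Q(\I)\cap\rho|)$ bounds.
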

%\vspace{-0.7em}
%Let $B\subseteq \allattr$ be a subset of the attributes. The result in Lemma~\ref{lem:Rects} can also be used (straightforwardly) to compute $|\widebar{\pi}_{B}(\Q(\I))\cap \mathcal{R}|$ or sample $z$ samples from $\widebar{\pi}_{B}(\Q(\I))\cap \mathcal{R}$ in the same running time, since $|\widebar{\pi}_{B}(\Q(\I))\cap \mathcal{R}|=|\Q(\I)\cap \mathcal{R}|$.

\begin{comment}
As we will see in the next sections, we propose efficient approximation algorithms for the geometric 
clustering problems in relational data. Any such approximated solution can be converted to a solution (losing a small constant factor $2$ or $4$) for the discrete version of the problem by finding the nearest neighbor in $\Q(\I)$ from each center in $\Re^d$. In order to handle nearest neighbor queries among the tuples in $\Q(\I)$ in the Euclidean metric we use the data structure described in~\cite{deep2021ranked, tziavelis2020optimal}. Given a query point $p$, let $\textsf{NN}(p,\Q(\I))=\argmin_{t\in\Q(\I)}\dist(t,p)$. We give the proof of the next Lemma in Appendix~\ref{appndxProofs}.
%\vspace{-0.7em}
\begin{lemma}
\label{lem:NN}
    %Let $\Q$ be an acyclic join query.
    Given a point $p\in \Re^d$, there exists an algorithm $\mathsf{NearNeighbor}(\Q,\I,p)$ to compute $\textsf{NN}(p,\Q(\I))$ in $O(N)$ time.
\end{lemma}
\vspace{-1em}
\end{comment}

%\vspace{-0.5em}
%\subsection{Geometric data structures}
\paragraph{BBD tree}
\revthree{
The main geometric data structure we employ is the \emph{BBD tree}~\cite{arya2000approximate, arya1998optimal}, a variant of the quadtree~\cite{finkel1974quad}.
A BBD tree $\mathcal{T}$ on a set $P$ of $n$ points in $\Re^d$ is a binary tree of height $O(\log n)$ with $n$ leaves, where each point in $P$ is stored in exactly one leaf node.
Let $\square$ denote the axis-aligned box with the smallest volume containing $P$.
Each node $u$ of $\mathcal{T}$ is associated with a region $\square_u \subseteq \Re^d$, which is either a box or the difference of two boxes (a box with a hole).
This region contains precisely the points from $P$ that are stored in the leaves of the subtree rooted at $u$.
If $\root$ is the root node of $\mathcal{T}$, then $\square_{\root} = \square$.
Let $P_u = P \cap \square_u$ denote the points contained in the subtree rooted at $u$.
If $|P_u| = 1$, then $u$ is a leaf.
Otherwise, $u$ has two children, say $w$ and $z$, such that $\square_w$ and $\square_z$ form a partition of $\square_u$.
The regions associated with the nodes of $\mathcal{T}$ thus induce a hierarchical partition of $\Re^d$.
For illustration, consider the set of $23$ points shown in Figure~\ref{fig:part}.
A partially constructed (and, for simplicity, slightly simplified) BBD tree of this set is shown in Figure~\ref{fig:BBDexample}.
The blue segments in Figure~\ref{fig:part} depict the hierarchical partition induced by the constructed BBD tree.
For the node $v$ in Figure~\ref{fig:BBDexample}, the region $\square_v$ is the box defined by opposite corners $C$ and $J$, while for node $w$, the region $\square_w$ is the box defined by opposite corners $C$ and $K$, excluding the inner box defined by opposite corners $L$ and $M$. For every node $u$ of the BBD tree in Figure~\ref{fig:BBDexample}, we also show the number of points in $P_u$. For example, $|P_w|=4$ because the box defined by $[C,K]$ contains $8$ points, but the hole box $[L,M]$ contains $4$ points.
A BBD tree has $O(n)$ space and can be constructed in $O(n\log n)$ time. 
}

\revthree{
A BBD tree requires $O(n)$ space and can be constructed in $O(n \log n)$ time.
Given a parameter $\eps \in (0,1)$ and a ball $\mathcal{B}(x,r)$ in $\Re^d$, the BBD tree supports a query procedure, denoted $\mathcal{T}(x,r)$, that returns a set of \emph{canonical nodes} $\mathcal{U}(x,r) = {u_1, \ldots, u_\kappa}$, with $\kappa = O(\log n + \eps^{-d+1})$, in $O(\log n + \eps^{-d+1})$ time.
These nodes satisfy $\square_{u_i} \cap \square_{u_j} = \emptyset$ for all distinct $i,j$, and
$\mathcal{B}(x,r)\subseteq \bigcup_{1\leq i\leq \kappa}\square_{u_i}\subseteq \mathcal{B}(x,(1+\eps)r)$.
In other words, the BBD tree returns a set of boxes (with or without holes) such that the union of them completely covers the ball $\mathcal{B}(x,r)$ and might cover some part of $\mathcal{B}(x,(1+\eps)r)\setminus \mathcal{B}(x,r)$.
The query procedure is straightforward once the tree is built.
Starting from the root of $\mathcal{T}$ and an initially empty set $\mathcal{U}(x,r)$, the algorithm recursively traverses the tree.
When visiting a node $u$, if $\square_u \subseteq \mathcal{B}(x,(1+\eps)r)$, the node $u$ is added to $\mathcal{U}(x,r)$.
If $\square_u \cap \mathcal{B}(x,r) = \emptyset$, the traversal of this branch terminates.
Otherwise, the procedure continues recursively with both children of $u$.
For example, in Figure~\ref{fig:part}, let $\mathcal{B}(x,r)$ denote the inner red circle and $\mathcal{B}(x,(1+\eps)r)$ the outer dashed circle.
The corresponding canonical nodes $\mathcal{U}(x,r)$ are highlighted as red nodes in Figure~\ref{fig:BBDexample}. We note that the boxes in $\mathcal{U}(x,r)$ cover all points in the inner red circle and, in addition, cover a point from $P$ in the annulus $\mathcal{B}(x,(1+\eps)r)\setminus \mathcal{B}(x,r)$, because the box with opposite corners $N$ and $S$ lies entirely within the outer dashed circle.
BBD trees can be used for different types of queries, such as reporting, counting, or sampling queries.}
%For example, for counting queries, for every node $u$ of $\mathcal{T}$, we store $u.c=|P_u|$. Given a query ball $\mathcal{B}(x,r)$, we compute the set of of canonical nodes $\mathcal{U}(x,r)$ and we return $\mathsf{count}=\sum_{u\in\mathcal{U}(x,r)}u.c$. By definition, it holds that $|\mathcal{B}(x,r)\cap P|\leq \mathsf{count}\leq |\mathcal{B}(x,(1+\eps)r)\cap P|$. Similarly, in sampling queries we aim to return uniformly at random a point from $P\cap\mathcal{B}(x,r)$. In the preprocessing phase, for every node $u$ of $\mathcal{T}$, we compute the count $u.c$. Given a query ball $\mathcal{B}(x,r)$ we compute $\mathcal{U}(x,r)$ and we select a node $v\in \mathcal{U}(x,r)$ with probability $\frac{v.c}{\sum_{u\in\mathcal{U}(x,r)}u.c}$. Then we traverse the tree from $v$ to a leaf node of $\mathcal{T}$ following the left or right child of a nodes at random based on the precomputed counts. It is straightforward to guarantee that this method returns a point $p$ uniformly at random from an area which is a superset of $\mathcal{B}(x,r)$ and a subset of $\mathcal{B}(x,(1+\eps)r)$. The query time of both counting and sampling queries is $O(\log n + \eps^{-d+1})$.

%By reporting all points $P_{u_i}$ for $i\leq \kappa$, the BBD tree can be used for reporting all points in $P\cap \mathcal{B}(x,r)$ along with some points from $P\cap (\mathcal{B}(x,(1+\eps)r)\setminus \mathcal{B}(x,r))$.

%\paragraph{kd tree}

%\subsection{High level ideas}

\begin{figure}[h!]
    \centering
    \begin{minipage}{0.34\textwidth}
        \centering
        \includegraphics[width=\textwidth]{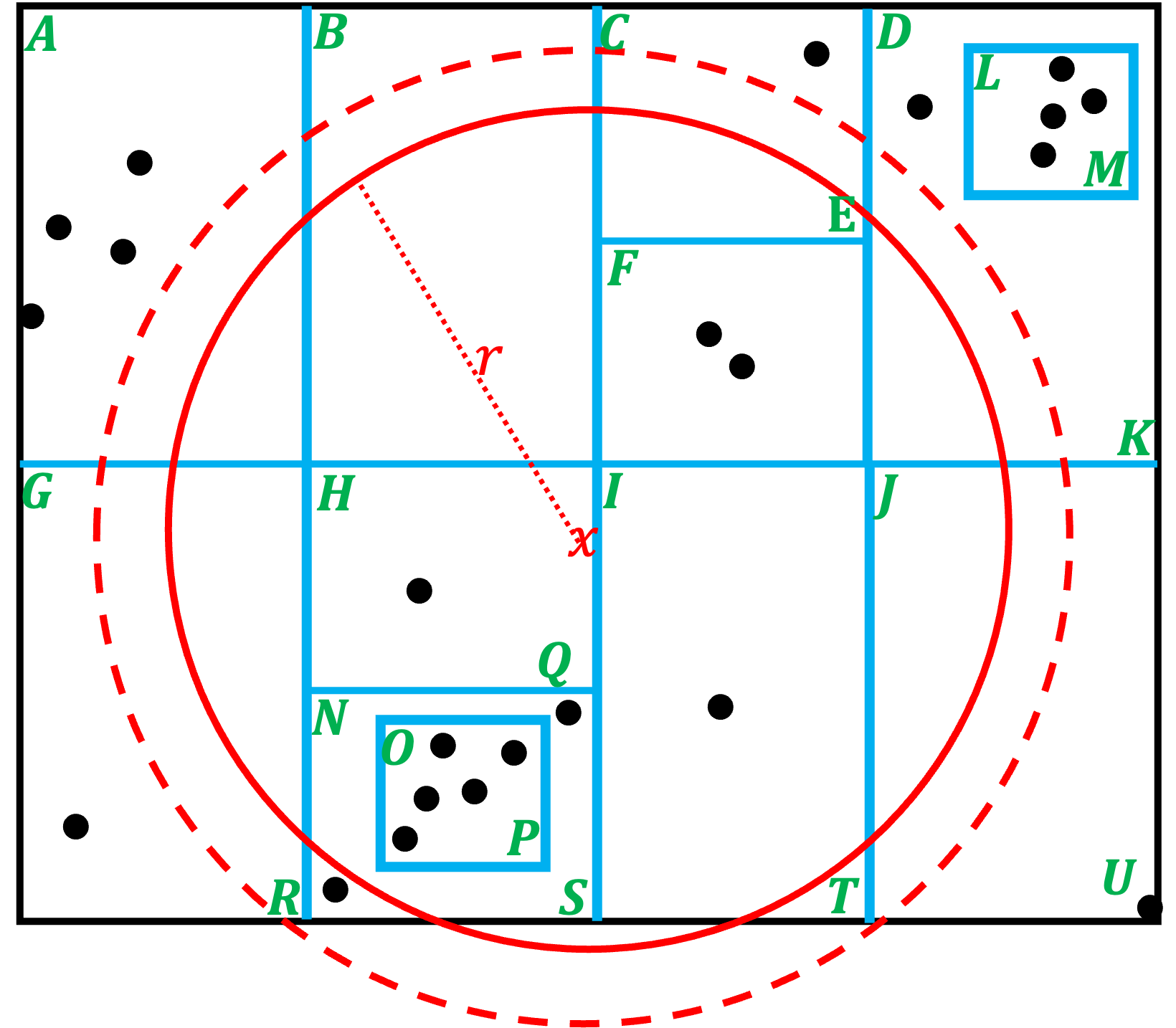}
        \caption{\revthree{Partition of a partially constructed BBD tree over the set of black points. Red circle represents the query ball $\mathcal{B}(x,r)$ while the larger dashed circle represents $\mathcal{B}(x,(1+\eps)r)$.}}
        \label{fig:part}
    \end{minipage}\hfill
    \begin{minipage}{0.64\textwidth}
        \centering
        \includegraphics[width=\textwidth]{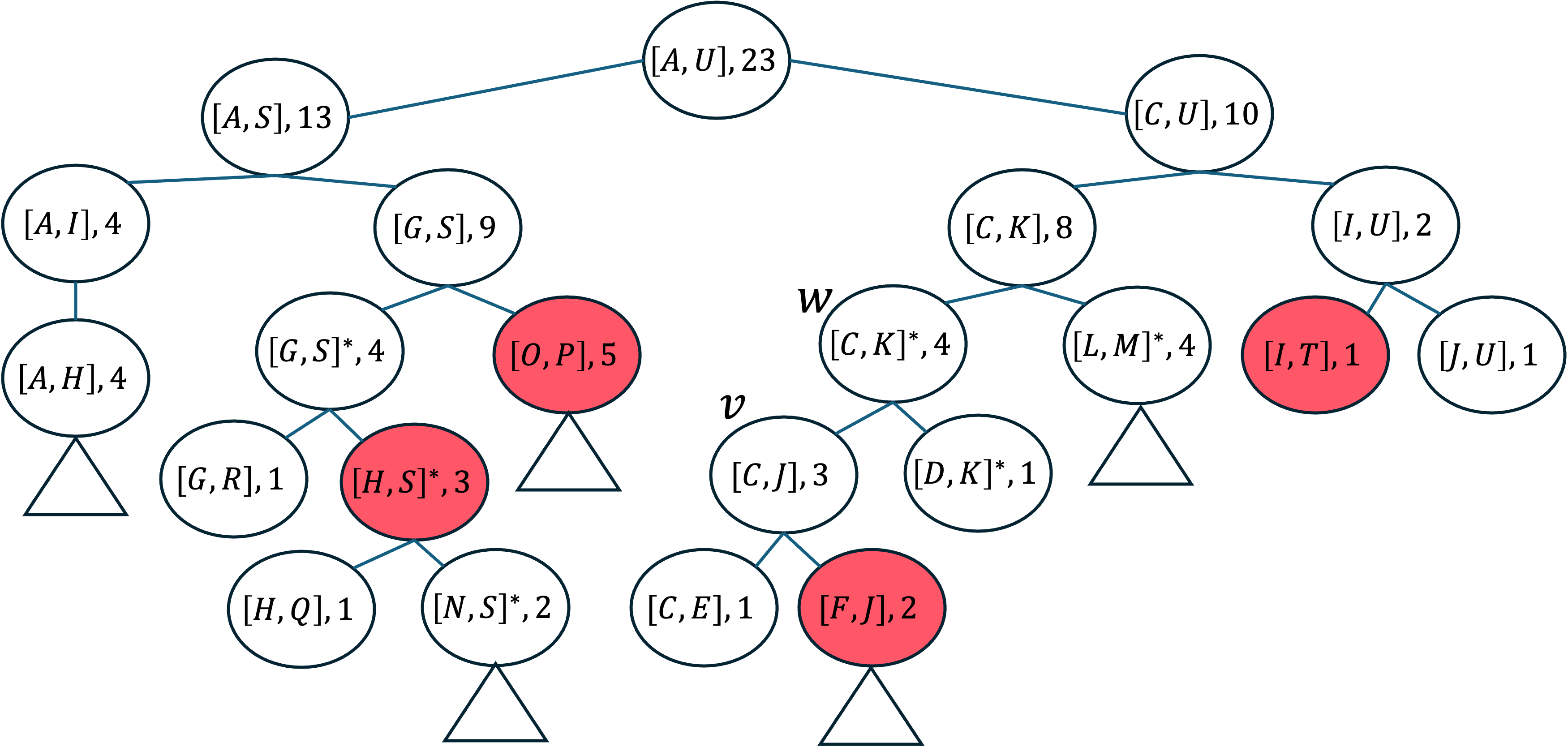}
        \caption{\revthree{Partially constructed BBD tree over the black points of Figure~\ref{fig:part}.
        A node $u$ of the form $[X,Y], c$ has a region $\square_u$ defined by the opposite corners $X, Y$, and $|P_u|=|\square_u\cap P|=c$.
Nodes with $[X,Y]^*,c$ correspond to nodes whose region is a box with a hole.
The triangles below certain nodes denote subtrees omitted for simplicity and clarity.
Red nodes indicate the set of canonical nodes $\mathcal{U}(x,r)$ corresponding to the ball $\mathcal{B}(x,r)$ shown in Figure~\ref{fig:part}.}}
        \label{fig:BBDexample}
    \end{minipage}
    \vspace{-1.5em}
\end{figure}

%% file: treeConstruction.tex
%\section{Construction of trees}
\vspace{-0.2em}
\section{Overview}
\label{sec:overview}
In this section, we describe the intuition and the high-level ideas of our approach. In the next sections, we formally prove the claims made in this section.
\vspace{-0.3em}

\paragraph{Previous methods}
Most of the previously known methods for relational clustering~\cite{esmailpour2024improved, agarwal2024computing, chen2022coresets, moseley2021relational, curtin2020rk} follow the same high-level approach. They first design a constant approximation algorithm with more than $k$ centers. Let $Z$ be that set. Then $Z$ is used to construct a small coreset $\coreset$ for the relational clustering problem. Finally, a known clustering algorithm in the standard computational setting is executed on $\coreset$ to derive the final solution. Interestingly, the coreset construction is 
relatively fast. In most cases, the running time is dominated by the first part, i.e., constructing any constant approximation solution with potentially more than $k$ points. For example, consider the state--of--the--art algorithms for the relational $k$-center~\cite{agarwal2024computing} and relational $k$-median/means clustering~\cite{esmailpour2024improved}. In these papers, they construct $O(1)$-approximation algorithms with $\O(k^2)$ centers in $\O(k^2N)$ time using a \emph{hierarchical aggregation method}.
At a high level, they place all relations from $\allrel$ as leaf nodes in a balanced binary tree. %Then they recursively do the following.
For a leaf node $u_1$ run a standard clustering algorithm
in the standard computational setting, considering only the tuples of the relation stored in $u_1$. Let $\ret_1$ be the returned set of $k$ centers. Equivalently, they get $\ret_2$ from the sibling node $u_2$. Then, they show that $\ret_{1\times 2}=\ret_1\times\ret_2$ is an $O(1)$-approximation solution for the clustering problem on the result of joining the tables in leaf nodes $u_1$ and $u_2$. Then they construct a coreset based on $\ret_{1\times 2}$ in time $\O(|\ret_{1\times 2}|\cdot N)$ and continue recursively on higher levels of the tree.
While this method is useful to design $O(1)$-approximation algorithms, the drawback is that it constructs intermediate sets of size $\Theta(k^2)$ (for example $\ret_{1\times 2}$), leading to $\Omega(k^2\cdot N)$ time algorithms.

\begin{wrapfigure}{r}{0.35\textwidth}
    \vspace{-10pt} % optional: adjust vertical alignment
    \centering
    \includegraphics[width=0.33\textwidth]{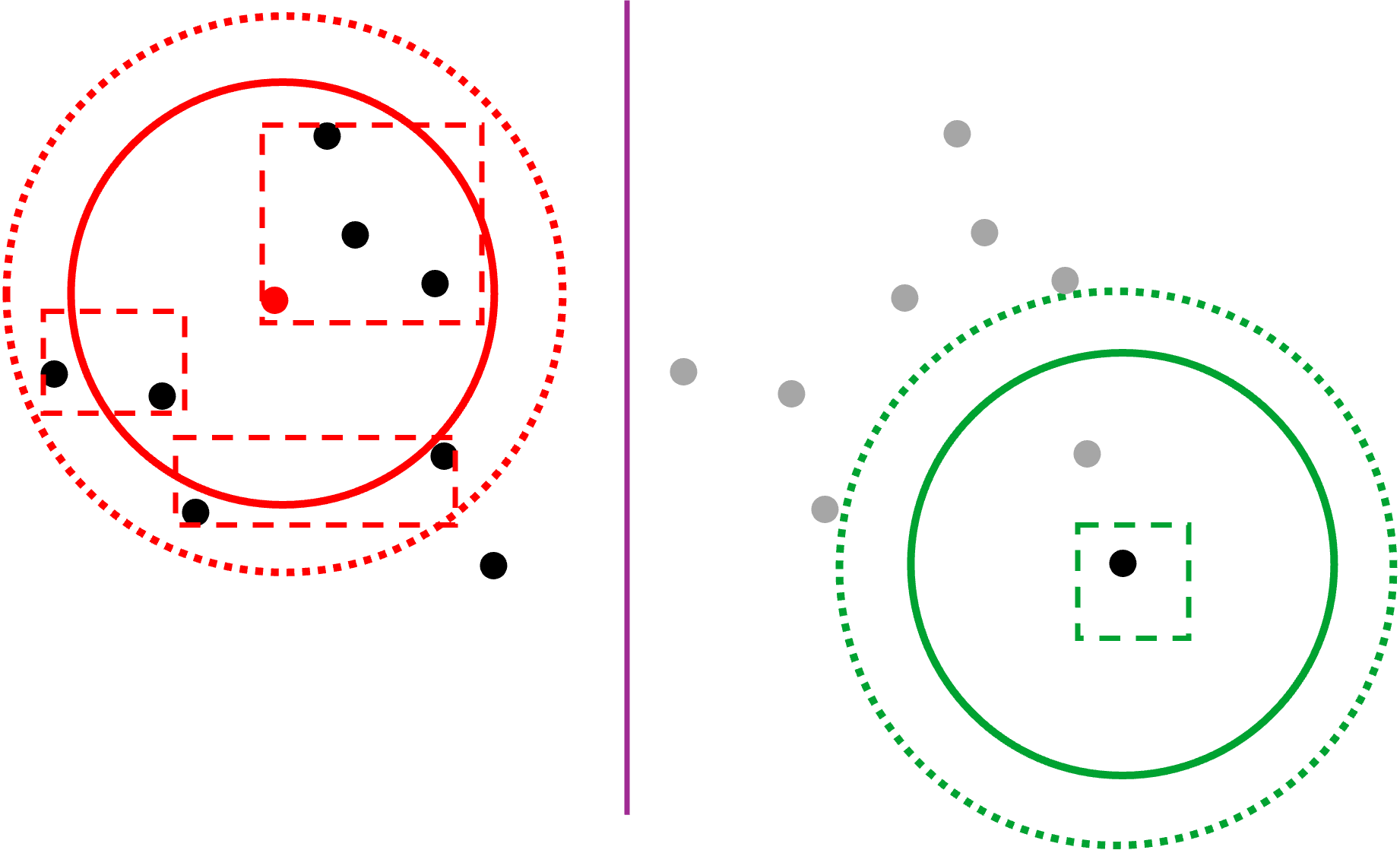}
    \vspace{-3pt} % optional: adjust spacing under image
    \caption{\revthree{%This is the same dataset as in Figure~\ref{fig:basicAlg}. 
        The algorithm first arbitrarily selects the red center (left). Using the BBD tree, it identifies the canonical nodes (three red dashed rectangles), which contain all points within distance $2r$ from the red center and three points within distance $(1+\eps)2r$. These canonical nodes are then marked as inactive. Next, the algorithm selects the green point (right), since it lies in an active region, and computes the canonical node (green dashed rectangle) within distance $2r$, corresponding to the green dashed rectangle, which is subsequently marked as inactive. In the end, all points lie in inactive regions, so the algorithm terminates, returning a $(2+\eps)$-approximation.
        }}
          \label{fig:ImprovedAlg}
          \vspace{-1em}
\end{wrapfigure}
\paragraph{Our approach}
We give the high-level idea of our approach using the $k$-center clustering as an example. Let $P$ be a set of $n$ points in $\Re^d$ and $k$ be a positive integer parameter in the standard computational setting.
In computational geometry, there are various efficient constant approximation algorithms for the $k$-center clustering. 
%For example, Feder and Greene~\cite{}, described an $O(n\log k)$ time algorithm that returns a set $\ret$ such that $\kcenter_\ret(P)\leq 2\kcenter_{\opt(P)}(P)$. This is the optimum deterministic algorithm for $k$-center clustering in Euclidean space. In~\cite{}, the authors give a randomized algorithm with $O(n)$ expected time. 
While these algorithms are fast in the standard computational setting, it is not clear how to efficiently extend them to the relational setting. 
%Next, we focus on another simpler approximation algorithm. %for a constant approximation for the $k$-center clustering problem that we will use to extend it in the relational setting.
Consider the 
following simple algorithm for the $k$-center problem on $P$.
Let $r=\kcenter_{\opt(P)}(P)$ be the optimum cost, and assume that $r$ is known upfront.
In the first iteration, we add in $\ret$ any arbitrary point $p_1\in P$. We remove all points in $\mathcal{B}(p_1,2r)\cap P$ from $P$, i.e., all points in $P$ within distance $2r$ from $p_1$, and we continue with the next iteration. In the $i$-th iteration we choose an arbitrary point $p_i\in P\setminus \bigcup_{j<i}\mathcal{B}(p_j,2r)$, we remove all points $\mathcal{B}(p_i,2r)\cap P$ and we continue for a total number of $k$ iterations.
It is easy to verify that this algorithm returns a $2$-approximation for the $k$-center problem on $P$.
%\revthree{An example implementation of this algorithm on a simple point set is illustrated in Figure~\ref{fig:basicAlg}.}
%Indeed, in every iteration $i$, the ball $\mathcal{B}(p_i,2r)\cap P$ completely includes one optimum cluster.
Assuming that the optimum $r$ is known, a naive implementation of this algorithm takes $O(nk)$ time. If $r$ is not known, we need to execute a binary search on the distances of $P$ (using a WSPD~\cite{callahan1995decomposition, har2005fast}), leading to $O(nk\log n)$ running time.
However, there is a faster implementation of this algorithm in the standard computational setting. We build a BBD tree $\mathcal{T}$ over $P$. For every node $u$ of $\mathcal{T}$, we store a representative point $u.rep$ from the points stored in the leaf nodes of the subtree rooted at $u$, i.e., $u.rep\in P_u$ (see Section~\ref{sec:RBBD}). In the first iteration $p_1=\root.rep$, is the representative point in the root. Then we run the query $\mathcal{T}(p_1,2r)$  and we get the set of canonical nodes $\canonical(p_1,2r)$. We mark every node in $\canonical(p_1,2r)$ as 
`inactive'. Then we start from the deepest node in $\canonical(p_1,2r)$ and we update the representative points of the active nodes to make sure that
no representative point lies in a leaf node of a subtree rooted at an inactive node.
%This can be done efficiently as follows. Let $u\in\canonical(p_1,2r)$ be the deepest canonical node.  Recall that we mark $u$ as inactive. Then starting from the parent node $w$ of $u$ we continue recursively: Let $v_1, v_2$ be the children nodes of $w$. If both $v_1,v_2$ are inactive, we mark $w$ as inactive. If $v_1$ (resp. $u_2$) is active, then we set $w.rep=v_1.rep$ (resp. $w.rep=v_2.rep$). After we are done with all updates in the deepest level of the tree, we continue the updates with nodes in the parents' level.
We continue
similarly in the next $k-1$ iterations, visiting only active nodes.
%The only difference is that during the query procedure to get $\canonical(p_i,2r)$, if we visit a node $u$ in $\mathcal{T}$, where $u$ is inactive, we stop the search towards this branch of the tree.
This implementation returns a set $\ret\subseteq P$ such that $\kcenter_{\ret}(P)\leq 2(1+\eps)\kcenter_{\opt(P)}(P)$ 
%($\tree$ implicitly marks all points in $\ball(p_i,2r)\cap P$ as inactive, and might mark some of the points in $(\ball(p_i,2r(1+\eps))\setminus \ball(p_i,2r))\cap P$ as inactive) 
with running time $O(n\log n + k(\eps^{-d+1}+\log n))=O(n\log n)$, in standard computational setting. \revthree{An example of the faster algorithm in the standard computational setting is shown in Figure~\ref{fig:ImprovedAlg}.}

%In the $i$-th iteration, we pick $p_i=\root.rep$. Notice that $\root.rep$ is a point in $P$ that lies in an active area, i.e., an area that has not been removed in a previous iteration. We query the BBD tree $\mathcal{T}$ to find the set of canonical nodes for the query ball $\ball(p_i,2r)$. During the query procedure, if we visit a node $v$ in $\mathcal{T}$ before we check whether it intersects partially or completely $\ball(p_i,2r)$, we check whether $v$ is inactive. If $v$ is inactive we stop the search towards this branch of the tree. Given the new set of canonical nodes $\canonical(p_i,2r)$, we make them inactive and we update the representative points from all nodes in $\canonical(p_i,2r)$ towards the root of $\mathcal{T}$, as described above.

A straightforward implementation of the BBD tree based algorithm for $k$-center clustering will be expensive in the relational setting. Just to construct the BBD tree over all join results, we need $\Omega(|\Q(\I)|)$ time.
However, we make the following simple but key observation. 
\textbf{We do not really need to construct the entire tree upfront. What if we construct the tree on the fly?}
Consider again the standard computational setting. Initially, assume that we have only constructed the root node $\root$ of the tree $\mathcal{T}$.
%, i.e.,  have computed the range $\square_\root$ and a representative point $\root.rep\in P$.
%Consider a query ball $\ball(p_i,2r)$.
The goal is to compute $\canonical(p_i,2r)$. 
% We start the search procedure from the root node as usual. 
Let $v$ be a node of the constructed part of $\mathcal{T}$ that the query procedure traverses. If the children of $v$ have not been constructed, we construct the left child $v_1$ and the right child $v_2$, and then we continue with the regular query procedure. 
%The update of the representative points is also trivial, since all needed nodes have been constructed.
%Notice that if we have already constructed all the children of the nodes that the query procedure needs in order to compute $\canonical(p_i,2r)$, then we can also update the representative points  without constructing new nodes, i.e, the update procedure of the representative points only use nodes that have been traversed by the query procedure.
%Given a node $v$ of the BBD tree, in the standard computational setting, the children $v_1, v_2$ can be computed in $O(|P_v|)$ time~\cite{}. Assuming the `on the fly' construction of the BBD tree, the $2(1+\eps)$ approximation algorithm for the $k$-center clustering problem in the standard computational setting also runs in $O(n\log n)$ time.
%More generally, 
Given a node $v$ of the BBD tree, if we can construct its two children $v_1, v_2$ in $O(t)$ time, then the execution time is bounded by $O(k\cdot t\cdot (\log n + \eps^{-d+1}))$. While this is a loose upper bound in the standard computational setting, we use it to bound the query procedure in the relational setting.

\revtwo{Our core idea is the on-the-fly construction of a new variant of the BBD tree in the relational setting. This enables the design of several new relational oracles, which in turn yield the fastest known approximation algorithms for a variety of optimization problems. While the underlying idea is conceptually simple, constructing a BBD tree in the relational setting requires multiple steps, combining techniques from computational geometry (BBD trees), database theory (relational counting and sampling oracles), and probability analysis ($\eps$-samples). 
A key advantage of our approach is generality. Many classical algorithms from standard settings can be directly translated to relational settings with minimal effort using the proposed data structure. This sharply contrasts with previous work on relational clustering (for example~\cite{agarwal2024computing, esmailpour2024improved, moseley2021relational}), where algorithms are specifically tailored for particular objectives and often differ significantly from their classical counterparts. 
%In our framework, the underlying data structure allows classical algorithms–such as the k-center algorithm–to be implemented in the relational setting almost verbatim. By comparison, methods like~\cite{agarwal2024computing, esmailpour2024improved, moseley2021relational} involve much more complex, problem-specific constructions for relational data.
}

In the next section, we propose a new variation of the BBD tree, called randomized BBD tree (or RBBD tree in short) such that if a node $v$ of an RBBD tree over $\Q(\I)$ is given, then the children $v_1, v_2$ can be computed in $O(N+\eps^{-2}\log N)$ time in the relational setting.

%In the standard computational setting, the construction of the BBD tree comes (almost) for free. Just to pass over all points we need $O(n)$ time. However,
\vspace{-0.5em}
\newcommand{\rec}{\rho}
\section{RBBD tree}
\label{sec:RBBD}
Before we describe the RBBD tree, we give an overview of how standard BBD trees are constructed as described in~\cite{arya1998optimal}.
%and explain why we cannot construct the children of a node $v$ efficiently in the relational setting.
Then, we introduce a randomized variation of the BBD tree called RBBD tree. 
Although the RBBD tree offers no advantage over the BBD tree in the standard computational setting, we show that it can be beneficial in the relational setting, as the children of any node in the RBBD tree over $\Q(\I)$ can be constructed efficiently.
\vspace{-0.5em}
\subsection{More details on the construction of a BBD tree}
%Next, we briefly show the BBD tree construction as described in~\cite{}.
Let $P$ be a set of $n$ points in $\Re^d$. Recall that every node $u$ of the BBD tree is associated with a region $\square_u$, which is either a box or a box with a hole, and recall that $P_u=\square_u\cap P$. In the latter case, let $\square^+_u$ be the outer box and $\square^-_u$ be the inner box (hole).
A \emph{midpoint box} is defined as any box that can be obtained by a recursive application of the following rule, starting from the initial bounding box $\square_\root$, then: Let $\rec$ be a midpoint box and let $\xi$ be the longest side of $\rec$. Split $\rec$ into two identical boxes by a hyperplane which is orthogonal to the $\xi$-th coordinate axis, passing through the center of $\rec$. See Figure~\ref{fig:midpoint} for an example of midpoint boxes.
The tree is constructed in a way that if $\square_u$ is a box, then $\square_u$ is a midpoint box, while if $\square_u$ is a box with a hole, then both $\square^+_u$ and $\square^-_u$ are midpoint boxes, for any node $u$ of the tree. 
The children of $u$ are constructed using one of the following two methods, \emph{fair split} and \emph{centroid shrink}. At the root of the tree, they apply a fair split to construct its children.
In any other case, if $u$ was generated by fair split (resp. centroid shrink) its children are generated by centroid shrink (resp. fair split).
\input{bbd-figs}

\paragraph{Fair split}
They compute a hyperplane that passes through the center of $\square_u$ and is orthogonal to the longest side of $\square_u$ (or $\square_u^+$ if $\square_u$ is a box with a hole). Notice that even if $\square_u$ contains a hole, then the hyperplane splitting $\square_u$ into two boxes will not intersect $\square_u^-$ because both $\square_u^-$ and $\square_u^+$ are midpoint boxes. In this case, if $w, v$ are the children nodes of $u$, then $\square_w$ will be a box, and $\square_v$ will be a box with a hole, having $\square_v^-=\square_u^-$.
The fair split is executed in $O(1)$ time.

\paragraph{Centroid shrink}
On a high level, the goal is to find a midpoint box $\rec^*$ inside $\square_u$ such that $|\rec^* \cap P|\leq \frac{2}{3}|P_u|$ and $|(\square_u\setminus\rec^*) \cap P|\leq \frac{2}{3}|P_u|$. In order to compute $\rec^*$ their algorithm works in iterations.
We describe the construction assuming that $\square_u$ is a box without a hole. In Appendix~\ref{appndx:RBBDhole}, we discuss the construction when $\square_u$ is a box with a hole.
Initially, let $\rho=\square_u$. They compute the smallest midpoint box that contains all points in $P_u \cap \rho$. Let $\rec'\subseteq \rec$ be this box. They apply $O(d)$ fair splits on $\rec'$ until they find a non-trivial fair split, i.e., a fair split that splits the points in $P_u \cap \rho'$ into two non-empty subsets. Let $\hat{\rec}\subset \rec'$ be one of the two boxes created by the non-trivial fair split such that $|\hat{\rec}\cap P_u|\geq \frac{|\rec'\cap P_u|}{2}$. We say that $\hat{\rho}$ is the majority midpoint box.
The key observation is that $|\hat{\rec}\cap P_u|\leq |\rec\cap P_u|-1$.
If $|\hat{\rec}\cap P_u|> \frac{2}{3}|P_u|$, they set $\rec=\hat{\rec}$ and they continue recursively.
If $|\hat{\rec}\cap P_u|\leq \frac{2}{3}|P_u|$, then $\rec^*=\hat{\rec}$ and then they construct two children nodes $w, v$ of $u$ such that $\square_w=\rec^*$ is a box (without a hole), and  $\square_v$ is a box with a hole with $\square_v^+=\square_u$ and $\square_v^-=\rec^* = \square_w$, as shown in Figure~\ref{fig:shrink}.
As shown in~\cite{arya1998optimal}, the smallest midpoint box $\rec'\subseteq\rec$ that contains all points in $P_u\cap \rec$ can be found in $O(1)$ time.
Since in every recursive iteration $|\hat{\rec}\cap P_u|\leq |\rec'\cap P_u|-1$, the centroid shrink procedure is executed in $O(|P_u|)$ time.

%\paragraph{Remark} The fair split procedure is straightforward to be executed in the relational setting (more details in Subsection~\ref{}). However, we are not aware of a straightforward way to implement the centroid shrink in the relational setting. A straightforward implementation takes $\Omega(N+|P_u|)$ time, however in the relational setting $|P_u|=O(|\Q(\I)|)$. In the next subsection, we design a new variation of the BBD tree which will allow us to execute the centroid shrink efficiently, as we will show in Subsection~\ref{}.

\vspace{-0.7em}
\subsection{Construction of the RBBD tree}
\label{subsec:constructionRBBD}
\vspace{-0.2em}
We define a new variation of the BBD tree called RBBD tree in the standard computational setting over a set of $n$ points $P\subset \Re^d$. %We note that the RBBD tree does not have any advantage compared to the BBD tree in the standard computational setting. Instead, its only advantage is that given a node of the RBBD tree $u$, we will be able to construct the children of $u$ in the relational setting, efficiently, as shown in Subsection~\ref{}.
The construction of the RBBD tree shares most of the steps and properties with the BBD tree. For example, all constructed boxes (both inner and outer) are midpoint boxes. The fair split is identical to BBD tree's fair split. The only difference is the definition and execution of the centroid shrink procedure, given a node $u$.

We define the \emph{randomized centroid shrink}.
%Let $t>1$ be any constant.
Let $u$ be a node.
We get a set $\net_u$ of $\Theta(\eps^{-2}\log n)$ uniform random samples from $\square_u\cap P_u$. 
If $\square_u$ is a box without a hole, then we simply sample in the box $\square_u$.
If $\square_u$ is a box with a hole, we sample in the region between $\square_u^+$ and $\square_u^-$. Then the procedure is similar to the centroid shrink from the BBD tree using $\net_u$ instead of $P_u$. We recursively compute the smallest midpoint box $\rho'$ that contains $\net_u$, and we apply $O(d)$ fair splits until we find a non-trivial fair split. We recursively continue the construction on the box $\hat{\rho}$ created by the non-trivial fair split procedure that contains most of the points in $\net_u$ (majority midpoint box). We repeat this recursive approach until we find a majority midpoint box $\hat{\rho}$ (after a non-trivial fair split) that contains at most $\frac{2}{3}|\net_u|$ samples from $\net_u$.
If $\square_u$ is a box with a hole, the procedure is shown in Appendix~\ref{appndx:RBBDhole}.
%During this procedure, we handle the inner box $\square_u^-$ as in the centroid shrink of the standard BBD tree: The first time we find a box $\hat{\rec}$ (after the non-trivial fair split) that does not contain $\square_u^-$ we construct the subtree with the six new nodes as described in the previous subsection.
In Appendix~\ref{appndx:constructionRBBD}, we prove the next theorem using the properties of the BBD and RBBD trees.
\vspace{-0.2em}

\begin{theorem}
\label{thm:RBBD}
    Given a set $P$ of $n$ points in $\Re^d$ a parameter $\eps\in (0,1)$, and a constant parameter $\cons>1$, the RBBD tree has height $O(\log n)$, it can be constructed in $O(n\log n + n\cdot(f(n,\eps)+\eps^{-2}\log n))$ time with space $O(n)$, where $f(n,\eps)$ denotes the running time to sample $\Theta(\eps^{-2}\log n^\cons)$ points from a box (or a box with a hole), such that given a query ball $\ball(p,r)$ for a point $p\in \Re^d$ and a radius $r$, a set $\canonical$ of $O(\log n +\eps^{-d+1})$ canonical nodes can be found in $O(\log n +\eps^{-d+1})$ time such that $\ball(p,r)\subseteq P\cap \bigcup_{u\in \canonical}\square_u \subseteq \ball(p,(1+\eps)r)$. The height, the space, the construction time of the tree, the size of $\canonical$, and the query time hold with probability at least $1-\frac{1}{n^{\cons-1}}$.
\end{theorem}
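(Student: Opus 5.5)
The plan is to reduce every claimed property of the RBBD tree to the corresponding property of the BBD tree from~\cite{arya1998optimal}, conditioned on a single good event: every sample set $\net_u$ is an $\eps'$-sample of $(P_u,\mathcal{R})$ for a fixed small constant $\eps'$ (say $\eps'=1/12$; the bound $\Theta(\eps^{-2}\log n^\cons)$ dominates this since $\eps<1$). By the $\eps$-sample result quoted in the preliminaries, $\Theta(\eps^{-2}\log n^\cons)$ uniform samples give an $\eps'$-sample with failure probability at most $n^{-\cons}$. We only need this at internal nodes where a randomized centroid shrink is performed, and there are $O(n)$ such nodes, since the tree has $n$ leaves and is binary. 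A union bound then makes the good event hold with probability at least $1-O(n)/n^{\cons}$, which is $1-1/n^{\cons-1}$ after adjusting constants. The rest of the argument is deterministic, conditioned on this event. The regions of the RBBD tree are, as for the BBD tree, midpoint boxes or differences of two midpoint boxes; the hole case follows from the construction in Appendix~\ref{appndx:RBBDhole}.

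\textbf{Balance of the randomized shrink.} Suppose the procedure stops at a majority midpoint box $\hat\rho$ with $|\hat\rho\cap\net_u|\le \frac23|\net_u|$. Because $\hat\rho$ and its complement within $\square_u$ are boxes (or a box difference, which is handled as two box queries), the $\eps'$-sample guarantee gives $|\hat\rho\cap P_u|\le(\frac23+\eps')|P_u|$. For the complement we need the lower bound. The parent box $\rho'$ from which $\hat\rho$ was taken still contained more than $\frac23|\net_u|$ samples, and $\hat\rho$ holds at least half of the samples in $\rho'$, so $|\hat\rho\cap \net_u|>\frac13|\net_u|$. Hence $|\hat\rho\cap P_u|\ge(\frac13-\eps')|P_u|$. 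Both children therefore contain at most a $(\frac23+\eps')<\frac34$ fraction of $P_u$. This gives a constant-factor decrease every two levels, since fair split and shrink alternate, and so the height is $O(\log n)$. As in the BBD tree, $|P_u|=1$ defines a leaf, so there are $n$ leaves and $O(n)$ nodes, giving space $O(n)$.

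\textbf{Construction time.} At each node the cost has three parts: drawing the samples, which takes $f(n,\eps)$; the shrink iterations; and the cost of partitioning points. The shrink iterations run on $\net_u$ exactly as in the deterministic procedure, and each iteration removes at least one sample, so this part costs $O(|\net_u|)=O(\eps^{-2}\log n)$. Summed over $O(n)$ nodes, these two parts give $O(n(f(n,\eps)+\eps^{-2}\log n))$. Distributing $P_u$ to the children costs $O(|P_u|)$ per node, which sums to $O(n\log n)$ over the $O(\log n)$ levels.

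\textbf{Query.} The query procedure and its analysis from~\cite{arya1998optimal} (see also~\cite{arya2000approximate}) use only three facts: the cells are midpoint boxes or their differences with bounded aspect-ratio structure (packing of midpoint boxes of a given size against a ball); the number of points drops by a constant factor every $O(1)$ levels, which gives the $O(\log n)$ term; and the stopping rule (add $u$ if $\square_u\subseteq\ball(p,(1+\eps)r)$, prune if disjoint from $\ball(p,r)$). All three hold for the RBBD tree, so the same count of $O(\log n+\eps^{-d+1})$ visited and canonical nodes, and the containment $\ball(p,r)\cap P\subseteq\bigcup_{u\in\canonical}\square_u\cap P\subseteq\ball(p,(1+\eps)r)$, follow verbatim.

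\textbf{Main obstacle.} I expect the hardest step to be the query bound, specifically making sure that the BBD packing argument does not use the exact $2/3$ threshold or the fact that a cell's size shrinks geometrically along any path. Only the balance constant and the midpoint-box geometry should matter, and I would check the proof of~\cite{arya1998optimal} to confirm this. I also need to verify that the $\eps'$-sample guarantee correctly covers the box-with-hole regions, by viewing such a region as the difference of two boxes, so that the sampling error is at most $2\eps'$.
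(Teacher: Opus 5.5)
Most of your proposal follows the paper's proof. You use a union bound over the $O(n)$ randomized shrinks so that every $\net_u$ is an $\eps$-sample. You get the balance $\frac13-\eps\le|\hat\rho\cap P_u|/|P_u|\le\frac23+\eps$ from the stopping rule and from the fact that $\hat\rho$ keeps at least half the samples of the last $\rho'$; this is Lemma~\ref{lem:rcs}. You then inherit the height, size and query bounds from~\cite{arya1998optimal}.

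Fixing the sampling accuracy to a constant $\eps'$ is a small improvement over the paper. The paper reuses the query parameter $\eps$ as the sampling accuracy, so its height argument needs $\eps<0.1$; with $\eps'$ decoupled from $\eps$, the tree is balanced for every $\eps\in(0,1)$.

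On your ``main obstacle'': the Arya--Mount query analysis does use that cell sizes shrink by a constant factor every $O(d)$ levels. This property holds here deterministically, because fair splits are unchanged and occur at least every fourth level. Separately, the shrink in the hole case creates a three-level subtree, so the point-count decrease is guaranteed every four levels, not every two. Any constant suffices for the height bound.

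\textbf{The gap: the running time of the shrink on the samples.} The sentence ``each iteration removes at least one sample, so this part costs $O(|\net_u|)$'' bounds the number of iterations, not the time. Each iteration must find the smallest midpoint box enclosing the samples currently in $\rho$, test the $O(d)$ fair splits for non-triviality, and decide which side is the majority. Done by scanning, this costs $\Theta(|\rho\cap\net_u|)$ per iteration. The majority side may lose only one sample per step while it still holds more than $\frac23|\net_u|$ samples, so the total is $\Theta(|\net_u|^2)$ in the worst case.

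The deterministic $O(|P_u|)$ bound you are borrowing relies on coordinate-sorted lists maintained down the tree, so each step costs time proportional to the discarded part. A freshly drawn $\net_u$ has no such lists. Building them, or the per-coordinate search trees the paper uses at $O(\log n)$ per test, costs $O(|\net_u|\log|\net_u|)$. With that fix you get $O(f(n,\eps)+\eps^{-2}\log n\cdot\log(\eps^{-2}\log n))$ per shrink.

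The paper's own construction lemma proves $O(n\log n+n(f(n,\eps)+\eps^{-2}\log^2 n))$. That is one logarithmic factor more than the $\eps^{-2}\log n$ written in the theorem. So either give a genuinely linear-time shrink on unsorted samples, or prove the construction bound with the extra $\log|\net_u|$ factor.
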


\vspace{-0.6em}

\subsection{RBBD tree in the relational setting}
\label{subsec:RBBDrel}
\vspace{-0.2em}
Let $\Q$ be an acyclic join query over a database instance $\I$.
%Recall that in this setting, the set of points $P$ where we construct nodes from the RBBD tree over $P$ is $\Q(\I)$, i.e., $P=\Q(\I)$. 
Due to the size of $\Q(\I)$, we cannot construct the entire RBBD tree as described in the previous subsection. Instead, we show how we can construct the children of a node $u$, %(along with their associated boxes or boxes with holes), 
given that $u$ has been constructed and it is a node of an RBBD tree over $\Q(\I)$.
More specifically, we show how we can implement the fair split and the randomized centroid shrink of the RBBD tree efficiently in the relational setting.
%Instead of only constructing the new nodes in the RBBD tree, we will also store all needed information that we use in the subsequent sections. In particular, 
Furthermore, for every new node $v$ of the RBBD tree we show how to compute the count $v.c=|\square_v\cap \Q(\I)|$ and a representative $v.rep\in \square_v\cap \Q(\I)$.
This information is useful for the design of our algorithms.

\paragraph{Root node} We first show how to construct the root node $\root$ of the RBBD tree, which is always a box without a hole. For every attribute $A_i\in \allattr$ we compute $\pi_{A_i}(\I)$ and we keep the maximum $M_i$ and minimum value $m_i$. We define the $\square_\root$ as a box without a hole with opposite corners $(m_1,\ldots, m_d)$ and $(M_1,\ldots, M_d)$. This procedure is executed straightforwardly in $O(N)$ time.

\paragraph{Fair split} 
The implementation of the fair split in the relational setting is identical to the fair split in the standard computational setting. We only need to identify the longest side of a box along with its center in $O(1)$ time.
%Let $u$ be a node of the RBBD tree over $\Q(\I)$. The implementation of the fair split in the relational setting is identical with the fair split in the standard computational setting. Without loss of generality assume that $\square_u$ is a box with a hole. Using the coordinates of the opposite corners we identify the longest side of $\square_u^+$.
%Let $h$ be the hyperplane orthogonal to the longest side of $\square_u^+$ that passes through the center of $\square_u^+$. We construct two children of $u$, namely $u_1, u_2$. The box in node $u_1$ is defined as $\square_{u_1}=\square_u\cap \{x\in \Re^d\mid h(x)< 0\}$ and the box in node $u_2$ is defined as $\square_{u_2}=\square_u\cap\{x\in \Re^d\mid h(x)\geq 0\}$. This is exactly the implementation of the fair split as described in Subsection~\ref{}, so the correctness follows. We need $O(1)$ to identify the longest side of $\square_u^+$, $O(1)$ time to construct the hyperplane $h$, and $O(1)$ time to define he nodes $u_1$ and $u_2$. Overall, the fair split is executed in $O(1)$ time.

\paragraph{Randomized centroid shrink} We only need to show how to implement the sampling procedure since all the other steps can be executed similarly to the standard computational setting.
We need to sample uniformly at random $\Lambda=\Theta(\eps^{-2}\log(N^{m+4}))=\Theta(\eps^{-2}\log N)$ points from $\square_u\cap \Q(\I)$.
If $\square_u$ is a box (without a hole) then we simply call $\mathsf{SampleRect}(\Q,\I,\square_u,\Lambda)$ from Lemma~\ref{lem:Rects} and we get the set $\net_u$ in $O(N+\eps^{-2}\log^2 N)$ time. 
%By definition $\net_u$ is an $\eps$-sample
In Appendix~\ref{appndx:randcentshr}, we show how we sample in the same running time from $\square_u\cap \Q(\I)$, when $\square_u$ is a box with a hole.

\paragraph{Count and representative tuples in newly constructed nodes} 
If $\square_u$ is a box without a hole, we call $u.c=\mathsf{CountRect}(\Q, \I,\square_u)$ and $u.rep=\mathsf{ReprRect}(\Q,\I,\square_u)$ from Lemma~\ref{lem:Rects}. Both of these procedures run in $O(N)$ time. %with high probability.
%Using the procedures $\mathsf{CountRect}$ from Lemma~\ref{}, to compute $u.\mathsf{count}=\mathsf{CountRect}(\Q, \I,\square_u)$. Furthermore, we use $\mathsf{ReprRect}$ from Lemma~\ref{} to compute $u.rep=\mathsf{ReprRect}(\Q,\I,\square_u)$. Both $u.count$ and $u.rep$ for a node $u$ are computed in $O(N)$ time with high probability, using the results in Lemma~\ref{}
In Appendix~\ref{appndx:RBBDcount}, we compute the count and the representative point in the same running time, when $\square_u$ is a box with a hole.

%We call the procedure of executing a fair split and computing the count and the representative tuple for every new constructed node (generated by the fair split procedure) as \emph{enhanced fair split}. Equivalently, we call the procedure of executing a randomized centroid shrink and computing the count and the representative tuple for every new constructed node (generated by the randomized centroid shrink procedure) as \emph{enhanced randomized centroid shrink}.
%\vspace{-0.em}

\begin{lemma}\label{thm:RBBDrel}
Let $\Q$ be an acyclic join query over a database instance $\I$ of size $N$. Given a node $u$ of an RBBD tree over $\Q(\I)$, the fair split procedure is executed in $O(N)$ time, while the randomized centroid shrink procedure is executed in $O(N+\eps^{-2}\log^2 N)$ time.
%, with probability at least $1-\frac{1}{N^{m+5}}$.
\end{lemma}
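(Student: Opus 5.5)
The proof follows the structure of the two procedures and charges each step to an oracle from Lemma~\ref{lem:Rects}.

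\textbf{Fair split.} Computing the splitting hyperplane takes $O(1)$ time. It is orthogonal to the longest side of $\square_u$ (or of $\square_u^+$) and passes through its center. As in the standard setting, one child is a box and the other is a box, or a box with a hole that inherits $\square_u^-$. The only other work is filling in $v.c$ and $v.rep$ for each new child $v$. If $\square_v$ is a box, we call $\mathsf{CountRect}$ and $\mathsf{ReprRect}$, which take $O(N)$ time. If $\square_v$ is a box with a hole, we use the procedure of Appendix~\ref{appndx:RBBDcount}:
\[
v.c=\mathsf{CountRect}(\Q,\I,\square_v^+)-\mathsf{CountRect}(\Q,\I,\square_v^-).
\]
For the representative, we cover $\square_v^+\setminus\square_v^-$ by $O(d)=O(1)$ disjoint boxes (slabs around the hole). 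We then call $\mathsf{ReprRect}$ on each slab until one returns a tuple. The total cost is $O(N)$.

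\textbf{Randomized centroid shrink.} There are two costs: drawing the sample $\net_u$, and running the shrink loop on it.

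\emph{Sampling.} We need $\Lambda=\Theta(\eps^{-2}\log N)$ uniform samples from $\square_u\cap\Q(\I)$. If $\square_u$ is a box, one call to $\mathsf{SampleRect}$ gives them in $O(N+\Lambda\log N)=O(N+\eps^{-2}\log^2 N)$ time. If $\square_u$ has a hole, we use the same $O(1)$ slab decomposition of $\square_u^+\setminus\square_u^-$ into boxes $\rho_1,\dots,\rho_s$, computing the counts $c_j=\mathsf{CountRect}(\Q,\I,\rho_j)$ in $O(N)$ time. We draw the multinomial allocation $(\Lambda_1,\dots,\Lambda_s)$ with probabilities $c_j/\sum_i c_i$. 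Then we call $\mathsf{SampleRect}(\Q,\I,\rho_j,\Lambda_j)$ for each $j$. Each resulting sample is uniform over the region, and since $\sum_j\Lambda_j=\Lambda$ the total time is again $O(N+\eps^{-2}\log^2 N)$.

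\emph{Shrink loop.} The loop then runs purely in the standard computational setting on $\net_u$. Each iteration computes the smallest midpoint box containing the current samples in $O(1)$ time, as in~\cite{arya1998optimal}. It then applies $O(d)$ fair splits until one is non-trivial, and counts the samples on each side. Since the majority box strictly loses at least one sample per iteration, there are at most $|\net_u|$ iterations. Done naively this costs $O(|\net_u|^2)=O(\eps^{-4}\log^2N)$, which would break the claimed bound. This loop accounting is the main obstacle.

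To fix it, we only ever scan the samples still inside the current box $\rho$, and move to the majority side. If we charge each scan to the points discarded, the size shrinks geometrically: after a non-trivial split the majority box keeps at most all but one sample, which alone is not geometric. Instead I would invoke the standard BBD-tree analysis, which implements centroid shrink in time linear in the number of points. Applied to $\net_u$, it yields $O(|\net_u|\log|\net_u|)$ or better. We may assume this, since the standard construction runs in $O(n\log n)$ overall and centroid shrink on $n$ points is near-linear. This gives $O(\eps^{-2}\log N\cdot\log\log N)\subseteq O(\eps^{-2}\log^2N)$.

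\textbf{Final step.} The randomized centroid shrink produces two children, $\rho^*$ and $\square_u\setminus\rho^*$ (possibly with a hole). We fill in their counts and representatives exactly as in the fair split case, in $O(N)$ time. Summing everything gives $O(N)$ for the fair split and $O(N+\eps^{-2}\log^2N)$ for the randomized centroid shrink.
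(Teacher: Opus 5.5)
Your proposal is correct and follows essentially the paper's argument. The fair split costs $O(1)$ for the geometry plus $O(N)$ for the $\mathsf{CountRect}$/$\mathsf{ReprRect}$ calls. The randomized centroid shrink samples $\net_u$ with $\mathsf{SampleRect}$, using the $O(d)$-box decomposition and a multinomial allocation of the $\Lambda$ samples when $\square_u$ has a hole, and then runs the shrink loop on the explicit sample.

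The differences are minor. You count a box with a hole as outer minus inner rather than summing over the $2d$ pieces. You bound the loop by citing the standard near-linear centroid shrink on presorted lists, whereas the paper uses per-coordinate search trees to get $O(\log N)$ time per iteration over $O(|\net_u|)$ iterations. Both yield $O(\eps^{-2}\log^2 N)$, under the same implicit assumption as the paper that $\log(1/\eps)=O(\log N)$.
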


\vspace{-1em}

\newcommand{\inactive}{\mathsf{inactive}}
\newcommand{\sample}{\mathsf{sample}}
\newcommand{\coun}{\mathsf{count}}
\newcommand{\representative}{\mathsf{rep}}
\newcommand{\report}{\mathsf{report}}
\subsection{Relational 
oracles using a partially constructed RBBD tree}
\label{sec:reloracles}
\vspace{-0.3em}
Using Lemma~\ref{thm:RBBDrel} and Theorem~\ref{thm:RBBD}, we derive new oracles that will allow us to solve optimization problems in the relational setting, efficiently.
Let $\tree$ be a partially constructed RBBD tree over $\Q(\I)$, i.e., $\tree$ consists of a connected subset of nodes (including the root, denoted by $\root$) of an RBBD tree over $\Q(\I)$. Initially, $\tree$ consists only of the root, as constructed in Section~\ref{subsec:RBBDrel}.
For every constructed node $u\in \tree$, we store i) $\square_u$ the region corresponding to node $u$ which is either a box or a box with a hole, and ii) $u.a$ a boolean variable which is $1$ if $u$ is an \emph{active} node, and $0$ if it is an \emph{inactive} node. A point $p\in \Q(\I)$ is called an \emph{inactive point} if there exists a node $u\in \tree$ such that $p\in \square_u$ and $u.a=0$. A point which is not inactive is \emph{active}.
While we do not compute it explicitly, for convenience we use $Q'\subseteq \Q(\I)$ to denote the set of current active points in $\Q(\I)$.
Furthermore, in each node $u\in \tree$, we store $u.rep$ which is an arbitrary representative point from $\square_u\cap Q'$. Finally, for each node $u\in \tree$, we store the number of active points in the region $\square_u$, i.e., $u.c=|\square_u\cap Q'|$.

Given such a partially constructed tree $\tree$, we show how to implement five oracles in the relational setting while maintaining $\tree$ (new nodes might need to be constructed).
%Every oracle in this section works correctly with probability at least $1-\frac{1}{N^{m+2}}$. Furthermore, 
The running time of every oracle in this section holds with probability at least $1-\frac{1}{N^{m+3}}$. Due to space requirements, we show the proofs of all lemmas in Appendix~\ref{appndx:reloracles}.

\paragraph{Make points in a ball inactive}
Given a ball $\ball(x,r)$, where $x\in\Re^d$ and $r$ is a non-negative real number, the goal is to (implicitly) make all points in $\ball(x,r)\cap Q'$ inactive.
Let $\tree.\inactive(x,r)$ be this oracle, which works as follows.
We first execute a query procedure similar to the query procedure on a BBD tree. Let $\canonical(x,r)$ be the set of canonical nodes we will compute. 
Let $u$ be a node of $\tree$ that the query procedure processes. Initially, $u=\root$ and $\canonical(x,r)=\emptyset$.
If $u.a=0$, we stop the execution of the query through this branch of the tree. In the rest, we assume that $u.a=1$.
If the children of $u$ have not been constructed in $\tree$ we use Lemma~\ref{thm:RBBDrel} to construct its children.
Next, we continue checking whether $\square_u$ intersects $\ball(x,r)$. If $\square_u\subseteq \ball(x,(1+\epsilon)r)$, then we add $u$ in $\canonical(x,r)$ and we stop the execution through this branch of the tree. If $\square_u\cap \ball(x,r)=\emptyset$, we stop the execution of the query through this branch of the tree. In any other case, we continue the query procedure in both children of $u$ recursively.
%In the end of this process we have computed a set of canonical nodes $\canonical(x,r)$. 
For every added node $u\in \canonical(x,r)$, we set $u.a=0$. Before we finish the procedure, we need to update the representative points and the counts of the ancestor nodes of $\canonical(x,r)$.
%, similarly to what we did for the BBD trees in Section~\ref{}. 
Starting from the parent node $v$ of the deepest node in $\canonical(x,r)$, we run recursively: Let $u, w$ be $v$'s two children. If $u.a=w.a=0$, then we set $v.a=0$. If $u.a=1$ and $w.a=0$ (resp. $u.a=0$ and $w.a=1$), we set $v.c=u.c$ and $v.rep=u.rep$ (resp. $v.c=w.c$ and $v.rep=w.rep$). If $u.a=w.a=1$, we set $v.c=u.c+w.c$ and $v.rep=u.rep$. After we process all canonical nodes' ancestors in one level of the tree, we continue the update of the nodes in the level above.

%We prove the next lemma.
\vspace{-0.5em}
\begin{lemma}
\label{lem:oracle1}
    $\tree.\inactive(x,r)$ makes all points in $\ball(x,r)\cap Q'$ inactive and might make some points in $\ball(x,(1+\eps)r)\cap Q'$ inactive. Furthermore, in the end of the execution of the oracle, for every node $u\in \tree$ the values $u.rep, u.c$ are correct. The running time is $O((\eps^{-d+1}+\log N)(N+\eps^{-2}\log^2 N))$.
\end{lemma}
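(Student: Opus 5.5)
The proof has three parts: correctness of the inactivation, correctness of the maintained fields $u.rep$ and $u.c$, and the running time. All three come from comparing the traversal with the BBD-tree query of Theorem~\ref{thm:RBBD}.

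\textbf{Correctness of inactivation.} First I will check that the recursive procedure is exactly the query procedure on an RBBD tree over $\Q(\I)$, with one change: it also prunes at inactive nodes. Since every node it needs is built on the fly by Lemma~\ref{thm:RBBDrel}, the traversal behaves as if the full RBBD tree were present. I will then prove by induction on the traversal the following invariant: every active point $p\in\ball(x,r)\cap Q'$ lies in $\square_u$ for some node $u$ that is either (a) added to $\canonical(x,r)$, or (b) visited and recursed on.

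The argument runs as follows.
\begin{itemize}
\item Pruning at a node with $u.a=0$ loses nothing, since all points of $\square_u$ are already inactive.
\item Pruning when $\square_u\cap\ball(x,r)=\emptyset$ loses nothing, since $\square_u$ then contains no points of the ball.
\item Because the children partition $\square_u$, every point of $\ball(x,r)\cap\square_u$ is passed down to some child.
\item Descent terminates: a leaf contains a single point, so $\square_u$ either misses the ball or its one point lies in it. In the latter case $u$ can be added, treating a leaf region as that point, as in the standard BBD analysis.
\end{itemize}
Hence every point of $\ball(x,r)\cap Q'$ ends up inside some canonical node. Every canonical node satisfies $\square_u\subseteq\ball(x,(1+\eps)r)$, so the only points made inactive lie in $\ball(x,(1+\eps)r)$.

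\textbf{Correctness of $u.rep$ and $u.c$.} I will prove this by bottom-up induction over the levels. Newly built children have correct counts and representatives: their parent was active, so none of their points is inactive through an ancestor, and by Lemma~\ref{thm:RBBDrel} (with the appendix procedures for boxes with holes) their values equal the full counts over $\square_v\cap\Q(\I)$. There is one subtlety here: a child node could contain a region that was inactivated earlier via a different node. I will argue this cannot happen, because inactivation always marks whole tree nodes. Once a node is inactive it is never expanded again, so any active node's subtree carries no previous inactivations below its frontier except those already reflected in its stored values.

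The updates only change nodes that are canonical nodes or ancestors of canonical nodes. For each such ancestor $v$, the active points of $\square_v$ are the disjoint union of the active points in its two children. The three update cases are therefore exactly correct, assuming the children are already correct, which holds because we process deeper levels first. All other nodes have unchanged active sets, so their values remain correct.

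\textbf{Running time.} By Theorem~\ref{thm:RBBD}, the query visits $O(\log N+\eps^{-d+1})$ nodes with probability $1-1/N^{m+3}$. To get this bound I will take $\cons$ large, using $|\Q(\I)|=O(N^m)$ so that $\log|\Q(\I)|=O(\log N)$. Pruning at inactive nodes only decreases the number of visited nodes. Each visited node triggers at most one child construction, costing $O(N+\eps^{-2}\log^2N)$ by Lemma~\ref{thm:RBBDrel}. The update pass touches only ancestors of canonical nodes, which are themselves visited nodes, at $O(1)$ each. Multiplying gives the claimed bound.

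The main obstacle is the RBBD query bound. Theorem~\ref{thm:RBBD} is stated for the full tree, so I must justify that the lazily built partial tree coincides with a prefix of one fixed random RBBD tree. The natural way is to view the randomness of each centroid shrink as fixed in advance. I would also take a union bound over all $O(N^{m})$ nodes ever constructed, to keep the high-probability guarantee.
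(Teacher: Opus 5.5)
Your proposal is correct and follows essentially the same route as the paper. The paper likewise (i) shows that every active point of $\ball(x,r)$ falls in some canonical node while every canonical node lies in $\ball(x,(1+\eps)r)$, phrasing this as a comparison with the canonical set $\canonical'(x,r)$ of the unpruned query rather than as your traversal invariant, (ii) proves correctness of $u.rep, u.c$ by the same bottom-up case analysis, and (iii) bounds the time by the $O(\eps^{-d+1}+\log N)$ visited nodes times the per-node cost from Lemma~\ref{thm:RBBDrel}; your remarks on newly built children and on coupling the lazily built tree with one fixed random RBBD tree spell out details the paper leaves implicit.
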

\vspace{-0.5em}
\paragraph{Approximately count active points in a ball}
Given a ball $\ball(x,r)$ the goal is to compute $|\ball(x,r)\cap Q'|$ approximately. We define the oracle
$\tree.\coun(x,r)$, and run the same query procedure as in $\tree.\inactive(x,r)$ to compute $\canonical(x,r)$. We return $\sum_{u\in \canonical(x,r)}u.c$.
\begin{lemma}
\label{lem:oracle2}
    $\tree.\coun(x,r)$ returns a number $c$ such that $|\ball(x,r)\cap Q'|\leq c\leq |\ball(x,(1+\eps)r)\cap Q'|$ in  $O((\eps^{-d+1}+\log N)(N+\eps^{-2}\log^2 N))$ time.
\end{lemma}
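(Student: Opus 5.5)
The argument mirrors that of Lemma~\ref{lem:oracle1}, since $\tree.\coun(x,r)$ runs exactly the same traversal to compute $\canonical(x,r)$ and then adds up the stored counts.

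\emph{Step 1: invariants.} We rely on two invariants of the partially constructed tree $\tree$, which Lemma~\ref{lem:oracle1} guarantees after every $\tree.\inactive$ call. First, for every node $u\in\tree$, the stored value $u.c$ equals $|\square_u\cap Q'|$. Second, the regions of the two children of any node partition the region of the parent.

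\emph{Step 2: the canonical regions lie in the enlarged ball and are disjoint.} A node $u$ enters $\canonical(x,r)$ only when $\square_u\subseteq\ball(x,(1+\eps)r)$. Once a node is added, the traversal stops below it, so no two canonical nodes are in an ancestor/descendant relation. Combined with the partition property, this makes the regions $\square_u$, $u\in\canonical(x,r)$, pairwise disjoint. Using the first invariant, $c=\sum_{u\in\canonical(x,r)}|\square_u\cap Q'|=|Q'\cap\bigcup_{u}\square_u|$. This union is contained in $\ball(x,(1+\eps)r)$, which gives the upper bound $c\leq|\ball(x,(1+\eps)r)\cap Q'|$.

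\emph{Step 3: every active point of $\ball(x,r)$ is counted.} Let $p\in\ball(x,r)\cap Q'$. Follow the root-to-leaf path of regions containing $p$.
\begin{itemize}
\item Every node on this path intersects $\ball(x,r)$, since it contains $p$, so the traversal never prunes it for missing the ball.
\item No node on this path is inactive, because $p$ is active.
\item Whenever the traversal needs the children of a node, they are built on demand using Lemma~\ref{thm:RBBDrel}.
\item At a leaf of the full RBBD tree, the region holds a single point, namely $p$. It therefore either lies inside $\ball(x,(1+\eps)r)$ or forms a degenerate box $\{p\}\subseteq\ball(x,r)$, so it is added.
\end{itemize}
Hence the path must hit some node of $\canonical(x,r)$, and $p$ is counted. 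Together with disjointness, this gives $|\ball(x,r)\cap Q'|\leq c$.

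\emph{Step 4: running time.} The traversal visits only active nodes. It is dominated by the BBD-type query over the RBBD tree, so by Theorem~\ref{thm:RBBD} it touches $O(\eps^{-d+1}+\log N)$ nodes with the stated probability. This uses the fact that pruning at inactive nodes only shrinks the search relative to the full query. Each visited node may trigger one child construction, which costs $O(N+\eps^{-2}\log^2N)$ by Lemma~\ref{thm:RBBDrel}. The newly built children also need their counts and representatives, which take $O(N)$ time each (Section~\ref{subsec:RBBDrel} and the appendix for boxes with holes). The final summation costs $O(|\canonical(x,r)|)$.

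\emph{Main obstacle.} The delicate point is the count stored at newly created children. They are built after some ancestors' descendants have been deactivated, yet their $c$ must reflect only active points. The key observation is that a child is built only below an active node. Every inactive node is a canonical node whose subtree is never expanded, or an ancestor both of whose children are inactive. So no inactive region overlaps the region of a newly built child, and its active count equals its full join count $|\square_v\cap\Q(\I)|$, which we compute with $\mathsf{CountRect}$. I would state this as a short claim and prove it by induction over the sequence of oracle calls.
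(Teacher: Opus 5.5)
Your proposal is correct and follows essentially the paper's route. The paper's proof cites the sandwich $\ball(x,r)\cap Q'\subseteq\bigcup_{u\in\canonical(x,r)}(\square_u\cap Q')\subseteq\ball(x,(1+\eps)r)\cap Q'$ from the proof of Lemma~\ref{lem:oracle1}, together with correctness of the stored counts. You re-derive both, and you also make explicit two points the paper leaves implicit: the canonical regions are disjoint, and freshly built children get correct active counts.

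Two small fixes are needed. First, at a leaf the region is a midpoint box, not $\{p\}$, so the "degenerate box" step should instead invoke the standard BBD convention that the query tests the stored point directly. Second, the count invariant holds only for nodes reachable from the root through nodes with $a=1$. Nodes that are inactive, or lie below an inactive node, keep stale counts. This restricted invariant is all your argument actually uses.
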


\vspace{-0.5em}
\paragraph{Get a representative from $Q'$}
Next, the goal is to return any point from $Q'$. We define the oracle $\tree.\representative()$. If $\root.a=0$ then we return $\emptyset$, i.e., $Q'=\emptyset$. If $\root.a=1$, then we return $\root.rep$.
\vspace{-0.2em}
\begin{lemma}
\label{lem:oracle3}
$\tree.\representative()$ returns a point in $Q'$ if $Q' \neq \emptyset$, and $\emptyset$ otherwise, in $O(1)$ time.
\end{lemma}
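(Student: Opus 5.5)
The plan is to argue directly from the invariants that every oracle maintains on the partially constructed tree $\tree$. The key invariant is: for every constructed node $u\in\tree$ with $u.a=1$, $u.c=|\square_u\cap Q'|$ and $u.rep\in\square_u\cap Q'$ whenever this set is nonempty. We also need that a node with $u.a=1$ has at least one active point. I will take these as established by Lemma~\ref{lem:oracle1}, which states that after each $\tree.\inactive$ call the values $u.rep, u.c$ are correct for every $u\in\tree$. They hold trivially after the root is built, since then $Q'=\Q(\I)$ and the root stores $\mathsf{CountRect}$ and $\mathsf{ReprRect}$ values by Lemma~\ref{lem:Rects}.

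First consider the case $\root.a=0$. Since $\square_\root$ is the bounding box of all attribute values, every $p\in\Q(\I)$ lies in $\square_\root$. By the definition of inactive points, every point is then inactive, so $Q'=\emptyset$ and returning $\emptyset$ is correct.

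Now consider the case $\root.a=1$. Here I will show $Q'\neq\emptyset$ and $\root.rep\in Q'$. The update rule in $\tree.\inactive$ sets $v.a=0$ exactly when both children are inactive, and canonical nodes are marked inactive. So a node that stays active has a subtree whose active descendants cover some region containing active points. More concretely, I will argue by induction over the bottom-up update that $v.rep$ is copied from an active child, whose representative lies in an active region.

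The one subtle point is nodes whose children were both constructed active but whose count became $0$: every point of the subtree was covered by inactive canonical nodes, yet the node was not flagged. I expect this to be the main obstacle. I would handle it by noting that a node with $u.c=0$ is also treated as having no active point, i.e.\ one can equivalently set $u.a=0$ when $u.c=0$. Alternatively, rely on Lemma~\ref{lem:oracle1}'s correctness of $u.rep$, which already asserts $u.rep\in\square_u\cap Q'$. Then $\root.rep\in\square_\root\cap Q'=Q'$.

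The running time is $O(1)$, since the oracle only reads two stored fields of the root and constructs no new nodes.
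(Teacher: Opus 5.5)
Your proposal is correct and follows essentially the same route as the paper. The paper also ties $Q'=\emptyset$ to $\root.a=0$, appeals to the correctness of the stored representatives from Lemma~\ref{lem:oracle1} to conclude $\root.rep\in Q'$ otherwise, and notes that only two fields of the root are read. The subtlety you flag is genuine: an active node can have an empty active set, for example when an empty cell in its subtree is never marked inactive. The paper's proof passes over it by simply asserting that an active root yields a path of active nodes to a frontier node $u$ with $\square_u\cap\Q(\I)\neq\emptyset$. Your repair of treating nodes with $u.c=0$ as inactive leaves $Q'$ unchanged, so it is, if anything, more careful than the original.
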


\vspace{-0.5em}
\paragraph{Report all points in $Q'$}
Assume that we aim to report all points in $Q'$. We define the oracle $\tree.\report()$. Starting from the root node $\root$ we execute a depth-first search over $\tree$. If we reach a node $u$ such that $u.a=0$ we stop the traversal through this branch of the tree. If we reach a node $u$ that we have not constructed its children or $u$ is a leaf node of the RBBD tree, we run $\mathsf{ReportRect}(\Q,\I,\square_u)$. We continue the depth-first search until we visit all constructed nodes in $\tree$.
\vspace{-0.2em}
\begin{lemma}
\label{lem:oracle4}
   $\tree.\report()$ returns $Q'$ in time $O(|\tree| N + |Q'|)$, where $|\tree|$ is the number of nodes in $\tree$.
\end{lemma}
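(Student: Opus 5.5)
The argument has two parts: correctness, meaning the output is exactly $Q'$, and the running time bound $O(|\tree| N + |Q'|)$.

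\textbf{Correctness.} I will use the invariant that the regions of the constructed tree form a hierarchical partition.
For every constructed node $u$ whose children are constructed, $\square_u$ is the disjoint union of the regions of its two children. This holds because both the fair split and the randomized centroid shrink partition $\square_u$ (a box, or a box with a hole).
Consider the \emph{frontier} nodes reached by the depth-first search. These are the nodes $u$ with $u.a=1$ all of whose ancestors are active, and which either have no constructed children or are leaves of the RBBD tree. Their regions are pairwise disjoint. Adding the nodes at which the search stops (those with $u.a=0$), we obtain a partition of $\square_\root$, and every point of $\Q(\I)$ lies in $\square_\root$ by construction of the root.

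A point lying in the region of a stopped node is inactive by definition. Next take a point $p$ in the region of a frontier node $u$. The nodes of $\tree$ whose region contains $p$ are exactly $u$ and its ancestors, by the partition property. All of these are active, since an ancestor with $a=0$ would have stopped the search before reaching $u$. Hence $p$ is active.

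Conversely, every active point lies in the region of some frontier node. Therefore the union of $\square_u \cap \Q(\I)$ over frontier nodes $u$ equals $Q'$, and this union is disjoint. Calling $\mathsf{ReportRect}$ on each frontier region reports every point of $Q'$ exactly once.

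For frontier regions that are boxes with a hole, I would invoke the same reduction used in Appendix~\ref{appndx:RBBDcount}. The region $\square_u^+\setminus\square_u^-$ decomposes into $O(d)=O(1)$ disjoint boxes, and $\mathsf{ReportRect}$ is called on each of them.

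\textbf{Running time.} The search visits each node of $\tree$ at most once and does $O(1)$ work per node, which costs $O(|\tree|)$. Each frontier node $u$ triggers $O(1)$ calls to $\mathsf{ReportRect}$, at cost $O(N + |\square_u\cap\Q(\I)|)$ by Lemma~\ref{lem:Rects}. There are at most $|\tree|$ frontier nodes, so these calls cost $O(|\tree|N)$ plus the total output size. Because the frontier regions are disjoint and their points are exactly $Q'$, the total output size is $|Q'|$. Summing gives $O(|\tree|N+|Q'|)$.

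The main obstacle is confirming the disjointness and exact-cover property for regions that are boxes with holes. One must check that the hole decomposition yields disjoint boxes, handling shared boundaries with half-open conventions, so that no point is reported twice.
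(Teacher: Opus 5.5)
Your proof is correct and follows the same route as the paper. Your frontier nodes are exactly the paper's set $U$ (nodes reached by an all-active path from the root that either have no constructed children or are RBBD leaves); $\mathsf{ReportRect}$ runs on these nodes, and the bound comes from $|U|\le|\tree|$ together with disjointness of the frontier regions. Your explicit use of the hierarchical-partition property and the $2d$-box decomposition for regions with holes fills in details the paper's proof leaves implicit: that each active point is reported exactly once, and that $\mathsf{ReportRect}$ is only stated for boxes.
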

\vspace{-0.5em}

\paragraph{Sample $z$ points from $Q'$}
The goal is to sample (with replacement) $z$ points from $Q'$, uniformly at random.
We describe the oracle $\tree.\sample(z)$. We use $\tree$ to get each sample one by one.
If $\root.a=0$ we return no sample because $Q'=\emptyset$.
%We start from the node $\root$. While visiting a node $u\in \tree$,
%Let $u$ be a node of $\tree$, initially $u=\root$.
%if $u.a=0$ we return no sample because $Q'=\emptyset$.
Otherwise we set $u=\root$ and we act as follows.
If $u$ has two active children $v, w$ then we move to the node $v$ with probability $\frac{v.c}{v.c+w.c}$ and to node $w$ with probability $\frac{w.c}{v.c+w.c}$.
If $v.a=1$ and $w.a=0$ (resp. $v.a=0$ and $w.a=1$) we move to the node $v$ (resp. $w$). We recursively repeat this process until we reach a leaf node of the RBBD tree or an active node whose children have not been constructed in $\tree$. If we have reached a leaf node, then we trivially add the point stored in the leaf node to the sample set. In case we reach a node $u$ whose children have not been constructed in $\tree$, we sample one point from $\square_u\cap \Q(\I)$ as described in Section~\ref{subsec:RBBDrel}.
\vspace{-0.3em}
\begin{lemma}
\label{lem:oracle5}
    $\tree.\sample(z)$ returns $z$ samples uniformly at random from $Q'$ in $O(z\cdot N)$ time.
\end{lemma}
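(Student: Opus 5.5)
We prove Lemma~\ref{lem:oracle5} in two parts: first that each sample is exactly uniform over $Q'$, and then that the $z$ samples cost $O(z\cdot N)$ time in total.

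\emph{Invariants.} Throughout we use the invariant guaranteed by Lemma~\ref{lem:oracle1}: for every constructed node $u\in\tree$, $u.c=|\square_u\cap Q'|$, and $u.a=0$ exactly when $\square_u\cap Q'=\emptyset$ or $u$ lies under an inactive canonical node. Since the sampler only moves into active children, we also need the following observation. If the walk reaches a node $u$ whose children are not constructed, then no node of $\tree$ strictly inside the subtree of $u$ exists. Hence no point of $\square_u$ has been deactivated through a descendant of $u$. No ancestor of $u$ is inactive either, since the walk passed through them. Therefore $\square_u\cap Q'=\square_u\cap\Q(\I)$. The same holds trivially when $u$ is a leaf of the RBBD tree, in which case the stored point is active.

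\emph{Uniformity.} Fix a point $p\in Q'$. Let $\root=u_0,u_1,\ldots,u_\ell$ be the unique root-to-stop path in $\tree$ whose regions contain $p$; it is unique because the children's regions partition the parent's region. At each step with two active children, the walk moves to $u_{i+1}$ with probability $u_{i+1}.c/u_i.c$, because $v.c+w.c=u_i.c$ by the count invariant. When only one child is active, that child has $u_{i+1}.c=u_i.c$, so the step taken with probability $1$ also equals $u_{i+1}.c/u_i.c$. The product of these probabilities telescopes to $u_\ell.c/\root.c=|\square_{u_\ell}\cap Q'|/|Q'|$.

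At $u_\ell$ we draw a uniform point from $\square_{u_\ell}\cap\Q(\I)$. For a box this uses $\mathsf{SampleRect}$ from Lemma~\ref{lem:Rects}; for a box with a hole it uses the procedure of Appendix~\ref{appndx:randcentshr}. By the observation above, this set equals $\square_{u_\ell}\cap Q'$, so $p$ is chosen with probability $1/|\square_{u_\ell}\cap Q'|$. Multiplying gives probability $1/|Q'|$ for $p$. Independence across the $z$ repetitions follows from using fresh randomness each time, which yields sampling with replacement.

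\emph{Running time.} Each walk has length bounded by the height of the RBBD tree, which is $O(\log N)$ by Theorem~\ref{thm:RBBD} (this holds with the stated probability, since $|\Q(\I)|=O(N^m)$). Each step costs $O(1)$, and the final single-point sample costs $O(N+\log N)=O(N)$ by Lemma~\ref{lem:Rects}. Summing over the $z$ samples gives $O(z(\log N+N))=O(zN)$.

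\emph{Main obstacle.} The delicate point is justifying the invariant that unconstructed subtrees contain no inactive points, and that a box with a hole can be sampled in $O(N)$ time; for the latter we defer to the appendix construction.
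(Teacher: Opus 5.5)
Your proof is correct and is essentially the paper's argument: the walk's transition probabilities telescope to $|\square_v\cap Q'|/|Q'|$ at the frontier node $v$, and an active node with unconstructed children satisfies $\square_v\cap\Q(\I)=\square_v\cap Q'$, so the final oracle call contributes $1/|\square_v\cap Q'|$; each sample then costs $O(\log N)$ for the descent plus $O(N)$ for the sampling oracle. Writing every step as $u_{i+1}.c/u_i.c$ handles the one-active-child case more cleanly than the paper's product formula. Your stated invariant that $u.a=0$ ``exactly when'' $\square_u\cap Q'=\emptyset$ or $u$ lies under an inactive canonical node is not what the oracles maintain, since descendants of an inactive canonical node keep $u.a=1$, but your argument only uses the count invariant and the definition of active points, so nothing breaks.
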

\vspace{-0.3em}
%Given $x\in \Re^d$ and $r\in \Re$, the oracle above can be extended to return $z$ samples from $\ball(x,r)\cap Q'$ in $O((\eps^{-d+1}+\log N)(\eps^{-2}\log^2 N)+zN)$ time. Even though we do not use this oracle in the next sections, it might be useful in other applications.

%% file: bbd-figs.tex
\begin{figure}[t]
    \centering
    % First image with its own caption
    \begin{minipage}{0.20\textwidth}
        \centering
        \vspace{0mm} % Move the image down for alignment
        \includegraphics[width=\linewidth]{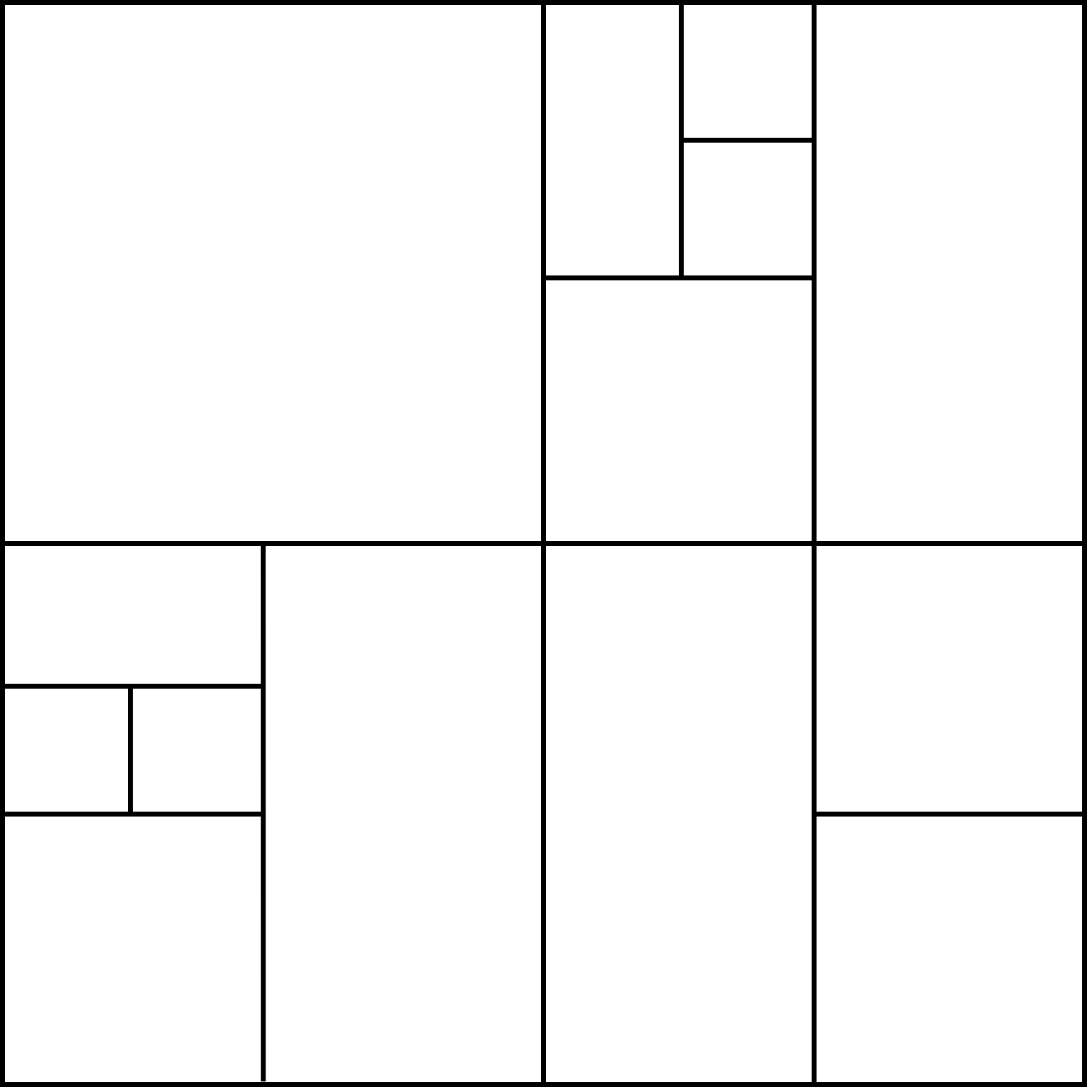}
        \vspace{-3mm} % Space between image and caption
        \caption{All the cells are midpoint boxes.}
        \label{fig:midpoint}
    \end{minipage}
    \hspace{0.04\textwidth}
    % Group of three images with spacing
    \begin{minipage}{0.70\textwidth}
        \centering
        \vspace{-3.8mm}
        \includegraphics[width=0.29\linewidth]{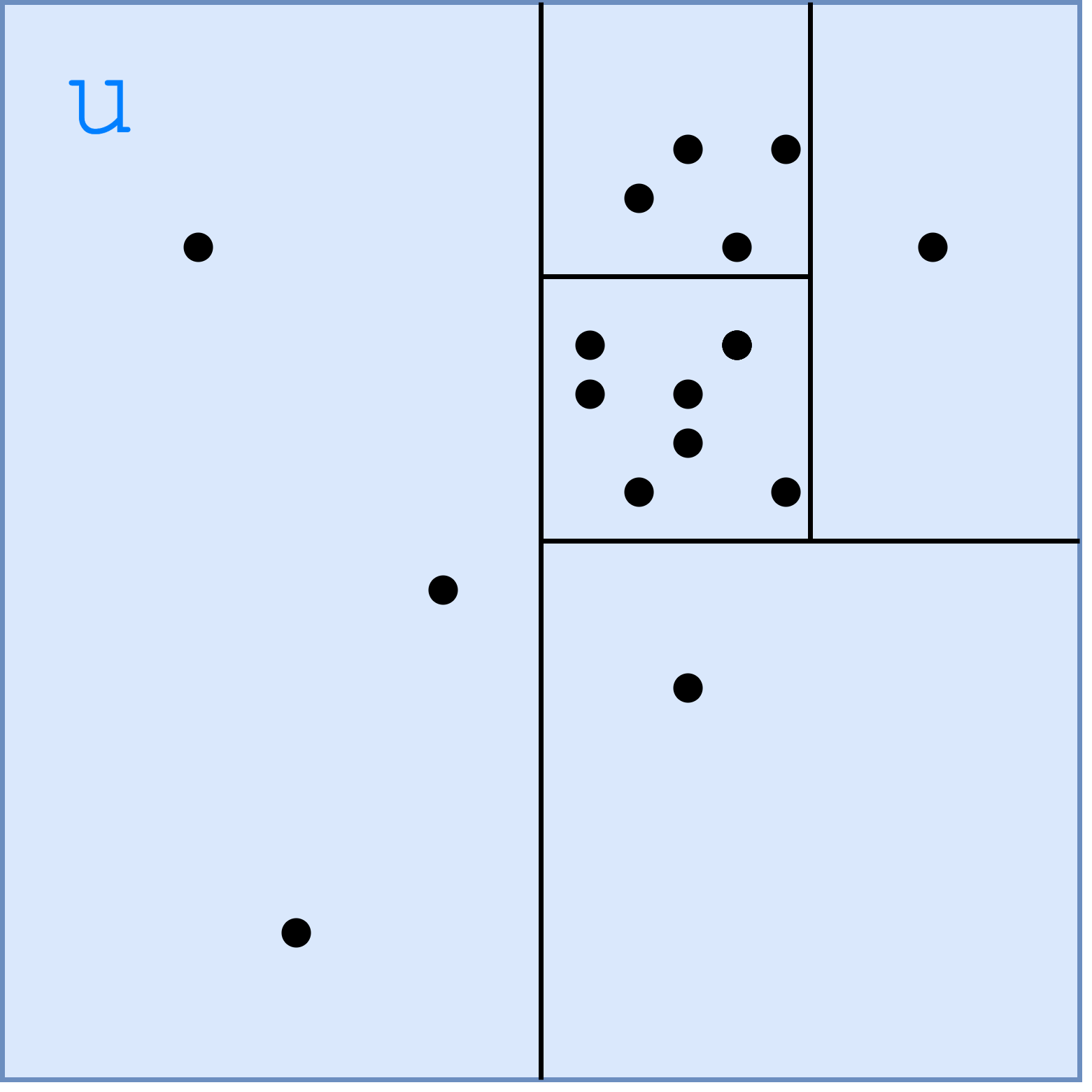}
        \hspace{0.04\linewidth}
        \includegraphics[width=0.29\linewidth]{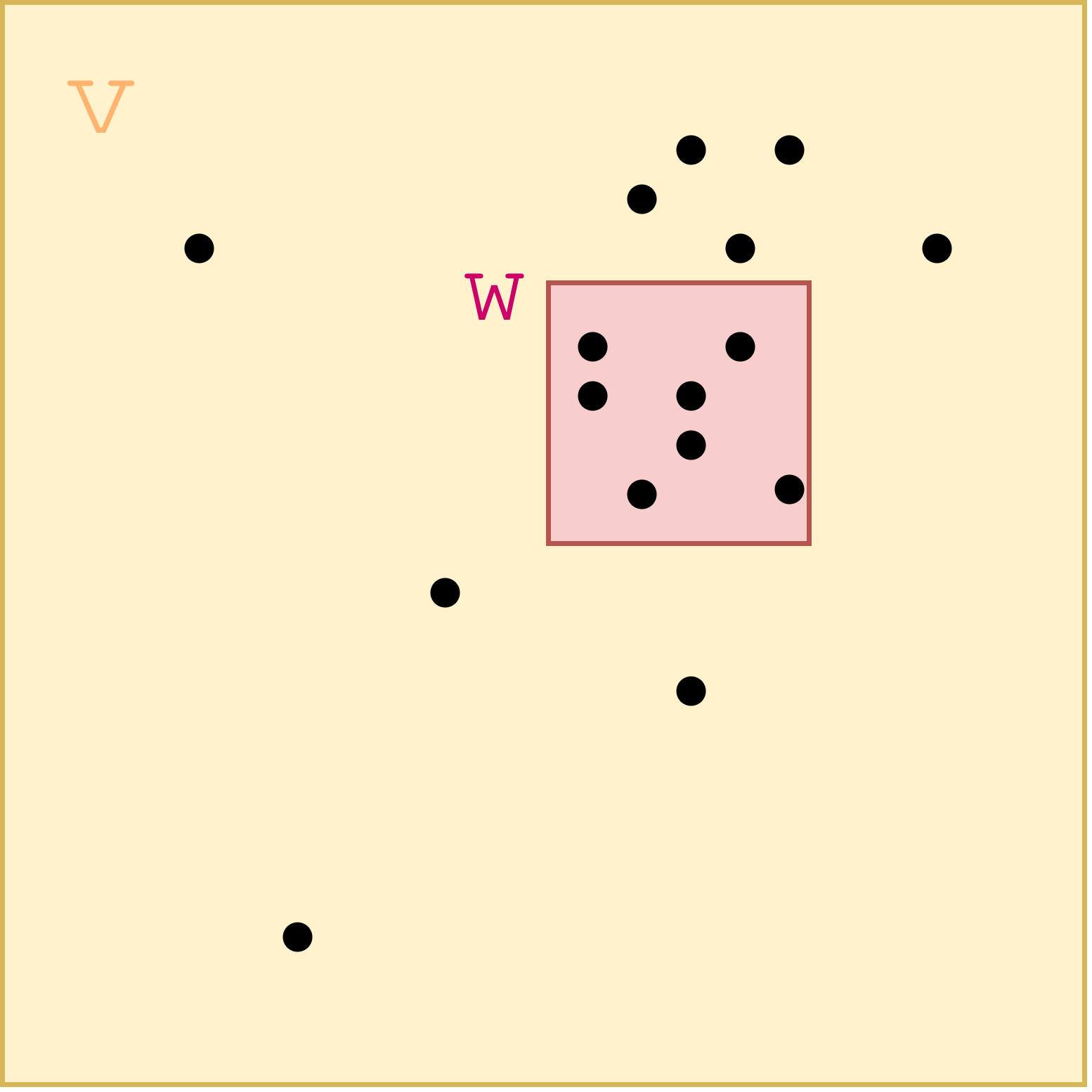}
        \hspace{0.04\linewidth}
        \includegraphics[width=0.29\linewidth]{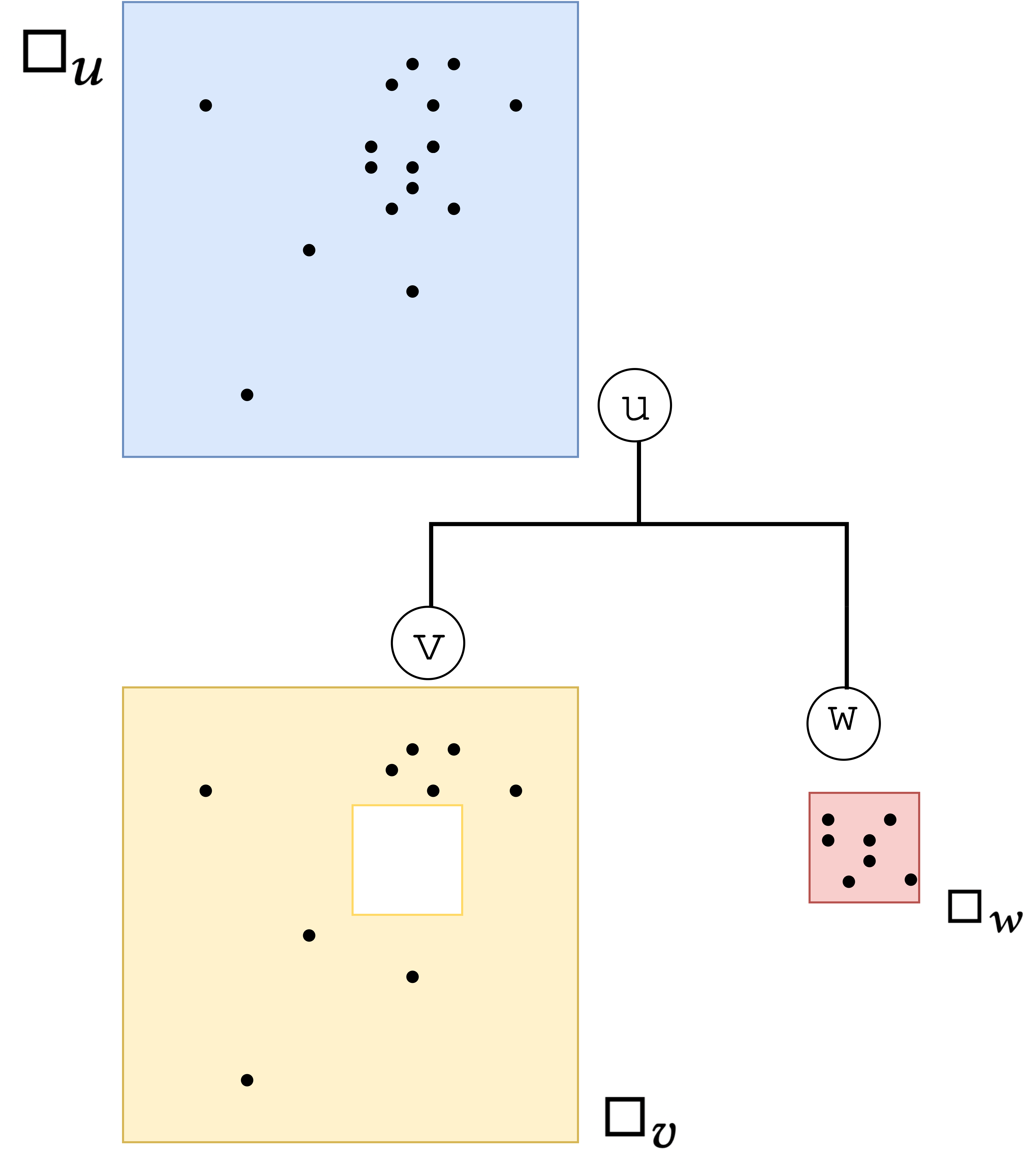}
        \vspace{1mm} % Space between images and caption
        \caption{Illustration of finding the children $v, w$ of a node $u$ by the centroid shrink process. The region $\square_v$ is a box with a hole while $\square_w$ is a box.}
        \label{fig:shrink}
    \end{minipage}
    \vspace{-1.5em}
\end{figure}

%% file: kcenter.tex
\vspace{-3.0mm}
\section{$k$-Center Clustering}
\label{sec:kCenter}
\vspace{-0.5mm}
In Section~\ref{sec:overview}, we already discussed the high-level idea of solving the relational $k$-center clustering efficiently. Here, we describe all the details and the formal proofs.
\vspace{-3.0mm}
\subsection{A constant approximation algorithm}
\vspace{-0.5mm}
We first describe the main part of the algorithm, where given a distance $r$, if $r$ is sufficiently large, it returns a set $\ret$ such that $\kcenter_{\ret}(\Q(\I))\leq (2+\eps)r$. Later, we run a binary search over the distances in $\Q(\I)$ to get a constant approximation for the relational $k$-center clustering.

%We run a binary search on the pairwise distances of the points in $\Q(\I)$. Obviously, we cannot compute all pairwise distances in $\Q(\I)$, instead, we only access the $j$-th smallest distance for an index $j$ determined by the binary search. We show later how we can access the $j$-th smallest distance efficiently. Let $r$ be the distance currently checked by the binary search.

\newcommand{\act}{a}
\paragraph{Main algorithm} Let $\tree$ consist of the root node $\root$ of an RBBD tree over $\Q(\I)$ as shown in Subsection~\ref{subsec:RBBDrel}.
%In addition to storing a representative point for every node $u$ of the RBBD tree, we also store a boolean variable $u.\act$, initially set to $1$, representing whether a node $u$ is \emph{active}. % (in this case we do not need to store the count in every node of the RBBD tree).
Initially, we set $\ret=\emptyset$. 
We run the following steps for at most $k$ iterations.
In the $i$-th iteration, we set $p_i=\tree.\representative()$ and we add $p_i$ in $\ret$. 
Then we call $\tree.\inactive(p_i,2r)$.
%We define the ball $\ball(p_i, 2r)$ and we run the query procedure $\tree(p_i,2r)$. Each time the query procedure visits a node $u$ of $\tree$ whose children have not been constructed, we execute the algorithm from Subsection~\ref{} to construct the children of $u$. Then, we check whether $u.\act=1$. If yes, then we continue the query search as described in Section~\ref{}. If $u.\act=0$ then we stop the search towards this branch of the tree. In the end, we get the set of active canonical nodes $\canonical(p_i,2r)$. For every $u\in \canonical(p_i,2r)$, we set $u.\act=0$. Then we traverse bottom-up all nodes that the query procedure $\tree(p_i,2r)$ visited and update their representative points so that every active node has a representative point that does not belong in the leaf nodes of a subtree rooted at an inactive node. The procedure is identical to what we described in Section~\ref{}.
%For example, for a node $u\in\canonical(p_i,2r)$ (so $u.\act=0$) let $v$ be its parent and $w$ is its sibling node where $w.\act=1$. When we traverse $v$ we set $v.rep=w.rep$. If both $w$ and $u$ are inactive, we set $v.\act=0$.
If after $k$ iterations $\tree.\representative()\neq \emptyset$ then 
we return an empty set $\ret=\emptyset$.
%we continue the binary search for larger values of $r$. 
On the other hand, if
$\tree.\representative()= \emptyset$
within the $k$ iterations, then we return $\ret$.
\revthree{The pseudocode of the algorithm is shown in Algorithm~\ref{alg:RelKCenterApprox}.}
%we store the solution $\ret$ and we continue the binary search for smaller values of $r$.
%In the end, we return the last computed set $\ret$ by the binary search.

\paragraph{Analysis}
If $r\geq \kcenter_{\opt(\Q(\I))}(\Q(\I))$, we show that the main algorithm returns a non-empty set $\ret$. 
From Lemma~\ref{lem:oracle1}, for every new point $p_i$ all (active) points in $\Q(\I)$ within distance $2r$ become inactive (along with some points within distance $2(1+\eps)r$) by $\tree.\inactive(p_i,2r)$.
%Indeed, as discussed in Section~\ref{sec:overview}, for every new point $p_i$ all points in $\Q(\I)$ within distance $2r$ become inactive (along with some points within distance $2(1+\eps)r$).
%, with probability at least $1-\frac{1}{N^{m+2}}$. 
Hence, in each iteration, all the points of at least one optimum cluster become inactive, as discussed in Section~\ref{sec:overview}. So after $k$ iterations all points in $\Q(\I)$ are inactive (i.e., $\tree.rep()=\emptyset$) and $\ret\neq \emptyset$.
%, with probability at least $1-\frac{1}{N^2}$.
Hence, $\kcenter_{\ret}(\Q(\I))\leq 2(1+\eps)r$. By initially setting $\eps\leftarrow\eps/2$, we have that if $r\geq \kcenter_{\opt(\Q(\I))}(\I)$ then $\kcenter_\ret(\Q(\I))\leq (2+\eps)r$.

%At this point, we conduct the analysis assuming that in every step, the binary search returns the correct distance to check. The algorithm described above has a $(2+\eps)$-approximation factor with high probability, by the discussion in Subsection~\ref{} and~\ref{}. If $r=\kcenter_{\opt(\Q(\I))}(\Q(\I))$, then the ball $\ball(p_i,2r)$ in the $i$-th iteration completely covers all points from at least one optimum cluster in $\opt(\Q(\I))$.  Similarly, the oracle $\tree.\inactive(p_i,2r)$ makes inactive all active points in $\Q(\I)$ that lie in ball $\ball(p_i,2r)$, with high probability. Since $\tree.\inactive(p_i,2r)$ makes also inactive some active points in $\Q(\I)$ that lie in the ball $\ball(p_i,2(1+\eps)r)$, we conclude that $\kcenter_\ret(\Q(\I))\leq 2(1+\eps)\kcenter_{\opt(\Q(\I))}(\Q(\I))$ with high probability. By initially setting $\eps\leftarrow\eps/2$, we conclude that $\kcenter_\ret(\Q(\I))\leq (2+\eps)\kcenter_{\opt(\Q(\I))}(\Q(\I))$.

%For the running time, the binary search checks $O(\log|\Q(\I)|^2) = O(\log N)$ distances. For every distance 

We run the $\tree.\inactive(\cdot,\cdot)$ and the $\tree.\representative()$ oracles $O(k)$ times. Each oracle runs in $O((\eps^{-d+1}+\log N)(N+\eps^{-2}\log^2 N))$ time, with probability at least $1-\frac{1}{N^{m+3}}$, as shown in Section~\ref{sec:reloracles}.
%For constant $\eps$, given that we can efficiently run a binary search over the pairwise distances in $\Q(\I)$, our algorithm runs in
%$$O\left(k\left(\eps^{-d+1}+\log N\right)\left(N+\eps^{-2}\log^2 N\right)\log N\right)=O\left(k(N\eps^{-d+1}+\eps^{-d-1}\log^2 N + N\log N +\eps^{-2}\log^3 N)\log N\right)$$
Overall, %if $\eps\in(0,1)$ is an arbitrarily small constant
the main algorithm runs in $O\left(k\left(\eps^{-d+1}+\log N\right)\left(N+\eps^{-2}\log^2 N\right)\right)=O(k\cdot N\cdot\log N+k\cdot \log^3 N)=\O\left(k\cdot N\right)$ time assuming that $\eps\in(0,1)$ is an arbitrary constant,
%\begin{equation}
%\label{eq:runtime}
%    O\left(k\left(\eps^{-d+1}+\log N\right)\left(N+\eps^{-2}\log^2 N\right)\right)=O(k\cdot N\cdot\log N+k\cdot \log^3 N)=\O\left(k\cdot N\right)
%\end{equation}
 with probability at least $1-\frac{1}{N^{2}}$.

\paragraph{Complete algorithm} 
Assume that we run a binary search on the pairwise distances of $\Q(\I)$. For each distance $r$ selected by the binary search we run the main algorithm. If it returns a non-empty set $\ret$ then we continue the binary search for smaller values of $r$. Otherwise, we continue with larger values of $r$. In the end, the last computed non-empty set $\ret$ from the binary search is returned as the final set of centers.
From the analysis of the main algorithm along with the discussion in Section~\ref{sec:overview}, such algorithm returns a set $\ret$ such that $\kcenter_{\ret}(\Q(\I))\leq (2+\eps)\kcenter_{\opt(\Q(\I))}(\Q(\I))$.
Unfortunately, we cannot sort all pairwise distances in $\Q(\I)$ and then run a binary search on them because we would need $\Omega(|\Q(\I)|^2)$ time. While it is not straightforward to run a binary search on $\ell_2$ pairwise distances, the key idea is to run a binary search over the $\ell_\infty$ pairwise distances of points in $\Q(\I)$.
We show the details of the binary search in Appendix~\ref{appndx:kCenter}.
Using the fact that for any pair of points $t, p\in \Re^d$, $||t-p||_\infty \leq ||t-p||_2\leq \sqrt{d}||t-p||_\infty$, we prove the next theorem in Appendix~\ref{appndx:kCenter}.

\vspace{-0.1em}
\begin{theorem}
\label{thm:constantkCenter}
        Given an acyclic join query $\Q$ over $d$ attributes, a database instance $\I$ of size $N$, a parameter $k$, and an arbitrarily small constant parameter $\eps\in(0,1)$, there exists an algorithm that computes a set $\ret\subseteq \Q(\I)$ of $k$ points and a real number $r_\ret$ such that $\kcenter_{\ret}(\Q(\I))\leq r_\ret\leq  (2\sqrt{d}+\eps)\kcenter_{\opt(\Q(\I))}(\Q(\I))$.
        %, with probability at least $1-\frac{1}{N}$.
        The running time of the algorithm is 
    $O(kN\log^2 N + k\log^4 N)$ with probability at least $1-\frac{1}{N}$.
\end{theorem}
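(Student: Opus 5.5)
The plan is to combine the main algorithm (which, for a given $r\geq \kcenter_{\opt(\Q(\I))}(\Q(\I))$, returns a nonempty $\ret$ with $\kcenter_\ret(\Q(\I))\leq(2+\eps)r$) with a binary search over the set $D_\infty$ of pairwise $\ell_\infty$ distances of $\Q(\I)$. We never materialize $D_\infty$.

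\textbf{Step 1: the candidate radii.} The key observation is that every $\ell_\infty$ distance between two tuples is $|\pi_{A_j}(t)-\pi_{A_j}(p)|$ for some attribute $A_j$. Both values lie in $\pi_{A_j}(\I)$, which has at most $mN$ values. So $D_\infty\subseteq \bigcup_j\{|a-b| : a,b\in\pi_{A_j}(\I)\}$, a candidate set $\Delta$ of size $O(N^2)$ per attribute. Let $r^*_\infty$ be the smallest $\delta\in\Delta$ such that $\Q(\I)$ is covered by $k$ balls of $\ell_2$-radius $\sqrt d\,\delta$ centered at points of $\Q(\I)$. Since $\kcenter_{\opt}$ in $\ell_2$ is attained by some pairwise $\ell_2$ distance $\ell$, and the corresponding $\ell_\infty$ distance $\delta\in\Delta$ satisfies $\delta\le \ell\le\sqrt d\,\delta$, we get $r^*_\infty\le \kcenter_{\opt}$.

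\textbf{Step 2: the search itself.} We binary search over $\Delta$ without sorting it. Each $\pi_{A_j}(\I)$ is sorted in $O(N\log N)$ time. The differences $\{b-a\}$ then form a sorted matrix (X+Y structure), so the $j$-th smallest candidate can be selected in $O(N\log N)$ time, e.g.\ by Frederickson--Johnson selection or by repeated random pivoting with counting. This takes $O(\log N)$ rounds, since $|\Delta|=O(N^2)$. For each candidate $\delta$ we run the main algorithm with $r=\sqrt d\,\delta$ and move left if it returns a nonempty set, right otherwise.

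\textbf{Step 3: correctness.} Monotonicity is not exact, because the test is only approximate. Instead we argue as follows. Any $\delta$ with $\sqrt d\,\delta\ge\kcenter_{\opt}$ succeeds with high probability, by the main-algorithm analysis. Conversely, any success yields a valid $\ret$ with cost at most $(2+\eps)\sqrt d\,\delta$. Let $\delta_0$ be the last successful value, and set $r_\ret=(2+\eps)\sqrt d\,\delta_0$. The search always keeps a successful candidate that is at most the smallest $\delta$ with $\sqrt d\,\delta\geq \kcenter_{\opt}$, and that $\delta$ is at most $\ell\le\kcenter_{\opt}$. Hence $r_\ret\le(2+\eps)\sqrt d\,\kcenter_{\opt}$. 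Rescaling $\eps$ gives the stated $(2\sqrt d+\eps)$ bound, and clearly $\kcenter_\ret\le r_\ret$. The returned $\ret$ has at most $k$ points, and it can be padded to exactly $k$ points arbitrarily.

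\textbf{Step 4: running time and probability.} The search makes $O(\log N)$ calls to the main algorithm, each costing $O(kN\log N+k\log^3N)$, for a total of $O(kN\log^2N+k\log^4N)$. The selection overhead is $O(N\log^2 N)$. A union bound over the $O(k\log N)$ oracle calls, each failing with probability at most $N^{-(m+3)}$, gives success probability at least $1-1/N$; here $k\le|\Q(\I)|\le N^m$. The main obstacle I expect is Step 2: efficient selection of the $j$-th candidate from the implicit $O(N^2)$ difference set. I would use the sorted-matrix selection technique and fall back to randomized pivot sampling if needed.
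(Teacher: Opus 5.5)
Your proposal is correct and follows essentially the same route as the paper. Both binary search over the $\ell_\infty$ pairwise distances, obtained from one-dimensional projections of the attribute values, run the main algorithm with $r=\sqrt{d}\,r'$, keep the last non-empty output, and set $r_\ret$ to the corresponding covering radius. The differences are minor: the paper merges all projected values into one set $V\subset\Re$ and uses Salowe's selection algorithm instead of per-attribute sorted-matrix selection, and your binary-search invariant (failures occur only below the smallest candidate $\delta$ with $\sqrt{d}\,\delta\ge\kcenter_{\opt(\Q(\I))}(\Q(\I))$) makes rigorous a step the paper only sketches.
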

\vspace{-0.1em}
Notice that our algorithm from Theorem~\ref{thm:constantkCenter} returns an estimation $r_\ret$ of $\kcenter_{\ret}(\Q(\I))$.
This is useful 
%in order to get a better approximation algorithm for the relational $k$-center clustering, as shown in Section~\ref{}. Furthermore, $r_\ret$ is useful 
in order to derive a better approximation for relational $k$-center clustering in Section~\ref{subsec:betterkcenter} and a constant approximation for the relational $k$-median/means clustering as shown in Section~\ref{sec:kmeans}.
\vspace{-0.7em}
\subsection{Better approximation algorithm}
\label{subsec:betterkcenter}
\vspace{-0.2em}
Using our result from Theorem~\ref{thm:constantkCenter}, along with the coreset construction from~\cite{agarwal2024computing}, we get the fastest known $(2+\eps)$-approximation algorithm for the relational $k$-center clustering problem. First, we execute the algorithm from Theorem~\ref{thm:constantkCenter} and we get $\ret$ and $r_\ret$. 
Both $\ret$ and $r_\ret$ are given as input to the coreset construction in~\cite{agarwal2024computing} and we get the main result of this section.
%Since $\ret$ is an $O(1)$-approximation solution to the relational $k$-center clustering and $r_\ret$ satisfies $\kcenter_\ret(\Q(\D))\leq r_\ret\leq O(1)\kcenter_{\opt(\Q(\I))}(\Q(\I))$, a coreset $\coreset\subseteq \Q(\I)$ with $|\coreset|=O(k\eps^{-d})$ for the relational $k$-center clustering can be constructed in $O(k\eps^{-d}N)$ time as shown in~\cite{}. Finally, a $\gamma$-approximation algorithm $\DkcenterAlg$ is executed on the coreset $\coreset$ to derive a $(1+\eps)\gamma$-approximation solution in $O(\timeCenter_\gamma(k))$ additional time.

\vspace{-0.1em}
%Combining everything, we conclude with the main result of this section.
\begin{theorem}
\label{thm:kCenter}
        Given an acyclic join query $\Q$ over $d$ attributes, a database instance $\I$ of size $N$, a parameter $k$, and an arbitrarily small constant parameter $\eps\in(0,1)$, there exists an algorithm that computes a set $\ret\subseteq \Q(\I)$ of $k$ points such that $\kcenter_{\ret}(\Q(\I))\leq  (1+\eps)\gamma\kcenter_{\opt(\Q(\I))}(\Q(\I))$.
        %, with probability at least $1-\frac{1}{N}$.
        The running time of the algorithm is 
    $O(kN\log^2 N + k\log^4 N + \timeCenter_\gamma(k))$ with probability at least $1-\frac{1}{N}$.
\end{theorem}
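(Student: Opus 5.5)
The plan is to combine Theorem~\ref{thm:constantkCenter} with the coreset construction of~\cite{agarwal2024computing}, and then run $\DkcenterAlg_\gamma$ on the coreset.

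\textbf{Step 1: constant-factor solution.} Run the algorithm of Theorem~\ref{thm:constantkCenter}. With probability at least $1-\frac{1}{N}$ it returns $\ret$ with $|\ret|=k$ and a number $r_\ret$ satisfying
\[
\kcenter_{\ret}(\Q(\I))\leq r_\ret\leq (2\sqrt{d}+\eps)\,\kcenter_{\opt(\Q(\I))}(\Q(\I)),
\]
in $O(kN\log^2N+k\log^4N)$ time. Since $d$ is a constant, this is an $O(1)$-approximation, and $r_\ret$ is a certified $O(1)$-approximate estimate of the optimum cost.

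\textbf{Step 2: coreset.} Feed $\ret$ and $r_\ret$ into the coreset construction of~\cite{agarwal2024computing}. Around each center $c\in\ret$, consider the ball of radius $r_\ret$, which contains every point of the cluster of $c$. Partition its bounding box into a grid of cell side $\Theta(\eps\, r_\ret/(\sqrt d\,\cdot O(1)))$; this gives $O(\eps^{-d})$ cells per center, so $O(k\eps^{-d})=O(k)$ cells in total. For each nonempty cell, keep one representative from $\Q(\I)$ using $\mathsf{ReprRect}$ of Lemma~\ref{lem:Rects}, at $O(N)$ per cell. The resulting coreset $\coreset\subseteq\Q(\I)$ has size $O(k)$ and is built in $O(kN)$ time.

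Each point of $\Q(\I)$ is within distance $\eps\cdot\kcenter_{\opt(\Q(\I))}(\Q(\I))$ of some point of $\coreset$, because $r_\ret=O(\opt)$ and the cell diameter is $O(\eps r_\ret)$ with the constant chosen appropriately. Hence, for every center set $Y$,
\[
\kcenter_Y(\coreset)\leq\kcenter_Y(\Q(\I))\leq\kcenter_Y(\coreset)+\eps\,\opt .
\]
Also, $\kcenter_{\opt(\coreset)}(\coreset)\leq(1+\eps)\opt$, possibly up to a factor of $2$ that is absorbed by the discrete-center argument in~\cite{agarwal2024computing}.

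\textbf{Step 3: solve on the coreset.} Run $\DkcenterAlg_\gamma$ on $\coreset$ in $\timeCenter_\gamma(O(k))=\timeCenter_\gamma(k)$ time. Its output $S\subseteq\coreset\subseteq\Q(\I)$ satisfies
\[
\kcenter_S(\Q(\I))\leq \gamma(1+\eps)\,\opt+\eps\,\opt .
\]
Rescaling $\eps$ gives the claimed $(1+\eps)\gamma$ bound.

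\textbf{Running time and probability.} Adding the three steps gives $O(kN\log^2N+k\log^4N+\timeCenter_\gamma(k))$. The only randomized step is Step 1, so the success probability is inherited from Theorem~\ref{thm:constantkCenter}.

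\textbf{Main obstacle.} The delicate step is confirming that the coreset of~\cite{agarwal2024computing} really has size $O(k)$ rather than $O(k^2)$ and takes only $O(kN)$ time when given an input of exactly $k$ centers together with a radius bound. This is what makes the final running time linear in $k$. I would check their lemma's parametrization, with input $O(1)$-approximate centers $Z$ and radius $r$, yielding size $O(|Z|\eps^{-d})$ and time $O(|Z|\eps^{-d}N)$. If their construction instead assumes $|Z|=O(k^2)$, I would re-derive it directly using the grid-representative argument of Step 2.
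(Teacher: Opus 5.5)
Your proposal is correct and follows the paper's route. You run Theorem~\ref{thm:constantkCenter}, pass $\ret$ and $r_\ret$ to the grid-representative coreset of~\cite{agarwal2024computing} (size $O(k\eps^{-d})$, built in $O(k\eps^{-d}N)$ time via $\mathsf{ReprRect}$), and run $\DkcenterAlg_\gamma$ on it; your Step~2 simply spells out the construction the paper only cites. The hedged factor of $2$ in Step~2 never arises: the optimal centers lie in $\Q(\I)$, so snapping each one to its cell representative already gives $\kcenter_{\opt(\coreset)}(\coreset)\leq(1+\eps)\kcenter_{\opt(\Q(\I))}(\Q(\I))$.
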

\vspace{-0.1em}
For example, if Feder and Greene algorithm~\cite{feder1988optimal} is executed on the coreset constructed by~\cite{agarwal2024computing} then we get a $2(1+\eps)$-approximation solution in $O(k\eps^{-d}\log k)$ additional time.
By setting $\eps\leftarrow \eps/2$, the approximation factor becomes $2+\eps$.
The overall asymptotic complexity for arbitrarily small constant $\eps\in(0,1)$ is $O(kN\log^2 N + k\log^4 N)$.
In Appendix~\ref{appndx:kCenter} we show an alternative $(2+\eps)$-approximation algorithm with the same running time without constructing the coreset from~\cite{agarwal2024computing}.

%% file: kMedianMeans.tex
\vspace{-0.6em}
\section{$k$-Median and $k$-Means Clustering}
\label{sec:kmeans}
\vspace{-0.3em}
We focus on the relational $k$-means clustering, while the algorithm for $k$-median follows almost verbatim.
In this section, we first describe a constant approximation algorithm returning $\O(k)$ centers.
%for the relational $k$-median and $k$-means clustering problems returning a set $\ret$ with slightly more than $k$ centers (Subsection~\ref{}).
Then, the returned set is given as input to the coreset construction in~\cite{esmailpour2024improved} to derive the final approximation algorithm and get exactly $k$ centers (Subsection~\ref{subsec:kmeans}).
Our new tools can also improve the coreset construction from~\cite{esmailpour2024improved}, leading to an even faster algorithm.
%Interestingly our RBBD tree construction in the relational setting can even accelerate the coreset construction from~\cite{}.
\vspace{-0.7em}
\subsection{Constant approximation algorithm with $\O(k)$ centers}
\vspace{-0.4em}
\paragraph{High level idea}
%As we described in Section~\ref{}, the state-of-the-art algorithms for the relational $k$-median/means clustering~\cite{} use the hierarchical aggregation method to propose an $O(1)$-approximation algorithm that returns a set of $\O(k^2)$ centers. 
We first show a constant approximation algorithm for the relational $k$-median/means problem on $\Q(\I)$ that runs in only $\O(kN)$ time, returning a set of $O(k\log^3 N)$ centers. 
We propose a modification of the algorithm in~\cite{har2004coresets} in the relational setting.
%get our intuition from the constant approximation algorithms for $k$-means/median in the standard computational setting~\cite{}. 
However, the algorithm of~\cite{har2004coresets} is heavily based on the properties of the standard computational setting, and it is not trivial how to extend it in the relational setting. We use our new tools for constructing an RBBD tree on--the--fly to implement an efficient constant approximation for the relational $k$-means/median problem.

We first execute our relational $k$-center algorithm on $\Q(\I)$ in order to get a polynomial upper and lower bound on the optimum $k$-means/median cost. At first, all the points in $\Q(\I)$ are considered active.
Then, the algorithm works in iterations. In the $i$-th iteration, we get a small set $X_i$ of $\O(k)$ samples from the active points. Next, tuples in $\Q(\I)$ that are close enough to $X_i$ are considered assigned to their closest point in $X_i$ and are implicitly removed (made inactive) from $\Q(\I)$. This procedure is repeated until every point in $\Q(\I)$ is removed. We note that some points that are assigned to the samples might pay a $k$-median/means cost much larger than the cost they pay in the optimum solution $\opt(\Q(\I))$. These points are called \emph{bad points}, while the rest points are called \emph{good points}. Similarly to~\cite{har2004coresets}, we show that the number of good points in each iteration is much larger than the number of bad points, allowing us to charge the cost of bad points to the good points. %Finally, it is inefficient to assign each tuple in $\Q(\I)$ to the closest center found by our algorithm in each iteration because of the number of tuples in $\Q(\I)$. 
In contrast to~\cite{har2004coresets}, where the authors assign each point (in the standard computational setting) to a center, we use our machinery for implementing an RBBD tree in the relational setting to implicitly assign groups of points in $\Q(\I)$ to approximately closest centers.

\vspace{-0.3em}
\paragraph{Algorithm}
First, we compute $|\Q(\I)|$ running the first phase of the Yannakakis algorithm~\cite{yannakakis1981algorithms}.
We construct the root node $\root$ of the RBBD tree $\tree$ on $\Q(\I)$ as described in Section~\ref{subsec:RBBDrel}.
%Every constructed node $u$ of $\tree$, we store i) $u.a$ which is $1$ if $u$ is active and $0$ if $u$ is inactive, ii) $u.\mathsf{count}$ number of active points from $\Q(\I)$ in the subtree rooted at $u$, and iii) $u.rep$ a representative point among the active points in $\Q(\I)$ that lie in $\square_u$. A point is active if its corresponding leaf does not have an inactive ancestor.
Let $\ret$ be the set of points we will return as a solution to $k$-means clustering, and let $r_\ret$ be the estimation of the cost $\kmeans_\ret(\Q(\I))$. Initially, we set $r_\ret=0$ and $\ret=\emptyset$.
Our algorithm runs in iterations.
Let $Q'_i\subseteq \Q(\I)$ be the set of active points in $\Q(\I)$ at the beginning of the $i$-th iteration. This set of points is not explicitly constructed. Instead, we only use $Q'_i$ to make our algorithm more intuitive. Initially, $Q'_1=\Q(\I)$. Furthermore, let $\tau_i=|Q'_i|$, so initially $\tau_1=|\Q(\I)|$. In contrast to $Q'_i$, $\tau_i$ will be explicitly stored in every step of the algorithm. A pseudocode of our method is shown in Appendix~\ref{appndx:kmeans}.
%For every step of the algorithm we first give the high level idea and then the execution in the relational setting.

$\bullet$ \textbf{Compute an estimation
of $\kmeans_{\opt(\Q(\I))}(\Q(\I))$.}
We run our algorithm for the relational $k$-center problem from Theorem~\ref{thm:kCenter}.
Let $V$ denote the set of points and $L$ the estimated cost returned by the $(2+\eps)$-approximation algorithm for the relational $k$-center clustering.
%our algorithm such as $\kcenter_{H}(\Q(\I))\leq L\leq (2+\eps)\kcenter_{\opt(\Q(\I))}(\Q(\I))$.
We can set $\eps$ to any constant value, say $0.1$. 
It is easy to verify that $\left(\frac{L}{2+\eps}\right)^2\leq \kmeans_{\opt(\Q(\I))}(\Q(\I))$ and $\kmeans_{\opt(\Q(\I))}(\Q(\I))\leq |\Q(\I)|\cdot L^2$, by definition. So, it holds that
$\frac{L^2}{9}\leq \kmeans_{\opt(\Q(\I))}(\Q(\I))\leq |\Q(\I)|\cdot L^2.$

After computing $L$, the algorithm runs in iterations. 
In iteration $i$, we first check whether $\tau_i\leq 480\cdot k\log^2 (|\Q(\I)|)$. If yes then we add in $\ret$ all remaining active points, i.e., $\ret=\ret\cup \tree.\report()$.

%This can be done by repeatedly visiting each active node $u$ of $\tree$ without an inactive descendant and reporting the points in $\square_u\cap \Q(\I)$ using the oracle in Lemma~\ref{}.

$\bullet$ \textbf{Get a set of $\Theta(k\log^{2}|\Q(\I)|)$ samples from $Q'_i$ uniformly at random.} In the $i$'th iteration, we get a set $X_i$ of $240\cdot k\log^2 |\Q(\I)|$ samples from $Q'_i$ uniformly at random. In particular, we set $X_i=\tree.\sample(240k\log^2 |\Q(\I)|)$ and we update $\ret=\ret\cup X_i$.

%We use $\tree$ to get each sample one by one. Starting from the root node $\root$ of $\tree$ we check the counts of each active child. Assume that both children of $\root$, say $v, w$ are active. We follow the path from node $v$ with probability $\frac{v.\mathsf{count}}{v.\mathsf{count}+w.\mathsf{count}}$ and path from node $w$ with probability $\frac{w.\mathsf{count}}{w.\mathsf{count}+v.\mathsf{count}}$. If $w$ (resp. $v$) was inactive we follow the path from node $v$ (resp. $w$). We repeat this process until we reach a leaf node or an active node whose children have not been constructed. If we have reached a leaf node then we trivially add the sampled point in $X$. In case we reached a node $u$ whose children are not constructed, we sample one point from $\square_u\cap \Q(\I)$ as described in Section~\ref{}.

%Next, the goal is to implicitly partition $Q'$ into the following way. Let $$Q'[a,b]=\{p\in Q'\mid \exists x\in X: x\in\widebar{\canonical}(X,b) \text{ and } \nexists x'\in X: x'\in\widebar{\canonical}(X,a)\},$$ where $\widebar{\canonical}(X,b)=\bigcup_{x\in X}\bigcup_{u\in \canonical(x,b)}Q'_u$.

$\bullet$ \textbf{Partition $Q'_i$ based on their distances from $X_i$.}
We implicitly partition the points in $Q'_i$ in classes based on their distance from the set $X_i$.
%Let $\tree^*$ be any fully constructed RBBD tree over $\Q(\I)$ such that if a node $u$ belongs in $\tree$ then $u$ also belongs in $\tree^*$. So if a node $u$ in $\tree$ is inactive then the same node $u$ in $\tree^*$ is also inactive. By the definition of $\tree$ such an RBBD tree $\tree^*$ exists with high probability. Notice that $\tree^*$ is not fully constructed by our algorithm (only its subset $\tree$ has been constructed), instead, it is only used for the next definition.
Let $a_0=0$, $b_0=\frac{L}{4|\Q(\I)|}$, $a_{\infty}=2L|\Q(\I)|$, $b_{\infty}=\infty$, and $a_j=2^{j-1}\frac{L}{4|\Q(\I)|}$,
$b_j=2^{j}\frac{L}{4|\Q(\I)|}$ for every $j=1,\ldots, M$, where $M=2\log(|\Q(\I)|)+3$.
%Next, we count the number of active points in each class of a valid family.
We construct a copy $\tree'$ of $\tree$ and repeat the next steps for every $j=0,1,\ldots, M,\infty$.
We define $\counts_{i,j}$ and initially set $\counts_{i,j}=0$.
We go through the points $x\in X_i$ one by one and we set $\counts_{i,j}=\counts_{i,j}+\tree'.\coun(x,b_j)$ and then we call $\tree'.\inactive(x,b_j)$.

Let $Q'_{i,j}\subseteq Q'_i$ be the set of points that were active just before the definition of $c_{i,j}$ and became inactive after applying $\tree'.\inactive(x,b_j)$ for all $x\in X_i$.
While we do not compute $Q'_{i,j}$ explicitly in the algorithm, in the analysis we show that for every iteration $i$:
%(1) for every pair $j_1,j_2\in [M]\cup\{0,\infty\}$, $Q'_{j_1}\cap Q'_{j_2}=\emptyset$, while $\bigcup_{j\in[M]\cup\{0,\infty\}}Q'_j=Q'$, (2) for every $j\in[M]\cup\{0,\infty\}$, the set $Q_j'\subseteq Q'$ contains all points of $Q'$ with distance greater than $(1+\eps)a_j$ and at most $b_j$ from $X$, might contain some points with distance at most $(1+\eps)b_j$ from $X$, and might contain some points with distance at least $a_j$ from $X$, and (3) for every $j\in[M]\cup\{0,\infty\}$, $\counts_j=|Q'_j|$.
\vspace{-0.3em}
\begin{enumerate}
    \item for every distinct pair $j_1,j_2\in [M]\cup\{0,\infty\}$, $Q'_{i,j_1}\cap Q'_{i,j_2}=\emptyset$,
while $\bigcup_{j\in[M]\cup\{0,\infty\}}Q'_{i,j}=Q'_i$,
    \item for every $j\in[M]\cup\{0,\infty\}$, the set $Q'_{i,j}\subseteq Q'_i$ contains all points of $Q'_i$ with distance greater than $(1+\eps)a_j$ and at most $b_j$ from $X_i$, might contain some points with distance at most $(1+\eps)b_j$ from $X_i$, and might contain some points with distance at least $a_j$ from $X_i$, and
    \item for every $j\in[M]\cup\{0,\infty\}$, $\counts_{i,j}=|Q'_{i,j}|$.
\end{enumerate}

$\bullet$ \textbf{Decide what subset of $Q'_i$ should be covered by $X_i$.}
Let $\beta_i$ be the largest index such that $\counts_{i,\beta_i}>\frac{\tau_i}{10\log |\Q(\I)|}$ (we will show that $\beta_i\leq M$ exists with high probability).
In the analysis we show that with high probability, the $k$-means cost we pay assigning all points in the first $\beta_i+1$ sets, $\hat{Q}_i=\bigcup_{\ell\in[\beta_i]\cup\{0\}}Q_{i,\ell}'$ to the centers $X_i$ is bounded (approximately) by the cost that the optimum solution pays assigning $\hat{Q}_i$, i.e., $\kmeans_{X_i}(\hat{Q}_i)\leq O(1)\cdot \kmeans_{\opt(\Q(\I))}(\hat{Q}_i)$.
We delete $\tree'$ and we repeat the procedure from the previous bullet using the tree $\tree'$ up to $j=\beta_i$.
In other words, $\tree$ gets the instance of $\tree'$ at the moment just before $c_{i,\beta_i+1}$ is defined.
In this version of $\tree$, 
each point in $\hat{Q}_i$ is inactive.
We show an example in Figure~\ref{fig:kmeans}.
%Equivalently, we claim that every point in $\hat{Q}$ can be assigned to a center in $X$ in the $k$-means solution we construct, so we are done with the points in $\hat{Q}$ and we should not consider this set again in the future. 
We observe that in the next iteration $Q'_{i+1}\leftarrow Q'_i\setminus \hat{Q}_i$. We update $\tau_{i+1}=\tau_i-\sum_{\ell\in[\beta_i]\cup\{0\}}\counts_{i,\ell}$ and $r_\ret=r_\ret+\sum_{\ell\in[\beta_i]\cup\{0\}}\counts_{i,\ell}((1+\eps)b_\ell)^2$. 
%Even though we do not compute $Q'$ or $\hat{Q}$ explicitly,
%We note that $\tau=|Q'|$.
%Finally, we update $r_\ret=r_\ret+\sum_{\ell\in[\beta]\cup\{0\}}\counts_\ell((1+\eps)b_\ell)^2$.

\input{kmeans-fig}

%In Appendix~\ref{appndx:kmeans}, we show the next theorem.

\paragraph{Analysis} The proofs of the next lemmas are shown in Appendix~\ref{appndx:kmeans}.

\begin{lemma}
\label{lem:help1}
For every iteration $i$,
%i) the set $X$ is a set of $O(k\log^2 (|\Q(\I)|))$ uniformly random samples from $Q'$, 
the family of sets $Q_{i,0}',\ldots, Q_{i,1}',\ldots, Q_{i,M}',Q_{i,\infty}'$ satisfy the three properties in the third bullet.%, with probability at least $1-\frac{1}{N^{2}}$.
%In each iteration the for the constructed set $U_j$ it holds that $U_j=U_j[a_j,b_j]$ for $j\in[M]\cup\{0,\infty\}$ as defined by Equation~\eqref{eq:classes}. iii) $r_j=|Q_j'|$ for $j\in[M]\cup\{0,\infty\}$
\end{lemma}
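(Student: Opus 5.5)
Fix an iteration $i$ and let $\tree'$ be the copy of $\tree$ processed over the thresholds $j=0,1,\ldots,M,\infty$ in increasing order. For each $j$, let $Q'^{(j)}$ be the active set just before $\counts_{i,j}$ is defined, so $Q'^{(0)}=Q'_i$. Then $Q'_{i,j}=Q'^{(j)}\setminus Q'^{(j+1)}$, where $j+1$ means the next index in the order and $Q'^{(\infty+1)}$ is the active set after the last step. The proof is an induction over $j$ (and, inside each $j$, over the points $x\in X_i$) that uses only Lemmas~\ref{lem:oracle1} and~\ref{lem:oracle2}.

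\textbf{Property 1.} The sets $Q'^{(j)}$ are nested and decreasing, because the oracles only ever deactivate points and never reactivate them. So the differences $Q'_{i,j}$ are pairwise disjoint and their union is $Q'_i\setminus Q'^{(\infty+1)}$. To get equality with $Q'_i$, I look at the last step: $b_\infty=\infty$, so each ball $\ball(x,b_\infty)$ is all of $\Re^d$. By Lemma~\ref{lem:oracle1}, already the first call $\tree'.\inactive(x,\infty)$ makes every remaining active point inactive; equivalently, the root lies inside the ball and is marked inactive. Hence $Q'^{(\infty+1)}=\emptyset$. Some care is needed to treat $r=\infty$ as a legitimate radius; I will simply special-case it as deactivating the root, with $\tree'.\coun$ returning $\root.c$.

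\textbf{Property 2.} Take $p\in Q'_i$ with $(1+\eps)a_j<\dist(p,X_i)\le b_j$. At step $j$, the ball $\ball(x,b_j)$ contains $p$ for the nearest $x\in X_i$, so Lemma~\ref{lem:oracle1} guarantees that if $p$ is still active at step $j$, it becomes inactive there. It remains to check that $p$ was not deactivated earlier. At any earlier step $j'<j$, a point becomes inactive only if it lies within $(1+\eps)b_{j'}$ of some $x$, again by Lemma~\ref{lem:oracle1}. Since $b_{j'}\le a_j$, this forces $\dist(p,X_i)\le(1+\eps)a_j$, contradicting the assumption. The same reasoning shows that every $p\in Q'_{i,j}$ has distance at most $(1+\eps)b_j$ from $X_i$. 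It also shows that any point of $Q'_{i,j}$ surviving all earlier steps was not in any $\ball(x,b_{j'})$ with $j'<j$, so its distance is greater than $b_{j-1}=a_j$, hence at least $a_j$. The boundary cases $j=0$ and $j=\infty$ follow the same way.

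\textbf{Property 3.} This is the step I expect to be delicate. Lemma~\ref{lem:oracle2} on its own gives only an approximate count. What saves exactness is that $\tree'.\coun(x,b_j)$ and the subsequent $\tree'.\inactive(x,b_j)$ run the same query procedure on the same tree state. Any children constructed during the count are kept, and construction is deterministic given the stored samples. So both calls traverse identical canonical nodes $\canonical(x,b_j)$. The returned sum $\sum_{u\in\canonical}u.c$ is therefore exactly the number of active points in the union of these nodes, which are pairwise disjoint and all active. These are precisely the points deactivated by that call, using the correctness of the maintained counts $u.c$ from Lemma~\ref{lem:oracle1}. Summing over $x\in X_i$ telescopes to $|Q'^{(j)}\setminus Q'^{(j+1)}|=|Q'_{i,j}|$.

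Finally, the RBBD guarantees from Theorem~\ref{thm:RBBD} hold with high probability, so I will take a union bound over all calls.
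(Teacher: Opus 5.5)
Your proposal is correct and follows essentially the same route as the paper. Property (1) comes from monotone deactivation plus $b_\infty=\infty$. Property (2) comes from the sandwich guarantee of Lemma~\ref{lem:oracle1} combined with $a_j=b_{j-1}$ (including $a_\infty=b_M$), which the paper phrases as a strong induction on $j$. Property (3) comes from the count call summing the maintained counters over exactly the canonical nodes that the subsequent inactive call deactivates. Your explicit observation that both calls see the same tree state, and hence the same canonical nodes, is a useful sharpening of what the paper compresses into ``by the proof of Lemmas~\ref{lem:oracle1} and~\ref{lem:oracle2}''; your final union bound is harmless but unnecessary, since only the oracles' running times are probabilistic.
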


Next, we show that 1) the number of iterations of the algorithm is $O(\log N)$, so $|\ret|=O(k\log^3 N)$, and 2) for every iteration $i$, $\kmeans_{X_i}(\hat{Q}_i)=O\left(\kmeans_{\opt(\Q(\I))}(\hat{Q}_i)\right)$.
%We first prove the first argument.
For every iteration $i$, let $\badpoints=\{p\in Q_i'\mid \dist(p,X_i)>2\dist(p,\opt(\Q(\I)))\}$.
%be the set of points such that for every $p\in Q_{bad}$, there exists one iteration of the algorithm where $p\in\hat{Q}$ with $\dist(p,X)> 2\dist(p,\opt(\Q(\I)))$.
We show that the size of $\badpoints$ is small with high probability.
\begin{lemma}
\label{lem:bad}
    For every iteration $i$, $|\badpoints|\leq \frac{\tau_i}{20\log|\Q(\I)|}$, with probability at least $1-\frac{1}{N^{3}}$.
\end{lemma}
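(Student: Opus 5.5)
We follow the sampling argument of Har-Peled and Mazumdar~\cite{har2004coresets}, adapted to the active set $Q'_i$. Fix an iteration $i$ and condition on the history before it; then $Q'_i$ is fixed and, by Lemma~\ref{lem:oracle5}, $X_i$ consists of $s=240k\log^2|\Q(\I)|$ independent uniform samples from $Q'_i$. Write $O=\opt(\Q(\I))$ and partition $Q'_i$ by nearest optimum center into clusters $C_1,\ldots,C_k$ (some possibly empty), with $C_c=\{p\in Q'_i\mid \text{$p$ is assigned to } o_c\}$. For each cluster, order its points by distance to $o_c$ and let $G_c\subseteq C_c$ be the $\lceil|C_c|/2\rceil$ points closest to $o_c$ (the ``inner half''). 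The key deterministic observation is: if some sample $x\in X_i$ lies in $G_c$, then every $p\in C_c\setminus G_c$ satisfies $\dist(p,X_i)\le \dist(p,o_c)+\dist(o_c,x)\le 2\dist(p,o_c)=2\dist(p,O)$, and likewise every $p\in G_c$ with $\dist(p,o_c)\ge \dist(x,o_c)$. To handle the inner half cleanly, I would instead define $G_c$ more carefully: among the points of $C_c$ sorted by distance to $o_c$, if the sample hits the first $\eta|C_c|$ points (for a small fraction $\eta$), then all but those $\eta|C_c|$ points are good. So bad points of $C_c$ are at most $\eta|C_c|$ when the inner $\eta$-fraction is hit, and at most $|C_c|$ otherwise.

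Choose $\eta=\frac{1}{40\log|\Q(\I)|}$. Split clusters into \emph{small} ones with $|C_c|<\frac{\tau_i}{40k\log|\Q(\I)|}$ and \emph{large} ones. Small clusters together contain fewer than $k\cdot\frac{\tau_i}{40k\log|\Q(\I)|}=\frac{\tau_i}{40\log|\Q(\I)|}$ points, all possibly bad. For a large cluster, its inner $\eta$-fraction has at least $\eta|C_c|\ge \frac{\tau_i}{1600k\log^2|\Q(\I)|}$ points, so each sample hits it with probability at least $\frac{1}{1600k\log^2|\Q(\I)|}$; with $s=240k\log^2|\Q(\I)|$ samples the miss probability is $\exp(-s/(1600k\log^2|\Q(\I)|))$. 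I expect the constants here to be the main obstacle: $240/1600$ is not enough, so I would rebalance the thresholds (e.g.\ small clusters below $\frac{\tau_i}{80k\log|\Q(\I)|}$ and $\eta=\frac{1}{80\log|\Q(\I)|}$ won't fix it either). The cleaner route is to use the $\log^2$: since the hit probability per sample is $\Omega(1/(k\log^2))$ only when both thresholds carry a $\log$, instead require the hit event only with error $\exp(-\Omega(\log|\Q(\I)|))$; I would set small threshold $\frac{\tau_i}{40k\log|\Q(\I)|}$ and $\eta=\frac1{40\log|\Q(\I)|}$, and check whether $240$ must be read as making the exponent $\ge c\log N$ with $|\Q(\I)|\ge N$ (for nontrivial instances) — if the constants do not close I would adjust $\eta$ up to a constant like $\frac{1}{40}$ for hit and bound the unhit tail only through large clusters, accepting a slightly different split. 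In any case the target is: for each large cluster the miss probability is at most $N^{-4}$ (using $\log|\Q(\I)|\ge\log N$, valid since $|\Q(\I)|\ge N$ after Yannakakis semijoin reduction, or $|\Q(\I)|$ small otherwise handled by the $\tree.\report()$ branch), and a union bound over $k\le N$ clusters... gives failure $\le N^{-3}$.

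On the success event, bad points number at most $\frac{\tau_i}{40\log|\Q(\I)|}$ (small clusters) plus $\sum_c\eta|C_c|\le\eta\tau_i=\frac{\tau_i}{40\log|\Q(\I)|}$ (large clusters), totalling $\frac{\tau_i}{20\log|\Q(\I)|}$, as claimed. Note $k$ may exceed $N$; then the union bound is over nonempty clusters, at most $\tau_i\le|\Q(\I)|=O(N^m)$, so I would take the per-cluster failure to be $N^{-(m+3)}$ by letting the hidden constant absorb $m$ (or arguing only large clusters matter, of which there are at most $40k\log|\Q(\I)|$ but also at most $\tau_i$).
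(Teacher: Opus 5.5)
There is a genuine gap, and it is the one you flag yourself: with your decomposition the probability bound cannot close. You protect only a \emph{relative} inner fraction $\eta|C_c|$ of each cluster, and you also discard clusters below an absolute size threshold. To keep the bad-point count at $\frac{\tau_i}{20\log|\Q(\I)|}$, both $\eta$ and the threshold must carry a $1/\log|\Q(\I)|$ factor. So the protected set of a large cluster has only about $\frac{\tau_i}{1600k\log^2|\Q(\I)|}$ points. Each of the $240k\log^2|\Q(\I)|$ samples hits it with probability about $\frac{1}{1600k\log^2|\Q(\I)|}$, so the miss probability is roughly $e^{-3/20}$, a constant, not $N^{-4}$.

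This is structural, not a matter of tuning constants. Any scheme with a relative fraction $\eta$ and an absolute small-cluster threshold $\sigma$ gives a per-sample hit probability $\eta\sigma/\tau_i=\Theta(1/(k\log^2|\Q(\I)|))$, which would require $\Omega(k\log^3|\Q(\I)|)$ samples. Your suggested repair, raising $\eta$ to a constant like $1/40$, breaks the counting side instead: large clusters could then contribute up to $\tau_i/40$ bad points, far more than $\frac{\tau_i}{20\log|\Q(\I)|}$.

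The missing idea is to protect an \emph{absolute} number of points per optimum center, which also removes the small/large split. The paper places, around each $p_s\in\opt(\Q(\I))$, the smallest ball $B_s$ containing $\frac{\tau_i}{20k\log|\Q(\I)|}$ points of $Q'_i$. These points come from all of $Q'_i$, not only the cluster of $p_s$. A single sample lands in $B_s$ with probability $\frac{1}{20k\log|\Q(\I)|}$, so all $240k\log^2|\Q(\I)|$ samples miss it with probability at most $e^{-12\log|\Q(\I)|}$, and a union bound over the $k$ balls applies.

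On that event, take $p$ outside $\bigcup_s B_s$ with nearest optimum center $p_s$, and let $x_s\in X_i\cap B_s$. Then $\dist(p_s,x_s)\le\dist(p,p_s)$, so $\dist(p,X_i)\le 2\dist(p,p_s)$ and $p$ is good. Hence $\badpoints\subseteq\bigcup_s B_s$ and $|\badpoints|\le k\cdot\frac{\tau_i}{20k\log|\Q(\I)|}$.

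Your cluster-based deterministic observation works too once you make the same change. Take as protected set of $C_c$ its $\min\{|C_c|,\frac{\tau_i}{20k\log|\Q(\I)|}\}$ points closest to $o_c$. Then every cluster, small or large, contributes at most that many bad points, and the per-sample hit probability is $\frac{1}{20k\log|\Q(\I)|}$ as required.

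Separately, your claim that $|\Q(\I)|\ge N$ after semijoin reduction is false, since reduction can shrink relations arbitrarily. That is a side issue here, but it should not be used to turn a failure probability polynomial in $|\Q(\I)|$ into one in $N$.
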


Since $|Q'_i|>480\cdot k\log^2|\Q(\I)|$, $M+2=2\log |\Q(\I)| +5$, and by Lemma~\ref{lem:help1}, there exists at least one set $Q_{i,j^*}'$ such that $\counts_{i,j^*}>\frac{\tau_i}{10\log |\Q(\I)|}$, by the pigeonhole principle. So, the index $\beta_i$ is well defined. In the proof of the next lemma, we also show that $\beta_i\neq \infty$.

%\vspace{-0.3em}
\begin{lemma}
\label{lem:helpred}
    For every iteration $i$, $|\hat{Q}_i|\geq \frac{\tau_i}{2}$, so the algorithm executes $O(\log |\Q(\I)|)=O(\log N)$ iterations, and $|\ret|=O(k\log^3 N)$, with probability at least $1-\frac{1}{N^2}$.
\end{lemma}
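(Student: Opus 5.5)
We argue that in every iteration the classes indexed up to $\beta_i$ cover at least half of the active points, using the bound on bad points (Lemma~\ref{lem:bad}) together with a sampling argument. The classes are disjoint and their counts are exact by Lemma~\ref{lem:help1}. By the definition of $\beta_i$, every class $j>\beta_i$ with $j\neq\infty$ has $\counts_{i,j}\le \frac{\tau_i}{10\log|\Q(\I)|}$. So it suffices to show two things: (a) $\counts_{i,\infty}$ is small, which in particular gives $\beta_i\neq\infty$, and (b) the total mass of the finite classes beyond $\beta_i$ is at most $\tau_i/2$.

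First, I bound the far region. For any point $p$ with $\dist(p,X_i)>a_j$, either $p\in\badpoints$, or $\dist(p,\opt)\ge \dist(p,X_i)/2> a_j/2$. Summing over the non-bad points in class $j$ gives
\[
\kmeans_{\opt}(\Q(\I))\ \ge\ \bigl(\counts_{i,j}-|\badpoints|\bigr)\,\frac{a_j^2}{4}.
\]
For $j=\infty$, we have $a_\infty=2L|\Q(\I)|$. Combining this with $\kmeans_{\opt}\le |\Q(\I)|L^2$ forces $\counts_{i,\infty}-|\badpoints|<1/|\Q(\I)|$. Hence $\counts_{i,\infty}\le|\badpoints|\le \frac{\tau_i}{20\log|\Q(\I)|}$, which is below the threshold, so $\beta_i\ne\infty$.

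The harder step is bounding the finite classes above $\beta_i$. Each of these classes has count below the threshold, and there are at most $M+1=2\log|\Q(\I)|+4$ of them. Summing naively gives about $\tau_i/5+O(\tau_i/\log)$ plus $\counts_{i,\infty}$, which is comfortably below $\tau_i/2$ for large $N$. So
\[
|\hat{Q}_i| = \tau_i-\sum_{j>\beta_i}\counts_{i,j}\ \ge\ \tau_i-(M+1)\frac{\tau_i}{10\log|\Q(\I)|}-\frac{\tau_i}{20\log|\Q(\I)|}\ \ge\ \frac{\tau_i}{2},
\]
once the constants are checked against $M=2\log|\Q(\I)|+3$ and $N$ is large. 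I expect this constant bookkeeping to be the main obstacle. If $(M+1)/(10\log)$ turns out too large, for example because of the ceiling on the logarithm, I will fall back to the approach of~\cite{har2004coresets}: use the sample $X_i$ of size $240k\log^2|\Q(\I)|$ to argue that, with high probability, most non-bad points lie within distance $O(\dist(p,\opt))$. I will also use the fact that at least one class exceeds the threshold, by pigeonhole.

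Given $|\hat{Q}_i|\ge\tau_i/2$, the update $\tau_{i+1}=\tau_i-|\hat{Q}_i|\le\tau_i/2$ shows that $\tau$ halves in each iteration. It starts at $|\Q(\I)|=O(N^m)$, so after $O(\log|\Q(\I)|)=O(\log N)$ iterations it falls below $480k\log^2|\Q(\I)|$, and the algorithm then stops by reporting the remaining points. Each iteration adds $O(k\log^2N)$ samples, and the final report adds $O(k\log^2 N)$ points, so $|\ret|=O(k\log^3N)$. For the probability, I take a union bound of Lemma~\ref{lem:bad}, which fails with probability $N^{-3}$ per iteration, and of Lemma~\ref{lem:help1} over the $O(\log N)$ iterations. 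This gives failure probability at most $N^{-2}$.
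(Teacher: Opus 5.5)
Your argument is essentially the paper's: you show $Q'_{i,\infty}\subseteq\badpoints$ from $\kmeans_{\opt(\Q(\I))}(\Q(\I))\le|\Q(\I)|L^2$ (your summation version is equivalent to the paper's pointwise one), then combine Lemma~\ref{lem:bad} with a sum over the sub-threshold finite classes, and finish with halving of $\tau_i$ and a union bound. The constant check you flag needs no fallback to~\cite{har2004coresets}, because your bound equals $\tau_i\bigl(\frac45-\frac{9}{20\log|\Q(\I)|}\bigr)$, which is at least $\frac{\tau_i}{2}$ once $\log|\Q(\I)|\ge\frac32$, and the loop condition $|\Q(\I)|\ge\tau_i>480k\log^2|\Q(\I)|$ already guarantees this.
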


The next lemma bounds the $k$-means cost of $\ret$ and the running time of the algorithm.
%is low with high probability.

\begin{lemma}
\label{lem:cost}
   % $\kmeans_{\ret}(\Q(\I))\leq 69\kmeans_{\opt(\Q(\I))}(\Q(\I))$ and
   For any arbitrary small constant $\eps\!\in\!(0,1)$,
   it holds that $\kmeans_{\ret}(\Q(\I))\leq r_\ret\leq (84+\eps)\cdot \kmeans_{\opt(\Q(\I))}(\Q(\I))$, with probability at least $1-\frac{1}{N}$.
   The algorithm runs in $O(kN\log^5 N + k\log^7 N)$ time with probability at least $1-\frac{1}{N}$.
\end{lemma}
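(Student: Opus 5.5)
The plan has two parts: the cost bound, charged class by class using Lemmas~\ref{lem:help1}, \ref{lem:bad} and \ref{lem:helpred}, and the running time, obtained by counting oracle calls. All high-probability events are combined by a union bound over the $O(\log N)$ iterations, which leaves total failure probability at most $1/N$.

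\textbf{Cost, lower part.} By Lemma~\ref{lem:helpred}, the sets $\hat{Q}_i$ partition $\Q(\I)$, except for the final leftover, which is reported into $\ret$ and costs $0$. Every point of $Q'_{i,\ell}$ has distance at most $(1+\eps)b_\ell$ from $X_i\subseteq\ret$ (property 2 of Lemma~\ref{lem:help1}), and $\counts_{i,\ell}=|Q'_{i,\ell}|$ (property 3). Hence $\kmeans_\ret(\Q(\I))\le\sum_i\sum_{\ell\le\beta_i}\counts_{i,\ell}((1+\eps)b_\ell)^2=r_\ret$.

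\textbf{Cost, upper part.} Fix an iteration $i$ and split $\hat Q_i$ three ways.

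First, the class $\ell=0$ contributes at most $\tau_i (1+\eps)^2L^2/(16|\Q(\I)|^2)$. Summed over all iterations this is at most $(1+\eps)^2L^2/(16|\Q(\I)|)\le O(1)\cdot\kmeans_{\opt}$, using $L^2/9\le\kmeans_{\opt}$.

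Second, take a good point $p\in Q'_{i,\ell}\setminus\badpoints$ with $\ell\ge1$. Such a point has distance greater than $a_\ell=b_\ell/2$ from $X_i$, or more carefully at least $a_\ell$ by property 2 with slack. It also satisfies $\dist(p,X_i)\le2\dist(p,\opt)$. Therefore $((1+\eps)b_\ell)^2\le4(1+\eps)^2\dist^2(p,X_i)\le16(1+\eps)^2\dist^2(p,\opt)$. So good points pay at most $16(1+\eps)^2$ times their optimal cost.

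Third, the bad points. Lemma~\ref{lem:bad} gives $|\badpoints|\le\tau_i/(20\log|\Q(\I)|)$. Every bad point in $\hat Q_i$ lies in a class of index at most $\beta_i$ and pays at most $((1+\eps)b_{\beta_i})^2$. Class $\beta_i$ has more than $\tau_i/(10\log|\Q(\I)|)$ points, so after removing bad points it still holds at least $\tau_i/(20\log|\Q(\I)|)$ good points. Each of these has $\dist(p,\opt)\ge a_{\beta_i}/2=b_{\beta_i}/4$. Charging every bad point to a distinct good point of class $\beta_i$ bounds the total bad-point cost by $16(1+\eps)^2\kmeans_{\opt}(\text{good points of }Q'_{i,\beta_i})$.

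Each point of $\Q(\I)$ lies in exactly one $\hat Q_i$, so each optimal cost is used at most twice: once directly and once as a charge target. Summing gives roughly $(32+32+\text{small}+\ldots)\kmeans_{\opt}$. Matching the exact constant $84$ is the main obstacle. It needs careful handling of the $(1+\eps)$ slack in property 2, and of the fact that bad points may also lie in class $\beta_i$ itself. I expect the paper's constant to come from tracking factors like $4\cdot(1+\eps)^2$ and $2^2$ through these charges, and I would tune the ratio definitions of good and bad accordingly. After rescaling $\eps$, the result is $(84+\eps)$.

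\textbf{Running time.} The $k$-center step costs $O(kN\log^2N+k\log^4N)$ by Theorem~\ref{thm:kCenter}. Each iteration takes $|X_i|=O(k\log^2N)$ samples, at $O(N)$ each by Lemma~\ref{lem:oracle5}. It then makes $O(M\cdot|X_i|)=O(k\log^3N)$ calls to $\coun$ and $\inactive$, each costing $O(\log N(N+\log^2N))$ by Lemmas~\ref{lem:oracle1} and \ref{lem:oracle2}; the redo up to $\beta_i$ at most doubles this. Copying $\tree$ costs $|\tree|$, which is bounded by the number of nodes created so far and therefore within the same budget. So one iteration costs $O(kN\log^4N+k\log^6N)$. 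Multiplying by the $O(\log N)$ iterations of Lemma~\ref{lem:helpred} gives $O(kN\log^5N+k\log^7N)$. The final $\report$ costs $O(|\tree|N+k\log^2N)$, which is dominated.
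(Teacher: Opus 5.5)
Your lower bound $\kmeans_{\ret}(\Q(\I))\le r_{\ret}$ and your running-time accounting match the paper's. For the upper bound you take a different and tighter charging route. Write $\kmeans_{\opt}$ for $\kmeans_{\opt(\Q(\I))}(\Q(\I))$.

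The paper first bounds the \emph{actual} cost. It compares a bad point's distance (at most $(1+\eps)b_{\beta_i}$) with the distance of a good class-$\beta_i$ point (at least $a_{\beta_i}$). This gives $\kmeans_{X_i}(\hat{Q}_i)\le 20(1+\eps)^2\kmeans_{\opt(\Q(\I))}(\hat{Q}_i)$ when $\beta_i>0$, and at most $\kmeans_{\opt}/|\Q(\I)|$ when $\beta_i=0$. Hence $\kmeans_{\ret}(\Q(\I))\le 21(1+\eps)^2\kmeans_{\opt}$. Only then does it use that each charge $((1+\eps)b_{j'})^2$ is at most $4(1+\eps)^2\dist^2(p,X_i)$ and multiply; that is where $84=4\cdot 21$ comes from.

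You instead charge the rounded values $((1+\eps)b_\ell)^2$ directly against optimal costs. A bad point's charge is therefore compared with a good point's optimal cost once. In the paper it is compared at the level of true distances and then re-inflated by $4(1+\eps)^2$. Carried through, your argument gives $r_{\ret}\le 32(1+\eps)^2\kmeans_{\opt}+O(\kmeans_{\opt}/|\Q(\I)|)$. Each optimal cost is used at most twice with coefficient $16(1+\eps)^2$, so the tally is $32$, not $32+32$. The paper's order yields the sharper bound $21(1+\eps)^2$ on the actual cost as a by-product. Yours yields the sharper bound on the estimate $r_{\ret}$, which is what the lemma and the later coreset step use.

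Consequently, the ``main obstacle'' you name does not exist. The lemma only asks for $r_{\ret}\le(84+\eps)\kmeans_{\opt}$, and any smaller constant implies it after rescaling $\eps$, as the paper also does. Do not retune the good/bad threshold: Lemma~\ref{lem:bad} is proved for the factor $2$ in the definition of $\badpoints$, and your charging relies on it.

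Two small repairs are needed.
\begin{enumerate}
\item Restrict the bad-point charging to bad points in classes $\ell\ge 1$. This forces $\beta_i\ge 1$, so $a_{\beta_i}=b_{\beta_i}/2$. With $a_0=0$ you would get no lower bound on $\dist(q,\opt(\Q(\I)))$, but class-$0$ bad points are already covered by your first term.
\item Bound the class-$0$ total by $\sum_i \counts_{i,0}\le |\Q(\I)|$, using disjointness of the sets $Q'_{i,0}$. Summing $\tau_i$ instead can give up to $2|\Q(\I)|$.
\end{enumerate}
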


We argue that the same bounds hold for the relational $k$-median clustering. In fact, the constant approximation factor for the relational $k$-median problem is smaller since we do not raise the distances to the power of $2$. 
We conclude with the next Theorem.

%\vspace{-0.2em}
\begin{theorem}
\label{thm:kmeansconstant}
            Given an acyclic join query $\Q$ over $d$ attributes, a database instance $\I$ of size $N$, a parameter $k$, and an arbitrarily small constant parameter $\eps\in (0,1)$, there exists an algorithm that computes a set $\ret\subseteq \Q(\I)$ of $O(k\log^3 N)$ points and a number $r_\ret$ such that $\kmeans_{\ret}(\Q(\I))\leq r_\ret\leq  (84+\eps)\cdot\kmeans_{\opt(\Q(\I))}(\Q(\I))$, with probability at least $1-\frac{1}{N}$. The running time of the algorithm is 
    $O(kN\log^5 N + k\log^7 N)$ with probability at least $1-\frac{1}{N}$. An algorithm with the same theoretical guarantees and running time exists for relational $k$-median clustering.
\end{theorem}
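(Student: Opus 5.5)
The theorem is essentially an assembly of Lemmas~\ref{lem:help1}, \ref{lem:bad}, \ref{lem:helpred} and \ref{lem:cost}, followed by a transfer of the argument to $k$-median. For $k$-means the plan is to take the algorithm of this subsection with output $(\ret, r_\ret)$ and verify three things: the size bound, the cost bound, and the running time, each holding with the stated probability.

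\textbf{Size.} Lemma~\ref{lem:helpred} gives $|\hat{Q}_i|\geq \tau_i/2$ in every iteration. Hence after $O(\log|\Q(\I)|)=O(m\log N)=O(\log N)$ iterations we have $\tau_i\leq 480k\log^2|\Q(\I)|$. At that point the final $\tree.\report()$ adds $O(k\log^2 N)$ points. Each iteration adds $|X_i|=O(k\log^2 N)$ points. Therefore $|\ret|=O(k\log^3 N)$.

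\textbf{Cost and running time.} Both the bound $\kmeans_\ret(\Q(\I))\leq r_\ret\leq(84+\eps)\kmeans_{\opt(\Q(\I))}(\Q(\I))$ and the running time $O(kN\log^5N+k\log^7N)$ are exactly Lemma~\ref{lem:cost}. The success probability is obtained by a union bound over the failure events:
\begin{itemize}
\item the $k$-center subroutine (Theorem~\ref{thm:kCenter}), failing with probability at most $1/N$;
\item Lemma~\ref{lem:bad} in each of the $O(\log N)$ iterations, each failing with probability at most $N^{-3}$;
\item the oracle running-time guarantees, each failing with probability at most $N^{-(m+3)}$.
\end{itemize}
The $1/N$ term from the $k$-center call needs care. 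I would rerun that call with a slightly higher-probability version, which only changes constants inside the $\log$. Then the total failure probability stays below $1/N$.

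\textbf{$k$-median.} Here the algorithm is identical except for three changes:
\begin{itemize}
\item $L/(2+\eps)\leq \kmedian_{\opt}\leq |\Q(\I)|L$ replaces the squared bounds;
\item the cost estimate becomes $r_\ret=\sum \counts_{i,\ell}(1+\eps)b_\ell$;
\item the charging argument in Lemma~\ref{lem:cost} uses first powers of distances.
\end{itemize}
The ranges of $a_j,b_j$ still span a polynomial ratio, so $M=O(\log N)$ classes suffice. Lemmas~\ref{lem:help1}, \ref{lem:bad} and \ref{lem:helpred} do not depend on the objective at all: they concern only distances, sampling, and the RBBD oracles. So they carry over verbatim.

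\textbf{Main obstacle.} The only step needing real re-examination is the cost charging. The cost of bad points in class $\ell\leq\beta_i$ is charged to the good points in class $\beta_i$. These number at least $\counts_{i,\beta_i}-|\badpoints|\geq \tau_i/(20\log|\Q(\I)|)$, and every class below $\beta_i$ has distance at most $(1+\eps)b_{\beta_i}$. Each good point at distance about $a_{\beta_i}$ from $X_i$ pays at least $a_{\beta_i}/2$ in $\opt$. With first powers the constants only improve (a factor $2$ instead of $4$ from doubling), so the constant stays at most $84+\eps$. I expect verifying this inequality, together with handling class $0$ via $b_0$ and the lower bound on $\opt$, to be the main check, but it follows the $k$-means proof line by line.
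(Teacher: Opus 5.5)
Your proposal is correct and follows the paper's route: the theorem is obtained by assembling Lemmas~\ref{lem:help1}--\ref{lem:cost}, with the $k$-center call run at a slightly higher success probability (the paper also uses $1-\frac{1}{10N}$ there), and the $k$-median case comes from rerunning the same good/bad charging argument with first powers of distances. On the class-$0$ check you flag: with $b_0=\frac{L}{4|\Q(\I)|}$, the per-iteration bound from the $k$-means proof does not carry over to $k$-median, so bound the class-$0$ cost of the iterations with $\beta_i=0$ in aggregate instead, using disjointness of the $\hat{Q}_i$. This gives at most $(1+\eps)\frac{L}{4}\leq\frac{(1+\eps)(2+\eps)}{4}\kmedian_{\opt(\Q(\I))}(\Q(\I))$, which keeps the constant well within $84+\eps$.
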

\vspace{-0.7em}
\subsection{Better approximation algorithm}
\label{subsec:kmeans}
\vspace{-0.2em}
Using our result from Theorem~\ref{thm:kmeansconstant}, along with the coreset construction from~\cite{esmailpour2024improved} we get the fastest known $(4+\eps)\gamma$-approximation for the relational $k$-means clustering problem and the fastest known $(2+\eps)\gamma$-approximation for the relational $k$-median clustering problem. The RBBD construction in the relational setting can also accelerate the coreset construction in~\cite{esmailpour2024improved}.

First we execute the algorithm from Theorem~\ref{thm:kmeansconstant} and we get $\ret$ and $r_\ret$. 
%Since $\ret$ is an $O(1)$-approximation solution to the relational $k$-means clustering and $r_\ret$ satisfies $\kmeans_\ret(\Q(\I))\leq r_\ret\leq O(1)\kmeans_{\opt(\Q(\I))}(\Q(\I))$, 
Given $\ret, r_\ret$, a coreset $\coreset\subseteq \Q(\I)$ with $|\coreset|=O(k\eps^{-d}\log^4 N)$ for the relational $k$-means clustering can be constructed in $O(kN\log^5 N + k^2\log^{11}N\cdot\min\{\log^{d+1}(k), k\log^5 N\})$ time~\cite{esmailpour2024improved}.
%, assuming arbitrarily small constant $\eps$. 
Finally, a $\gamma$-approximation algorithm $\DkmeansAlg$ is executed on the coreset $\coreset$ to derive a $(4+\eps)\gamma$-approximation in $O(\timeMeans_\gamma(k\log^4 N))$ additional time.
%For example, Feder and Greene algorithm~\cite{} is executed on $\coreset$ in $O(|\coreset|\log k)=O(k\eps^{-d}\log k)$ additional time to get a $2(1+\eps)$-approximation solution for the $k$-center clustering problem on $\Q(\I)$. Notice that the overall asymptotic complexity of the algorithm does not change.
%By setting $\eps\leftarrow \eps/4$, the approximation factor becomes $(4+\eps)\gamma$. 
In Appendix~\ref{appndx:fastercoreset} we show how the relational RBBD tree construction can be used to improve the running time of the coreset construction in~\cite{esmailpour2024improved} to $O(kN\log^5 N +k\log^7 N)$ time.
%Skipping the low level details, we conclude with the main result of this section.
%Overall, the running time of the algorithm is $O(kN\log^5 N + k^2\log^{11}N\cdot\min\{\log^{d+1}(k), k\log^5 N\}+\timeMeans_{\gamma}(k\log^4 N))$.

%\vspace{-0.5em}

\begin{theorem}
\label{thm:kmeansopt}
            Given an acyclic join query $\Q$, a database instance $\I$ of size $N$, a parameter $k$, and an arbitrarily small constant parameter $\eps\in(0,1)$, there exists an algorithm that computes a set $\ret\subseteq \Q(\I)$ of $k$ points such that
            $\kmeans_{\ret}(\Q(\I))\leq  (4+\eps)\gamma\kmeans_{\opt(\Q(\I))}(\Q(\I))$, with probability at least $1-\frac{1}{N}$. The running time of the algorithm is 
            $O(kN\log^5 N + k\log^{7}N+\timeMeans_{\gamma}(k\log^4 N))$
            %$O(kN\log^5 N + k^2\log^{11}N\cdot\min\{\log^{d+1}(k), k\log^5 N\}+\timeMeans_{\gamma}(k\log^4 N))$
            with probability at least $1-\frac{1}{N}$. An algorithm with approximation factor $(2+\eps)\gamma$ and the same running time exists for the relational $k$-median clustering problem.
\end{theorem}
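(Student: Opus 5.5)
The plan has three stages. First run the algorithm of Theorem~\ref{thm:kmeansconstant} to get $\ret$ and $r_\ret$. Then build the coreset of~\cite{esmailpour2024improved}, accelerated with the relational RBBD tree. Finally run $\DkmeansAlg_\gamma$ on the coreset. The approximation analysis is inherited from~\cite{esmailpour2024improved}. The new work is to show that every step of their coreset construction can be driven by the oracles of Section~\ref{sec:reloracles}, with $O(|\ret|)=O(k\log^3 N)$ oracle calls in total.

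\textbf{Stage 1 (constant-factor solution).} Theorem~\ref{thm:kmeansconstant} yields $\ret$ with $|\ret|=O(k\log^3 N)$ and $\kmeans_\ret(\Q(\I))\le r_\ret\le(84+\eps)\kmeans_{\opt(\Q(\I))}(\Q(\I))$. This takes $O(kN\log^5N+k\log^7N)$ time with probability $1-1/N$.

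\textbf{Stage 2 (coreset).} Recall the construction of~\cite{esmailpour2024improved}. Around each center $s\in\ret$ it uses exponentially growing balls of radii $2^j R$ for $j=0,\dots,O(\log N)$, with $R$ derived from $r_\ret/|\Q(\I)|$. Each annulus is covered by a grid of cells of side proportional to $\eps 2^jR$. Each nonempty cell contributes one representative, weighted by the number of join tuples assigned to it. Their costly step is computing counts and representatives per cell with the assignment ``to the first center'', which explains the $k^2$-type term. I would replace the grid cells by canonical RBBD nodes.
\begin{enumerate}[label=(\roman*)]
\item Process centers and radii in increasing order of radius. For each pair, call $\tree.\coun$ and then $\tree.\inactive$ on $\ball(s,2^jR)$, exactly as in the partition step of Section~\ref{sec:kmeans}.
\item Each returned canonical node $u$ has diameter at most $\eps$ times the radius. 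This follows from the BBD guarantee that canonical nodes lie inside $\ball(s,(1+\eps)2^jR)$ and have small diameter relative to the query radius, which is the packing argument behind Theorem~\ref{thm:RBBD}. So $u.rep$ with weight $u.c$ is a valid grid-cell surrogate.
\item By Lemma~\ref{lem:oracle1}, the active/inactive bookkeeping gives disjointness across different balls, which replaces the ``first center'' assignment.
\end{enumerate}
The number of canonical nodes per call is $O(\log N+\eps^{-d+1})$. Hence the coreset has size $O(k\eps^{-d}\log^4N)$ after multiplying by $|\ret|$ and the $O(\log N)$ radii. Each oracle call costs $\O(N)$, so the total is $O(kN\log^5N+k\log^7N)$. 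The high-probability statements come from Lemmas~\ref{lem:oracle1}--\ref{lem:oracle2} and a union bound over $\mathrm{poly}(N)$ calls.

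\textbf{Stage 3 (correctness and solving).} Show that the weighted set is an $\eps$-coreset. Each tuple $p$ in a canonical node inside the $j$-th ball of $s$ satisfies $\dist(p,u.rep)\le\eps\max\{2^jR,\dist(p,s)\}$. Standard arguments from~\cite{har2004coresets,esmailpour2024improved} then bound the total displacement, summed over tuples, by $O(\eps)\,(r_\ret+\kmeans_Y(\Q(\I)))$ for any center set $Y$. Using $r_\ret=O(\opt)$ and rescaling $\eps$ gives the coreset property. Running $\DkmeansAlg_\gamma$ on the coreset gives a $(1+\eps)\gamma$ solution, possibly with centers outside $\Q(\I)$. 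Snapping each center to a nearby coreset or join point loses the factor $4$ for means and $2$ for median, as in~\cite{esmailpour2024improved}. The final result is $(4+\eps)\gamma$, resp.\ $(2+\eps)\gamma$, with extra time $\timeMeans_\gamma(k\log^4N)$.

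The main obstacle is step (ii). The claim that canonical nodes have diameter $O(\eps r)$ is not literally what the BBD query returns, since a large node fully inside the ball would be accepted. I would handle this by splitting a query into a ball query plus refinement. Accepted nodes are descended, by constructing children on the fly with Lemma~\ref{thm:RBBDrel}, until their diameter is at most $\eps 2^jR$. The packing bound for midpoint boxes and the $O(\log N)$ height from Theorem~\ref{thm:RBBD} should keep this at $O(\eps^{-d}+\log N)$ nodes per query.
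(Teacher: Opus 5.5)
Your Stages~1 and~3 match the paper; Stage~2 takes a different route, and as written it has one counting slip.

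\textbf{Comparison of Stage~2.} The paper leaves the coreset of~\cite{esmailpour2024improved} untouched. It keeps their sequence of $O(|\ret|\eps^{-d}\log N)$ square-boxes $\square_1,\square_2,\ldots$ and, for each one, issues \emph{box} versions of the oracles ($\mathsf{count}$, $\mathsf{rep}$, $\mathsf{inactive}$ on $\square_i$). Each square-box therefore yields exactly one coreset point. Its weight is the number of still-active tuples in $\square_i$, up to a boundary fuzz of $\eps\cdot\mathrm{diam}(\square_i)$. Since the boxes are already small, this fuzz costs only a $(1+\eps)^2$ factor, and the error analysis of~\cite{esmailpour2024improved} is reused verbatim.

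You instead query whole balls $\ball(s,2^jR)$, use (refined) canonical nodes as cells, and re-derive the coreset property with the displacement argument of~\cite{har2004coresets}. That argument does go through. All centers are processed at radius $2^{j-1}R$ before radius $2^jR$, so a tuple captured at level $j\ge 1$ has $\dist(p,\ret)>2^{j-1}R$. Its displacement is then $O(\eps)\max\{R,\dist(p,\ret)\}$, and summing gives error $O(\eps)(r_\ret+\kmeans_Y(\Q(\I)))$.

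The cost of your route is exactly the obstacle you identify: canonical nodes of a ball query can be as large as the ball. You therefore need a refinement step that Theorem~\ref{thm:RBBD} does not cover. It works for constant $\eps$. Cells shrink by a constant factor every $O(d)$ levels (as in the proof of Lemma~\ref{lem:RBBDhelp2}). So each canonical node has $\eps^{-O(d)}$ descendants that either have diameter at most $\eps 2^jR$ or are singleton leaves, each built in $O(N+\eps^{-2}\log^2 N)$ time. Querying the small square-boxes directly, as the paper does, avoids this step altogether.

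\textbf{The counting slip.} A ball query returns $O(\log N+\eps^{-d+1})$ canonical nodes. One coreset point per (refined) node therefore gives $O(|\ret|\cdot\log N\cdot\log N)=O(k\log^5 N)$ points for constant $\eps$, not $O(k\eps^{-d}\log^4 N)$. As written, the last term of your running time becomes $\timeMeans_\gamma(k\log^5 N)$ rather than the stated $\timeMeans_\gamma(k\log^4 N)$.

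To recover the stated bound, merge refined nodes whose representatives fall into the same cell of a grid of side $\eps 2^jR$ around $s$. The displacement stays $O(\eps 2^jR)$, and only $O(\eps^{-d})$ grid cells meet the ball. In effect this recovers the paper's one-point-per-square-box structure.
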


%% file: kmeans-fig.tex
\begin{figure}[t]
    \centering
\begin{tikzpicture}[scale=1.2][t]

% Light fill colors
\definecolor{myred}{RGB}{255,100,100}
\definecolor{myblue}{RGB}{100,140,255}
\definecolor{mygreen}{RGB}{100,210,140}

% SHADING (drawn first, so points appear above)
\fill[myred!10]   (1.8, 1.2)   circle (0.52);
\fill[myred!10]    (1.8, 1.2)   circle (0.26);
\fill[myblue!10]  (3.3, 1.2)   circle (0.52);
\fill[myblue!10]   (3.3, 1.2)   circle (0.26);
\fill[mygreen!10] (6.2, 1.25)  circle (0.52);
\fill[mygreen!10]  (6.2, 1.25)  circle (0.26);

% --- BLACK POINTS ---
% RED CIRCLE POINTS
%\fill[black] (1.80, 1.28) circle (0.028);    % in first shaded region (innermost disk)
\fill[black] (1.62, 1.60) circle (0.028);    % in second shaded region (first annulus)
\fill[black] (1.65, 0.78) circle (0.028);    % in second shaded region
\fill[black] (2.07, 1.45) circle (0.028);    % in second shaded region
\fill[black] (1.95, 0.85) circle (0.028);    % in second shaded region
\fill[black] (1.50, 0.95) circle (0.028);    % in second shaded region
\fill[black] (1.38, 1.05) circle (0.028);    % in second shaded region
\fill[black] (1.35, 1.15) circle (0.028);    % in second shaded region

% BLUE CIRCLE POINTS
\fill[black] (3.25, 1.28) circle (0.028);    % in first shaded region (innermost disk)
\fill[black] (3.05, 1.47) circle (0.028);    % in second shaded region (first annulus)
\fill[black] (3.12, 0.85) circle (0.028);    % in second shaded region
\fill[black] (3.55, 1.47) circle (0.028);    % in second shaded region
\fill[black] (3.58, 0.86) circle (0.028);    % in second shaded region

% GREEN CIRCLE POINTS
\fill[black] (6.20, 1.43) circle (0.028);    % in first shaded region (innermost disk)
\fill[black] (6.05, 1.57) circle (0.028);    % in second shaded region (first annulus)
\fill[black] (6.13, 0.90) circle (0.028);    % in second shaded region
\fill[black] (6.43, 1.58) circle (0.028);    % in second shaded region
\fill[black] (6.48, 0.88) circle (0.028);    % in second shaded region

% Overlap (red outer & blue middle)
\fill[black] (2.45, 1.18) circle (0.028);
\fill[black] (2.60, 1.10) circle (0.028);

% --- OUTSIDE/OTHER POINTS ---
\fill[black] (1.50, 1.80) circle (0.028);
\fill[black] (2.35, 1.73) circle (0.028);
%\fill[blue] (2.85, 1.82) circle (0.028);
%\fill[orange] (3.70, 1.76) circle (0.028);
\fill[black] (4.20, 1.29) circle (0.028);
\fill[black] (2.70, 0.57) circle (0.028);
\fill[black] (3.75, 0.66) circle (0.028);
\fill[black] (4.96, 0.80) circle (0.028);
\fill[orange] (6.22, 1.81) circle (0.028);
%\fill[pink] (6.05, 0.65) circle (0.028);
\fill[black] (4.32, 1.69) circle (0.028);
\fill[black] (5.50, 1.81) circle (0.028);

% Additional points to the far right of green circles
\fill[black] (7.2, 1.2) circle (0.028);
\fill[black] (7.5, 1.6) circle (0.028);
\fill[black] (7.6, 1.0) circle (0.028);
\fill[black] (7.9, 1.4) circle (0.028);
\fill[black] (7.4, 0.8) circle (0.028);
\fill[black] (8.0, 1.7) circle (0.028);
\fill[black] (8.1, 1.2) circle (0.028);

% More points between blue and green circles, outside both
\fill[black] (3.6, 1.0) circle (0.028);
\fill[black] (3.69, 1.2) circle (0.028);
\fill[black] (4.6, 1.5) circle (0.028);
\fill[black] (4.8, 1.0) circle (0.028);
\fill[black] (5.2, 1.3) circle (0.028);
\fill[black] (3.5, 0.9) circle (0.028);
\fill[black] (5.0, 1.7) circle (0.028);

% --- COLORED POINTS ---
\fill[red]   (1.8, 1.2)   circle (0.045);
\fill[blue]  (3.3, 1.2)   circle (0.045);
\fill[green!70!black] (6.2, 1.25) circle (0.045);

% --- LABELS (smaller and just below centers) ---
\node[font=\footnotesize, below, blue] at (3.3, 1.23) {$x_1$};
\node[font=\footnotesize, below, red]  at (1.8, 1.23)  {$x_2$};
\node[font=\footnotesize, below, green!60!black] at (6.2, 1.28) {$x_3$};

\node[font=\footnotesize, above, orange] at (6.22, 1.81) {$p$};

% --- CIRCLES (solid and dashed) ---

% Red
\foreach \r in {0.26, 0.52, 1.04}{
  \draw[red, thick]   (1.8, 1.2)   circle (\r);
  \draw[red, thick, dashed]   (1.8, 1.2)   circle ({1.15*\r});
}
% Blue
\foreach \r in {0.26, 0.52, 1.04}{
  \draw[blue, thick]  (3.3, 1.2)   circle (\r);
  \draw[blue, thick, dashed]  (3.3, 1.2)   circle ({1.15*\r});
}
% Green
\foreach \r in {0.26, 0.52, 1.04}{
  \draw[green!70!black, thick] (6.2, 1.25) circle (\r);
  \draw[green!70!black, thick, dashed] (6.2, 1.25) circle ({1.15*\r});
}

\end{tikzpicture}
\caption{Example of the $k$-means algorithm, where three colored points $x_1, x_2, x_3$ represent the points in the random set $X_i$. Solid circles represent $\ball(x_\ell, b_j)$, and dashed circles $\ball(x_\ell, (1+\eps) b_j)$ for $\ell \in [3], j \in \{0,1,2\}$. If $\beta_i=1$, the points in the shaded area belong in $\hat{Q}_i$, i.e., they are considered covered and they become inactive. The orange point $p$ may or may not become inactive since it lies in $\ball(x_3,(1+\eps)b_1)\setminus \ball(x_3,b_1)$.}
    \label{fig:kmeans}
    %\vspace{-1.1em}
\end{figure}
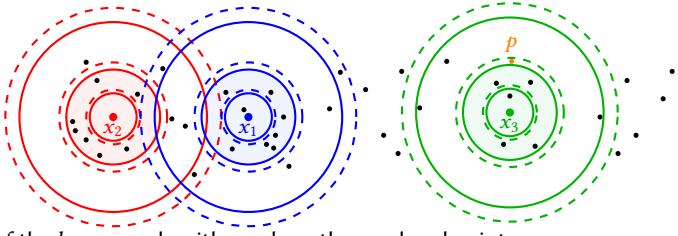

%% file: extensions.tex
\vspace{-1em}
\section{Extensions}
\label{sec:ext}
\vspace{-0.5em}
\paragraph{Generality}
Our method is quite general: Any algorithm in the standard computational setting that has access to the input data through $\numqueries$ queries to a BBD tree (using the stored information of count and representative point in each node) can be converted to a randomized algorithm in the relational setting with the same approximation guarantee and running time $O(\numqueries\cdot N)$ (ignoring logarithmic factors).
As many geometric approximation algorithms fall into this category, our framework yields efficient relational counterparts for a variety of tasks. For example, in Appendix~\ref{appndx:gonzalez}, we show that our new method can be used to implement an approximate version of the Gonzalez algorithm~\cite{gonzalez1985clustering}, leading to efficient algorithms for relational diversity and fairness problems.

\vspace{-0.2em}
\paragraph{Cyclic queries}
We use the notion of fractional hypertree width~\cite{gottlob2014treewidth} and apply a standard procedure to extend our algorithms to every join query $\Q$.
%We use the notion of \emph{fractional hypertree width}~\cite{gottlob2014treewidth} denoted by $\fhw$ which roughly measures how close $\Q$ is to being acyclic.
For a cyclic query $\Q$, we convert it to an equivalent acyclic query such that each relation is the result of a (possibly cyclic) join query with fractional edge cover at most $\fhw(\Q)$. We evaluate the (possibly cyclic) queries to get the new relations and then apply our algorithms to the acyclic query.
All the guarantees are the same as in Theorems~\ref{thm:kCenter},~\ref{thm:kmeansopt},~\ref{thm:kmeansCATopt} except for the running time, where we simply replace any $N$ factor with $N^{\fhw}$.
Since it is a typical method in database theory, we give the details in Appendix~\ref{appndx:generalQueries}.

%% file: conclusion.tex
\vspace{-0.7em}
\section{Future work}
\vspace{-0.2em}
%In this paper, we introduced a novel approach that combines database theory with geometric insights to design faster algorithms for well-known optimization problems, such as clustering, in the relational setting. Our method simulates the on-the-fly construction of an RBBD tree to enable efficient geometric algorithms over relational data. 
This work opens several directions for future research. A central open question is whether our $\O(kN)$-time algorithms for $k$-center/median/means clustering are optimal.
Although our methods outperform the current state-of-the-art, it is interesting to study whether other geometric techniques can be adapted to achieve $\O(N)$ time algorithms in the relational setting. 
We also aim to investigate whether known lower bounds for $k$-center clustering in the standard computational setting~\cite{feder1988optimal, bateni2023optimal} can be extended to the relational setting, potentially establishing near-optimal results.
%Another intriguing direction is whether there exists an implementation of the (approximate) Gonzalez algorithm in the relational setting with a runtime better than $\O(k^2N)$.

%% file: appendix.tex
\newpage
\appendix
\section{Missing details from Section~\ref{sec:RBBD}}
\label{appndx:RBBD}

\subsection{Execution of centroid shrink when $\square_u$ is a box with a hole}
\label{appndx:RBBDhole}
First, we describe the procedure in the standard BBD tree.
Initially $\rho=\square_u^+$. They compute the smallest midpoint box that contains all points in $\rho \cap P_u$ along with the inner box $\square_u^-$. Let $\rec'\subseteq \rec$ be this box. They apply $O(d)$ fair splits on $\rec'$ until they find a non-trivial fair split, i.e., a fair split that splits the points in $\rec'\cap P_u$ into two non-empty subsets. Let $\hat{\rec}\subset \rec$ be one of the two boxes created by the non-trivial fair split such that $|\hat{\rec}\cap P_u|\geq \frac{|\rec'\cap P_u|}{2}$. %We say that $\hat{\rho}$ is the majority midpoint box.
The key observation is that $|\hat{\rec}\cap P_u|\leq |\rec\cap P_u|-1$.
If $\hat{\rec}$ contains the inner box $\square_u^-$, then they check the following condition. If $|\hat{\rec}\cap P_u|> \frac{2}{3}|P_u|$, they set $\rec=\hat{\rec}$ and they continue recursively.
If $|\hat{\rec}\cap P_u|\leq \frac{2}{3}|P_u|$, then they construct two children nodes $w, v$ of $u$ such that $\square_w=\rec'\setminus\hat{\rec}$ is a box (without a hole), and  $\square_v$ is a box with a hole with $\square_v^+=\hat{\rec}$ and $\square_v^-=\square_u^-$.

Otherwise, let $\hat{\rec}$ be the first box found by the recursive procedure that does not contain the inner box $\square_u^-$. Then they create two children $w,v$ of $u$ such that $\square_w$ is a box with a hole with $\square_w^+=\rec$ and $\square_w^-=\rho'$, and $\square_v$ is a box with a hole with $\square_v^+=\rho'$ and $\square_v^-=\square_u^-$. Furthermore, they create two children $v_1, v_2$ of $v$. $\square_{v_2}$ is a box with a hole with $\square_{v_2}^+=\rec'\setminus \hat{\rec}$ and $\square_{v_2}^-=\square_u^-$, while $\square_{v_1}=\hat{\rec}$ is a box (without a hole). The recursive procedure continues setting $\rec=\hat{\rec}$. Notice that $\rec=\hat{\rec}$ does not contain any hole. They continue the recursive approach by finding the smallest midpoint box $\rec'$ that contains $P_u\cap \rec$ and applying a fair split on $\rec'$ until they find $\hat{\rec}$ such that $\frac{1}{3}|P_u|\leq |\hat{\rec}\cap P_u|\leq \frac{2}{3}|P_u|$. In the end, they create two children nodes $v_{1a}, v_{1b}$ of $v_1$ such that $\square_{v_{1b}}$ is a box with a hole, with $\square_{v_{1b}}^+=\rec'$ and $\square_{v_{1b}}^-=\hat{\rec}$, while $\square_{1a}=\hat{\rec}$ is a box.
Overall, the centroid shrink might create two new nodes $v, w$ or six new nodes $w, v, v_1, v_{1a}, v_{1b}, v_2$. As shown in~\cite{arya1998optimal}, the smallest midpoint box $\rec'\subseteq\rec$ that contains all points in $P_u\cap \rec$ can be found in $O(1)$ time if we know the corners of $\rec$ and the corners of $\square_{u}^-$.
Since in every recursive iteration $|\hat{\rec}\cap P_u|\leq |\rec'\cap P_u|-1$, the centroid procedure is executed in $O(|P_u|)$ time.

For the RBBD tree, we execute the same operations on $\net_u$, instead of $P_u$. The first time we find a box $\hat{\rec}$ (after the non-trivial fair split) that does not contain $\square_{u}^-$, we construct the subtree with the six new nodes as described above.

\subsection{Proof of Theorem~\ref{thm:RBBD}}
\label{appndx:constructionRBBD}
We prove Theorem~\ref{thm:RBBD} by proving the next lemmas.

\begin{lemma}
\label{lem:rcs}
The randomized centroid shrink procedure on a node $u$ computes a midpoint box $\hat{\rec}$ such that $\frac{1}{3}-\eps\leq \frac{|\hat{\rec}\cap P_u|}{|P_u|}\leq \frac{2}{3}+\eps$, with probability at least $1-\frac{1}{\Omega(n^\cons)}$.
\end{lemma}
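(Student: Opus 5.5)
The plan is to couple the $\eps$-sample guarantee with a deterministic analysis of the centroid-shrink iteration run on $\net_u$. First, $\net_u$ consists of $\Theta(\eps^{-2}\log n^\cons)$ uniform samples from $P_u$. By the $\eps$-sample bound from the preliminaries, with $\prob = n^{-\cons}$, the set $\net_u$ is an $\eps$-sample of $(P_u,\mathcal{R})$ with probability at least $1-\frac{1}{\Omega(n^\cons)}$. Here $\mathcal{R}$ is the family of all boxes. A box with a hole is a difference of two nested boxes, so the guarantee degrades only to $2\eps$ on such regions, and this loss can be absorbed by rescaling the constant in the sample size. Conditioning on this event, for every box $\rho$ we have $\bigl|\frac{|\rho\cap P_u|}{|P_u|}-\frac{|\rho\cap\net_u|}{|\net_u|}\bigr|\le\eps$. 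The rest of the argument is deterministic.

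Second, we establish the purely combinatorial property of the procedure run on $\net_u$. It returns the first majority midpoint box $\hat\rho$ with $|\hat\rho\cap\net_u|\le\frac23|\net_u|$. Let $\rho$ be the box from the previous round, or $\square_u$ in the first round. Then $|\rho\cap\net_u|>\frac23|\net_u|$, and $\rho'$ is the smallest midpoint box containing $\net_u\cap\rho$, so $|\rho'\cap\net_u|=|\rho\cap\net_u|$. The trivial fair splits do not move any sample. The non-trivial split places $\hat\rho$ as the majority half, so $|\hat\rho\cap\net_u|\ge\frac12|\rho'\cap\net_u|>\frac13|\net_u|$. Termination follows exactly as in the BBD tree, because each round strictly decreases the number of samples contained. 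A further point needs checking: $\hat\rho$ must be nonempty of samples and must exist, which holds since $|\net_u|\ge 2$ and a non-trivial split is found after $O(d)$ fair splits. Hence $\frac13\le\frac{|\hat\rho\cap\net_u|}{|\net_u|}\le\frac23$.

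Third, we transfer the bound to $P_u$. Since $\hat\rho$ is a box, the $\eps$-sample property gives $\frac13-\eps\le\frac{|\hat\rho\cap P_u|}{|P_u|}\le\frac23+\eps$. In the box-with-a-hole case, the algorithm's condition uses $\hat\rho\cap\net_u$ where $\hat\rho$ may contain $\square_u^-$. Since $P_u$ and $\net_u$ both lie outside the hole, counting points in $\hat\rho$ is the same as counting them in $\hat\rho\setminus\square_u^-$, and the box guarantee applies directly because $P_u\cap\hat\rho = P\cap\hat\rho\setminus\square_u^-$ is a range on the sampled set $P_u$ defined by the box $\hat\rho$.

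I expect the main obstacle to be this hole case. The procedure in Appendix~\ref{appndx:RBBDhole} creates intermediate boxes before the final split, and we must make sure the final $\hat\rho$ still satisfies the $\frac13$ lower bound. If $\hat\rho$ arises after escaping the hole, the subsequent recursion runs on hole-free boxes, and the same majority argument applies. I would verify carefully that the $\frac{1}{3}$ lower bound survives the transition step in which $\rho$ is reset. If it does not, I would fall back to the standard BBD-tree fact that the lower bound is needed only up to constants, and state the bound for whichever box is declared $\rho^*$.
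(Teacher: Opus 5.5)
Your proposal is correct and follows the paper's proof. You condition on $\net_u$ being an $\eps$-sample, note that the returned $\hat\rho$ is the majority half of a non-trivial fair split of a box $\rho'$ holding more than $\frac23|\net_u|$ samples (so its sample fraction lies in $[\frac13,\frac23]$), and transfer both bounds to $P_u$ through the $\eps$-sample property; your hole-case worry is unnecessary because that same invariant holds for every returned $\hat\rho$, which is why the paper treats both cases uniformly.
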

\begin{proof}
  The recursive approach in the randomized centroid shrink is similar to the standard centroid shrink, but instead of executing it on $P_u$ it is executed in $\net_u$.
  Hence, the randomized centroid shrink procedure will finish after $O(\eps^{-2}\log n^\cons)=O(\eps^{-2}\log n^\cons)$ recursive calls. The algorithm stops the first time it finds a majority midpoint box $\hat{\rec}$ such that $|\hat{\rec}\cap \net_u|\leq \frac{2}{3}|\net_u|$. By definition, the set $\net_u$ is an $\eps$-sample with probability at least $1-\frac{1}{\Omega(n^\cons)}$. Hence, $\frac{|\hat{\rec}\cap P_u|}{|P_u|}\leq \frac{|\hat{\rec}\cap \net_u|}{|\net_u|}+\eps\leq \frac{2}{3}+\eps$. It remains to show the other direction of the inequality. Let $\rec'$ be the midpoint box that we executed the last fair split (to find $\hat{\rec}$). By definition, we had $\frac{|\net_u\cap \rec'|}{|\net_u|}>\frac{2}{3}$. Notice that by definition of $\hat{\rec}$, it holds that $|\net_u\cap \hat{\rec}|\geq \frac{|\net_u\cap \rec'|}{2}$. Hence, $\frac{|\net_u\cap \hat{\rec}|}{|\net_u|}\geq \frac{1}{2}\cdot \frac{|\net_u\cap \rec'|}{|\net_u|}\geq \frac{1}{3}$. Since $\net_u$ is an $\eps$-sample with probability at least $1-\frac{1}{\Omega(n^\cons)}$, we conclude $\frac{|\hat{\rec}\cap P_u|}{|P_u|}\geq \frac{|\hat{\rec}\cap \net_u|}{|\net_u|}-\eps\geq \frac{1}{3}-\eps$. The lemma follows.
\end{proof}

\begin{lemma}
\label{lem:RBBDhelp2}
    The RBBD tree has the same properties and the same asymptotic query complexity as the BBD tree, with high probability. The height of the RBBD tree is $O(\log n)$, the number of nodes is $O(n)$ and the query complexity is $O(\log n + \eps^{-d+1})$, with probability at least $1-\frac{1}{n^{\cons-1}}$.
\end{lemma}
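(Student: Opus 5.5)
The plan is to reduce the statement to the standard analysis of the BBD tree in~\cite{arya1998optimal}. First, a union bound over all randomized centroid shrinks. There are at most $O(n)$ nodes (to be justified below), and each shrink succeeds with probability $1-1/\Omega(n^\cons)$ by Lemma~\ref{lem:rcs}. So with probability at least $1-\frac{1}{n^{\cons-1}}$, every centroid shrink in the tree produces an inner box $\hat{\rec}$ with $\frac{1}{3}-\eps\leq |\hat{\rec}\cap P_u|/|P_u|\leq \frac{2}{3}+\eps$. To make the node count non-circular, I would charge the union bound to the at most $n$ internal nodes of any binary tree with at most $n$ leaves. This works because every shrink is non-trivial: the chosen $\hat{\rec}$ comes from a non-trivial fair split, so both children are nonempty. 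For the six-node case of Appendix~\ref{appndx:RBBDhole}, I would argue as in~\cite{arya1998optimal} that empty cells may arise but are $O(1)$ per shrink. Hence the total number of nodes is $O(n)$, which also gives the $O(n)$ space bound.

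Conditioned on this event, the height bound follows. Fair splits and centroid shrinks alternate, and a shrink reduces the point count of each child by a factor of at least $\frac{2}{3}+\eps\leq c<1$ for $\eps$ bounded away from $1/3$. If needed, I would run the sampling with $\eps$ replaced by a small constant $\eps_0$ inside the shrink, decoupled from the query parameter. A fair split never increases counts. Therefore every two levels (or $O(1)$ levels, counting the six-node gadget) the number of points drops by a constant factor, and the height is $O(\log n)$.

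For the query bound, I would observe that the geometric properties used in the BBD packing argument are deterministic given the tree shape. These properties are: all cells (outer boxes and holes) are midpoint boxes with bounded aspect ratio, and the size of cells decreases by a constant factor every $O(d)$ levels along any path. The randomized shrink only changes which midpoint box is chosen as $\hat{\rec}$, not that it is a midpoint box obtained by fair splits. So the proof of~\cite{arya1998optimal} goes through verbatim. It yields that the number of visited nodes that intersect $\ball(p,r)$ without lying inside $\ball(p,(1+\eps)r)$ is $O(\eps^{-d+1})$ at large scales, plus $O(\log n)$ along the depth, which gives the claimed $O(\log n+\eps^{-d+1})$ query time and canonical-set size. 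The coverage property $\ball(p,r)\cap P\subseteq\bigcup\square_u\subseteq \ball(p,(1+\eps)r)$ follows directly from the query rule together with the fact that the cells of the nodes partition the space.

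The main obstacle I expect is the packing argument's dependence on depth. The original BBD analysis uses the fact that cell sizes shrink geometrically with depth, and the $2/3$ point-balance is used only for height. I must check that the $\eps$ slack in Lemma~\ref{lem:rcs} does not break any step. In particular, I need that the point count still strictly decreases by a constant factor, which requires $\frac{2}{3}+\eps<1$. This is exactly why I would fix the internal sampling accuracy to a constant such as $1/6$.
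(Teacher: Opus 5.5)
Your proposal is correct and follows the same route as the paper. You condition on every randomized centroid shrink meeting the $(\frac13-\eps,\frac23+\eps)$ balance of Lemma~\ref{lem:rcs} and union-bound over the $O(n)$ shrinks. You then derive height $O(\log n)$ from the alternation of fair splits and shrinks; the paper phrases this as a $\frac23$ decrease every $8$ levels for $\eps<0.1$. Finally you invoke the BBD query analysis of~\cite{arya1998optimal, arya2000approximate}, which only needs the midpoint-box geometry and the height bound.

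Your two refinements tighten points the paper leaves implicit, without changing the argument. First, you bound the number of shrinks deterministically via their non-triviality, so the union bound is not circular; the paper union-bounds over ``$O(n)$ nodes'' using a count derived under the correctness assumption. Second, you keep the sampling accuracy a constant below $\frac13$, whereas the paper simply assumes $\eps<0.1$.
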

\begin{proof}
    The construction of the RBBD tree is similar to that of the BBD tree, except for the centroid shrink. Using the randomized centroid shrink, for a node $u$ we find a box $\hat{\rec}$ such that $\frac{1}{3}-\eps\leq \frac{|\hat{\rec}\cap P_u|}{|P_u|}\leq \frac{2}{3}+\eps$, with probability at least $1-\frac{1}{\Omega(n^\cons)}$. 
    
    For now, let us assume that the inequality of Lemma~\ref{lem:rcs} holds for every execution of the randomized centroid shrink. Next, we show that under this assumption, asymptotically, there is no difference between the BBD tree and the RBBD tree for sufficiently small constant $\eps$. In~\cite{arya1998optimal} the authors show that 
    every $4$ levels of descent in the BBD tree, the number of points associated with the nodes decreases by at least a factor $\frac{2}{3}$, so the height of the tree is $O(\log n)$ and the BBD tree has $O(n)$ nodes. Similarly, in our case, observe that each randomized centroid shrink introduces three new levels into the tree, and we alternate this with fair splits, so it follows that with each four levels the number of points decreases by at least a factor of $\frac{2}{3}-\eps$. For a small $\eps$, for example $\eps<0.1$, the height of the tree is $O(\log n)$ by the same argument. Equivalently we can argue that every $8$ levels of descent in the RBBD tree, the number of points associated with the nodes decreases by at least a factor $\frac{2}{3}$, the height of the tree is $O(\log n)$ and the number of nodes is $O(n)$.
Next, the authors in~\cite{arya1998optimal} argue that every $4d$ levels of descent in the BBD tree, the sizes of the associated cells decrease by at least a factor of $\frac{2}{3}$. Similarly, in our case, note that to decrease the size of a cell, we must decrease its size
along each of $d$ dimensions. Since fair splits are performed at least at every fourth level of the tree, and each such split decreases the longest side by at least a factor of $2/3$, it follows that after at most $d$ splits (that is, at most $4d$ levels of the tree) the size decreases by this factor. All proofs of the BBD tree~\cite{arya1998optimal} also hold for the RBBD tree assuming that all randomized centroid shrink operators are executed correctly, i.e., assuming that Lemma~\ref{lem:rcs} holds with probability $1$.
Given that the inequality of Lemma~\ref{lem:rcs} holds with probability at least $1-\frac{1}{\Omega(n^{\cons})}$ and using the union bound over $O(n)$ nodes, we have that the probability that all randomized centroid shrink operators are executed correctly is at least $1-\frac{1}{n^{\cons-1}}$. Hence, we conclude that all proofs and properties of the BBD tree also hold (asymptotically) for the RBBD tree with probability at least $1-\frac{1}{n^{\cons-1}}$.

Finally, the query procedure between BBD tree and RBBD tree is identical. Given a query ball $\ball(p,r)$ we start from the root of the tree. Let $u$ be the node of the RBBD tree the query procedure processes.
We check whether $\ball(p,r)\cap \square_u=\emptyset$. If yes, we stop the execution towards that branch of the tree.
If $\square_u\subseteq \ball(p,(1+\eps)r)$ we add $u$ in the set of the canonical nodes $\canonical(p,r)$.
In any other case, we continue the search in both children of $u$. 
In the end the RBBD tree always computes a set of canonical nodes $\canonical$ such that $\ball(p,r)\cap P\subseteq P\cap \bigcup_{u\in \canonical}\square_u \subseteq \ball(p,(1+\eps)r)\cap P$.
Since the height of the RBBD tree is $O(\log n)$ with high probability, following our discussion above, the query analysis from~\cite{arya1998optimal, arya2000approximate} can also be used for the RBBD tree to show that $\canonical(p,r)$ can be computed in $O(\eps^{-d+1}+\log n)$ time and $|\canonical(p,r)|=O(\eps^{-d+1}+\log n)$, with probability at least $1-\frac{1}{n^{\cons-1}}$.

\end{proof}

\begin{lemma}
    Assume that we can get a random sample set $\net_u$ from a box or a box with a hole, in $f(n,\eps)$ time, where $f$ is a positive real function that depends on $n$ and $\eps$, then the RBBD tree can be constructed in $O\left(n\log (n)+n\cdot \left(f(n,\eps)+\eps^{-2}\log^2 n\right)\right)$ time, with probability at least $1-\frac{1}{n^{\cons-1}}$.
\end{lemma}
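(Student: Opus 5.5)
The construction cost splits into the cost per node, times the number of nodes. I will work throughout on the event of Lemma~\ref{lem:RBBDhelp2}: every randomized centroid shrink satisfies Lemma~\ref{lem:rcs}, the tree has $O(n)$ nodes, and its height is $O(\log n)$. This event holds with probability at least $1-\frac{1}{n^{\cons-1}}$, which is the only source of failure probability in the statement.

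\textbf{Fair splits.} Each fair split costs $O(1)$ to compute the hyperplane. It also takes $O(|P_u|)$ time to distribute the points of $P_u$ to the two children. This is needed so that the children know their point sets, and so that leaves (nodes with $|P_u|=1$) can be recognized.

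\textbf{Randomized centroid shrinks.} At each node $u$ where such a shrink is applied, we draw $\net_u$ in $f(n,\eps)$ time. We then run the iterative midpoint-box procedure on $\net_u$ instead of $P_u$. Each iteration makes one call to find the smallest enclosing midpoint box, which takes $O(1)$ time by~\cite{arya1998optimal}. It then applies $O(d)$ fair splits, each of which scans the current surviving samples to test nontriviality and to pick the majority side. Every iteration removes at least one sample from the current box, so there are at most $|\net_u|=O(\eps^{-2}\log n)$ iterations. A scan costs $O(|\net_u|)$, so the naive bound for the shrink loop is $O(|\net_u|^2)=O(\eps^{-4}\log^2 n)$. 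The target bound $O(\eps^{-2}\log^2 n)$ requires more care. I would presort $\net_u$ along each of the $d$ coordinates once, in $O(|\net_u|\log|\net_u|)$ time. Then each fair split's nontriviality test and majority count can be done by binary search on the sorted lists of the points surviving in the current box, in $O(\log n)$ time. This gives $O(|\net_u|\log n)=O(\eps^{-2}\log^2 n)$ per node. Handling the hole case (Appendix~\ref{appndx:RBBDhole}) adds only a constant number of extra nodes and $O(1)$ extra box comparisons per iteration, so the same bound holds there. After the shrink we partition $P_u$ between the children in $O(|P_u|)$ time.

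\textbf{Summing.} The per-node sampling and shrink terms total $O(n)\cdot O(f(n,\eps)+\eps^{-2}\log^2 n)$, since the tree has $O(n)$ nodes. The partitioning work is charged level by level: the sets $P_u$ over the nodes of a single level are disjoint, so each level costs $O(n)$. Over height $O(\log n)$ this gives $O(n\log n)$. Adding the two parts yields the claimed $O\left(n\log n+n\cdot(f(n,\eps)+\eps^{-2}\log^2 n)\right)$.

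\textbf{Main obstacle.} The hardest step is bounding the shrink loop by $O(\eps^{-2}\log^2 n)$ rather than $O(\eps^{-4}\log^2 n)$. The surviving samples change from iteration to iteration, and the boxes are nested midpoint boxes, so the sorted-order and binary-search scheme must be argued to apply to them. If this fails, I would fall back on the observation that the surviving sample counts shrink geometrically in most iterations, and charge the scans accordingly.
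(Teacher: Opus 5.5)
Your proposal is correct and follows essentially the same route as the paper. Both charge $f(n,\eps)$ for drawing $\net_u$ and $O(\log n)$ per fair-split test via coordinate-sorted search structures over $\net_u$, for $O(|\net_u|)=O(\eps^{-2}\log n)$ iterations, multiplied by the $O(n)$ nodes from Lemma~\ref{lem:RBBDhelp2}, plus $O(n\log n)$ for fair splits and distributing points.

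The obstacle you flag is resolved by keeping the per-coordinate structures as order-statistic trees and deleting the discarded minority side after each non-trivial split. Each sample is deleted at most once, so this costs $O(d\,|\net_u|\log n)$ in total, and you do not need the geometric-shrinkage fallback, whose premise fails in general because the surviving count may drop by only one per iteration.
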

\begin{proof}
    Since the fair split procedure is identical to the standard BBD tree, all fair split procedures are executed in $O(n\log n)$ time using the machinery from the standard BBD tree. Next, we focus on the randomized centroid shrink procedure. For a node $u$ we get the set $\net_u$ in $f(n,\eps)$ time. We find the smallest midpoint box $\rec'$ that contains $\net_u$ along with the inner box $\square_u^-$ in $O(\eps^{-2}\log n)$ time since we explicitly know all points in $\net_u$ and the corners of $\square_u^+$, using the highest and lowest values in each coordinate as described in~\cite{arya1998optimal}.
After the computation of $\rec'$, we execute $O(1)$ fair splits in order to find the first non-trivial fair split. The hyperplane in every fair split is computed straightforwardly in $O(1)$ time. Using a simple binary search tree over each coordinate of the points in $\net_u$, we can check whether a fair split is trivial in $O(\log (\eps^{-2}\log n))$ time. After finding the first non-trivial fair split, we compute $\hat{\rec}$ in $O(\log (\eps^{-2}\log n))=O(\log n)$ time using the search binary tree that corresponds to the last non-trivial fair split by checking whether $|\hat{\rec}\cap \net_u|>|(\rec'\setminus \hat{\rec})\cap \net_u|$. After each non-trivial split, the box $\hat{\rec}$ contains at most $|\rec'\cap \net_u|-1$ samples from $\net_u$. Hence, this recursive approach runs for at most $O(|\net_u|)=O(\eps^{-2}\log n)$ times. Thus, each randomized centroid shrink runs in $O(f(n,\eps)+\eps^{-2}\log^2 n)$ time. 
From Lemma~\ref{lem:RBBDhelp2}, the RBBD tree has height $O(\log n)$ and $O(n)$ nodes with probability at least $1-\frac{1}{n^{\cons-1}}$, so the running time of all randomized centroid shrink procedures is $O(n(f(n,\eps)+\eps^{-2}\log^2 n))$, with probability at least $1-\frac{1}{n^{\cons-1}}$.
    \end{proof}

\subsection{Sampling from a box with a hole}
\label{appndx:randcentshr} \input{figdecomp}
We consider the more involved case where $u$ is a node such that $\square_u$ is a box with a hole. The goal is to sample uniformly at random $\Lambda=\Theta(\eps^{-2}\log(|\Q(\I)|^2))=\Theta(\eps^{-2}\log N)$ points from $\square_u\cap \Q(\I)$. Using Yannakakis algorithm \cite{yannakakis1981algorithms}, $|\Q(\I)|$ can be computed at the beginning of our approach in $O(N)$ time. From Lemma~\ref{lem:Rects}, it is known how to sample $z$ tuples in a box efficiently using the procedure $\mathsf{SampleRect}$. However, $\square_u$ is a box with a hole, so we should sample from the difference of two boxes, $\square_u^+\setminus \square_u^-$. We partition $\square_u$ into $2d$ disjoint boxes using the corners of $\square_u^-$. See Figure~\ref{fig:decomp} for an example in $2$ dimensions. Let $\mathcal{R}=\{\rho_1,\ldots, \rho_{2d}\}$ be the set of $2d$ disjoin boxes that partition $\square_u$.
First, we decide how many samples (out of the total $\Lambda$ samples) to draw from each box in $\mathcal{R}$, and then we call the function $\mathsf{SampleRect}$ to obtain this number of samples from each box.
We compute $C_i=\mathsf{CountRect}(\Q,\I,\rho_i)$ for every $i\in[2d]$. Based on the values $C_1,\ldots, C_{2d}$, for every $j\in [\Lambda]$, we sample the box in $\mathcal{R}$ that a sample point should be taken from, i.e., a box $\rho_i\in\mathcal{R}$ is selected with probability $\frac{C_i}{\sum_{\ell\in[2d]} C_\ell}$. Let $s_i$ be the number of samples we should get from $\rho_i$. Then, for every $i\in[2d]$, we run $S_i=\mathsf{SampleRect}(\Q,\I,\rho_i,s_i)$. We set $\net_u=\bigcup_{i\in[2d]}S_i$. %\aryan{I think the definition of $s_i$ and $S_i$ are confusing. We should explain that we first compute how many samples we need from each cell. Now it seems we are sampling one by one.}
We claim that $\net_u$ is an $\eps$-sample for the points in $\Q(\I)\cap \square_u$. It is sufficient to show that $\net_u$ is a set of $\Lambda$ uniform random samples (with replacement) from $\square_u\cap \Q(\I)$. Equivalently, we show that the probability of selecting a specific tuple $t\in\square_u\cap \Q(\I)$ taking one sample is $\frac{1}{|\Q(\I)\cap \square_u|}$. Let $t$ be any tuple that belongs (without loss of generality) in $\rho_i$. The probability that $t$ is selected taking one sample from $\square_u\cap \Q(\I)$ is $\frac{C_i}{\sum_{\ell\in[2d]C_\ell}}\cdot \frac{1}{C_i}=\frac{1}{\sum_{\ell\in[2d]C_\ell}}=\frac{1}{|\Q(\I)\cap \square_u|}$. Hence $\net_u$ is a set of $\Lambda$ uniform samples (with replacement) from $\square_u\cap \Q(\I)$ and $\net_u$ is an $\eps$-sample with high probability. After computing $\net_u$ we run the recursive approach of finding the minimum midpoint box that contains $\net_u$ and the inner box $\square_u^-$ and applying $O(d)$ fair splits until we find the first non-trivial fair split. This approach only uses $\net_u$, $\square_u^+$, and $\square_u^-$ to implement, so the implementation is the same as in the standard computational setting.
%Hence, the correctness follows from the fact that $\net_u$ is an $\eps$-sample and Subsection~\ref{}.
Finally, we bound the running time. 
The $\mathsf{CountRect}$ procedure runs in $O(N)$ time so all numbers $C_1,\ldots, C_{2d}$ are computed in $O(2dN)=O(N)$ time. Then, each $\mathsf{SampleRect}(\Q,\I,\rho_i,s_i)$ approach runs in $O(N+C_i\log N)$. Hence, the set $\net_u$ is computed in $O(N+\eps^{-2}\log^2 N)$ time. After computing $\net_u$, the next steps of the randomized centroid shrink are executed in $O(\eps^{-2}\log^2 N)$ time as shown in Lemma~\ref{thm:RBBDrel}. Overall, the randomized centroid shrink in the relational setting runs in $O(N+\eps^{-2}\log^2 N)$ time. %with probability at least $1-\frac{1}{N^{m+3}}$.

\subsection{Count and representative point in a box with a hole}
\label{appndx:RBBDcount}
We assume that $\square_u$ is a box with a hole. We construct the partition $\mathsf{R}=\{\rho_1,\ldots, \rho_{2d}\}$ as we did in the randomized centroid shrink. For each $\rho_i\in \mathsf{R}$, we compute $C_i=\mathsf{CountRect}(\Q, \I,\rho_i)$ and we set $u.c=\sum_{\ell\in[2d]}C_\ell$. Similarly, we compute a representative $u.rep$ by executing $\mathsf{ReprRect}(\Q,\I, \rho_i)$ until we find a non-empty box (if any) in set $\mathsf{R}$. Both $u.c$ and $u.rep$ for a node $u$ are computed in $O(N)$ time using Lemma~\ref{lem:Rects} and the fact that $2d=O(1)$.

\subsection{Proofs of Lemmas in Section~\ref{sec:reloracles}}
\label{appndx:reloracles}
\begin{proof}[Proof of Lemma~\ref{lem:oracle1}]
   %Recall that the RBBD tree has the same properties as the BBD tree with probability at least $1-\frac{1}{N^{2}}$.  Hence, 
   If all nodes of $\tree$ were active then we would get a set of canonical nodes $\canonical'(x,r)$ such that $B(x,r)\cap \Q(\I)\subseteq \bigcup_{u\in \canonical'(x,r)}(\square_u\cap \Q(\I))\subseteq B(x,(1+\eps)r)\cap \Q(\I)$. Since there are inactive nodes, the query procedure on $\tree$ skips branches of the tree that are already inactive. Hence $\canonical(x,r)\subseteq \canonical'(x,r)$.
   Let $p$ be an inactive point, i.e., $p\in \Q(\I)\setminus Q'$. 
   Point $p$ will remain inactive because our procedure only makes new nodes inactive (it does not make any node active).
   Then, assume an active point $p\in Q'$ such that $\dist(p,x)>(1+\eps)r$.
   By definition, there is no node $u\in\canonical(x,r)$ such that $p\in \square_u$, so $p$ will remain active. Finally, let $p\in Q'$ be an active point such that $\dist(p,x)<r$.
   Then there would be a node $u'\in \canonical'(x,r)$ such that $p\in \square_{u'}$.
   Since $p$ is active there is no node $v\in \tree$ such that $v.a=0$ and $p\in \square_v$. 
   %Hence $u'\in \canonical(x,r)$.
   %Hence, there will be a node $u\in\canonical(x,r)$ such that $p\in\square_u$. 
   Node $u'$ becomes inactive, so $p$ will be inactive.
   We conclude that $B(x,r)\cap Q'\subseteq \bigcup_{u\in\canonical(x,r)}(\square_u\cap Q')\subseteq B(x,(1+\eps)r)\cap Q'$.

   Then we show that the values $u.rep$ and $u.c$ are correct after the update operation. This is shown by induction. Let $u$ be a node and $v,w$ be its children. If both $v.a=w.a$ then our algorithm correctly sets $u.a=0$ because $\square_u=\square_v\cup \square_w$. If $v.a=1$, $w.a=0$ and $v.rep, v.c$ are correct then our algorithm correctly sets $u.rep=v.rep$ and $u.c=v.c$ since $\square_u\cap Q'=\square_v\cap Q'$. If $v.a=0$ and $w.a=1$, we argue equivalently. Finally, if $v.a=w.a=1$, then if the counts and representative points in $v,w$ are correct then indeed $v.rep\in \square_u\cap Q'$, and by definition $|\square_u\cap Q'|=|\square_v\cap Q'|+|\square_w\cap Q'|=v.c+w.c$.

  % Finally, for the running time, we have $|\canonical(x,r)|=O(\eps^{-d+1}+\log N)$. Since the RBBD tree has the same properties as a BBD tree (with high probability), we have that the query procedure visits $O(\eps^{-d+1}+\log n)$ nodes to construct $\canonical(x,r)$. Let $D$ be the set of nodes visited by the query procedure. In the worst case, all nodes in $D$ had to be constructed in $\tree$. From Theorem~\ref{thm:RBBDrel}, we have that the construction time of a node takes $O(N+\eps^{-2}\log^2 N)$ time. Furthermore, notice that the update operation only visits each node in $D$ once, so overall the oracle runs in $O((\eps^{-d+1}+\log N)(N+\eps^{-2}\log^2 N))$.

Every new node is constructed in $O(N+\eps^{-2}\log^2 N)$  time
%with probability at least $1-\frac{1}{N^{m+5}}$, 
as shown in Lemma~\ref{thm:RBBDrel}.
The oracle runs a query on the RBBD tree to compute the canonical nodes $\canonical(x,r)$.
Since the number of samples we get in order to implement a randomized centroid shrink is $O(\eps^{-2}\log N^{m+4})$, the query procedure in the RBBD tree has the same properties as the query procedure in the BBD tree with probability at least $1-\frac{1}{N^{m+3}}$ (as shown in Lemma~\ref{lem:RBBDhelp2}).
Hence $|\canonical(x,r)|=O(\eps^{-d+1}+\log N)$ and $\canonical(x,r)$ is computed in $O(\eps^{-d+1}+\log N)$
with probability at least $1-\frac{1}{N^{m+3}}$.
%and the construction time is $O(\eps^{-2}\log N)$ with probability at least $1-\frac{1}{N^{m+4}}$ from Theorem~\ref{thm:RBBDrel}. 
%By the union bound, 
Thus, the running time is $O((\eps^{-d+1}+\log N)((N+\eps^{-2}\log^2 N)))$ with probability at least $1-\frac{1}{N^{m+3}}$. The same holds for the proofs of the next lemmas in this section.
\end{proof}

\begin{proof}[Proof of Lemma~\ref{lem:oracle2}]
    From the proof of Lemma~\ref{lem:oracle1}, we have that the query procedure computes a set of nodes $\canonical(x,r)$ such that $B(x,r)\cap Q'\subseteq \bigcup_{u\in \canonical(x,r)}(\square_u\cap Q')\subseteq B(x,(1+\eps)r)\cap Q'$. Since the counters on the active (constructed) nodes are correct, the lemma follows.
\end{proof}

\begin{proof}[Proof of Lemma~\ref{lem:oracle3}]
    If $Q'=\emptyset$ then indeed it must be the case that $\root.a=0$. Otherwise, there would be a path of active nodes from the root of $\tree$ to a node $u$ (without constructed children) where $\square_u\cap \Q(\I)\neq \emptyset$. If $Q'\neq \emptyset$, and given that the representative points in each node are correct (Lemma~\ref{lem:oracle1}), we have that the returned point $\root.rep$ is a point from $Q'$. In order to access $\root.a$ and $\root.rep$ we spend $O(1)$ time.
\end{proof}

\begin{proof}[Proof of Lemma~\ref{lem:oracle4}]
Let $U$ be a subset of the nodes in $\tree$ constructed as follows: a node $u$ belongs in $U$ if and only if i) $u\in \tree$, ii) there is no inactive node from the root to $u$ (including the roor and $u$), and iii) the children of $u$ have not been constructed in $\tree$ or $u$ is a leaf node of a fully constructed RBBD tree.
Notice that the $\mathsf{ReportRect}(\Q,\I,\square_u)$ procedure is executed exactly for every $u\in U$. By definition of the active points $Q'$ we have that $Q'=\bigcup_{u\in U}(\square_u\cap \Q(\I))$. Indeed, for each point $p\in Q'$ there is no inactive node $w\in \tree$ such that $p\in \square_w$.
Hence the procedure $\mathsf{ReportRect}(\Q,\I,\square_u)$ for every $u\in U$ returns exactly $Q'$. 

For the running time, we note that $|U|\leq |\tree|$ and each procedure $\mathsf{ReportRect}(\Q,\I,\square_u)$ runs in $O(N+|\square_u\cap \Q(\I)|)$ time. Hence the total running time is $O(|\tree|\cdot N+|Q'|)$.
\end{proof}

\begin{proof}[Proof of Lemma~\ref{lem:oracle5}]
Our goal is to show that in each iteration, a point $p\in Q'$ is selected with probability $\frac{1}{|Q'|}$.
Let $p\in Q'$ be a point stored in a leaf node $u$ of a complete RBBD tree (such a node may not have been constructed in $\tree$ yet).
Recall that a point $p$ belongs in $Q'$ if and only if there is no inactive node in $\tree$ from the root to $u$. Assume that $v$ is the last node in the path from $\root$ to $u$ that was constructed in $\tree$.
    Let $\root\rightarrow v_1\rightarrow\ldots\rightarrow v_\kappa\rightarrow v$. Let $w_1,\ldots, w_\kappa$ be the siblings of nodes $v_1,\ldots, v_\kappa$, respectively.
    From Section~\ref{subsec:RBBDrel}, we know that a point in a cell $\square_{v}$ is sampled with probability $\frac{1}{|\square_v\cap \Q(\I)|}$.
    The probability that $p$ is selected is $\prod_{j\in [\kappa]}\frac{v_j.c}{v_j.c+w_j.c}\cdot \frac{1}{|\square_v\cap \Q(\I)|}$. Notice that $v$ is active and no children of $v$ have been constructed in the tree, so $\square_v\cap \Q(\I)=\square_v\cap Q'$. Furthermore, by the correctness of the counts, $v_1.c+w_1.c=|Q'|$ and $v.c=|\square_v\cap Q'|$. Moreover, $v_j.c=v_{j+1}.c+w_{j+1}.c$, for every $j\in [\kappa]$. We conclude that the probability of selecting $p\in Q'$ is $\frac{1}{|Q'|}$.

    For every sample, we spend $O(\log N)$ time to get a node $v$ of $\tree$ which is either a leaf node or its children have not been constructed, because the height of $\tree$ is $O(\log N)$ with high probability. After we find $v$, the sampling procedure in $\square_v\cap Q'=\square_v\cap \Q(\I)$ is executed in $O(N)$ time as shown in Lemma~\ref{lem:Rects}. Since we get $z$ samples, the total running time is $O(zN)$.
    
\end{proof}

\section{Missing details from Section~\ref{sec:kCenter}}
\label{appndx:kCenter}

\begin{envrevthree}
\begin{algorithm}[t]
\caption{\revthree{{\sc Main algorithm}$(\Q, \I, k, \varepsilon, r)$}}
\label{alg:RelKCenterApprox}
     \revthree{Build the root of RBBD tree $\tree$ over $\Q(\I)$\;
     $\ret=\emptyset$\;
     \While{$\tree.\representative()\neq \emptyset$}{
        \If{$|\ret|=k$}{
            \Return $\emptyset$\;
        }
        $p=\tree.\representative()$\;
        $\ret=\ret\cup\{p\}$\;
        $\tree.\inactive(p,2r)$
     }
     \Return $\ret$}\;
\end{algorithm}
\end{envrevthree}

\begin{proof}[Proof of Theorem~\ref{thm:constantkCenter}]
%\label{appndx:binarysearchkcenter}
%Unfortunately, we cannot sort all pairwise distances of points in $\Q(\I)$ and then run a binary search on them because we would need $\Omega(|\Q(\I)|^2)$ time leading to a runtime which can be orders of magnitude larger than the size of the database. While it is not straightforward to run a binary search on $\ell_2$ pairwise distances, the key idea is to run a binary search over the $\ell_\infty$ pairwise distances of points in $\Q(\I)$.

%We first run the Yannakakis algorithm to remove dangling tuples from $\I$\footnote{Dangling tuples are those not participating in any result of the underlying join query.}. Then 
We notice that for any pair $t, p\in \Q(\I)$, $||t-p||_\infty=\max_{j\in[d]}|\pi_{A_j}(t)-\pi_{A_j}(p)|$. Hence, the $\ell_\infty$ distance between two points can be found as the absolute difference between two of the input tuples' coordinates. Using this observation, we construct a set $V$ of $O(N)$ points in $\Re^1$ such that all pairwise $\ell_\infty$ distances in $\Q(\I)$ are included in the pairwise distances of points in $V$. Initially, $V=\emptyset$. For every $A_j\in\allattr$, and every $R_i\in\allrel$ such that $A_j\in\allattr_i$, we add in $V$ the projections of tuples in $\R_i$ with respect to $A_j$, i.e., $V=V\cup\{\pi_{A_j}(R_i)\}$. After considering all attributes and relations, notice that for every $\ell_\infty$ pairwise distance between two tuples in $\I$, there exist two points in $V$ with the same distance.
After we constructed $V$, we run a binary search on the pairwise distances of the points in $V$. In order to do it, we use the result from~\cite{salowe1989infinity} to return the $z$-th smallest pairwise distance of a set $V$ of points in $\Re^1$ in $O(|V|\cdot \log |V|)=O(N\log N)$ time. Hence, we run a binary search requesting the $z$-th smallest distance for the selected $O(\log N)$ values of $z$ in the range $[0,1,\ldots, |V\times V|]$. %, where $|V\times V|=O(N^2)$. 
Let $r'$ be the $z$-th smallest distance returned by~\cite{salowe1989infinity}. We set $r=\sqrt{d}\cdot r'$ and we run the main algorithm described in Section~\ref{sec:kCenter}.
In the end, we return the last non-empty set $\ret$ returned by the binary search.

\begin{lemma}
    Using the binary search, the main algorithm for the relational $k$-center problem returns a set $\ret$ of $k$ points such that $\kcenter_\ret(\Q(\I))\leq 2\sqrt{d}(1+\eps)\kcenter_{\opt(\Q(\I))}(\Q(\I))$.%, with probability at least $1-\frac{1}{N}$.
\end{lemma}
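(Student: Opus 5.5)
The argument splits into two parts. First we show that the main algorithm succeeds for every candidate $r'$ of the binary search that is at least the optimum $\ell_\infty$-radius. Then we show that such a candidate, close enough to the optimum, is actually reached.

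Let $r^*=\kcenter_{\opt(\Q(\I))}(\Q(\I))$. Let $r^*_\infty$ be the $\ell_\infty$ cost of the same optimal center set $\opt(\Q(\I))$, that is, $r^*_\infty=\max_{p}\min_{c\in\opt}\|p-c\|_\infty$. Since $\|\cdot\|_\infty\le\|\cdot\|_2\le\sqrt d\|\cdot\|_\infty$, we have $r^*_\infty\le r^*$. Moreover, $r^*_\infty$ is itself an $\ell_\infty$ distance between two tuples of $\Q(\I)$: the maximizing point and its nearest center, both in $\Q(\I)$. By the construction of $V$, this distance occurs among the pairwise distances of $V$, so $r^*_\infty$ is one of the candidate values the binary search may select.

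Fix a candidate $r'$ with $r'\ge r^*_\infty$ and set $r=\sqrt d\, r'$. Every point $p$ then has an optimal center $c$ with $\|p-c\|_2\le\sqrt d\|p-c\|_\infty\le r$. Hence all points of any optimal cluster lie within $\ell_2$ distance $2r$ of each other.

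Now run the argument of the analysis paragraph. In iteration $i$, the representative $p_i=\tree.\representative()$ is active by Lemma~\ref{lem:oracle3}. Let $C_i$ be the optimal cluster containing $p_i$. By Lemma~\ref{lem:oracle1}, $\tree.\inactive(p_i,2r)$ inactivates all of $\ball(p_i,2r)\cap Q'\supseteq C_i\cap Q'$. Once inactive, a point never becomes active again. So each iteration starts with a point from a not-yet-killed optimal cluster and kills that cluster, which means there are at most $k$ iterations. The algorithm therefore returns a nonempty $\ret$ with at most $k$ points. Every point was inactivated by some ball $\ball(p_j,2(1+\eps)r)$, so $\kcenter_\ret(\Q(\I))\le 2(1+\eps)r$. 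If fewer than $k$ points are returned, we pad $\ret$ arbitrarily, which does not increase the cost.

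Monotonicity is what makes the binary search valid: success is guaranteed for all $r'\ge r^*_\infty$. For the final output, take the last nonempty $\ret$ found by the search, obtained at some candidate $r'$. The search moves to smaller values after each success, and every candidate $\ge r^*_\infty$ succeeds. Hence the final successful candidate is at most the smallest candidate $\ge r^*_\infty$, which is $r^*_\infty$ itself. This gives
\[
\kcenter_\ret(\Q(\I))\le 2(1+\eps)\sqrt d\, r'\le 2\sqrt d(1+\eps)r^*_\infty\le 2\sqrt d(1+\eps)\,r^*.
\]

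The main obstacle is the binary search logic, because success need not be monotone below $r^*_\infty$. A successful run at some $r'<r^*_\infty$ is harmless, since any success certifies cost at most $2\sqrt d(1+\eps)r'$ and this is even smaller. To make the argument clean, I would have the procedure keep the successful set with the smallest $r'$ seen, and argue that the search interval always contains a successful index at most the index of $r^*_\infty$.

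Finally, the randomness only affects running time, not correctness: Lemma~\ref{lem:oracle1} holds deterministically for the canonical-node cover, and only its time bound is probabilistic. So the approximation statement holds unconditionally, as the lemma states.
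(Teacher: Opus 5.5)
Your proof is correct and follows essentially the same route as the paper. Both arguments anchor at an $\ell_\infty$ candidate distance lying in $[r^*/\sqrt d,\, r^*]$, show that every candidate at or above it makes the main algorithm succeed, and conclude that the last successful candidate is at most that anchor, which gives the $2\sqrt d(1+\eps)$ bound. The differences are minor. You anchor at the $\ell_\infty$ cost $r^*_\infty$ of the optimal center set, whereas the paper uses $\hat r=\|t^*-p^*\|_\infty$ for the critical pair realizing $r^*$. You also spell out the binary-search monotonicity argument (that the final success cannot lie above the anchor), which the paper takes for granted.
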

\begin{proof}
    For any pair of points $t, p\in \Re^d$, $||t-p||_\infty \leq ||t-p||_2\leq \sqrt{d}||t-p||_\infty$.
    Let $r^*=\kcenter_{\opt(\Q(\I))}(\Q(\I))$ and let $t^*, p^*$ be two tuples in $\Q(\I)$ such that that $||t^*-p^*||_2=r^*$. Let $\hat{r}=||t^*-p^*||_\infty$. By the definition of $V$ there will be two points in $V$ with distance exactly $\hat{r}$. By definition, we have $\hat{r}\leq r^*\leq \sqrt{d}\hat{r}$.
    For every $r'\geq \hat{r}$, we have $r=\sqrt{d}r'\geq \sqrt{d}\hat{r}\geq r^*$.  Hence, from the analysis of Section~\ref{sec:kCenter}, the main algorithm returns a non-empty set $\ret'$ of (at most) $k$ points such that $\kcenter_{\ret'}(\Q(\I))\leq 2(1+\eps)r$.
    %, with probability at least $1-\frac{1}{N^2}$. %Now consider $r'=\hat{r}$. In this case we have $r=\sqrt{d}\hat{r}\leq  \sqrt{d}r^*$. Again, our relational algorithm will return a set $\ret'$ of $k$ points that covers all points in $\Q(\I)$ within distance $2(1+\eps)r\leq 2\sqrt{d}(1+\eps)r^*$ so the binary search will try to search smaller $r'$.
    Let $r'\leq \hat{r}$ be the distance in the last valid iteration of the binary search where a set $\ret\neq \emptyset$ was returned. We have $r=\sqrt{d}r'$. In this case, after at most $k$ iterations,   $\tree.\representative()=\emptyset$
   (equivalently, all points in $\Q(\I)$ are inactive). Since in each iteration
    we make all points within distance $2r$ (and some points within distance at most $2(1+\eps)r$) inactive around every $p_i\in \ret$, it means that all points in $\Q(\I)$ lie within distance $2(1+\eps)r$ from $\ret$.
    Hence, we have $\kcenter_{\ret}(\Q(\I))\leq 2(1+\eps)r= 2\sqrt{d}(1+\eps)r'\leq 2\sqrt{d}(1+\eps)\hat{r}\leq 2\sqrt{d}(1+\eps)r^*$. The lemma follows.
    %\aryan{Why did we separate the cases $r' = \hat{r}$ and $r' > \hat{r}$. Why not have only two cases where $r' \geq \hat{r}$ and $r' < \hat{r}$?}
%    Notice that in every iteration of our algorithm we define a ball $\ball(p_i,2r)$ and it holds that $\ball(p_i,2r^*)\subseteq \ball(p_i,2r)\subseteq \ball(p_i,2\sqrt{d}r^*)$. Hence, if $\ret$ is the last valid solution computed by our algorithm, we have that $\kcenter_\ret(\Q(\I))\leq 2\sqrt{d}(1+\eps)\kcenter_{\opt(\Q(\I))}(\Q(\I))$.
\end{proof}
Finally, as an estimation of $\kcenter_{\ret}(\Q(\I))$, we set $r_\ret=2(1+\eps)r$, where $r$ is the number that was given as input to the main algorithm that returned the final set $\ret$.
Notice that $\kcenter_\ret(\Q(\I))\leq r_\ret\leq 2\sqrt{d}(1+\eps)\kcenter_{\opt(\Q(\I))}(\Q(\I))$. %As we show later, $r_\ret$ is useful in order to get a better approximation algorithm for the relational $k$-center clustering. Furthermore, $r_\ret$ is useful in order to derive a constant approximation for the relational $k$-median/means clustering as shown in Section~\ref{}.

The running time of the algorithm using the binary search over the $\ell_\infty$ distances is only larger by a $\log N$ factor from the running time of the main algorithm. Indeed, the binary search explores $O(\log N)$ distances, and it spends $O(N\log N)$ time to compute each explored distance using the algorithm from~\cite{salowe1989infinity}. Hence the overall running time is $$O\left(N\log^2 N + k\left(\eps^{-d+1}+\log N\right)\left(N+\eps^{-2}\log^2 N\right)\log N\right)=O(k\cdot N\cdot\log^2 N+k\cdot \log^4 N)=\O\left(k\cdot N\right),$$
with probability at least $1-\frac{1}{N}$.
\end{proof}

\subsection{Alternative $(2+\eps)$-approximation}
\label{apndx:simple2approx}
A $(2+\eps)$-approximation algorithm can be derived directly from our main algorithm in Section~\ref{sec:kCenter}, without computing the coreset from~\cite{agarwal2024computing}. 
Let $r_\ret$ be the value returned by Theorem~\ref{thm:constantkCenter}.
It is clear that $\frac{r_\ret}{2\sqrt{d}+\eps}\leq \kcenter_{\opt(\Q(\I))}(\Q(\I))\leq r_\ret$. We discretize the range $[\frac{r_\ret}{2\sqrt{d}+\eps}, r_\ret]$ by a multiplicative step $(1+\eps)$, creating the set $\Psi=\{\frac{r_\ret}{2\sqrt{d}+\eps} (1+\eps)^0, \frac{r_\ret}{2\sqrt{d}+\eps}(1+\eps)^1,\ldots, (1+\eps)^\psi\frac{r_\ret}{2\sqrt{d}+\eps}\}$, where $\psi=O(\frac{1}{\eps}\log(2\sqrt{d}+\eps))$. We run a binary search on $\Psi$. For every $r\in\Psi$ selected by the binary search, we execute the main algorithm from Section~\ref{sec:kCenter}. We continue to larger or smaller values of $r$ in $\Psi$ as shown in the description. In the end, it is straightforward to see that in the last step of the binary search where a set $\ret'$ is returned for a distance $r\in \Psi$, 
$r\leq (1+\eps)\kcenter_{\opt(\Q(\I))}(\Q(\I))$, so
$\kcenter_{\ret'}(\Q(\I))\leq 2(1+\eps)^2\kcenter_{\opt(\Q(\I))}(\Q(\I))$
%if we set $r_{\ret'}=2(1+\eps)r$ it holds that $\kcenter_{\ret'}(\Q(\I))\leq r_{\ret'}\leq 2(1+\eps)^2\kcenter_{\opt(\Q(\I))}(\Q(\I))$.
For $\eps\leftarrow \eps/5$ we have that $(1+\eps/5)^2\leq 1+\eps$ hence $\kcenter_{\ret'}(\Q(\I))\leq r_{\ret'}\leq (2+\eps)\kcenter_{\opt(\Q(\I))}(\Q(\I))$.
Notice that it is not necessary to construct the set $\Psi$ explicitly (i.e., the $j$-th smallest distance in $\Psi$ can be computed in $O(1)$ time). Instead, we only run $O(\log\frac{\log(2\sqrt{d}+\eps)}{\eps})$ additional steps of our $k$-center algorithm. For an arbitrarily small constant $\eps\in(0,1)$, the total running time remains $O(kN\log^2 + k\log^4 N)$, with probability at least $1-\frac{1}{N}$.

\begin{algorithm}[t]
\caption{{\sc Rel-k-Means-Approx}$(\Q, \I, k, \varepsilon)$}
\label{alg:RelKMeansApprox}
    Compute $|\Q(\I)|$ using Yannakakis algorithm\;
    Build the root of RBBD tree $\tree$ over $\Q(\I)$\;
    
    $(V, L) \leftarrow$ {\sc Rel-k-Center}$(\Q, \I, k, \varepsilon)$; $M \leftarrow 2\log(|\Q(\I)|) + 3$\;
    
    $a_0 \leftarrow 0$; $b_0 \leftarrow \frac{L}{4|\Q(\I)|}$; $a_\infty \leftarrow 2L|\Q(\I)|$; $b_\infty \leftarrow \infty$\;

    $a_j \leftarrow 2^{j-1}\frac{L}{4|\Q(\I)|}$, $b_j \leftarrow 2^j\frac{L}{4|\Q(\I)|} ~~~\forall j\in[M]$\;

    $\ret \leftarrow \emptyset$; $r_{\ret} \leftarrow 0$; $\tau_1 \leftarrow |\Q(\I)|$; $i \leftarrow 1$\;
    
    \While{$\tau_i > 480 \cdot k\log^2 |\Q(\I)|$}{
        $X_i \leftarrow \tree.\text{sample}(240k\log^2 |\Q(\I)|)$\;
        $\ret \leftarrow \ret \cup X_i$\;
        
        $\tree' \leftarrow \text{copy}(\tree)$\;
        
        \For{$j \in \{0, 1, \ldots, M, \infty\}$}{
            $\counts_{i,j} \leftarrow 0$\;
            \ForEach{$x \in X_i$}{
                $\counts_{i,j} \leftarrow \counts_{i,j} + \tree'.\text{count}(x, b_j)$\;
                $\tree'.\text{inactive}(x, b_j)$\;
            }
        }
        
        $\beta_i \leftarrow \max\{j : \counts_{i,j} > \frac{\tau_i}{10\log |\Q(\I)|}\}$\;
        Reset $\tree$ to the state of $\tree'$ before defining $\counts_{i,\beta_i+1}$\;
        
        $\tau_{i+1} \leftarrow \tau_i - \sum_{\ell \in [\beta_i] \cup \{0\}} \counts_{i,\ell}$\;
        
        $r_{\ret} \leftarrow r_{\ret} + \sum_{\ell \in [\beta_i] \cup \{0\}} \counts_{i,\ell} \cdot ((1+\varepsilon)b_\ell)^2$\;
        
        $i \leftarrow i + 1$\;
    }
    
    $\ret \leftarrow \ret \cup \tree.\text{report}()$\;
    \Return $\ret$\;
\end{algorithm}

\section{Missing proofs from Section~\ref{sec:kmeans}}
\label{appndx:kmeans}

\begin{proof}[Proof of Lemma~\ref{lem:help1}]
%[Proof of Lemma~\ref{lem:help1}]
From Lemma~\ref{lem:oracle5}, we note that the set $X_i$ is a set of $\O(k)$ samples selected from $Q'_i$ uniformly at random.
For every $j\in\{0,1,2,\ldots, M, \infty\}$, we execute $\tree'.\inactive(x,b_j)$ over every $x\in X_i$. Let $p\in Q'_i$ be a point that was active at the moment that $c_{i,j}$ was defined and became inactive after applying $\tree'.\inactive(x,b_j)$ for every $x\in X_i$. By the correctness of the $\tree'.\inactive(x,b_j)$ oracle (Lemma~\ref{lem:oracle1}) $p\notin \bigcup_{\ell<j}Q_{i,\ell}'$ and $p\in Q_{i,j}'$. Furthermore, $p\notin \bigcup_{\ell>j}Q_{i,\ell}'$ because a point becomes inactive once. Hence for every pair $j_1,j_2\in [M]\cup\{0,\infty\}$, $Q_{i,j_1}'\cap Q'_{i,j_2}
=\emptyset$. Furthermore, when $j=\infty$, all remaining active points, by definition, will become inactive. Hence $\bigcup_{j\in [M]\cup\{0,\infty\}}Q_{i,j}'=Q'_i$. So property (1) holds.

Next, if we fix an index $j\in [M]\cup\{0,\infty\}$, for every $x\in X_i$, we execute $\counts_{i,j}=\counts_{i,j}+\tree'.\coun(x,b_j)$ and then $\tree'.\inactive(x,b_j)$. By the proof of Lemma~\ref{lem:oracle1} and Lemma~\ref{lem:oracle2}, the $\tree'.\coun(x,b_j)$ oracle returns exactly the number of points in $Q'_i$ that become inactive when we execute $\tree'.\inactive(x,b_j)$. Hence, it also holds that $\counts_{i,j}=|Q_{i,j}'|$, and property (3) holds.

Finally, we show property (2).
Let $j=0$, and let $x_1\in X_i$ be the first sample we process. We execute $\tree'.\inactive(x_1,b_0)$. By Lemma~\ref{lem:oracle1}, this oracle makes all points in $Q'_i$ within distance $b_0$ from $x_1$ inactive (it might also make some points within distance $(1+\eps)b_0$ inactive). Similarly, for any other $x_s\in X_i$, the oracle $\tree'.\inactive(x_s,b_0)$ makes inactive all currently active points within distance $b_0$ from $x_s$ (it might also make some points within distance $(1+\eps)b_0$ inactive). Hence, by the end of the process for $j=0$, all points in $Q'_{i,0}$ within distance $b_0$ from a point in $X_i$ become inactive with respect to $\tree'$. Some points within distance $(1+\eps)b_0$ from $X_i$ also might have become inactive. Hence $Q_{i,0}'$ satisfies property (2). By strong induction assume that every set $Q_{i,j}'$ for each $j<j^*$ satisfies property (2), where $j^*\in [M]\cup\{\infty\}$. For every $x\in X_i$, the oracle $\tree'.\inactive(x,b_{j^*})$ makes inactive all currently active points in $Q'_i$ that are within distance $b_{j^*}$ from $x$ (and might make some points inactive within distance $(1+\eps)b_{j^*}$). Let $Q_{i,j^*,x}'$ be this set of points and hence we have $Q_{i,j^*}'=\bigcup_{x\in X}Q_{i,j^*,x}'$.
From property (1), we have that $Q_{i,j^*}'\cap (\bigcup_{j<j^*}Q_{i,j}')=\emptyset$. By the induction hypothesis the set $\bigcup_{j<j^*}Q_{i,j}'$ contains all points in $Q'_i$ within distance $b_{j^*-1}$ from $X_i$ and might contain some points within distance $(1+\eps)b_{j^*-1}$ from $X_i$. Hence, $Q'_{i,j^*}$ contains all points with distance greater than $(1+\eps)b_{j^*-1}=(1+\eps)a_{j^*}$ and at most $b_j$ from $X_i$, might contain some points within distance at most $(1+\eps)b_{j^*}$ from $X_i$, and might contain some points with distance greater than $b_{j^*-1}=a_{j^*}$ from $X_i$. Hence, $Q'_{i,j^*}$ satisfies property (2).

%By the proofs of lemmas in Section~\ref{sec:reloracles}, each oracle returns the correct results with probability at least $1-\frac{1}{N^{m+2}}$. We execute $\O(k)$ oracles, so the correctness of the lemma follows with probability at least $1-\frac{1}{N^2}$.
%As we show in Lemma~\ref{}, the algorithm executes $O(\log N)$ iterations so al operations hold with probability at least $1-\frac{1}{N^{\Omega(1)}}$.
%Let $p$ be a point stored in a leaf node $u$ the RBBD tree (such node might have not been constructed yet. A point $p$ belongs in $Q'$ if and only if there is no inactive node from the root to $u$. Assume that $v$ is the last node in the path from $\root$ to $u$ that was constructed in $\tree$ in the current iteration. Let $\root\rightarrow v_1\rightarrow\ldots\rightarrow v_\kappa\rightarrow v$. Let $w_1,\ldots, w_\kappa$ be the siblings of nodes $v_1,\ldots, v_\kappa$, respectively. From Section~\ref{}, we know that a point in a cell $\square_{v}$ is sampled with probability $\frac{1}{|\square_v\cap \Q(\I)|}$. The probability that $p$ is selected in $X$ is $\prod_{j\in [\kappa]}\frac{v_j.\mathsf{count}}{v_j.\mathsf{count}+w_j.\mathsf{count}}\cdot \frac{1}{|\square_v\cap \Q(\I)|}$. Notice that $v$ is active and no children of $v$ have been constructed in the tree, so $\square_v\cap \Q(\I)=\square_v\cap Q'$. Furthermore, by the correctness of the counts, $v_1.\mathsf{count}+w_1.\mathsf{count}=|Q'|$. We conclude that the probability of selecting $p\in Q'$ is $\frac{1}{|Q'|}$. The first part of the lemma follows.

\end{proof}

\begin{proof}[Proof of Lemma~\ref{lem:bad}]
We fix the iteration $i$ of the algorithm.
    For every $p_s\in\opt(\Q(\I))$, we place the smallest ball $B_s$ with center $p_s$ that contains $\frac{\tau_i}{20\cdot k\cdot \log|\Q(\I)|}$ points from $Q'_i$.
    %If $c_1$ is large enough, i
%    The set $X_i$ is correctly selected uniformly at random with probability at least $1-\frac{1}{N^{m+2}}$.
    Since $X_i\subset Q_i'$ and $|X_i|=\Theta(k\log^2|\Q(\I)|)$, we have that there is at least one point of $X_i$ inside every ball $B_s$, with probability at least $1-\frac{1}{N^3}$~\cite{har2004coresets, har2011geometric}.
     Namely, $X_i\cap B_s\neq \emptyset$ for each $s\in [k]$.
    Let $x_s\in X_i$ denote the point that lies in the ball $B_s$, for each $s\in[k]$.
Let $p\in Q'_i\setminus ((\bigcup_{\ell\in[k]}B_\ell)\cap Q'_i)$ be a point outside of the balls $\bigcup_{\ell\in [k]}B_\ell$. Let $p_s$ be the closest center of $\opt(\Q(\I))$ to $p$, i.e., $\dist(p,\opt(\Q(\I)))=\dist(p,p_s)$. Since $x_s\in B_s\cap Q'_i$, notice that $\dist(p,x_s)\leq 2\dist(p,p_s)$, so $p\in Q'_i\setminus \badpoints$, i.e., $p$ is a good point.
Thus, with probability at least $1-\frac{1}{N^3}$, all bad points in $Q'_i$ lie inside the balls $B_1,\ldots, B_k$. But every one of those balls, contain at most $\frac{\tau_i}{20k\log |\Q(\I)|}$ points of $Q'_i$. It follows that  $|\badpoints|\leq k\cdot \frac{\tau_i}{20k\log |\Q(\I)|}=\frac{\tau_i}{20\log |\Q(\I)|}$, with probability at least $1-\frac{1}{N^3}$.
\end{proof}

\begin{proof}[Proof of Lemma~\ref{lem:helpred}]
   Clearly, $|\hat{Q}_i|\geq \tau_i-\counts_{i,\infty}-M\cdot \frac{\tau_i}{10\log |\Q(\I)|}$, since $\beta_i$ is the largest index such that $Q'_{i,\beta_i}>\frac{\tau_i}{10\log |\Q(\I)|}$.
   %, with probability at least $1-\frac{1}{N^{2}}$. 
    Notice that 
    %with probability  at least $1-\frac{1}{N^{2}}$, 
    for every point $p\in Q'_{i,\infty}$, it holds $\dist(p,X_i)\geq 2L|\Q(\I)|$. However $\kmeans_{\opt(\Q(\I))}(\Q(\I))\leq |\Q(\I)| L^2$ so $\dist(p,\opt(\Q(\I)))\leq \sqrt{|\Q(\I)|}L$ and hence $p\in \badpoints$. Thus, $Q'_{i,\infty}\subseteq \badpoints$.
Furthermore, from Lemma~\ref{lem:bad} we have $|\badpoints|\leq \frac{\tau_i}{20\log|\Q(\I)|}$ with probability at least $1-\frac{1}{N^3}$. Hence, $|\hat{Q}_i|\geq \tau_i - \frac{\tau_i}{20\log |\Q(\I)|} - (2\log |\Q(\I)| +3)\frac{\tau_i}{10\log|\Q(\I)|}\geq \frac{\tau_i}{2}$. So, in every iteration, at least half of the remaining points become inactive. We conclude that the algorithm runs for $O(\log |\Q(\I)|)=O(\log N)$ iterations with probability at least $1-\frac{1}{N^2}$. In each iteration we add in $\ret$ the set $X_i$ and $|X_i|=O(k\log^2 N)$, so $|\ret|=O(k\log^3 N)$, with probability at least $1-\frac{1}{N^2}$.
\end{proof}

\begin{proof}[Proof of Lemma~\ref{lem:cost}]

%We first prove the approximation factor. From previous lemmas, all arguments hold in each iteration of the algorithm with probability at least $1-\frac{1}{N^2}$. There are $O(\log N)$ iterations, so the approximation factor we compute hold with probability at least $1-\frac{1}{N}$.
We first prove the approximation factor.
    We fix the iteration $i$ of the algorithm. 
    If $\beta_i>0$, then we argue as follows. Let $\P=Q_{i,\beta_i}'\setminus \badpoints$. For any point $p\in \hat{Q}_i\cap \badpoints$ and $q\in \P$, we have $\dist(p,X_i)\leq 2(1+\eps)\dist(q,X_i)$. This inequality follows from the definition of the sets $Q'_{i,j}$ and the $\tree'.\inactive(\cdot,\cdot)$ oracle. Furthermore, $|\P|>\frac{\tau_i}{10\log |\Q(\I)|}-\frac{\tau_i}{20\log |\Q(\I)|}=\frac{\tau_i}{20\log |\Q(\I)|}\geq |\hat{Q}_i\cap \badpoints|$, with probability at least $1-\frac{1}{N^2}$.
    By definition, every point in $\P$ is a good point with respect to $X_i$ so $\kmeans_{X_i}(\P)\leq 4\kmeans_{\opt(\Q(\I))}(\P)$. Furthermore, since $\P\subseteq \hat{Q}_i$ we have that $\kmeans_{\opt(\Q(\I))}(\P)\leq \kmeans_{\opt(\Q(\I))}(\hat{Q}_i)$.
    Thus,
    $$\kmeans_{X_i}(\hat{Q}_i\cap \badpoints)\leq 4(1+\eps)^2\kmeans_{X_i}(\P)\leq 16(1+\eps)^2\kmeans_{\opt(\Q(\I))}(\P)\leq 16(1+\eps)^2\kmeans_{\opt(\Q(\I))}(\hat{Q}_i).$$
    So,
    \begin{align*}\kmeans_{X_i}(\hat{Q}_i)&=\kmeans_{X_i}(\hat{Q}_i\cap \badpoints)+\kmeans_{X_i}(\hat{Q}_i\setminus \badpoints)\leq 16(1+\eps)^2\kmeans_{\opt(\Q(\I))}(\hat{Q}_i)+4\kmeans_{\opt(\Q(\I))}(\hat{Q}_i)\\&\leq 20(1+\eps)^2\cdot \kmeans_{\opt(\Q(\I))}(\hat{Q}_i).
    \end{align*}

    If $\beta_i=0$, then for every point $q\in \hat{Q}_i$ we have $\dist(q,X_i)\leq (1+\eps)\frac{L}{4|\Q(\I)|}$. Thus, for $\eps\leq 0.1$,
    $$\kmeans_{X_i}(\hat{Q}_i)=\sum_{q\in \hat{Q}_i}\dist^2(q,X_i)\leq (1+\eps)^2\frac{L^2}{16|\Q(\I)|}< \frac{L^2}{9|\Q(\I)|}\leq \frac{1}{|\Q(\I)|}\kmeans_{\opt(\Q(\I))}(\Q(\I)).$$ 
    %The last inequality holds because $V\subseteq \hat{Q}$.
    In any case, we have $\kmeans_{X_i}(\hat{Q}_i)\leq \max\{20(1+\eps)^2\cdot\kmeans_{\opt(\Q(\I))}(\hat{Q}_i),\frac{1}{|\Q(\I)|}\kmeans_{\opt(\Q(\I))}(\Q(\I))\}$.
    
    By definition, recall that $\hat{Q}_{i_1}\cap \hat{Q}_{i_2}=\emptyset$ for every $i_1\neq i_2$. Overall, $\kmeans_{\ret}(\Q(\I))\leq 21(1+\eps)^2\kmeans_{\opt(\Q(\I))}(\Q(\I))$, with probability at least $1-\frac{1}{N}$.
    %Since every operation holds with probability at least $1-\frac{1}{N^{\Omega(1)}}$, the bound holds with probability at least $1-\frac{1}{N^{\Omega(1)}}$.

    Finally, for every point $p\in\hat{Q}_i$ such that $p\in Q_{i,j'}'$ for an index $j'\leq \beta_i$
    we charge $((1+\eps)b_{j'})^2$ in $r_\ret$. By definition, if $j'>0$,
    $\dist(p,X_i)\leq (1+\eps)b_{j'}\leq 2(1+\eps)\dist(p,X_i)$.
    Hence the $k$-means cost of $p$ with respect to $X_i$ is charged at most $4(1+\eps)^2\dist^2(p,X_i)$ and at least $\dist^2(p,X_i)$ in $r_\ret$.
If $j'=0$, then by the bound of $\kmeans_{X_i}(\hat{Q}_i)$ when $\beta_i=0$, we charge a value which is at most $\frac{1}{|\Q(\I)|}\kmeans_{\opt(\Q(\I))}(\Q(\I))$ and at least $\kmeans_{\ret}(\hat{Q})$ in $r_\ret$.
Generally, for all points in $\hat{Q}_i$, we charge a value at most $\sum_{p\in\hat{Q}_i}4(1+\eps)^2\dist^2(p,X_i)\leq 84(1+\eps)^4\kmeans_{\opt(\Q(\I))}(\hat{Q}_i)$ and at least $\sum_{p\in\hat{Q}_i}\dist(p,X_i)\geq\kmeans_{\ret}(\hat{Q}_i)$ in $r_\ret$. Overall, $\kmeans_{\ret}(\Q(\I))\leq r_\ret\leq 84(1+\eps)^4 \kmeans_{\opt(\Q(\I))}(\Q(\I))$.
%, with probability at least $1-\frac{1}{N^{\Omega(1)}}$.
By setting $\eps\leftarrow \eps/400$, we get $\kmeans_{\ret}(\Q(\I))\leq r_\ret\leq (84+\eps) \kmeans_{\opt(\Q(\I))}(\Q(\I))$, wuit probability at least $1-\frac{1}{N}$.

Next, we proceed with the running time.
First, we run the $k$-center algorithm from Theorem~\ref{thm:kCenter} in $O\left(kN\log^2 N + k\log^4 N\right)$ time with probability at least $1-\frac{1}{10N}$.
Then the algorithm runs for $O(\log N)$ iterations with probability at least $\frac{1}{N^2}$. For every iteration $i$, we get the set $X_i$ in $O(kN\log^2 N)$ time, with probability at least $1-\frac{1}{N^{m+3}}$, using the result of Lemma~\ref{lem:oracle5}.
    There are $O(\log N)$ different values of $\ell\in[M]\cup\{0,\infty\}$, and for each $\ell$ we compute $\counts_{i,\ell}$ in $O(kN\log^3 N +k\log^5 N)$ time with probability at least $1-\frac{O(\log^2N)}{N^3}$, 
    executing the $\tree'.\inactive()$ oracles $O(k\log^2 N)$ times (recall from Lemma~\ref{lem:oracle2} that $\tree'.\inactive()$ runs in $O(N\log N +\log^3 N)$ time).
    %using Lemma~\ref{lem:oracle2} over $O(k\log^2 N)$ executions of the $\tree'.\inactive()$ oracles.
   % All values $\counts_{i,\ell}$ are computed in $O()$
   Fixing an iteration $i$, all values $\counts_{i,\ell}$ are computed in $O(kN\log^4 N +k\log^6 N)$ time with probability at least $1-\frac{O(\log^3 N)}{N^3}\geq 1-\frac{1}{N^2}$.
   Overall, during the $O(\log N)$ iterations (with probability at least $1-\frac{1}{N^2}$), the running time to compute all counts is $O(kN\log^5 N +k\log^7 N)$ with probability at least $1-\frac{2}{N^2}$.
   %using Lemma~\ref{lem:oracle2}.
%    Furthermore, the time we need to construct a copy of $\tree$ is $O(kN\log^2 N +k\log^4 N)$.
  %  Thus, in each iteration the $k$-center algorithm from Theorem~\ref{thm:kCenter} dominates the running time.
  Similarly, in the last iteration, denoted by $i'$, notice that $|\tree|=O(k\log^5 N)$ with probability at least $1-\frac{2}{N^{2}}$. From Lemma~\ref{lem:oracle4}, and the fact that $|\tau_{i'}|=O(k\log^2 N)$, the oracle $\tree.\report()$ runs in $O(kN\log^5 N)$ time, with probability at least $1-\frac{3}{N^{2}}$
    Overall, our algorithm runs in $O(kN\log^5 N+k\log^7 N)$ time, with probability at least $1-\frac{1}{N}$.
\end{proof}

\newcommand{\diam}{\mathsf{diam}}
\subsection{Faster coreset construction}
\label{appndx:fastercoreset}
Before we describe the faster coreset construction from~\cite{esmailpour2024improved}, we briefly show that the oracles defined in Section~\ref{sec:reloracles} can also be defined for boxes, instead of balls.

In the standard computational setting BBD tree can also be used for reporting, counting and sampling queries when the query shape is not a ball but a box $\rho$ in $\Re^d$. The query procedure $\mathcal{T}(\rho)$ returns a set of $O(\log n + \eps^{-d+1})$ canonical nodes $\mathcal{U}(\rho)$ such that $\rho\subseteq \bigcup_{u\in \mathcal{U}(\rho)}\square_u\subseteq \rho\oplus \eps_\rho$, where $\eps_\rho=\eps\cdot\mathsf{diam}(\rho)$, $\mathsf{diam}(\rho)$ is the diameter of $\rho$, and $\rho\oplus \eps_\rho$ is a the Minkowski sum of $\rho$ with $\eps_\rho$, i.e., $\rho\oplus \eps_\rho$ is a superset of $\rho$ such that every point of $\rho\oplus \eps_\rho$ is within distance $\eps\cdot\mathsf{diam}(\rho)$ from its closest point in $\rho$.

In the relational setting, having a partially constructed RBBD tree, given a box $\square$, the oracle $\tree.\inactive(\square)$ makes inactive all points in $\bigcup_{u\in \canonical(\square)}(\square_u\cap \Q(\I))$, similarly to the inactive oracle in Section~\ref{sec:reloracles}. 
Similarly, $\tree.\coun(\square)$ counts all active points in $\bigcup_{u\in \canonical(\square)}|\square_u\cap \Q(\I))|$,
Furthermore, we define the oracle $\tree.\representative(\square)$ that returns an active point from $\square\cap \Q(\I)$. We run the $\tree(\square)$ query constructing new nodes if necessary (as shown in Section~\ref{sec:reloracles}), skipping nodes that are already inactive. We compute $\canonical(\square)$ and for a node $u\in \canonical(\square)$ we return $u.rep$.
Using the definitions of the RBBD tree (and BBD tree) if $Q'$ is the set of active points, $\tree.\inactive(\square)$ makes inactive the set of points $\bigcup_{u\in \canonical(\square)}\square_u\cap Q'$, such that $\square\cap Q'\subseteq \bigcup_{u\in\canonical(\square)}(\square_u\cap Q')\subseteq (\square\oplus \eps_{\square}\cap Q')$.
For count, we notice that the oracle returns a number $\sum_{u\in\canonical(\square)}|\square_u\cap Q'|$ such that $|\square\cap Q'|\leq \sum_{u\in\canonical(\square)}|\square_u\cap Q'|\leq |\square\oplus\eps_{\square}\cap Q'|$.
Similarly, $\tree.\representative(\square)$ returns an active point $p\in Q'$ such that $p\in \square\oplus\eps_{\square}\cap Q'$. 
The running time of each new  oracle is $O((\eps^{-d+1}+\log N)(N+\eps^{-2}\log^2 N))$, with probability at least $1-\frac{1}{N^{m+3}}$.

The coreset construction from~\cite{esmailpour2024improved} requires the following operation.
Let $\square_1$ be a square-box (all sides are equal) in $\Re^d$. For $\square_1$ we should get a representative point from $\square_1\cap \Q(\I)$ and compute the count $\square_1.c=|\square_1\cap \Q(\I)|$. Repeatedly, for every new square-box $\square_i\in \Re^d$, get a representative point from $(\square_i\setminus(\bigcup_{j<i}\square_j))\cap \Q(\I)$ and compute the count $\square_i.c=|(\square_i\setminus(\bigcup_{j<i}\square_j))\cap \Q(\I)|$. In other words, every time we see a new square-box $\square_i$ the goal is to count the number of points (and get a representative) in $\Q(\I)$ that belong in $\square_i$ but do not belong in any of the previously defined square-boxes $\square_j$ for $j<i$.
While our goal is not to fully describe their coreset construction, intuitively, every square-box $\square_i$ they consider creates a new point in the coreset $\coreset$ (the representative point) with weight $\square_i.c$.
If $\ret$ is the set of points returned in the first phase, they get $O(|\ret|\eps^{-d}\log N)$ square-boxes and thus their coreset size is $O(|\ret|\eps^{-d}\log N)$.
In~\cite{esmailpour2024improved}, the authors compute an estimation of $\square_i.c$ by applying uniform random sampling in $\square_i$. However, the number of samples they get for every square-box in order to have an accurate estimation is roughly $\O(|\ret|)=\Omega(k)$, leading to an additional $k^2$ term in the running time, skipping the additional $\log N$ factors. Using our new RBBD construction in the relational setting, the operation of computing $\square_i.c$ becomes simpler. 
Initially, let $\tree$ consist of the root of an RBBD tree over $\Q(\I)$, as described in Section~\ref{subsec:RBBDrel}. For every new square-box $\square_i$ we consider, we execute $\tree.\coun(\square_i)$ to get the count, $\tree.\representative(\square_i)$ to get a representative point, and $\tree.\inactive(\square_i)$ to make make all points in $\bigcup_{u\in \canonical(\square_i)}\square_u\cap \Q(\I)$ inactive.
Such an algorithm constructs the coreset in $O((\eps^{-d+1}+\log N)(N+\eps^{-2}\log^2 N)|\ret|\eps^{-d}\log N)$ time, with high probability. In our case $|\ret|=O(k\log^3 N)$ so for arbitrarily small constant $\eps$, the running time is
$O(kN\log^5 N + k\log^{7}N)$.
Finally, we notice that in~\cite{esmailpour2024improved} the authors only compute an estimation of $\square_i.c$. Using the RBBD tree we also do not compute the count exactly. Instead, using the guarantees of the RBBD tree, we compute the number of points in each square-box assuming that the boundaries of each box are off by an additional term $\eps\cdot \diam$. Recall from our discussion above that each square-box $\square_i$ is approximated by $\square_i\oplus \eps_{\square_i}$. However, this only increases the overall cost by a $(1+\eps)^2$ factor (since the boxes are square-boxes the diameter defines the maximum distance between two points in each square-box, which is used in~\cite{esmailpour2024improved} to bound the overall error). Setting $\eps\leftarrow\eps/4$, we get exactly the same approximation as in~\cite{esmailpour2024improved}.

\input{appndxCat}
\input{apnndxGonzalez}

\input{appndxCyclic}

%% file: figdecomp.tex
\begin{figure}[t]
    \centering
    \includegraphics[width=0.2\textwidth]{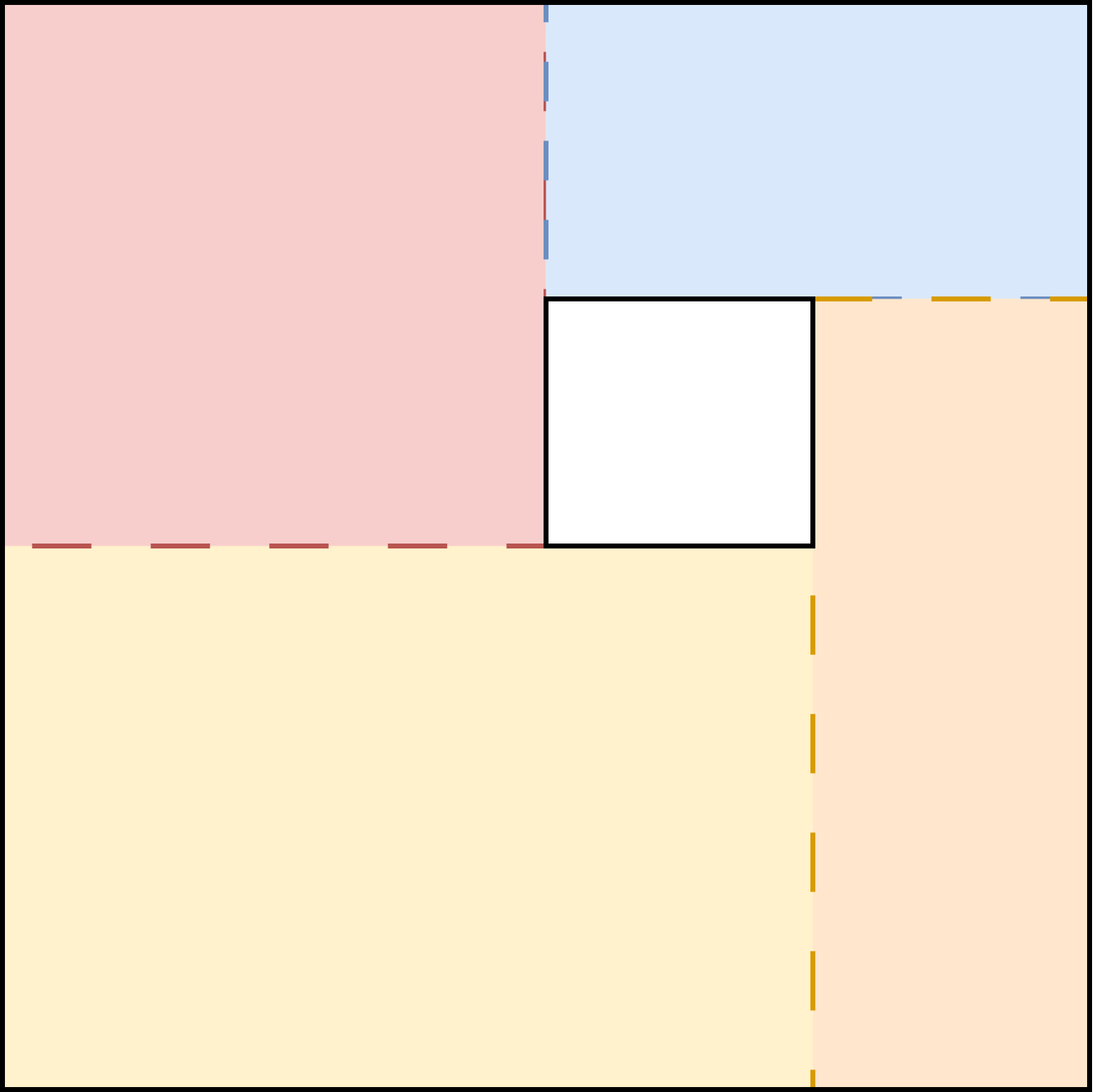}
    \caption{Partition of a 2-d box with a hole into 4 boxes.}
    \label{fig:decomp}
\end{figure}

%% file: appndxCat.tex
\input{categorical}

%\section{Missing details from Section~\ref{sec:cat}}
%\label{appndx:catdesc}
%Before we describe our algorithm, we briefly review the algorithm of~\cite{curtin2020rk}.

\begin{envrevone}
\subsection{Approximate Closest Centers}
\label{appndx:cat}
Recall that $\bar{\Q}(\I) = \bar{\pi}_{\allattr_{(n)}}(\Q(\I))$, and note that $|\bar{\Q}(\I)| = |\Q(\I)|$. 

We design an algorithm that efficiently computes an estimation $\bar{\omega}(p)$ of the weight $\omega(p)$ for all $p\in C$.
We use the oracles introduced in Section~\ref{sec:reloracles}, with a few minor modifications. Recall that for each node $u$ of a partially constructed RBBD-tree $\tree$, we store the corresponding box (or box with a hole) $\square_u$, the binary variable $u.a$, the representative point $u.\mathrm{rep}$, and the counter $u.c$. In our modified version, we additionally store and maintain $k^{d_c}$ auxiliary counters at every constructed node of $\tree$.
More specifically, for each node $u \in \tree$, we maintain counters of the form $u(i_1,i_2,\ldots,i_{d_c})$, where $1 \leq i_j \leq k$ for every $j \in [d_c]$. Each counter is defined as 
$$u(i_1,i_2,\ldots,i_{d_c})\!=\!|\{t \in \Q(\I) \!\mid\! \pi_{\allattr_{(n)}}(t) \in \square_u, \forall A^{(j)} \!\in\! \allattr_{(c)}, \dist(\pi_{A^{(j)}}(t), C_{A^{(j)}}) = \dist(\pi_{A^{(j)}}(t), C_{A^{(j)}}[i_j])\}|,$$ 
where $C_{A^{(j)}}[i_j]$ denotes the $i_j$-th point in $C_{A^{(j)}}$ (under an arbitrary but fixed ordering). 
Intuitively, each counter $u(i_1,i_2,\ldots,i_{d_c})$ stores the number of tuples in $\Q(\I)$ whose projection on the numerical attributes lies within $\square_u$, and whose projection on each categorical attribute $A^{(j)}$ is closest to the $i_j$-th center in $C_{A^{(j)}}$.

We first describe how to compute $u(i_1,i_2,\ldots,i_{d_c})$ for a newly constructed node $u \in \tree$. 
Assume that $\square_u$ is a box without a hole; if it contains a hole, we simply repeat the procedure for each of the $2d_n$ non-intersecting boxes, as explained in Appendix~\ref{appndx:randcentshr}. 
Following the approach of Lemma~\ref{lem:Rects} (originally described in~\cite{agarwal2024computing}), for every relation $R \in \allrel$ we identify all tuples $t \in R$ such that: for every $A \in \allattr(R) \cap \allattr_{(n)}$, $\pi_A(t) \in \pi_A(\square_u)$, and for every $A^{(j)} \in \allattr(R) \cap \allattr_{(c)}$, $\pi_{A^{(j)}}(t)$ has the $i_j$-th center in $C_{A^{(j)}}$ as its closest center. All other tuples are discarded. Let $R' \subseteq R$ denote the remaining tuples in relation $R$. We then execute the counting version of Yannakakis' algorithm on the join $\Join_{R \in \allrel} R'$ and set $u(i_1,i_2,\ldots,i_{d_c}) = |\Join_{R \in \allrel} R'|$ in $O(N)$ time. The correctness follows directly from the definition of $R'$. Since Yannakakis' algorithm is executed $k^{d_c}$ times, constructing a new node in $\tree$ takes $O(k^{d_c} N)$ time.
We update the value of $u(i_1,i_2,\ldots,i_{d_c})$ similarly to $u.c$ after marking new inactive nodes. For instance, consider a node $u$ with children $\hat{v}$ and $\bar{v}$.
By definition, it holds that $u(i_1,i_2,\ldots,i_{d_c}) = \hat{v}(i_1,i_2,\ldots,i_{d_c}) + \bar{v}(i_1,i_2,\ldots,i_{d_c})$. If $\hat{v}$ becomes inactive, then $u(i_1,i_2,\ldots,i_{d_c}) = \bar{v}(i_1,i_2,\ldots,i_{d_c})$.

For a ball $\ball(x,r)$, let $\tree.\widebar{\inactive}(x,r)$ be defined similarly to $\tree.\inactive(x,r)$, except that it also updates all values $u(i_1,i_2,\ldots,i_{d_c})$ as described above. 
Furthermore, let $\tree.\widebar{\coun}(x,r)$ denote an oracle analogous to $\tree.\coun(x,r)$, with the only difference being that it returns $k^{d_c}$ counters. 
Specifically, for every $(i_1,\ldots,i_{d_c}) \in [k]^{d_c}$, it returns a counter $\tau(i_1,\ldots,i_{d_c}) = \sum_{u \in \mathcal{U}(x,r)} u(i_1,\ldots,i_{d_c})$.

\paragraph{Algorithm}
After defining the new oracles $\tree.\widebar{\inactive}(x,r)$ and $\tree.\widebar{\coun}(x,r)$, we proceed with the main algorithm for setting the weights. 
Initially, we set $\bar{\omega}(c) = 0$ for every $c \in C$.

As in the first step of the algorithm from Section~\ref{sec:kmeans}, we compute a value $L$ such that $\frac{L^2}{9} \leq \kmeans_{\opt(\bar{\Q}(\I))}(\bar{\Q}(\I)) \leq \kmeans_{C'}(\bar{\Q}(\I)) \leq |\Q(\I)|^3 \cdot L^2$.

For every $i \in \{0,1,\ldots,O(\frac{1}{\eps}\log N)\}$, we set $r = (1+\eps)^i \frac{L}{9|\Q(\I)|}$ and repeat the following steps for each $c' \in C'$. 
We call the function $\tree.\widebar{\coun}(c',r)$, and for every $(i_1,\ldots,i_{d_c}) \in [k]^{d_c}$, we update $\bar{\omega}(p) = \bar{\omega}(p) + \tau(i_1,\ldots,i_{d_c})$, for every $p \in C$ such that $\pi_{\allattr_{(n)}}(p) = c'$ and, for every $A^{(j)} \in \allattr_{(c)}$, $\pi_{A^{(j)}}(p)$ is the $i_j$-th center in $C_{A^{(j)}}$. 
We then call $\tree.\widebar{\inactive}(c',r)$.

\paragraph{Analysis}
Let $\assigfunc:\Q(\I)\rightarrow C$ be the assignment function that maps each tuple in $\Q(\I)$ to a center in $C$ according to our algorithm. In particular, when a tuple $t \in \Q(\I)$ contributes to $\bar{\omega}(p)$ for a unique $p \in C$, we set $\assigfunc(t) = p$. By definition, $\bar{\omega}(p) = |\{t \in \Q(\I) \mid \assigfunc(t) = p\}|$. 

%It is straightforward to see that every tuple $t \in \Q(\I)$ is assigned to some center in $C$. 
Since $\kmeans_{C'}(\bar{\Q}(\I)) \leq |\Q(\I)|^3 L^2$, we have $\dist(\pi_{\allattr_{(n)}}(t), C') \leq |\Q(\I)|^{3/2}L$, and the maximum radius considered for $r$ is greater than $|\Q(\I)|^{3/2}L$. Hence, every tuple $t \in \Q(\I)$ is assigned to at least one center in $C$.

\begin{lemma}
$\sum_{t \in \Q(\I)} \dist^2(t, \assigfunc(t)) \leq (1+\eps)\kmeans_{\opt(\Q(\I))}(\Q(\I))$.
\end{lemma}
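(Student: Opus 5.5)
The plan is to split the cost of $\assigfunc$ into a numerical part and a categorical part. The categorical part will be shown to be assigned exactly to the nearest center. The numerical part will be shown to lose only a $(1+\eps)^{O(1)}$ factor plus a negligible additive term. Both parts are then compared with $\kmeans_{\opt(\Q(\I))}(\Q(\I))$ through a projection argument.

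\textbf{Decomposition.} Fix $t\in\Q(\I)$ and write $p=\assigfunc(t)$. By the one-hot encoding, the squared Euclidean distance splits over attribute groups:
\[
\dist^2(t,p)=\dist^2\big(\pi_{\allattr_{(n)}}(t),\pi_{\allattr_{(n)}}(p)\big)+\sum_{A^{(j)}\in\allattr_{(c)}}\dist^2\big(\pi_{A^{(j)}}(t),\pi_{A^{(j)}}(p)\big).
\]
By construction of $\tree.\widebar{\coun}$, the counter $u(i_1,\ldots,i_{d_c})$ puts $t$ into the class whose $j$-th index is the closest center of $C_{A^{(j)}}$ to $\pi_{A^{(j)}}(t)$. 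Hence every categorical term equals $\dist^2(\pi_{A^{(j)}}(t),C_{A^{(j)}})$ exactly. Since each tuple becomes inactive exactly once, $\assigfunc$ is well defined. Only the numerical center $c'=\pi_{\allattr_{(n)}}(p)\in C'$ is approximate.

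\textbf{Numerical part.} Let $r_i=(1+\eps)^i\frac{L}{9|\Q(\I)|}$ be the radius at which $t$ became inactive, through a call $\tree.\widebar{\inactive}(c',r_i)$. By Lemma~\ref{lem:oracle1}, $\dist(\pi_{\allattr_{(n)}}(t),c')\le(1+\eps)r_i$. If $i\ge1$, the full round at radius $r_{i-1}$ ran over every center of $C'$ and deactivated all active points within $r_{i-1}$ of each center, yet $t$ survived. Therefore $\dist(\pi_{\allattr_{(n)}}(t),C')>r_{i-1}$, which gives
\[
\dist(\pi_{\allattr_{(n)}}(t),c')\le(1+\eps)^2\,\dist(\pi_{\allattr_{(n)}}(t),C').
\]
If $i=0$, the squared numerical cost is at most $(1+\eps)^2\frac{L^2}{81|\Q(\I)|^2}$. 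Summed over all $|\Q(\I)|$ tuples, and using $\frac{L^2}{9}\le\kmeans_{\opt(\bar\Q(\I))}(\bar\Q(\I))$, this contributes at most $\frac{(1+\eps)^2}{9|\Q(\I)|}\kmeans_{\opt(\bar\Q(\I))}(\bar\Q(\I))$, which is an $\eps$-fraction for $|\Q(\I)|$ large. The coverage argument preceding the lemma shows that every tuple is assigned at some $i$. Altogether,
\[
\sum_t\dist^2(t,\assigfunc(t))\le(1+\eps)^{O(1)}\Big(\kmeans_{C'}(\bar\Q(\I))+\sum_{j}\kmeans_{C_{A^{(j)}}}\big(\bar\pi_{A^{(j)}}(\Q(\I))\big)\Big).
\]

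\textbf{Link to $\opt$ (main obstacle).} I will show that the bracket is at most $\kmeans_{\opt(\Q(\I))}(\Q(\I))$, up to a $(1+\eps)$ factor. Write $\opt=\opt(\Q(\I))$. Then
\[
\kmeans_{\opt}(\Q(\I))=\sum_t\min_{o\in\opt}\Big(\dist^2\big(\pi_{\allattr_{(n)}}(t),\pi_{\allattr_{(n)}}(o)\big)+\sum_j\dist^2\big(\pi_{A^{(j)}}(t),\pi_{A^{(j)}}(o)\big)\Big).
\]
Each summand is at least the sum of the group-wise minima. So $\kmeans_{\opt}(\Q(\I))$ is at least the sum, over groups, of the $k$-means cost of the projected bag with centers $\pi_{\text{group}}(\opt)$, and each of those is at least the optimum for that projected bag. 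The categorical sets $C_{A^{(j)}}$ are optimal one-attribute solutions: with one-hot encoding, the nearest-center cost only counts mismatches, so taking the $k$ most frequent values is exact. For them the per-group comparison is exact.

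The delicate point is $C'$. I will invoke the guarantee under which $C'$ is built in the categorical section: $\kmeans_{C'}(\bar\Q(\I))\le(1+\eps)\kmeans_{\opt(\bar\Q(\I))}(\bar\Q(\I))$, obtained from the coreset-based algorithm of Section~\ref{sec:kmeans} applied to the projection under bag semantics. I expect this to be where care is needed, because it is the only place where the stated benchmark $\kmeans_{\opt(\Q(\I))}(\Q(\I))$, rather than $\kmeans_C$, enters. Combining the three bounds and rescaling $\eps$ by a constant yields the lemma.
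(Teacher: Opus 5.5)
Your decomposition and the numerical estimate follow the paper's argument:
\begin{itemize}
\item a tuple deactivated at radius $r_i$ with $i\ge 1$ survived the complete round at $r_{i-1}$, so its numerical distance to its assigned center is at most $(1+\eps)^2\dist(\pi_{\allattr_{(n)}}(t),C')$;
\item the categorical coordinates go to their nearest centers exactly;
\item tuples deactivated at $r_0$ add only $O(L^2/|\Q(\I)|)$ in total.
\end{itemize}
You are more careful than the paper on the last point, since the paper drops the categorical cost of those tuples. You should also fix a tie-breaking rule in the counters $u(i_1,\ldots,i_{d_c})$. With $0/1$ categorical distances, a value outside $C_{A^{(j)}}$ is equidistant from every center of $C_{A^{(j)}}$, so "each tuple becomes inactive once" does not by itself make $\assigfunc$ well defined. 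Since $C$ is a full product, what your argument actually proves is $\sum_{t}\dist^2(t,\assigfunc(t))\le(1+\eps)^{O(1)}\kmeans_C(\Q(\I))$ plus the additive term. Here $\kmeans_C(\Q(\I))=\kmeans_{C'}(\bar{\Q}(\I))+\sum_j\kmeans_{C_{A^{(j)}}}(\bar{\pi}_{A^{(j)}}(\Q(\I)))$.

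\textbf{The gap is the last step.} The bound $\kmeans_{C'}(\bar{\Q}(\I))\le(1+\eps)\kmeans_{\opt(\bar{\Q}(\I))}(\bar{\Q}(\I))$ that you plan to invoke is not available. $C'$ is the output of the relational $k$-means algorithm of Section~\ref{sec:kmeans} (Theorems~\ref{thm:kmeansconstant} and~\ref{thm:kmeansopt}). That algorithm only guarantees a constant factor: $84+\eps$, resp.\ $(4+\eps)\gamma$, and the categorical section itself works with $5\gamma$. The coreset is $(1+\eps)$-accurate, but the centers extracted from it are not $(1+\eps)$-optimal. Your projection argument is correct, but with the true guarantee it gives only $\kmeans_C(\Q(\I))\le 5\gamma\,\kmeans_{\opt(\Q(\I))}(\Q(\I))$.

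No better argument can rescue the inequality as stated from these ingredients. For $d_c=0$, $C=C'$ is a set of $k$ points of $\bar{\Q}(\I)$, and $\sum_t\dist^2(t,\assigfunc(t))\ge\kmeans_{C'}(\bar{\Q}(\I))\ge\kmeans_{\opt(\bar{\Q}(\I))}(\bar{\Q}(\I))$. The lemma would therefore force $C'$ itself to be a $(1+\eps)$-approximate discrete $k$-means solution. The paper's proof has the same hole: it passes from $(1+\eps)^4\sum_t\dist^2(t,C)$ to $(1+2\eps)^4\kmeans_{\opt(\Q(\I))}(\Q(\I))$ without justification.

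What is provable is the version benchmarked against $\kmeans_C(\Q(\I))$, and this is what Theorem~\ref{thm:kmeansCATopt} actually consumes through its ``$2\cdot 5\gamma$ instead of $5\gamma$'' accounting. To get it, absorb your additive term via $L^2/9\le\kmeans_{\opt(\bar{\Q}(\I))}(\bar{\Q}(\I))\le\kmeans_{C'}(\bar{\Q}(\I))\le\kmeans_C(\Q(\I))$. Rescaling $\eps$ then gives $\sum_t\dist^2(t,\assigfunc(t))\le(1+\eps)\kmeans_C(\Q(\I))$. Only after that should you apply your projection bound, which yields $(1+\eps)5\gamma\,\kmeans_{\opt(\Q(\I))}(\Q(\I))$, not $(1+\eps)\kmeans_{\opt(\Q(\I))}(\Q(\I))$.
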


\begin{proof}
Let $t \in \Q(\I)$ be a tuple assigned to a center in $C$ for some radius $r > \frac{L}{9|\Q(\I)|}$, and let $\Q_1(\I) \subseteq \Q(\I)$ be the set of such tuples. By the correctness of the RBBD-tree, $\pi_{\allattr_{(n)}}(t) \notin \bigcup_{c' \in C'} \mathcal{B}(c', r/(1+\eps))$, implying $\dist(\pi_{\allattr_{(n)}}(t), \pi_{\allattr_{(n)}}(\assigfunc(t))) \leq (1+\eps)^2 \dist(\pi_{\allattr_{(n)}}(t), C')$. Moreover, by construction, for every $A \in \allattr_{(c)}$, $\pi_A(\assigfunc(t)) \in C_A$ is the closest center of $\pi_A(t)$ in $C_A$. Thus, $\dist(t, \assigfunc(t)) \leq (1+\eps)^2 \dist(t, C)$.

Now let $t \in \Q(\I)$ be a tuple assigned to a center in $C$ for $r = \frac{L}{9|\Q(\I)|}$, and let $\Q_2(\I) \subseteq \Q(\I)$ denote these tuples. Then, we have $\dist(t, \assigfunc(t)) \leq (1+\eps)\frac{L}{9|\Q(\I)|}$.

Combining the two cases, we have 
$\sum_{t \in \Q(\I)} \dist^2(t, \assigfunc(t)) = \sum_{t \in \Q_1(\I)} \dist^2(t, \assigfunc(t)) + \sum_{t \in \Q_2(\I)} \dist^2(t, \assigfunc(t)) \leq (1+\eps)^4 \sum_{t \in \Q(\I)} \dist^2(t, C) + (1+\eps)^2 \frac{L^2}{81|\Q(\I)|} \leq (1+2\eps)^4 \kmeans_{\opt(\Q(\I))}(\Q(\I))$. 
Setting $\eps \leftarrow \eps/12$ yields the desired bound $\sum_{t \in \Q(\I)} \dist^2(t, \assigfunc(t)) \leq (1+\eps)\kmeans_{\opt(\Q(\I))}(\Q(\I))$.
\end{proof}

The overall running time of the algorithm is $\O(k^{d_c+1}N)$, since it executes $\O(k)$ oracle calls, each running in $\O(k^{d_c}N)$ time to compute and update the counters $u(i_1,\ldots,i_{d_c})$.

%\section{Proof of Theorem~\ref{thm:kmeansCATopt}}
%\label{appndx:proofCat}

\end{envrevone}

%% file: categorical.tex
%\vspace{-1em}
\begin{envrevone}
\section{Relational clustering on categorical and numerical attributes}
\label{sec:cat}
\newcommand{\adom}{\mathsf{adom}}

In this section, we show how our tree data structure can be integrated with the framework of~\cite{curtin2020rk} to obtain the fastest known algorithms for relational clustering in the presence of both numerical and categorical attributes.
%Up to this point, we have assumed that the domain of every attribute is numerical—either real or integer—formally, for every $A \in \allattr$, $\dom(A) = \Re \cup \mathbb{Z}$.
%We now generalize this setting. 
Let $\allattr_{(c)} \subset \allattr$ denote the set of categorical attributes, and let $\allattr_{(n)} = \allattr \setminus \allattr_{(c)}$ denote the set of numerical attributes, with $d_c = |\allattr_{(c)}|$ and $d_n = |\allattr_{(n)}|$.
For simplicity, we assume that each categorical attribute in $\allattr_{(c)}$ can take at most $z = O(N)$ distinct values.
In the next subsections, we first show how our clustering algorithms extend to join-project queries under bag semantics (ignoring categorical attributes from the clustering objective) and then we combine our results with the framework from~\cite{curtin2020rk} to derive a faster algorithm (than~\cite{curtin2020rk})  considering both the categorical and numerical attributes in the objective.

\vspace{-0.8em}
\subsection{Projecting categorical attributes away: join--project queries under bag semantics}
\label{subsec:bagsem}
\vspace{-0.3em}

Consider a scenario where the database schema includes categorical attributes $\allattr_{(c)}$, but the objective is to cluster the join results only with respect to their numerical attributes. 
Formally, for $X \subseteq \allattr$ and $Y \!\!\subseteq \!\!\Q(\I)$, let $\bar{\pi}_{X}(Y)$ be the multiset
$\bar{\pi}_X(Y)\!\! =\!\!\{ \pi_X(t) \!\mid\! t \!\in \!Y \}$.
The only difference from the standard projection operator is that $\bar{\pi}$ preserves duplicates (this is equivalent to the project operator in \emph{bag semantics}). 
Given a join query $\Q$ over a database instance $\I$, the goal is to cluster
$\bar{\Q}(\I) = \bar{\pi}_{\allattr_{(n)}}(\Q(\I))$,
that is, the projection of $\Q(\I)$ onto the numerical attributes $\allattr_{(n)}$ under bag semantics. 
%This is equivalent to clustering the result of a join--project query under bag semantics.

All our algorithms from Theorems~\ref{thm:kCenter} and~\ref{thm:kmeansconstant} naturally extend to this setting with the same approximation and time guarantees. 
The RBBD-tree is now defined only on the $d_n$ numerical attributes $\allattr_{(n)}$, and for every box (or box with a hole) $\rho$, all operations from Lemma~\ref{lem:Rects} can be performed on $\bar{\Q}(\I) \cap \rho$ in $O(N)$ time. 
For instance, sampling a tuple uniformly at random from $ \bar{\Q}(\I)\cap \rho$ is equivalent to sampling uniformly at random a tuple from
$\{ t \in \Q(\I) \mid \pi_{\allattr_{(n)}}(t) \in \rho \}$.
More generally, all RBBD-tree operations used by our algorithms, such as counting, reporting, or sampling tuples within a box, on the result of a projected join query $\bar{\Q}$ under bag semantics can be straightforwardly implemented on the corresponding join query $\Q$ under set semantics. 
Therefore, all our algorithms can be executed on join--project queries under bag semantics while preserving the same approximation and time guarantees.

\vspace{-0.2em}
\paragraph{Remark}
While our algorithms do not generally extend to join-project queries under set semantics, they remain applicable in cases where tuple multiplicities do not affect the objective value. Notably, this includes the relational $k$-center clustering problem (Section~\ref{sec:kCenter}) and the Relational Remote Edge problem (Appendix~\ref{appndx:DivFairProbs}), for which join-project queries under set and bag semantics produce equivalent instances.

\vspace{-0.5em}
\subsection{Clustering with both numerical and categorical attributes} \label{subsec:cat-domain}
\vspace{-0.2em}

We focus on the $k$-means clustering problem as in~\cite{curtin2020rk}; however, all our results can be extended to relational $k$-center and $k$-median clustering as well. 
In machine learning, a common approach for handling categorical variables is to apply \emph{one-hot encoding}, which converts each categorical variable into a binary representation. 
For example, consider a categorical attribute $A \in \allattr$ that can take $z$ distinct values. 
Assume an arbitrary but fixed ordering of these values. 
Then, the $j$-th value of $A$ can be represented by a binary vector with a $1$ in the $j$-th position and $0$ elsewhere.

\vspace{-0.2em}
\paragraph{Expensive algorithm}
In the relational setting we can replace each categorical attribute $A \in \allattr_{(c)}$ with $z$ binary attributes $A_{1}^{(\mathrm{bin})}, \ldots, A_{z}^{(\mathrm{bin})}$. 
Each tuple $t$ in a relation $R$ containing attribute $A$ is replaced by a new tuple $t'$ such that if $\pi_A(t)$ corresponds to the $j$-th value of $A$, then $\pi_{A_j^{(\mathrm{bin})}}(t') = 1$ and $\pi_{A_h^{(\mathrm{bin})}}(t') = 0$ for every $h \in [z] \setminus \{j\}$. 
After this transformation, the join query $\Q$ involves $m$ relations and a total of $d_n + d_c \cdot z$ numerical attributes.
Thus, the distance function $\dist(\cdot, \cdot)$ remains a Euclidean metric over numerical attributes. 
By Theorem~\ref{thm:kmeansconstant}, we obtain a $5\gamma$-approximation algorithm (for $\eps = 1$) for the relational $k$-means problem involving both numerical and categorical attributes, with a data complexity of
$\O(2^{z} k N + \timeMeans_\gamma(2^{z} k))$.
When $z = O(1)$, this matches the running time established in Theorem~\ref{thm:kmeansconstant}. 
However, in practice, $z$ may be large. For instance, a categorical attribute may represent countries, colors, or product categories making the above approach impractical due to its exponential dependence on $z$. 
Next, we show an alternative algorithm for relational $k$-means clustering that avoids explicit one-hot encoding and whose running time does not depend exponentially on $z$.

\paragraph{Known algorithm~\cite{curtin2020rk}}
First, all dangling tuples are removed from the database. 
For every attribute $A \in \allattr$, let $S_A$ denote the set of all values of $A$ that appear in the database instance $\I$; that is, $S_A = \adom_\I(A)$.
%, where $\adom_\I(A)$ is the active domain of $A$ in $\I$. 
For each value $x \in S_A$, they compute the weight 
$w(x) = \sum_{t \in \Q(\I)} 1[\pi_A(t) = x]$,
representing the number of join results whose projection on attribute $A$ equals $x$. 
Using Yannakakis' algorithm, all weights $w(x)$ over all values $x\in\adom(A)$ and attributes $A\in \allattr$ can be computed in $\O(N)$ time.
Then, for each attribute $A \in \allattr$, two cases are distinguished. 
If $A \in \allattr_{(c)}$, they apply a simple and efficient procedure to compute the optimal $k$-means clustering of the weighted categorical set $S_A$ in $\O(N)$ time. 
If $A \in \allattr_{(n)}$, they invoke~\cite{gronlund2017fast, wu1991optimal, fleischer2006online} to compute the optimal $k$-means clustering of the weighted numerical set $S_A$ in $\O(kN)$ time. 
In either case, let $C_A$ denote the resulting set of $k$ centers for attribute $A$.
After computing $C_A$ for all attributes $A \in \allattr$, they construct the grid 
$C = \bigtimes_{A \in \allattr} C_A$,
which contains all possible combinations of centers across attributes. 
Note that $|C| = O(k^d)$, where $d = |\allattr|$ is the total number of attributes. 
For each center $p \in C$, they compute a weight $\omega(p)$, defined as the number of tuples in $\Q(\I)$ that have $p$ as their closest center. 
Using the techniques of Lemma~\ref{lem:Rects}, each weight $\omega(p)$ can be computed in $\O(N)$ time.
It is shown that the weighted set $C$ 
(with the weight function $\omega$) 
defines a coreset for the relational $k$-means clustering. 
Running a $\gamma$-approximation algorithm for $k$-means 
%(assuming both numerical and categorical attributes)
on the weighted set $C$ yields a set of $k$ centers $\ret$ such that
$\kmeans_\ret(\Q(\I)) \leq (2\sqrt{\gamma} + 1)^2 \, \kmeans_{\opt(\Q(\I))}(\Q(\I))$.
The running time of the algorithm is 
$\O(k^d N + \timeMeans_\gamma(k^d))$.

\paragraph{Our algorithm}
We design a faster algorithm by leveraging our new results to handle the numerical attributes more efficiently. 
We begin by processing the categorical attributes following the approach of~\cite{curtin2020rk}: 
For each $A \in \allattr_{(c)}$, we construct $S_A$ and compute the weight $w(x)$ for every $x \in S_A$ in the same manner as~\cite{curtin2020rk}. 
We then apply the straightforward algorithm for optimal $k$-means clustering on the weighted set $\adom(A)$ and let $C_A$ denote the optimal set of centers.
Next, we process the numerical attributes $\allattr_{(n)}$. 
Unlike~\cite{curtin2020rk}, we process all numerical attributes simultaneously. 
Specifically, we execute the relational $k$-means algorithm from Theorem~\ref{thm:kmeansconstant} while skipping the categorical attributes, i.e., we run the algorithm on 
$\bar{q}(\I) = \bar{\pi}_{\allattr_{(n)}}(\Q(\I))$,
as described in Appendix~\ref{subsec:bagsem}. 
Let $C'$ be the resulting set of $k$ centers. 
It is important to note that we do \emph{not} remove duplicates in $\bar{q}(\I)$, since our algorithm inherently accounts for the correct weighting of the tuples. 
In particular, running $k$-means clustering directly on $\bar{q}(\I)$
%(without explicitly computing it) 
is equivalent to running $k$-means clustering on the weighted set $\pi_{\allattr_{(n)}}(\Q(\I))$ with weights
$w(p) = \sum_{t \in \Q(\I)} 1[\pi_{\allattr_{(n)}}(t) = p]$.

We then construct the set of candidate centers as 
$C = C' \times \left( \bigtimes_{A \in \allattr_{(c)}} C_A \right)$.
Unfortunately, unlike~\cite{curtin2020rk}, we cannot directly compute, for each $p \in C$, the exact number of tuples in $\Q(\I)$ that have $p$ as their closest center in $C$. 
Instead, we use our ideas from the RBBD-tree construction in the relational setting to approximately count the number of tuples that are closest to each center $p \in C$. 
%The idea is closely related to the technique used in our algorithm from Section~\ref{sec:kmeans}.
%More precisely, 
For every $p \in C$, we compute an approximate weight
$\bar{\omega}(p) = |\{ t \in \Q(\I) \mid \assigfunc(t) = p \}|$,
where $\assigfunc : \Q(\I) \rightarrow C$ is a function assigning each tuple $t \in \Q(\I)$ to a center in $C$, satisfying
$\sum_{t \in \Q(\I)} \dist^2(t, \assigfunc(t)) 
    \leq 2 \, \kmeans_{\opt(\Q(\I))}(\Q(\I))$.
This is equivalent to the process of first clustering the tuples in $\bar{\Q}(\I)$ according to Theorem~\ref{thm:kmeansconstant} with an approximation factor of $2 \cdot 5\gamma$ (instead of $5\gamma$), and subsequently computing the true weights $\omega(p)$ (recall that $\omega(p)$ is the number of tuples in $\Q(\I)$ that have $p$ as their closest center). 
We show the details of computing the weight $\bar{\omega}(\cdot)$ in Appendix~\ref{appndx:cat}. 
Finally, we run a $k$-means clustering algorithm 
%(supporting both numerical and categorical attributes)
on the weighted set $C$, obtaining a set $\ret$ of $k$ final centers.

\begin{theorem}
\label{thm:kmeansCATopt}
            Given an acyclic join query $\Q$ with $d_n$ numerical and $d_c$ categorical attributes, a database instance $\I$ of size $N$, and a parameter $k$, there exists an algorithm that computes a set $\ret\subseteq \Q(\I)$ of $k$ points such that
            $\kmeans_{\ret}(\Q(\I))\leq  \left((\sqrt{10}+1)\sqrt{\gamma}+\sqrt{10}\gamma\right)^2 \!\!\kmeans_{\opt(\Q(\I))}(\Q(\I))$ in $\O(k^{d_c+1}N+\timeMeans_\gamma(k^{d_c+1}))$ time, with probability at least $1-\frac{1}{N}$.
 
        %    , with probability at least $1-\frac{1}{N}$. The running time of the algorithm is $\O(k^{d_c+1}N+\timeMeans_\gamma(k^{d_c+1}))$
           % , with probability at least $1-\frac{1}{N}$.
\end{theorem}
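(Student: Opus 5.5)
We reduce the statement to the analysis of~\cite{curtin2020rk}, whose guarantee is generic. If $C$ is a product grid of candidate centers and $\assigfunc:\Q(\I)\rightarrow C$ is an assignment with $\sum_{t}\dist^2(t,\assigfunc(t))\leq \alpha\cdot\kmeans_{\opt(\Q(\I))}(\Q(\I))$, then the weighted set $(C,\bar\omega)$ with $\bar\omega(p)=|\assigfunc^{-1}(p)|$ behaves like a coreset up to an additive ``moving'' cost. The plan has three steps. First, bound the cost of the grid assignment. Second, transfer a $\gamma$-approximate solution on $(C,\bar\omega)$ back to $\Q(\I)$ through the triangle inequality. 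Third, account for the running time.

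\textbf{Step 1 (grid cost).} The squared Euclidean distance splits additively over attributes, so for any $t$ we have $\dist^2(t,C)=\dist^2(\pi_{\allattr_{(n)}}(t),C')+\sum_{A\in\allattr_{(c)}}\dist^2(\pi_A(t),C_A)$.
\begin{itemize}
\item Summing over $t$, the categorical part is at most $\sum_A \kmeans_{\opt_A}(S_A)\leq \kmeans_{\opt(\Q(\I))}(\Q(\I))$. This holds because each $C_A$ is optimal for the weighted one-dimensional projection, and projecting $\opt(\Q(\I))$ onto $A$ gives a feasible $k$-set there.
\item For the numerical part, Theorem~\ref{thm:kmeansconstant} (run on $\bar{\Q}(\I)$ under bag semantics, Appendix~\ref{subsec:bagsem}) with a final coreset step gives cost at most $5\gamma$ times the optimum of $\bar{\Q}(\I)$. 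That optimum is in turn at most $\kmeans_{\opt(\Q(\I))}(\Q(\I))$.
\item The approximate assignment lemma of Appendix~\ref{appndx:cat} loses at most a further factor $(1+\eps)$, or a factor $2$ as stated in the section.
\end{itemize}
Together these give $\sum_t\dist^2(t,\assigfunc(t))\leq 10\gamma\cdot\kmeans_{\opt}$, up to constants absorbed into the expression.

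\textbf{Step 2 (transfer).} Let $\ret$ be the $\gamma$-approximate $k$-means solution on $(C,\bar\omega)$. Write $D=\big(\sum_t\dist^2(t,\assigfunc(t))\big)^{1/2}$, $\mathrm{OPT}=\kmeans_{\opt(\Q(\I))}(\Q(\I))$, and $\ret^\ast=\opt(\Q(\I))$.
\begin{itemize}
\item By the triangle inequality in $\ell_2$ over the vector indexed by $t$, $\kmeans_\ret(\Q(\I))^{1/2}\leq D+\big(\sum_t\dist^2(\assigfunc(t),\ret)\big)^{1/2}$.
\item The last sum equals $\kmeans_\ret(C,\bar\omega)$, which is at most $\gamma\,\kmeans_{\ret^\ast}(C,\bar\omega)$.
\item Applying the triangle inequality again, $\kmeans_{\ret^\ast}(C,\bar\omega)^{1/2}\leq D+\mathrm{OPT}^{1/2}$.
\end{itemize}
Hence $\kmeans_\ret^{1/2}\leq D+\sqrt{\gamma}(D+\mathrm{OPT}^{1/2})$. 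Plugging in $D\leq\sqrt{10\gamma}\,\mathrm{OPT}^{1/2}$ gives $(\sqrt{10\gamma}+\sqrt\gamma\sqrt{10\gamma}+\sqrt\gamma)^2\,\mathrm{OPT}=((\sqrt{10}+1)\sqrt\gamma+\sqrt{10}\gamma)^2\,\mathrm{OPT}$, which is exactly the claimed factor.

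One technicality remains. The centers in $C$ may not lie in $\Q(\I)$, while the statement requires $\ret\subseteq\Q(\I)$. I would handle this as in the rest of the paper, by taking representatives or by treating the factor as for the unconstrained version. This is the point I expect to need care.

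\textbf{Step 3 (running time).}
\begin{itemize}
\item The categorical centers cost $\O(N)$ per attribute.
\item Theorem~\ref{thm:kmeansconstant} with coreset costs $\O(kN)$.
\item The weight computation in Appendix~\ref{appndx:cat} performs $\O(k)$ oracle calls, each maintaining $k^{d_c}$ counters at $O(N)$ apiece, for $\O(k^{d_c+1}N)$ in total.
\item Clustering $C$, with $|C|\leq k^{d_c+1}$, costs $\timeMeans_\gamma(k^{d_c+1})$.
\end{itemize}
The failure probability is controlled by a union bound over the high-probability guarantees of the oracles and of Theorem~\ref{thm:kmeansconstant}, giving at least $1-\frac1N$ after adjusting constants. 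The main obstacle is Step 1: the approximate assignment must make the numerical and categorical coordinates independently near-optimal, i.e., $\assigfunc(t)$ must be exactly closest on each categorical coordinate and approximately closest in $C'$. This is what the per-node counters $u(i_1,\ldots,i_{d_c})$ ensure.
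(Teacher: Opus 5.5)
Your proposal follows the paper's proof. The paper likewise treats $C'$ as a $2\cdot 5\gamma$-approximate clustering of $\bar{\Q}(\I)$ combined with exact categorical centers, then invokes the triangle-inequality transfer of~\cite{curtin2020rk}, which you write out explicitly, and does the same running-time accounting. To get exactly $10\gamma$ rather than $10\gamma+O(1)$ under the factor-$2$ loss, bound the numerical and categorical parts jointly instead of each separately by the full optimum: with $\mathrm{OPT}_n$ the optimum of $\bar{\Q}(\I)$ and $\mathrm{OPT}_A$ that of $S_A$, projecting $\opt(\Q(\I))$ coordinatewise gives $\mathrm{OPT}_n+\sum_{A\in\allattr_{(c)}}\mathrm{OPT}_A\leq\kmeans_{\opt(\Q(\I))}(\Q(\I))$, hence $10\gamma\,\mathrm{OPT}_n+\sum_{A}\mathrm{OPT}_A\leq 10\gamma\,\kmeans_{\opt(\Q(\I))}(\Q(\I))$ (the $\ret\subseteq\Q(\I)$ issue you flag, which also affects using $\opt(\Q(\I))$ as a comparator on $(C,\bar\omega)$, is not addressed in the paper's proof either).
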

\begin{proof}
%[Proof of Theorem~\ref{thm:kmeansCATopt}]
    From Appendix~\ref{appndx:cat}, we have that 
$\sum_{t \in \Q(\I)} \dist^2(t, \assigfunc(t)) \leq 2 \, \kmeans_{\opt(\Q(\I))}(\Q(\I))$.
This is equivalent to the process of first clustering $\bar{\Q}(\I)$ according to Theorem~\ref{thm:kmeansconstant} with an approximation error of $2 \cdot 5\gamma$ (instead of $5\gamma$) and then computing the exact weights $\omega(p)$. 
Following the analysis in~\cite{curtin2020rk}, the overall approximation factor becomes 
$((\sqrt{10} + 1)\sqrt{\gamma} + \sqrt{10}\gamma)^2$.

For each categorical attribute $A \in \allattr_{(c)}$, the set of centers $C_A$ is computed in $\O(N)$ time, as discussed before. 
The algorithm from Theorem~\ref{thm:kmeansconstant}, when applied to $\bar{\pi}_{\allattr_{(n)}}(\Q(\I))$, runs in $\O(kN + \timeMeans_\gamma(k))$ time with high probability. 
Since $|C| = O(k^{d_c + 1})$, all weights $\bar{\omega}(p)$ for $p \in C$ is computed in $\O(k^{d_c + 1})$ time (see Appendix~\ref{appndx:cat}). 
Finally, the $k$-means clustering on the weighted set $C$ is executed in $\O(\timeMeans_\gamma(k^{d_c + 1}))$ time.
Overall, the total running time of our algorithm is
$\O(k^{d_c + 1}N + \timeMeans_\gamma(k^{d_c + 1}))$, with high probability.
\end{proof}
\end{envrevone}

%% file: apnndxGonzalez.tex
\section{Extensions to other optimization problems}
\label{appndx:gonzalez}

Using our machinery from the previous sections, we show how we can implement efficient algorithms for various optimization problems in diversity and fairness. First, we show how to implement an oracle that computes (approximately) the farthest point in $\Q(\I)$ from a given set of points $S$. Then, we show an $\O(kN)$ time approximate implementation of the well-known Gonzalez algorithm~\cite{gonzalez1985clustering} in the relational setting. Gonzalez algorithm is used to solve various problems in diversity and fairness.
%Finally, we show how the RBBD tree can be used to solve the set cover problem in the relational setting.

We first start with some definitions.

\paragraph{Gonzalez algorithm}
Gonzalez algorithm~\cite{gonzalez1985clustering} is a well-known and powerful algorithm that is used to derive constant approximation algorithms for multiple combinatorial optimization problems such as clustering and diversity problems~\cite{ravi1994heuristic, tamir1991obnoxious}. The algorithm is quite simple in the standard computational setting: Let $P$ be a set of $n$ points in $\Re^d$ and let $C = \emptyset$ be an initially empty set to be returned. Gonzalez algorithm runs in $k$ iterations. In each iteration, we find the point $\argmax_{p\in P\setminus C}\dist(p,C)$ and add it to the set $C$. A naive implementation of the Gonzalez algorithm runs in $O(nk)$ time. In the Euclidean space, the running time can be improved to $O(n\log k)$.
%an approximate implementation of the Gonzalez algorithm (where in each iteration a point $p\in P$ is found such that $\dist(p,C)\geq (1-\eps)\max_{p\in P\setminus C}\dist(p,C)$) runs in $O(n\log n)$ time.
%As mentioned in~\cite{}, the Gonzalez algorithm cannot be  implemented on relational data since it is equivalent to an instance of $\mathsf{FAQ-AI}(k-1)$~\cite{}.

\begin{definition}
    [Relational implementation of Gonzalez algorithm]
    Given a database instance $\I$, a join query $\Q$, and a positive parameter $k$, the goal is to implement the Gonzalez algorithm for $k$ iterations. Namely, given a set $\ret$ with $|\ret|<k$, compute a point $p\in \Q(\I)$ such that $\dist(p,\ret)=\max_{q\in \Q(\I)\setminus \ret}\dist(q,\ret)$.
\end{definition}
%\aryan{I think this is a little bit confusing. Maybe it is better to write it as given a set, we only want to find the farthest point from it. Later, we can explain that having this oracle (which gives the farthest point outside of a set), it is straightforward to implement the Gonzalez algorithm in the relational settings.}
We also define the \emph{$\eps$-approximate implementation of the Gonzalez algorithm}. In every iteration of a relational $\eps$-approximate implementation of the Gonzalez algorithm, a point $p\in \Q(\I)\setminus \ret$ is added in $\ret$ such that $\dist(p,\ret)\geq (1-\eps)\max_{q\in \Q(\I)\setminus \ret}\dist(q,\ret)$.
%\aryan{I think again if we define it as an oracle that finds an approximation of the farthest point from a set, it would be better.}

\input{farthest}
\input{gonzalez}

%% file: farthest.tex
\subsection{Compute the farthest point}
%Using the machinery from the previous sections, we show how we can run an approximate version of the Gonzalez algorithm. We focus on one iteration of the Gonzalez algorithm in the relational setting. 
Let $S\subseteq \Q(\I)$ be a subset of $\Q(\I)$ with $|S|<k$. The goal is to compute a point $p\in\Q(\I)\setminus S$ such that $\dist(p,S)=\max_{t\in \Q(\I)\setminus S}\dist(t,S)$. We propose an algorithm that runs in $\O(kN)$ time to solve this problem approximately, i.e., we compute a point $p\in\Q(\I)\setminus S$ such that $\dist(p,S)\geq (1-\eps)\max_{t\in \Q(\I)\setminus S}\dist(t,S)$, for an arbitrarily small constant $\eps$.
%This procedure is executed $O(k)$ times in during the Gonzalez algorithm.
%Thus, we describe an $\O(k^2N)$ implementation of an $\eps$-approximate version of Gonzalez algorithm in the relational setting.

\paragraph{High level idea}
The high-level idea is to guess $\max_{t\in \Q(\I)}\dist(t,S)$ running a binary search similar to the binary search in Section~\ref{sec:kCenter}. For every distance $r$ we guess, we check whether the balls $\bigcup_{s\in S}B(s,r)$ cover all points in $\Q(\I)$, using the RBBD data structure. If yes, then we continue the binary search for smaller values of $r$. If not, i.e., there are still active nodes in the RBBD tree, we continue the binary search for larger values of $r$. %In the end of this procedure we will end up with a distance $r'$ such that $\bigcup_{s\in S}B(s,(1+\eps)r')$ covers all points in $\Q(\I)$ while $\bigcup_{s\in S}B(s,r')$ does not cover all points in $\Q(\I)$.
From the last iteration where $\bigcup_{s\in S}B(s,r)$ does not cover all points in $\Q(\I)$, 
we add in $S$ any point from $\Q(\I)$ that does not have an inactive ancestor in the RBBD tree. The distance from this point to $S$ will be a good approximation of the maximum distance of any point in $\Q(\I)$ from $S$.

\paragraph{Algorithm}
Let $S\subseteq \Q(\I)$.
We construct the root node $\root$ of an RBBD tree $\tree$ over $\Q(\I)$ as described in Section~\ref{subsec:RBBDrel}.
%For every constructed node $u$ of $\tree$ we store $u.a$ which is $1$ if $u$ is active, and $0$ otherwise. We also store $u.rep$ which is a representative active point $u.rep\in \square_u\cap \Q(\I)$. A point is active if it lies in a leaf node of the RBBD tree over $\Q(I)$ and does not have any inactive ancestor node.

We run the binary search as in Section~\ref{sec:kCenter}.
The only difference is that if $\hat{r}$ is the $z$-th smallest distance returned by~\cite{salowe1989infinity}, we run another binary search in the range $[\hat{r}, \sqrt{d}\hat{r}]$. Notice that for any pair of points $t,q\in \Re^d$, $||t-q||_\infty\leq ||t-q||_2\leq \sqrt{d}||t-q||_\infty$. Hence, if we discretize $[\hat{r}, \sqrt{d}\hat{r}]$ by a relative step $(1+\eps)$, i.e., $Y_{\hat{r}}=\{\hat{r}(1+\eps)^0, \hat{r}(1+\eps)^1, \hat{r}(1+\eps)^2,\ldots, \hat{r}(1+\eps)^{\log(\sqrt{d})/\eps}\}$ and we run a binary search on $Y_{\hat{r}}$ in the end the binary search will end up at a distance $r^*$ where the algorithm satisfies a property at distance $r^*$ but does not satisfy the same property at distance $(1+\eps)r^*$. This is related to the binary search we run in Appendix~\ref{apndx:simple2approx}.

Let $r$ be one of the distances returned by the binary search.
For every point $s\in S$, we run $\tree.\inactive(s,r)$.
After repeating the procedure above for every $s\in S$, we call $\tree.\representative()$. If it returns $\emptyset$ we continue the binary search with smaller values of $r$, otherwise we continue the binary search with larger values of $r$.
Before we continue in the next iteration of the binary tree we modify $\tree$ making all its nodes active again and setting the initial representative points on their nodes.
In the end of this binary search, let $r'$ be the last (largest) distance we found where
$p=\tree.\representative()\neq \emptyset$. We return $p$.

\paragraph{Correctness}
\begin{lemma}
$\dist(p,S)\geq \frac{1}{(1+\eps)^2}\max_{t\in \Q(\I)\setminus S}\dist(t,S)$.%, with probability at least $1-\frac{1}{N}$.
\end{lemma}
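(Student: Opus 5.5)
Let $D^*=\max_{t\in \Q(\I)\setminus S}\dist(t,S)$. The plan has two parts: show that the binary search ends at a radius $r'$ with $r'\geq D^*/(1+\eps)$, and then lower-bound $\dist(p,S)$ by $r'/(1+\eps)$. Both parts rest on the two-sided guarantee of Lemma~\ref{lem:oracle1}. After calling $\tree.\inactive(s,r)$ for every $s\in S$ on a freshly reset tree:
\begin{itemize}
\item every point of $\Q(\I)$ within distance $r$ of some $s\in S$ is inactive;
\item every point that became inactive lies within distance $(1+\eps)r$ of $S$.
\end{itemize}
By Lemma~\ref{lem:oracle3}, $\tree.\representative()$ then returns a point if and only if some point is still active.

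\emph{Monotonicity.} I define the predicate $\Phi(r)$: after the $\tau\!$-style inactivation with radius $r$, $\tree.\representative()\neq\emptyset$. It is not exactly monotone because of the $(1+\eps)$ slack, but it is sandwiched:
\begin{itemize}
\item if $r\geq D^*$, every point of $\Q(\I)\setminus S$ lies within distance $r$ of $S$, and points of $S$ are at distance $0$. So every point is inactive and $\Phi(r)$ is false.
\item if $(1+\eps)r<D^*$, the farthest point $t^*$ is at distance greater than $(1+\eps)r$ from $S$. By the second bullet it stays active, so $\Phi(r)$ is true.
\end{itemize}

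\emph{Locating $r'$.} I will argue that the search reaches a discretization level of the grid $Y_{\hat r}$ that contains a value within a $(1+\eps)$ factor of $D^*$. The outer search is over $\ell_\infty$ pairwise distances; the pair realizing $D^*$ has $\ell_\infty$ distance $\hat r$ with $D^*\in[\hat r,\sqrt d\hat r]$. The inner search on $Y_{\hat r}$ ends at consecutive values $r'$ and $(1+\eps)r'$ with $\Phi(r')$ true and $\Phi((1+\eps)r')$ false. Since $\Phi((1+\eps)r')$ is false, the second sandwich bullet gives $(1+\eps)^2r'\geq D^*$. This yields $r'\geq D^*/(1+\eps)^2$, not $D^*/(1+\eps)$. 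I will write the argument with whatever constant comes out and absorb the discrepancy by rescaling $\eps$ (or by using the grid step to tighten one factor).

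\emph{Bounding $\dist(p,S)$.} The returned point $p$ is active after inactivation at radius $r'$. By the first bullet, $\dist(p,S)>r'$, which is at least $D^*/(1+\eps)^2$ by the previous step. Also $p\notin S$, because points of $S$ lie at distance $0\leq r'$ and are therefore inactive. This gives the claimed bound.

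\emph{Main obstacle.} The delicate step is the binary-search bookkeeping. $\Phi$ is only approximately monotone, and the outer search is on $\ell_\infty$ distances while the inner one is on a geometric grid, so I must argue that the final $r'$ really satisfies $\Phi((1+\eps)r')$ false rather than relying on exact monotonicity. I will handle this by noting that the search maintains a pair (last true, first false) of adjacent grid values, which is exactly what the two sandwich bullets need. Randomness enters only through the high-probability guarantees of the RBBD tree queries (Lemma~\ref{lem:oracle1}), and I will union-bound over the $O(\log N)$ search steps.
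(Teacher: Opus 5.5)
Your argument is correct and is essentially the paper's proof. Since $p$ is still active at radius $r'$, $\dist(p,S)>r'$; since every point is inactive at the adjacent grid radius $(1+\eps)r'$, $\max_{t\in\Q(\I)\setminus S}\dist(t,S)\le(1+\eps)^2r'$. This relies on the same adjacency property of the binary search that the paper asserts. No rescaling of $\eps$ is needed, because $\dist(p,S)>r'\ge D^*/(1+\eps)^2$ is already exactly the stated bound (your opening plan, $r'\ge D^*/(1+\eps)$ and $\dist(p,S)\ge r'/(1+\eps)$, is not what you actually carry out).
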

\begin{proof}
The algorithm executes $\O(k)$ oracles from Section~\ref{sec:reloracles}. %Since each oracle is correct with probability at least $1-\frac{1}{N^{m+2}}$, the algorithm is correct with probability at least $1-\frac{1}{N}$.
%The query procedure and the update operations (on representative points and active nodes) of the RBBD tree is correct from our discussion in the previous sections.
If $r'$ was the largest distance found by the binary search such that $\tree.\representative\neq\emptyset$, we have that $\dist(p,S)>r'$.

Recall, that the binary search (as defined in Appendix~\ref{appndx:kCenter}) is executed on the $\ell_\infty$ pairwise distances of $\Q(\I)$ and then discretize the ranges $[r,\sqrt{d}r]$ by a factor $(1+\eps)$. By definition, in the execution of binary search, for $r''=(1+\eps)r'$ we had that $\tree.\representative()=\emptyset$, otherwise $r'$ would not be the largest distance we found with $\tree.\representative()\neq\emptyset$. By the properties of RBBD tree from Section~\ref{sec:RBBD}, we have that
every point $q\in\Q(\I)$ is inactive, i.e., there exists $s\in S$ such that $q\in \bigcup_{u\in\canonical(s,r'')}(\square_u\cap \Q(\I))$. Since the RBBD tree approximates the balls within a factor $(1+\eps)$, we have that for every $q\in \Q(\I)$, $\dist(q,S)\leq (1+\eps)r''$. Hence, $\max_{t\in \Q(\I)\setminus S}\dist(t,S)\leq (1+\eps)r''$.
We conclude that $\dist(p,S)>r'=\frac{r''}{1+\eps}\geq \frac{1}{(1+\eps)^2}\max_{t\in \Q(\I)\setminus S}\dist(t,S)$.
\end{proof}

Notice that $\frac{1}{(1+\eps/2)^2}\geq 1-\eps$, so if we set $\eps\leftarrow\eps/2$, we get $\dist(p,S)\geq (1-\eps)\max_{t\in \Q(\I)\setminus S}\dist(t,S)$.%, with high probability.

\begin{lemma}
The running time of the algorithm is $O\left(k(\eps^{-d+1}+\log N)(N+\eps^{-2}\log^2 N)\log N\right)$, with probability at least $1-\frac{1}{N}$.
\end{lemma}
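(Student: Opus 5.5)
The bound is obtained by multiplying three counts: binary-search steps, oracle calls per step, and cost per oracle call.

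\textbf{Number of binary-search steps.} The search has two layers.
\begin{itemize}
\item The outer layer runs over indices $z$ into the $\ell_\infty$ pairwise distances of the auxiliary set $V$ from Appendix~\ref{appndx:kCenter}. Since $|V|=O(N)$, it takes $O(\log |V|^2)=O(\log N)$ steps. Each step spends $O(N\log N)$ time computing the $z$-th smallest distance via~\cite{salowe1989infinity}.
\item For each outer value $\hat r$, the inner layer searches the discretized set $Y_{\hat r}$. This set has $O(\eps^{-1}\log d)$ elements, so the inner search costs $O(\log(\eps^{-1}\log d))=O(1)$ steps for constant $\eps$ and $d$.
\end{itemize}
Altogether the algorithm tests $O(\log N)$ candidate radii $r$.

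\textbf{Cost of testing one radius.} For a fixed $r$ we call $\tree.\inactive(s,r)$ once per $s\in S$, which is $|S|<k$ calls, followed by one call to $\tree.\representative()$. By Lemma~\ref{lem:oracle1}, each $\inactive$ call costs $O((\eps^{-d+1}+\log N)(N+\eps^{-2}\log^2 N))$, and by Lemma~\ref{lem:oracle3} the $\representative$ call costs $O(1)$. One test therefore costs $O(k(\eps^{-d+1}+\log N)(N+\eps^{-2}\log^2 N))$.

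The one delicate point is resetting $\tree$ to all-active between tests. I would charge this against the work already done rather than bound it separately. Every node of $\tree$ was either built during some $\inactive$ call or touched by one, so a reset that walks the currently built tree costs no more than the construction work already paid for. Alternatively, the counts and representatives can be restored from stored initial values in time linear in $|\tree|$, which is again dominated by the construction. Under either view, the tree size at the time of a reset is $O(k(\eps^{-d+1}+\log N))$ times the per-test work, so the reset adds only a constant factor.

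\textbf{Total time.} Multiplying the test cost by the $O(\log N)$ radii gives $O(k(\eps^{-d+1}+\log N)(N+\eps^{-2}\log^2 N)\log N)$. The additive $O(N\log^2 N)$ from the outer search is absorbed, since $k\ge 1$.

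\textbf{Probability.} Each oracle's running time holds with probability at least $1-N^{-(m+3)}$. There are $O(k\log N)$ oracle calls, and $k\le|\Q(\I)|=O(N^m)$. A union bound therefore gives total failure probability $O(kN^{-(m+3)}\log N)\le 1/N$.

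The step I expect to need the most care is justifying that the per-oracle time bound still holds across repeated resets and reuse of the partially built tree. My plan is to argue that the RBBD nodes, once constructed, are fixed and independent of the activity flags. Each query then visits the same $O(\eps^{-d+1}+\log N)$ nodes that the height and query analysis of Theorem~\ref{thm:RBBD} and Lemma~\ref{lem:RBBDhelp2} guarantees. Consequently the per-call bound of Lemma~\ref{lem:oracle1} applies verbatim.
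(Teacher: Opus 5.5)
Your proposal is correct and follows essentially the same argument as the paper: $O(\log N)$ search steps, each paying $O(N\log N)$ for the distance selection plus fewer than $k$ calls to $\tree.\inactive$ (Lemma~\ref{lem:oracle1}) and one $O(1)$ call to $\tree.\representative()$, with the additive $O(N\log^2 N)$ absorbed into the main term. You handle the tree resets and the union bound explicitly, whereas the paper leaves them implicit. For the resets, the clean accounting is that $|\tree|=O(k(\eps^{-d+1}+\log N)\log N)$ throughout, so all $O(\log N)$ resets together cost $O(k(\eps^{-d+1}+\log N)\log^2 N)$, which the main term dominates; your first justification alone, charging each reset to all construction work done so far, would only give an extra $\log N$ factor.
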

\begin{proof}
    The binary search runs in $O(\log N)$ iterations. In order to compute a distance in each step of the binary search we spend $O(N\log N)$ time as shown in Appendix~\ref{appndx:kCenter}. Hence all values for the binary search is computed in $O(N\log^2 N)$ time. For each step of the binary search, the $\tree.\inactive(s,r)$ oracle runs in $O((\eps^{-d+1}+\log N)(N+\eps^{-2}\log^2 N))$ with probability at least $1-\frac{1}{N^{m+2}}$ and the $\tree.\representative()$ runs in $O(1)$ time.
    Hence, the overall running time of our algorithms is $O(N\log^2 N+k(\eps^{-d+1}+\log N)(N+\eps^{-2}\log^2 N)\log N)=O(k(\eps^{-d+1}+\log N)(N+\eps^{-2}\log^2 N)\log N)$, with probability at least $1-\frac{1}{N}$.
\end{proof}

\begin{theorem}
\label{thm:iterGonz}
   Given an acyclic query $\Q$, a database instance $\I$ of size $N$, an arbitrarily small constant parameter $\eps$, and a set $S\subseteq \Q(\I)$ with $|S|<k$, a point $p\in\Q(\I)\setminus S$ can be computed such that $\dist(p,S)\geq (1-\eps)\max_{t\in \Q(\I)\setminus S}\dist(t,S)$ in $O(kN\log^2 N + k\log^4 N)$ time, with probability at least $1-\frac{1}{N}$.
\end{theorem}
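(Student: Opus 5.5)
The plan is to obtain Theorem~\ref{thm:iterGonz} by combining the two lemmas just proved. The algorithm is the binary search of Appendix~\ref{appndx:kCenter} over the $\ell_\infty$ pairwise distances of the one-dimensional set $V$, refined by a second binary search on the discretized range $Y_{\hat r}$. For each candidate $r$ we reset $\tree$, call $\tree.\inactive(s,r)$ for all $s\in S$, and test $\tree.\representative()$. The output $p$ is the representative obtained at the largest $r'$ for which some point is still active.

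\textbf{Approximation.} The correctness lemma gives $\dist(p,S)>r'\ge \frac{1}{(1+\eps)^2}\max_{t\in\Q(\I)\setminus S}\dist(t,S)$. Running the whole procedure with $\eps\leftarrow\eps/2$ and using $(1+\eps/2)^{-2}\ge 1-\eps$ yields the required factor $(1-\eps)$.

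\textbf{Membership $p\notin S$.} I must also check that $p\in\Q(\I)\setminus S$. Every $s\in S$ lies in $\ball(s,r)$, so by Lemma~\ref{lem:oracle1} each $s$ becomes inactive after $\tree.\inactive(s,r)$. Hence any active point returned by Lemma~\ref{lem:oracle3} is not in $S$. The degenerate case where the maximum distance is $0$ (every remaining point coincides with a point of $S$) is trivial and can be handled separately.

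\textbf{Boundary case of the search.} I also need that some tested $r$ leaves an active point, so that $r'$ is well defined. Take the smallest value in the search range, or simply $r=0$ as a sentinel. Since $S$ has fewer than $|\Q(\I)|$ points, some point stays active whenever the maximum distance is positive.

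\textbf{Running time.} The running-time lemma gives $O(k(\eps^{-d+1}+\log N)(N+\eps^{-2}\log^2N)\log N)$. For constant $\eps$ this becomes $O(k(\log N)(N+\log^2N)\log N)=O(kN\log^2N+k\log^4N)$.

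There is one accounting point. The refined search over $Y_{\hat r}$ has only $O(\log(\log\sqrt d/\eps))=O(1)$ extra steps for constant $\eps$, so the total number of search steps is still $O(\log N)$. Resetting $\tree$ between steps costs at most its size, which is bounded by the nodes created in that step and so is dominated by the oracle cost.

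\textbf{Probability.} The failure probability follows from a union bound. There are $O(k\log N)$ oracle calls, each failing with probability at most $N^{-(m+3)}$. The RBBD tree guarantees of Theorem~\ref{thm:RBBD}, instantiated with $n=|\Q(\I)|=O(N^m)$, fail with probability at most $1/N^{m+3}$. Together these give overall success probability at least $1-1/N$, using $k\le |\Q(\I)|=O(N^m)$.

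\textbf{Main obstacle.} I expect the delicate step to be the correctness of the binary search. The predicate ``some point remains active at radius $r$'' is not exactly monotone in $r$, because canonical nodes may overshoot up to radius $(1+\eps)r$. I would handle this by arguing only about the final pair $r'$ and $(1+\eps)r'$ that the search returns, as the correctness lemma does, rather than relying on global monotonicity.
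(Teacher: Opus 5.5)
Your proposal is correct and follows the paper's proof: the theorem comes from the correctness lemma, which argues about the final tested pair $r'$ and $(1+\eps)r'$ to get $\dist(p,S) > r' \ge (1+\eps)^{-2}\max_{t\in \Q(\I)\setminus S}\dist(t,S)$, combined with the running-time lemma specialized to constant $\eps$ and the rescaling $\eps\leftarrow\eps/2$. Your extra checks ($p\notin S$, the sentinel at the low end of the search, the $O(1)$ inner-search overhead, and the union bound) are sound details the paper leaves implicit; the one small slip is that the tree persists across search steps, so the reset cost is bounded by all nodes created so far rather than only those created in the current step, but this is still dominated because each node costs $\Omega(N)$ to build.
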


Starting from $S=\emptyset$ and repeating the algorithm from Theorem~\ref{thm:iterGonz} $k$ times, we get an $\eps$-approximate implementation of the Gonzalez algorithm in the relational setting in $O(k^2N\log^2 N +k^2\log^4 N)$ time.
Next, we show an even faster implementation of the Gonzalez algorithm using our relational $k$-center clustering algorithm from Theorem~\ref{thm:kCenter}.

%\begin{corollary}
%Given an acyclic query $\Q$, a database instance $\I$ of size $N$, and an arbitrarily small constant parameter $\eps$, there exists an $\eps$-approximate version of the Gonzalez algorithm on $\Q(\I)$ in $O(k^2N\log^2 N +k^2\log^4 N)$ time, with probability at least $1-\frac{1}{N}$.
%\end{corollary}

%\paragraph{Discussion}

%% file: gonzalez.tex
\subsection{Implementation of Gonzalez Algorithm in the Relational Setting}

%Using our machinery from the previous sections, we show how we can implement efficient algorithms for various optimization problems in diversity, fairness, and coverage.

We propose an $\O(kN)$ execution of an $\eps$-approximate implementation of the Gonzalez algorithm. Such an implementation will allow us to derive faster algorithms for multiple diversity and fairness problems in the relational setting.

In the standard computational setting, the next argument is known~\cite{kurkure2024faster}. Let $P\subset \Re^d$ and $S\subseteq P$ be a set of $k'=O(k\eps^{-d})$ centers returned by a $(2+\eps)$-approximation algorithm for the $k'$-center clustering problem. An implementation of the Gonzalez algorithm on $S$ for $k$ steps returns an $\eps$-approximate implementation of the Gonzalez algorithm on $P$.

In the relational setting, given a parameter $k$, we run our algorithm from Theorem~\ref{thm:kCenter} using the parameter $k'=O(k\eps^{-d})$ (instead of $k$). Let $S\subseteq \Q(\I)$ be the set of centers returned by our algorithm with $|S|=O(k\eps^{-d})$ such that $\kcenter_{S}(\Q(\I))\leq (1+\eps)\kcenter_{\opt(\Q(\I))}(\Q(\I))$, with probability at least $1-\frac{1}{N}$. Finally, we run the geometric implementation of Gonzalez algorithm~\cite{feder1988optimal} on $S$ for $k$ iterations and let $\ret$ be the returned set. From~\cite{feder1988optimal} and the correctness of Theorem~\ref{thm:kCenter}, we get that $\ret$ is the result of an $\eps$-implementation of the Gonzalez algorithm on $\Q(\I)$.
%, with probability at least $1-\frac{1}{N}$.
The running time to get $S$ is $O(k\eps^{-d}n\log^2 N + k\eps^{-d}\log^4N)$ with probability at least $1-\frac{1}{N}$ (Theorem~\ref{thm:kCenter} for $k'=O(k\eps^{-d})$ instead of $k$). Furthermore, the geometric implementation of the Gonzalez algorithm~\cite{feder1988optimal} on $S$ runs in $O(k\eps^{-d}\log k)$ time. Overall, the running time of the $\eps$-implementation of the Gonzalez algorithm is $O(k\eps^{-d}n\log^2 N + k\eps^{-d}\log^4N)$ with probability at least $1-\frac{1}{N}$.
%Using the same argument, the fastest previously known $\eps$-approximate implementation of the Gonzalez algorithm in the relational setting used the relational $k$-center algorithm from~\cite{}, leading to a $\O(k^2N+k^4)$ time algorithm.
%We conclude to the next theorem.
\begin{theorem}
\label{thm:gonzalez}
Given an acyclic query $\Q$, a database instance $\I$ of size $N$, and an arbitrarily small constant parameter $\eps\in(0,1)$, there exists an $\eps$-approximate version of the Gonzalez algorithm on $\Q(\I)$ for $k$ steps in $O(kN\log^2 N +k\log^4 N)$ time. The %correctness and the 
running time holds with probability at least $1-\frac{1}{N}$.
\end{theorem}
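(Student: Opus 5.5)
The approach is to reduce to the relational $k'$-center algorithm of Theorem~\ref{thm:kCenter} with an inflated parameter $k'=\Theta(k\eps^{-d})$, and then run the classical Gonzalez algorithm on the returned set of centers. The reduction rests on the argument of~\cite{kurkure2024faster}. Run the procedure of Theorem~\ref{thm:kCenter} on $\Q(\I)$ with parameter $k'$, with accuracy $\eps'=\Theta(\eps)$ and a $2$-approximate black box (Feder--Greene). This yields $S\subseteq\Q(\I)$, $|S|\le k'$, with $\kcenter_S(\Q(\I))\le(2+\eps')\,\kcenter_{\opt_{k'}(\Q(\I))}(\Q(\I))$ in $O(k'N\log^2N+k'\log^4N)$ time with probability $1-1/N$. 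Since $\eps$ and $d$ are constants, this is $O(kN\log^2N+k\log^4N)$. Then run Gonzalez on $S$ for $k$ steps in $O(|S|\log k)$ time. That cost is absorbed.

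For correctness, let $\delta=\kcenter_S(\Q(\I))$. Every $q\in\Q(\I)$ has some $s\in S$ within distance $\delta$. Hence for any current set $\ret$ we get $|\dist(q,\ret)-\dist(s,\ret)|\le\delta$, and so $\max_{s\in S}\dist(s,\ret)\ge\max_{q}\dist(q,\ret)-\delta$. The Gonzalez point chosen from $S$ therefore satisfies $\dist(p,\ret)\ge D_i-\delta$, where $D_i=\max_{q\in\Q(\I)}\dist(q,\ret)$ at step $i\le k$.

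The key step is to show $\delta\le\eps D_i$ for every $i\le k$. Here I use a packing argument. By monotonicity $D_i\ge D_k$, and $D_k$ is at least the Gonzalez-style radius after $k$ steps. The standard Gonzalez property says the $k$ chosen points together with the farthest point are pairwise at distance $\gtrsim D_k$, up to the $(1-\eps)$ slack. I would then bound $\opt_{k'}$ in terms of $D_k$: the point set lies in $k$ balls of radius $\approx D_k$ around $\ret$, and each such ball can be covered by $O(\eps^{-d})$ balls of radius $\eps D_k/(2+\eps')$ via a grid in $\Re^d$. Choosing the constant in $k'=c\,k\eps^{-d}$ appropriately gives $\opt_{k'}\le\eps D_k/3$, hence $\delta\le\eps D_k\le\eps D_i$.

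The main obstacle is circularity. $D_k$ refers to the approximate run itself, and the grid cover must be centered at points of $\Q(\I)$ (discrete $k'$-center). I would handle the discreteness by snapping each grid center to a point of $\Q(\I)$ inside its cell, at most doubling the radius. I would handle the circularity by arguing on the covering of the actual set $\ret$ produced by the algorithm, which requires no optimality.

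Finally, I would rescale $\eps$ by a constant and apply a union bound over the single call to Theorem~\ref{thm:kCenter}, so that both the correctness and the running time hold with probability at least $1-1/N$.
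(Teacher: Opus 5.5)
Your proposal is correct and follows the paper's route. The paper runs the relational $k'$-center algorithm of Theorem~\ref{thm:kCenter} with $k'=O(k\eps^{-d})$ and then runs the Gonzalez algorithm (via Feder--Greene) on the returned set $S$ for $k$ steps, with the same running-time accounting. The one difference is that the paper cites the reduction lemma of~\cite{kurkure2024faster} as a black box, whereas you prove it directly via the grid-cover-and-snap argument; that bound holds for any set of at most $k-1$ points, giving $\kcenter_S(\Q(\I))\le\eps D_i$ for every step $i$, so the circularity you flag never actually arises.
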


\subsection{Diversity and fairness problems}
\label{appndx:DivFairProbs}
\paragraph{Diversity}
It is known that the the Gonzalez algorithm can be used to construct small coresets for multiple problems in diversity and fairness~\cite{indyk2014composable}.
More specifically, given a database instance $\I$ of size $|\I|=N$, a join query $\Q$, and a positive integer parameter $k$, the goal is to compute a set $\ret\subseteq \Q(\I)$ of size $|\ret|=k$ such that,
\begin{itemize}
    \item $\min_{p,q\in\ret}\dist(p,q)$ is maximized (Relational Remote Edge -- RRE).
    \item the weight (with respect to the Euclidean distance) of the minimum spanning tree on $\ret$ is maximized (Relational Remote Tree -- RRT).
    \item the minimum cost (with respect to the Euclidean distance) of the travelling salesman problem on $\ret$ is maximized (Relational Remote Cycle -- RRC).
    \item $\sum_{p\in \ret}\min_{q\in\ret\setminus\{p\}}\dist(p,q)$ is maximized (Relational Remote Pseudoforest -- RRP).
    \item The minimum cost (with respect to the Euclidean distance) of a perfect matching of $\ret$ is maximized (Relational Remote Matching -- RRM).
\end{itemize}

For each of the defined problem, we run our algorithm from Theorem~\ref{thm:gonzalez}. Then using the results from~\cite{indyk2014composable}, we conclude to the next theorem.
\begin{theorem}
\label{thm:diversity}
            Given an acyclic join query $\Q$, a database instance $\I$ of size $N$, and a parameter $k$, there exists an $O(1)$-approximation algorithm for the RRE, RRT, and RRC problem over $\Q(\I)$ in $O(kN\log^2 N +k\log^4 N)$ time. Both the aproximation factor and the running time hold with probability at least $1-\frac{1}{N}$.
            Furthermore, there exists an $O(\log k)$-approximation algorithm for the RRP, and RRM problem over $\Q(\I)$ in $O(kN\log^2 N +k\log^4 N)$ time. %Both the approximation factor and 
            The running times hold with probability at least $1-\frac{1}{N}$.
\end{theorem}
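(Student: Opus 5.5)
The plan is to reduce Theorem~\ref{thm:diversity} to Theorem~\ref{thm:gonzalez} and the composable-coreset results of~\cite{indyk2014composable}. Those results say that for each of the five objectives, running Gonzalez for $k$ steps on a point set $P$ gives a set $G$ with two properties. First, $G$ itself is an $O(1)$-approximate solution for remote edge, tree and cycle. Second, $G$ is an $O(\log k)$-approximate solution for remote pseudoforest and matching (or a core-set from which such a solution can be read off). I would run the $\eps$-approximate relational Gonzalez from Theorem~\ref{thm:gonzalez} to get $\ret\subseteq\Q(\I)$ with $|\ret|=k$. For RRP and RRM, if the reduction in~\cite{indyk2014composable} needs a post-processing step on the Gonzalez output, I would run it on $\ret$ in the standard computational setting. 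That costs time depending only on $k$, e.g.\ $\mathrm{poly}(k)$, which I would fold into the bound, or avoid by taking $\ret$ directly.

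The main obstacle is that~\cite{indyk2014composable} analyses exact Gonzalez, while we only have the $\eps$-approximate version, where each chosen point is only within factor $(1-\eps)$ of the farthest. I would check that their analysis uses only two facts about the Gonzalez set $G=\{g_1,\dots,g_k\}$, with $r_i=\dist(g_i,\{g_1,\dots,g_{i-1}\})$.
\begin{enumerate}
\item Packing: every pair of points of $\{g_1,\dots,g_i\}$ is at distance at least $r_i$.
\item Covering: every point of $P$ is within distance $r_i$ (roughly) of $\{g_1,\dots,g_{i}\}$.
\end{enumerate}
Under $\eps$-approximation, packing still holds up to a $(1-\eps)$ factor, because $r_i$ is at least $(1-\eps)$ times the farthest distance and the farthest distances are nonincreasing. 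Covering holds with radius $r_{i}/(1-\eps)$. Each diversity objective is a sum or minimum of $O(k)$ distances, so these perturbations change the approximation constants by at most a factor $(1-\eps)^{-O(1)}$. That factor stays $O(1)$, or $O(\log k)$, for constant $\eps$. I would state this robustness explicitly as a short lemma, re-deriving the remote-edge case in full as a template: any $k$ points with pairwise distance $D$ must put two of them in one ball of radius $r_k/(1-\eps)$ around $\{g_1,\dots,g_{k-1}\}$, so $D\le 2r_k/(1-\eps)$, while $\ret$ achieves $r_k$. This gives a $2/(1-\eps)$-approximation.

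Finally, the bounds. The running time is that of Theorem~\ref{thm:gonzalez}, namely $O(kN\log^2N+k\log^4N)$, plus $k$-dependent postprocessing, which is dominated. The probability bound is $1-\frac1N$, inherited from Theorem~\ref{thm:gonzalez} since all randomness lies there. The approximation guarantees hold whenever that event holds, which gives the claimed probability for both statements.
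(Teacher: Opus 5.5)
Your proposal matches the paper's argument, which simply runs the $\eps$-approximate relational Gonzalez algorithm of Theorem~\ref{thm:gonzalez} and applies the guarantees of~\cite{indyk2014composable} to its $k$-point output, taking that output directly as the solution and inheriting the running time and probability bound. Your explicit robustness check for approximate rather than exact Gonzalez is a detail the paper leaves implicit, and one step in it needs a small fix: the minimum pairwise distance of $\ret$ is only guaranteed to be at least $(1-\eps)\max_{p}\dist(p,\{g_1,\dots,g_{k-1}\})$, not exactly $r_k$, but bounding both sides against this farthest distance still gives the $2/(1-\eps)$ factor.
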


We highlight that the RRE problem has been studied in~\cite{agarwal2024computing}. Agarwal et al.~\cite{agarwal2024computing} designed an $O(1)$-approximation algorithm for the RRE problem over $\Q(\I)$ (for any acyclic join query $\Q$) in $\O(\min\{k^2N,kN+k^{d/2}\})$ time, where $d$ is the total number of attributes, $d=|\allattr|$. Our algorithm improves the running time by a multiplicative factor of $k$ or an additive term $k^{d/2}$.
%To the best of our knowledge, no known algorithms exist we are the first that studied the RRT, RRC, RRP, and RRM problems.

\paragraph{Fairness}
Interestingly, the $\eps$-approximate implementation of the Gonzalez algorithm is useful in fairness problems.
We define two problems, the Relational Fair Remote Edge (RFRE) and the Relational Fair Remote Clique (RFRC) problem. Without loss of generality assume that there exists one relation, say $R_1$, where in addition to the numerical attributes $\allattr_1$, has one additional categorical attribute $A'$ with $\dom(A')=\{c_1, c_2,\ldots, c_g\}$, where $g$ is the number of different values. Let $Q_c=\{t\in \Q(\I)\mid \pi_{A'}(t)=c\}$, i.e., the tuples in $\Q(\I)$ that belong to group $c\in \dom(A')$. 
%We define the RFRE problem as follows: 
Given $g$ positive integer parameters $k_1,\ldots, k_g$ the goal is to compute a set $\ret\subseteq \Q(\I)$ 
such that $|\ret\cap Q_c|\geq k_c$ for every $c\in\dom(A')$ and
\begin{itemize}
    \item $\min_{p,q\in \ret}\dist(p,q)$ is maximized (RFRE).
    \item $\frac{1}{2}\sum_{p,q\in \ret}\dist(p,q)$ is maximized (RFRC).
\end{itemize}

The problem of fair remote edge (or fair max-min as it is also called in the literature) has been studied extensively in the standard computational setting~\cite{kurkure2024faster, addanki2022improved, moumoulidou2021diverse}. It is known that running the Gonzalez algorithm for $O(k\eps^{-d})$ iterations, where $k=k_1+\ldots k_g$, on the element of each group creates a coreset of size $O(gk\eps^{-d})$ for the fair remote edge problem. Then an expensive $\frac{1}{2}$-approximation algorithm over the coreset is executed~\cite{kurkure2024faster, addanki2022improved, moumoulidou2021diverse}.
If our goal is to get an $O(1)$-approximation algorithm then the size of the coreset is $O(gk)$.
Similarly, we can run the $\eps$-approximate implementation from Theorem~\ref{thm:gonzalez} over $Q_c$, for each different $c\in\dom(A')$ to construct a coreset. Using Theorem~\ref{thm:gonzalez} along with the algorithms in~\cite{kurkure2024faster, addanki2022improved, moumoulidou2021diverse}, we get the next result.

\begin{theorem}
\label{thm:RFRE}
            Given an acyclic join query $\Q$, a database instance $\I$ of size $N$, and parameters $k_1,\ldots, k_g$, such that $k=k_1+\ldots+k_g$, there exists a $O(1)$-approximation algorithm for the RFRE problem over $\Q(\I)$ in $O(kN\log^2 N +k\log^4 N + T_{FRE}(gk^2))$ time, where $T_{FRE}$ is the running time of an $O(1)$-approximation algorithm in the standard computational setting for the fair remote edge problem over $gk^2$ points.
            %Both the approximation factor and 
            The running time holds with probability at least $1-\frac{1}{N}$.
\end{theorem}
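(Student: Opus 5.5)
The plan is to build a small coreset with the relational Gonzalez implementation, solve the fair remote edge problem on it with a standard-setting algorithm, and transfer the guarantee back to $\Q(\I)$ through the coreset argument of~\cite{kurkure2024faster, addanki2022improved, moumoulidou2021diverse}.

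\textbf{Step 1: restricting to one group.} For each group $c\in\dom(A')$ we need to work over $Q_c$. We filter $R_1$ to the tuples with $\pi_{A'}(t)=c$, giving a database $\I_c$ with $\Q(\I_c)=Q_c$. Because $R_1$ is partitioned by $c$, we have $\sum_c|\I_c|\le mN$. The RBBD tree and all oracles of Lemma~\ref{lem:Rects} then apply unchanged to $\I_c$. One could instead treat $A'$ as an extra attribute and restrict every box to $\pi_{A'}=c$; either way each oracle call still costs $O(N)$.

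\textbf{Step 2: building the coreset.} On each $Q_c$ we run the $\eps$-approximate relational Gonzalez algorithm of Theorem~\ref{thm:gonzalez} for $k$ steps, with $\eps$ a fixed constant. Let $G_c$ be its output and let $\mathcal{C}=\bigcup_c G_c$, so $|\mathcal{C}|\le gk$. The approximation guarantee comes from the known coreset argument in the standard setting, which we adapt to $\eps$-approximate Gonzalez. Let $r_c$ be the $k$-center radius that $G_c$ achieves on $Q_c$. Since each Gonzalez step picks a point at distance at least $(1-\eps)$ times the current maximum, the points of $G_c$ are pairwise at least $(1-\eps)r_c$ apart, and every point of $Q_c$ lies within $r_c/(1-\eps)$ of $G_c$.

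Now fix an optimal fair solution $\ret^*$ with value $D^*$. We replace each point of $\ret^*\cap Q_c$ by a point of $G_c$ so that the minimum distance drops by at most a constant factor. There are two cases.
\begin{itemize}
\item If $r_c\le D^*/4$, map each optimal point to its nearest point of $G_c$. Each point moves at most about $D^*/4$, so pairwise distances stay at least about $D^*/2$, and the images are distinct.
\item If $r_c> D^*/4$, the set $G_c$ contains at least $k_c$ points (it has $k\ge k_c$ points) that are pairwise at least $(1-\eps)D^*/4$ apart. We choose $k_c$ of them greedily, avoiding points within $D^*/8$ of the representatives already chosen for the other groups.
\end{itemize}
The second case is the main obstacle: the greedy choice must not conflict across groups. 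A packing argument in $\Re^d$ with constant $d$ bounds how many points of $G_c$ one foreign representative can block. If that bound is insufficient, we enlarge the coreset to $O(k)$ or $O(gk)$ points per group, which still fits inside the $T_{FRE}(gk^2)$ budget. Together the two cases show that $\mathcal{C}$ contains a feasible solution of value $\Omega(D^*)$.

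\textbf{Step 3: solving and accounting for time.} We run the $O(1)$-approximation for fair remote edge on $\mathcal{C}$ (size at most $gk^2$ under the enlarged version), costing $T_{FRE}(gk^2)$. Its output is feasible for $\Q(\I)$, and its value is within a constant factor of the best value in $\mathcal{C}$, which is $\Omega(D^*)$ by Step 2.

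For the running time, each call of Theorem~\ref{thm:gonzalez} on group $c$ takes $O(k|\I_c|\log^2 N+k\log^4 N)$. Only groups with $k_c>0$ matter, so we may assume $g\le k$. Summing over groups using $\sum_c|\I_c|=O(N)$ gives $O(kN\log^2N+gk\log^4N)$. We want to fold the term $gk\log^4 N$ into the other terms of the bound; we would try to argue it is dominated, and if it is not, we would absorb it by charging the polylogarithmic per-group overhead more carefully.

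Finally, the failure probability. Each call succeeds with probability at least $1-1/N^{2}$ once the RBBD sample size constant $\cons$ is increased slightly. A union bound over the at most $N$ nonempty groups then gives an overall success probability of at least $1-1/N$.
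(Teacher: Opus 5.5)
Your overall route matches the paper's. You run the relational $\eps$-approximate Gonzalez algorithm of Theorem~\ref{thm:gonzalez} on each $Q_c$ and use the union as a coreset. The paper simply cites the known fair max-min coreset lemma, which your Step~2 reconstructs. You then finish with a standard-setting solver.

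Step~2 is essentially fine after two small repairs:
\begin{itemize}
\item Map all first-case groups before processing the second-case ones.
\item Take the blocking radius strictly below half the separation, e.g.\ $(1-\eps)D^*/9$ instead of $D^*/8$. With separation only $(1-\eps)D^*/4$, one foreign point could otherwise block two points of $G_c$.
\end{itemize}
After these repairs, each of the at most $k-k_c$ foreign representatives blocks at most one point of $G_c$. So $|G_c|=k$ already leaves $k_c$ usable points, and you need neither a packing argument nor an enlarged coreset.

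The genuine gap is the running time. The inequality $\sum_c|\I_c|\le mN$ is false. Selecting $\pi_{A'}=c$ splits only $R_1$, while $R_2,\ldots,R_m$ survive in full in every $\I_c$. Pruning dangling tuples does not help, because one tuple of $R_2$ can join with $R_1$-tuples of many groups; take $R_1=\{(a_0,c)\}_{c}$ and $R_2=\{(a_0,b_j)\}_{j\le N}$.

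Consequently, every oracle call on group $c$ costs $\Theta(N)$, however small $Q_c$ is. Your alternative of keeping $A'$ as a filtered attribute has the same $O(N)$ cost per call. Running $k$ Gonzalez steps on each of $g$ groups therefore costs $O(g(kN\log^2N+k\log^4N))$, not $O(kN\log^2N+gk\log^4N)$.

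You also cannot cut each group to $O(k_c)$ steps. Your second case genuinely needs about $k$ well-separated points per group. For example, let group~2 consist of exactly $k_2$ points, so all of them must be taken. Each of the $O(k_1)$ Gonzalez points of group~1 can lie next to one of these, while the far-away optimal group-1 point has not yet been picked. The coreset then contains no good fair solution.

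Reaching the stated $kN\log^2N$ term would therefore need a new way to share work across groups, which the proposal does not supply. The paper's own sketch also just invokes Theorem~\ref{thm:gonzalez} once per $Q_c$ and does not explain how the factor $g$ is avoided. What this route actually yields is the bound with an extra factor $g$ on the first two terms.
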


Next, the problem of fair remote clique (or fair max-sum) has also been studied in the standard computational setting~\cite{mahabadi2023core}. Mahabadi and Trajanovski~\cite{mahabadi2023core} construct a coreset of size $O(\sum_{i\in[g]}k_i^2)=O(gk^2)$ as follows. They run Gonzalez algorithm for $k_i$ iterations on the points of the $i$-th group. Then for every point returned by the Gonzalez algorithm they add in their coreset the $k_i$ nearest neighbors among the points in the $i$-th group.

In the relational setting, for every $c\in\dom(A')$ we first run the $\eps$-approximate implementation of the Gonzalez algorithm from Theorem~\ref{thm:gonzalez} over $Q_c$. Then, for every point $p$ returned by Theorem~\ref{thm:gonzalez} we find the $k_i$ nearest neighbors in $Q_i$ using the Euclidean-based oracle presented in~\cite{agarwal2024computing} in $O(N+k_i\log N)$. Overall, using Theorem~\ref{thm:gonzalez},~\cite{mahabadi2023core}, and~\cite{agarwal2024computing}, we get the next result.

\begin{theorem}
\label{thm:RFRC}
            Given an acyclic join query $\Q$, a database instance $\I$ of size $N$, and parameters $k_1,\ldots, k_g$, such that $k=k_1+\ldots+k_g$, there exists a $O(1)$-approximation algorithm for the RFRC problem over $\Q(\I)$ in $O(kN\log^2 N +k\log^4 N + T_{FRC}(gk^2))$ time, where $T_{FRC}$ is the running time of an $O(1)$-approximation algorithm in the standard computational setting for the fair remote clique problem over $gk^2$ points.
            %Both the approximation factor and 
            The running time holds with probability at least $1-\frac{1}{N}$.
\end{theorem}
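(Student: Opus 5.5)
The plan is to reduce the relational problem to the standard-setting coreset of Mahabadi and Trajanovski~\cite{mahabadi2023core}. We simulate each of its two ingredients with relational oracles that are already available: Theorem~\ref{thm:gonzalez} for the Gonzalez step, and the Euclidean nearest-neighbor oracle of~\cite{agarwal2024computing} for the neighbor step. Then we run a known standard-setting algorithm on the resulting coreset.

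\emph{Step 1 (restricting to a group).} For each $c\in\dom(A')$, obtain the database instance $\I_c$ by removing from $R_1$ every tuple with $\pi_{A'}(t)\neq c$. This takes $O(N)$ time, and $\Q(\I_c)=Q_c$ with $\Q$ still acyclic. Hence every earlier result, in particular Lemma~\ref{lem:Rects}, the RBBD oracles, and Theorem~\ref{thm:gonzalez}, applies verbatim to $Q_c$.

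\emph{Step 2 (building the coreset).} On each $Q_c$, run the $\eps$-approximate Gonzalez algorithm of Theorem~\ref{thm:gonzalez} for $k_c$ steps, in $O(k_cN\log^2N+k_c\log^4N)$ time; call the output $G_c$. For every $p\in G_c$, retrieve its $k_c$ nearest neighbors in $Q_c$ using the oracle of~\cite{agarwal2024computing} on $\I_c$, in $O(N+k_c\log N)$ time per point. Let $\coreset=\bigcup_c\big(G_c\cup \mathrm{NN}_{k_c}(G_c)\big)$, which has size $O(\sum_c k_c^2)=O(gk^2)$. Finally, run the standard-setting $O(1)$-approximation for fair remote clique on $\coreset$ with the same constraints $k_1,\dots,k_g$, and return its output.

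\emph{Step 3 (approximation).} The approximation guarantee comes from the coreset theorem of~\cite{mahabadi2023core}. The key point, and the step I expect to be the main obstacle, is that their analysis assumes exact Gonzalez (farthest-point) steps, whereas we only have $\eps$-approximate ones. I would first check that their argument uses the Gonzalez property only through the fact that $G_c$ is well spread: every point of $Q_c$ is within $O(1)$ times the minimum pairwise distance of $G_c$ from $G_c$. An $\eps$-approximate farthest-point rule preserves this up to a $(1-\eps)^{-1}$ factor, so the resulting coreset error grows only by a constant. If their analysis relies on exactness more delicately, the fallback is to instead use the $(2+\eps)$-approximate $k$-center solution of Theorem~\ref{thm:kCenter}, which gives the same spread property. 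The composed approximation factor remains $O(1)$, since a constant-factor coreset combined with an $O(1)$-approximation on it gives $O(1)$ overall.

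\emph{Step 4 (running time).} Summing over groups gives $\sum_c O(k_cN\log^2N+k_c\log^4N)=O(kN\log^2N+k\log^4N)$ for the Gonzalez phase. The neighbor phase costs $\sum_c k_c(N+k_c\log N)=O(kN+k^2\log N)$. This is not obviously within the stated bound because of the $k^2\log N$ term. I would absorb it into $T_{FRC}(gk^2)$, since any algorithm on $gk^2$ input points must read them. Also, the $O(gN)$ cost of building the instances $\I_c$ is at most $O(kN)$, because $k_c\ge1$ gives $g\le k$.

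\emph{Step 5 (probability).} Each call to Theorem~\ref{thm:gonzalez} succeeds with probability at least $1-1/N$. To get an overall bound of $1-1/N$, I would boost each call to $1-1/N^2$ by enlarging the constant $\cons$ in the sample size, and then take a union bound over the $g\le k\le|\Q(\I)|$ groups, adjusting the polynomial exponent as needed.
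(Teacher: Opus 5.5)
Your proposal is correct and follows essentially the same route as the paper: for each group $c$, run the $\eps$-approximate Gonzalez algorithm of Theorem~\ref{thm:gonzalez} on $Q_c$, add the $k_c$ nearest neighbors in $Q_c$ of each returned point using the oracle of~\cite{agarwal2024computing}, and run an $O(1)$-approximate fair remote clique algorithm on the resulting coreset of~\cite{mahabadi2023core}. Your extra bookkeeping (the $(1-\eps)$-separation argument for approximate Gonzalez steps, the $O(\log N\sum_c k_c^2)$ neighbor-reporting cost, and probability boosting) goes beyond what the paper writes, with two small caveats: absorbing $k^2\log N$ into $T_{FRC}(gk^2)$ via the $\Omega(gk^2)$ input-reading cost can fall short by a $\log N$ factor (e.g., $g=1$ and $k\gg N\log N$), and a generic $(2+\eps)$-approximate solution from Theorem~\ref{thm:kCenter} need not be well separated, so rely on your approximate-Gonzalez argument rather than that fallback.
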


%% file: appndxCyclic.tex
\renewcommand{\T}{\mathcal{H}}
\section{Extension to cyclic queries}
\label{appndx:generalQueries}
A fractional edge cover of join query $\Q$ is a point $x = \{x_R \mid R\in \allrel\} \in \mathbb{R}^{m}$ such that for any attribute $A \in \allattr$, $\sum_{R \in \allrel_A} x_R \ge 1$, where $\allrel_A$ are all the relations in $\allrel$ that contain the attribute $A$.
As shown in~\cite{atserias2013size}, the maximum output size of a join query $\Q$ is $O(N^{\lVert x \rVert_1})$. Since the above bound holds for any fractional edge cover, we define $\rho^\star = \rho^\star(\Q)$ to be the fractional cover with the smallest $\ell_1$-norm. Equivalently, $\rho^\star(\Q)$ is the value of the objective function of the optimal solution of linear programming ($\mathsf{LP}$):
    $\min \sum_{R \in \allrel} x_R, \;\text{s.t.}\; \forall R \in \allrel: x_R \ge 0 \;\text{and}\; \forall A \in \allattr: \sum_{R \in \allattr_R} x_R \ge 1$.

Next, we give the definition of the Generalized Hypertree Decomposition (GHD).
A GHD of $\Q$ is a pair $(\T, \lambda)$, where
$\T$ is a tree as an ordered set of nodes and $\lambda: \T \to 2^{\allattr}$ is a labeling function which associates to each vertex $u \in \T$ a subset of attributes in $\allattr$, called $\lambda_u$, such that the following conditions are satisfied:
i) (coverage) For each $R \in \allrel$, there is a node $u \in \T$ such that $\allattr_R\subseteq\lambda_u$, where $\allattr_R$ is the set of attributes contained in $R$;
 ii) (connectivity) For each $A \in \allattr$, the set of nodes $\{u \in \T: A \in \lambda_u\}$ forms a connected subtree of $\T$.

Given a join query $\Q$, one of its GHD $(\T, \lambda)$ and a node $u \in \T$, the width of $u$ is defined as the optimal fractional edge covering number of its derived hypergraph $(\lambda_u, \E_u)$, where $\E_u = 
\{\allattr_R\cap \lambda_u: R \in \allrel\}$. Given a join query and a GHD $(\T, \lambda)$, the width of $(\T, \lambda)$ is defined as the maximum width over all nodes in $\T$. Then, the fractional hypertree width of a join query follows:
The fractional hypertree width of a join query $\Q$, denoted as $\fhw(\Q)$, is
$\fhw(\Q) = \min_{(\T, \lambda)} \max_{u \in \T} \rho^\star(\lambda_u, \E_u)$,
i.e., the minimum width over all GHDs.

\paragraph{Relational algorithms for $k$-center/median/means clustering} Overall, $O(N^\fhw)$ is an upper bound on the number of join results materialized for each node in $\T$. It is also the time complexity to compute the join results for each node in $\T$~\cite{atserias2013size}. Hence, we convert our original cyclic query into an acyclic join query (with join tree $\T$) where each relation has $O(N^\fhw)$ tuples. We execute our algorithms from Theorem~\ref{thm:kCenter} and Theorem~\ref{thm:kmeansopt} (also Theorem~\ref{thm:diversity}, Theorem~\ref{thm:RFRE} and Theorem~\ref{thm:RFRC}) to the new acyclic join query. The approximation factors remain the same. The running times are adjusted by replacing the $N$ factor with an $N^{\fhw}$ factor.